\theoremstyle{plain}                          
\newtheorem{theorem}{Theorem}[section]
\newtheorem{proposition}[theorem]{Proposition}    
\newtheorem{lemma}[theorem]{Lemma}
\newtheorem{corollary}[theorem]{Corollary}
\theoremstyle{definition}
\newtheorem{definition}[theorem]{Definition}
\newtheorem{prop-defin}[theorem]{Proposition-definition} 
\newtheorem{example}[theorem]{Example}
\theoremstyle{remark}
\newtheorem{remark}[theorem]{Remark}
\numberwithin{equation}{section}
\renewcommand{\theta}{\vartheta}
\renewcommand{\phi}{\varphi}
\renewcommand{\epsilon}{\varepsilon}
\newcommand{\normord}[1]{\vcentcolon\mathrel{\mspace{2mu}#1\mspace{2mu}}\vcentcolon}
\DeclarePairedDelimiter\floor{\lfloor}{\rfloor}
\DeclarePairedDelimiter\ceil{\lceil}{\rceil}
\newcommand{\mb}[1]{\mathbbm{#1}} 
\newcommand{\mc}[1]{\mathcal{#1}}
\newcommand{\su}[1]{\mathbbm{#1}}
\newcommand{\R}{\mb{R}} 
\newcommand{\N}{\mb{N}} 
\newcommand{\C}{\mb{C}} 
\newcommand{\Z}{\mb{Z}} 
\renewcommand{\P}{\mb{P}}
\newcommand{\<}{\langle}
\renewcommand{\>}{\rangle}
\newcommand{\del}{\partial}
\DeclareMathOperator{\im}{Im}
\DeclareMathOperator{\Ker}{Ker}
\newcommand{\Id}{\mathord{\mathrm{Id}}}
\DeclareMathOperator*{\Res}{Res}
\DeclareMathOperator{\Spec}{Spec}
\DeclareMathOperator{\Ber}{Ber}
\DeclareMathOperator{\Jac}{Jac}
\DeclareMathOperator{\ad}{ad}
\begin{document}

\title{Superconformal topological recursion}

\author[N.~Aghaei]{Nezhla Aghaei}
\address[N.~A.]{Section of Mathematics, University of Geneva, Rue du Conseil-General 7-9, 1205 Geneva, Switzerland}
\email[N.~A.]{nezhla.aghaei@unige.ch}

\author[R.~Kramer]{Reinier Kramer}
\address[R.~K.]{Universit\`{a} di Milano-Bicocca, Dipartimento di Matematica e Applicazioni, Via Roberto Cozzi, 55, Milan, 20125, Italy}

\address[R.~K.]{INFN sezione di Milano-Bicocca, Milano, Italy}
\email[R.~K.]{reinier.kramer@unimib.it}

\author[N.~Orantin]{Nicolas Orantin}
\address[N.~O.]{}
\email[N.~O.]{nicolas.orantin@gmail.com}

\author[K.~Osuga]{Kento Osuga}
\address[K.~O.]{Nagoya University, Kobayashi--Maskawa Institute for the Origin of Particles and the Universe \& Graduate School of Mathematics, Furocho, Nagoya, Aichi, 464-8602, Japan}
\email[K.~O.]{osuga@math.nagoya-u.ac.jp}

\thanks{}

\begin{abstract}

We investigate a supersymmetric generalisation of topological recursion from two perspectives: algebraic and geometric. The algebraic side concerns a recursive structure encoded in modules of a super Virasoro algebra, and the geometric counterpart is what we call superconformal topological recursion defined on a super Riemann surface. Superconformal topological recursion indicates that odd holomorphic one-forms on a super Riemann surface are related to zero modes of copies of the Clifford algebras, and it also provides a tool to study deformation of non-split super Riemann surfaces, e.g. certain families of super Riemann surfaces carrying odd parameters. On a super Riemann surface over a non-reduced base, the formalism is recursive not only in terms of pants-decomposition but also in terms of odd parameters in a suitable sense.
\end{abstract}

\maketitle

\tableofcontents


\section{Introduction}

The main aim of the present article is to introduce the formalism of \emph{superconformal topological recursion} and investigate its properties. The setting is fully superconformal in the sense that the initial data and the outputs are both geometric objects on a super Riemann surface, which by definition is a $ 1|1$-dimensional space with a superconformal structure. Superconformal topological recursion is closely related to modules of the $\mathcal{N}=1$ super Virasoro algebra, providing a connection between two notions of superconformality, geometric and algebraic.

\subsection{Motivation and backgrounds}

\emph{Topological recursion} \cite{EO07} is a universal recursive formalism which has applications in a variety of subjects in mathematics and physics. The set of initial data, called a \emph{spectral curve} $\mathcal{S}_{\Sigma_0}$, consists of a Riemann surface $\Sigma_0$ and a few more geometric objects. The output is a sequence of multidifferentials $(\omega_{g,n})$ labelled by a pair of two non-negative integers $(g,n)$, each of which is a symmetric meromorphic section of $K_{\Sigma_0}^{\boxtimes n}$. The defining formula of $\omega_{g,n}$ admits an interpretation in terms of pants-decomposition, and several generalisations have been considered, e.g. \cite{BE13,BS17}.

The range of applications is remarkably broad, well beyond the original scope in terms of matrix models \cite{CE05}, and it is yet increasing over time. To give a few, topological recursion is strongly linked to intersection theory on moduli spaces of stable curves and cohomological field theories \cite{NS11,DOSS14,BCGS25}, including a notable achievement known as \emph{remodeling theorem} conjectured by \cite{BKMP07,BKMP08} and proved in \cite{EO12,FLZ16}. It also describes a recursive structure of Hurwitz numbers and integrable hierarchies \cite{EMS09,ACEH18} recently recaptured in great generality by \cite{ABDKS24,ABDKS24a}. Furthermore, topological recursion has been used as a tool of quantisation \cite{BE17,Iwa19,EGMO24,BBKN25} creating a bridge with exact WKB analysis and associated Donaldson--Thomas invariants \cite{IK20}.

In 2017, the notion of \emph{Airy structures} \cite{KS18,ABCO24} was introduced. This can be thought of as an algebraic reformulation (and generalisation) of topological recursion. For an index set $I$ and a vector space $V_0$, the formalism starts with an \emph{Airy ideal}  $\mathcal{I}=(H_i)_{i\in I}$, which is a set of differential operators $H_i$ satisfying certain conditions, and it gives rise to a unique sequence of functions $F_{g,n}$ on $V_0^{\otimes n}$. It is a reformulation in the sense that given a spectral curve $\mathcal{S}_{\Sigma_0}$, one can construct the corresponding Airy ideal $\mathcal{I}_\mathcal{S}$, and for this Airy ideal, multidifferentials $\omega_{g,n}$ on $\Sigma_0^n$ are related to functions $F_{g,n}$ on $V_0^{\otimes n}$ by an explicit formula. We note that the converse is not always true, i.e. not all Airy ideals come from topological recursion.

A crucial point is that the Airy ideal $\mathcal{I}_\mathcal{S}$ related to a certain spectral curve $\mathcal{S}_{\Sigma_0}$ can be derived also in terms of so-called twisted modules of the Virasoro algebra. In particular, differential operators $H_i$ of the associated Airy ideal $\mathcal{I}_\mathcal{S}$ induce Virasoro constraints on the generating function of the functions $F_{g,n}$. As a consequence, in terms of Airy structures, one can understand a structural correspondence between topological recursion on a spectral curve $\mathcal{S}_{\Sigma_0}$ and twisted modules of the Virasoro algebra. The perspective of Airy structures has triggered a series of programmes (e.g. \cite{BBCCN18,BBCC21,BKS23}) connecting topological recursion with modules of $\mathcal{W}(\mathfrak{gl}_r)$-algebras in general --- it reduces to the Virasoro algebra when $r=2$. Furthermore, the technique of Airy structures has led to a different proof of the correspondence between topological recursion and certain Hurwitz numbers \cite{CDO24-1}.

Besides $\mathcal{W}(\mathfrak{gl}_r)$-algebras, the Virasoro algebra enjoys another algebraically natural extension, supersymmetric Virasoro algebras, which play a central role in superstring theory. Motivated by developments in topological recursion and Airy structures, one may ask the following questions:
\begin{itemize}
\item Can we incorporate the $\mathcal{N}=1$ super Virasoro algebra into topological recursion, defined on some super spectral curve?
\item Is such a supersymmetric recursion related to enumerative invariants of moduli spaces of super Riemann surfaces?
\end{itemize}
The first idea was put forward by a series of works \cite{BO18,BCHORS20,BO21,Osu21}. Building on their previous attempts, the present article brings major progress towards the first point, though the second point is to be investigated and we leave it to future work.

\subsection{Main results}

Our aim poses significant problems. Topological recursion in the `ordinary', i.e. non-super, setting relies on a lot of standard theory of Riemann surfaces, as developed over the last 160 or so years, from the original descriptions by Riemann as extensions of solutions of equations, Mittag--Leffler's reconstruction of functions/forms from principal parts and the theory of homology and (de Rham/Betti) cohomology to theta functions and prime forms of \cite{Fay73,Mum83,Mum84}. While many of these constructions have been adapted to super Riemann surfaces, in many cases this was done in a setting (namely purely Neveu--Schwarz super Riemann surfaces) that is too restrictive for our needs. The more general setting of super Riemann surfaces with Ramond divisor, which we require, has only gained more attention recently \cite{Wit15,DP15,Wit19,DO23}, and hence many tools have not yet been developed (or had not when we started this project).

Therefore, we take time and space to develop the theory of super Riemann surfaces we need. Most of these results were developed before, but this part of the paper does also contain new results. In the current phase of the research of super Riemann surfaces, giving a consistent definition itself can be considered as a part of our results. Among such, objects that deserve attentions are the super Givental pairing in \cref{SuperGiventalPairing} and the super quadratic Casimir operators defined in \cref{SuperQuadraticCasimir,TwistedSuperQuadraticCasimir}, both defined on a super Riemann surface $\Sigma$. The former is necessary to establish a clear relation to the free fermion algebra. The latter shows how to construct a super analogue of quadratic differential from a super analogue of a differential on $\Sigma$ -- it is too naive to take a square in the super setting. 

We then turn to the algebraic side and focus on combining two notions, super Airy structures due to \cite{BCHORS20} and partial Airy structures due to \cite{BCJ22}, into what we call \emph{partial super Airy structures}. The concept mostly remains similar to the ordinary setting, and there exists a sequence of associated functions $F_{g,n}$ (unique up to some choices) on a super vector space $V$. Major differences are that the associated vector space $V$ as well as differential operators are $\mathbb{Z}_2$-graded in the supersymmetric sense, and that one needs extra data for uniqueness for the functions $F_{g,n}$. Because it involves new notions and new definitions which can be stated without ambiguities only in the main text we shall not precisely state our main theorem in this introduction. Nonetheless, one of  our main theorems (\cref{thm:SASmain}) can be briefly summarised as below.

\begin{theorem}\label{thm:intro1}
    There exists certain partial super Airy ideals $\mathcal{I}_{\rm STR}^{\,r}$ whose generators are constructed by $r$ copies of twisted modules of the $\mathcal{N}=1$ super Virasoro algebra. Furthermore, the generating function of the functions $F_{g,n}$ obey super Virasoro constraints.
\end{theorem}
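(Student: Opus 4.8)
The plan is to establish \cref{thm:intro1} in two coordinated stages: first construct the candidate generators $\mathcal{I}_{\rm STR}^{\,r}$ from twisted super Virasoro modules, then verify that this collection satisfies the axioms of a partial super Airy ideal and extract the super Virasoro constraints. For the construction, I would start from $r$ copies of the $\mathcal{N}=1$ super Virasoro algebra acting on a suitable Fock-type module, and realise the generators $L_k$ and $G_k$ in terms of bosonic and fermionic free-field oscillators via the standard (super) Sugawara / normal-ordered quadratic expressions, $L_k \sim \tfrac12\sum \normord{J J} + \normord{\psi\,\partial\psi}$-type sums and $G_k \sim \sum \normord{J\psi}$. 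The twist is implemented by allowing the oscillator modes to live in $\tfrac1r\Z$ (or the appropriate coset thereof), exactly as in the ordinary $\mathcal{W}(\mathfrak{gl}_r)$ story specialised to the super case; the choice of twist encodes the ramification data of the underlying super spectral curve. One then conjugates these operators by the dilaton-type shift that implements the $\hbar$-grading, so that the resulting differential operators $H_i$ — indexed by the negative modes of $L$ and $G$ across the $r$ copies, suitably diagonalised — have the triangular form $H_i = \hbar \partial_i - (\text{quadratic and lower in } \hbar\partial, \text{coordinates})$ required for a partial super Airy ideal.

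The second stage is to check the defining conditions. I would verify: (i) the grading/filtration condition, i.e. each $H_i$ has the correct leading term $\hbar\partial_i$ up to degree reasons, which follows from inspecting the mode expansions and the twist; (ii) closure of the left ideal $(H_i)_{i\in I}$ under the bracket — here I would lean on the super Virasoro commutation relations, since $[L,L]$, $[L,G]$, $[G,G]$ close on $L$ and $G$ (plus central terms), and the central/anomaly terms must be shown to be absorbed consistently, which is where the \enquote{partial} refinement of \cite{BCJ22} is needed because only a sub-collection of modes is retained; and (iii) that the retained modes genuinely form a partial Airy ideal in the sense of the main text, including the extra data needed for uniqueness of the $F_{g,n}$ in the super setting. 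Granting the general existence/uniqueness theorem for partial super Airy structures (which by the structure of the excerpt is proved earlier as part of \cref{thm:SASmain}), this produces the sequence $F_{g,n}$, and annihilation of the partition function $Z = \exp\sum_{g,n}\tfrac{\hbar^{g-1}}{n!}F_{g,n}$ by the $H_i$ translates, upon undoing the dilaton shift, precisely into the statement that $Z$ is a highest-weight-type vector for the $r$ copies of the super Virasoro algebra — i.e. the super Virasoro constraints.

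Concretely the steps, in order, are: (1) fix the $r$-twisted free-field representation and write down the super Sugawara generators $L_k, G_k$; (2) perform the diagonalising change of basis among the $r$ copies (discrete Fourier transform in the copy index) and apply the $\hbar$-rescaling/dilaton shift to obtain operators $H_i$; (3) prove the leading-term (Airy) condition for each $H_i$; (4) prove $\mathfrak{gl}$-ideal closure using the super Virasoro relations and carefully track the anomaly, restricting to the modes that make the collection a \emph{partial} ideal; (5) invoke the general partial super Airy structure theorem to obtain $F_{g,n}$ and $Z$; (6) reverse the conjugation to read off the super Virasoro constraints on $Z$.

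I expect step (4) to be the main obstacle. In the ordinary Virasoro/$\mathcal{W}$-algebra setting the closure of the (partial) Airy ideal is already delicate because one must show that the normal-ordering anomalies and the central terms do not generate operators outside the ideal; in the super case there are two families of generators ($L$ and $G$) with a nontrivial $\{G,G\}\sim L$ relation and fermionic signs pervading every commutator, so bookkeeping the $\Z_2$-grading while simultaneously checking that only the \enquote{partial} set of modes closes — and that the leftover modes act as genuine constraints rather than obstructing the ideal property — is the technically heaviest part. A secondary difficulty is making the twist and the super Givental pairing (from \cref{SuperGiventalPairing}) compatible, so that the algebraic $F_{g,n}$ actually match the geometric objects; but for the precise statement of \cref{thm:intro1} as quoted, only the algebraic closure and the resulting constraints are needed, so step (4) is where the real work lies.
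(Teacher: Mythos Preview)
Your proposal contains a structural misunderstanding of what the $r$ and the ``twist'' mean here. The paper does \emph{not} implement the twist by putting oscillator modes in $\tfrac{1}{r}\Z$, nor does it perform a discrete Fourier transform among copies, as in the $\mathcal{W}(\mathfrak{gl}_r)$ story you are pattern-matching against. Rather, each copy carries a $\Z_2$ (Ramond-type) twist of the $\mathcal{N}=1$ super Virasoro algebra --- this is the $\rho$-twisted module of \cite{BCHORS20}, with the even/odd mode index structure $\{L_{2n},G_{2n+1}\}$ --- and the $r$ copies are simply taken as a commuting direct sum, labelled by $a\in[r]$. The integer $r$ counts components of the Ramond divisor, not the order of a cyclic twist. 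Your steps (1)--(2) would therefore produce the wrong algebra.

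Relatedly, the generating set of $\mathcal{I}_{\rm STR}^{\,r}$ in the paper is not just the $L$'s and $G$'s: it is $\{J_{a,2j},\,\Gamma_{a,2j-1},\,\tilde L_{a,2j-s_a-1},\,\tilde G_{a,2j-s_a}\mid j\geq 1\}$, i.e.\ half of the Heisenberg/Clifford modes are included directly. The conjugation is by \emph{both} a dilaton shift $T$ and a change of polarisation $\Phi$ (you omit the latter), and these encode $\omega_{0,1}$ and $\su{B}$ respectively. Finally, you misidentify the hard step. Closure (your step (4)) is essentially free: the commutation relations among $J,\Gamma,L,G$ close on themselves, and conjugation preserves this. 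The real work is your step (3) --- verifying that the conjugated operators admit an invertible upper-triangular change of basis bringing them to the form $\hbar\partial_j + \mathcal{O}(\hbar^2)$. This is where the geometric hypotheses on $\omega_{0,1}$ (the linear and quadratic loop equations at each Ramond puncture, \cref{LOLE}) enter, and checking invertibility when the dilaton parameters $t_k,\tau_k$ include nilpotent odd contributions requires care.
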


Our next task is to translate algebraic objects $\mathcal{I}_{\rm STR}^{\,r}$ and $F_{g,n}$ into geometric ones on a super Riemann surface. To be more concrete, we define a \emph{super spectral curve} $\mathcal{S}_\Sigma$ which consists of a super Riemann surface $\Sigma$ with a few more geometric objects carrying information about the partial super Airy structure $\mathcal{I}_{\rm STR}^{\,r}$. Then, we would like to define multidifferentials $\omega_{g,n}$ on $\Sigma$ from which one can recover the functions $F_{g,n}$ of $\mathcal{I}_{\rm STR}^{\,r}$ in a suitable manner. Again, a precise statement about our second main result (\cref{thm:SLE}) is hard to write here as it involves new notions, but it can be stated as below to address our main idea:
\begin{theorem}\label{thm:intro2}
Given a partial super Airy structure $\mathcal{I}_{\rm STR}^{\,r}$, there exists an associated super spectral curve $\mathcal{S}_\Sigma$ and multidifferentials $\omega_{g,n}$ on $\Sigma$ that encode information about the functions $F_{g,n}$ associated to $\mathcal{I}_{\rm STR}^{\,r}$. Furthermore, the differentials $\omega_{g,n}$ satisfy a set of constraints called \emph{super loop equations} which are geometric counterparts of super Virasoro constraints.
\end{theorem}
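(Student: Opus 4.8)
The plan is to transplant the blueprint of the non-super theory --- in which the multidifferentials are produced from the functions of an Airy structure by ``spreading out'' along a basis of meromorphic differentials on the curve, and the Virasoro constraints become loop equations --- and to promote every ingredient to its superconformal counterpart using the structures already built on $\Sigma$. \emph{First} I would reconstruct the super spectral curve $\mathcal{S}_\Sigma$ from the combinatorial data of $\mathcal{I}_{\rm STR}^{\,r}$ in \cref{thm:SASmain}: the $r$ copies of twisted $\mathcal{N}=1$ super Virasoro modules, together with their twist (Neveu--Schwarz versus Ramond) and highest-weight data, determine a germ of a super Riemann surface $\Sigma$ near $r$ ramification points, in superconformal local coordinates in which each module is realised as the modes attached to one ramification point. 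The remaining data of $\mathcal{S}_\Sigma$ are a super $\omega_{0,1}$ (the analogue of $y\,dx$, encoding the linear part of the generators), a super Bergman-type bidifferential $\omega_{0,2}$, and the Ramond divisor read off from the Ramond-twisted factors. The role of the super Givental pairing of \cref{SuperGiventalPairing} is to identify the abstract $\mathbb{Z}_2$-graded vector space $V$ with a space of meromorphic sections on $\Sigma$ --- even sections pairing with $K_{\Sigma}$ and odd sections pairing with the relevant spinor line, the latter being where the Clifford zero modes appear --- all with prescribed poles at the ramification points; this yields a graded (topological) basis $\{d\xi_k\}_k$ indexed by the index set of the Airy structure.

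\emph{Next} I would define
\[
\omega_{g,n}(z_1,\dots,z_n)\;=\;\sum_{k_1,\dots,k_n}F_{g,n}[k_1,\dots,k_n]\,\,d\xi_{k_1}(z_1)\otimes\cdots\otimes d\xi_{k_n}(z_n),
\]
with signs dictated by the Koszul rule applied to the parities of the $d\xi_{k_i}$, and verify: graded symmetry under permutations of the $z_i$ (inherited from the symmetry of $F_{g,n}$), finiteness of the sum and the expected pole orders at the ramification points, and that $\omega_{g,n}$ is a section of the correct line bundle on $\Sigma^{\times n}$. The last point is already non-trivial in the super setting, since a ``super quadratic differential'' cannot be obtained by squaring: the way two odd one-forms combine, and hence the target bundle, must be phrased through the super quadratic Casimir of \cref{SuperQuadraticCasimir,TwistedSuperQuadraticCasimir}, and one checks that the spreading-out is compatible with it. Invertibility of this assignment --- recovering $F_{g,n}$ from $\omega_{g,n}$ by extracting expansion coefficients (super-residues) at the ramification points --- then holds because the $d\xi_k$ are dual to those residues.

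\emph{Then} I would translate the constraints. Starting from the super Virasoro constraints $H_iZ=0$ on $Z=\exp\sum_{g,n}\hbar^{2g-2+n}F_{g,n}/n!$ given by \cref{thm:SASmain} for the even and odd generators $H_i$ of $\mathcal{I}_{\rm STR}^{\,r}$, I would expand them order by order in $\hbar$ as the quadratic recursion among the coefficients $F_{g,n}[\vec k]$, and apply the spreading-out map of the previous step to both sides. Each term acquires a geometric meaning: the terms linear in a generator become super-residue pairings of $\omega_{g,n}$ against the kernel data of $\mathcal{S}_\Sigma$; the derivative terms become the ``$\omega_{0,2}$-insertion'' that merges two arguments; and the terms quadratic in a generator become, precisely by \cref{SuperQuadraticCasimir,TwistedSuperQuadraticCasimir}, the super quadratic Casimir applied to the pair formed from $\omega_{g-1,n+1}$ and $\sum_{g_1+g_2=g}\omega_{g_1,n_1}\omega_{g_2,n_2}$ (with the usual bookkeeping of the remaining arguments). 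Re-summing, $H_iZ=0$ becomes a $\mathbb{Z}_2$-graded regularity/pole-cancellation statement for an explicit combination of the $\omega_{g,n}$ near each ramification point --- these are the super loop equations, the even generators producing the bosonic ones and the odd generators the genuinely new fermionic ones, and by construction they are the geometric avatars of the super Virasoro constraints.

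\emph{The hard part} is this last step in the odd sector, together with the Ramond and non-split subtleties. Because an odd one-form cannot be squared, the term quadratic in a generator inside the odd constraint $G_rZ=0$ has no naive ``$\omega\cdot\omega$'' avatar, and matching it to a bona fide geometric object on $\Sigma$ is exactly what forces the super quadratic Casimir of \cref{SuperQuadraticCasimir,TwistedSuperQuadraticCasimir} and a careful treatment of the twist at Ramond points (half-integer versus integer moding, and the action of the Clifford zero modes). Keeping track of Koszul signs throughout the spreading-out, and ensuring that the $\omega_{g,n}$ extend from the germ near the ramification points to honest sections when $\Sigma$ lives over a non-reduced base --- so that the odd parameters genuinely appear and the recursion in the odd parameters makes sense --- is the technically delicate point. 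I expect that, after the translation above, the proof reduces at each ramification point to a local computation dictated by the structure of the twisted super Virasoro module, but one that demands real care in the Ramond case.
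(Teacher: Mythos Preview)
Your proposal is correct and follows essentially the same route as the paper: one builds the local super spectral curve so that the expansion of $\omega_{0,1}$ and $\su{B}$ reproduce the dilaton shift and polarisation parameters of $\mathcal{I}_{\rm STR}^{\,r}$, defines $\omega_{g,n}$ by spreading out $F_{g,n}$ along the polarised basis $\{d\xi_i\}$, and then translates the constraints $H_iZ=0$ into the vanishing orders of geometric expressions built from the $\omega_{g,n}$ via the twisted super quadratic Casimir $\mc{C}^{\sigma_a}$. The one organisational difference worth noting is that the paper does not expand $H_iZ=0$ term by term and match: it first packages all generators at a given ramification into two Hamiltonian fields $\su{H}^1(\su{x})$, $\su{H}^2(\su{x})$ on the base, proves as a lemma that conjugation by $\Phi T$ sends the free-field current $J(\su{z})$ to $\mc{J}(\su{z})+\omega_{0,1}(\su{z})$ so that $\su{H}^2$ is literally $\mc{C}^{\sigma_a}$ applied to $\normord{(\mc{J}+\omega_{0,1})^{\boxtimes 2}}$ plus the $\su{B}^{\textup{R}}$-term, and then extracts the loop equations by applying $\prod_j \ad_{\hslash^{-1}\mc{J}(\su{z}_j)}$ to $Z^{-1}\su{H}^iZ$ at fixed $\hslash$- and polynomial degree. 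This packaging makes the identification of the quadratic constraint with $\mc{C}^{\sigma_a}$ a one-line computation rather than a bookkeeping of Koszul signs, and isolates the normal-ordering correction as the $\su{B}^{\textup{R}}$-contribution; your approach would reach the same place but with more sign-tracking. Also note that in the simply-ramified setting of the paper every ramification point is automatically Ramond, so there is no NS-versus-Ramond dichotomy among the factors.
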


Although \cref{thm:intro1} and \cref{thm:intro2} guarantee existence of such functions $F_{g,n}$ or equivalently multidifferentials $\omega_{g,n}$, they do not say how to find them in practice. Indeed, topological recursion of \cite{EO07} can be thought of as a way of solving ordinary loop equations. Thus, our final task is to provide a way of solving super loop equations by using the geometry of a super spectral curve $\mathcal{S}_\Sigma$. To this end, it turns out that it is more convenient to introduce a new parameter $\gamma$ which intuitively measures the degree of functions in odd parameters. As long as there are only finitely many independent odd parameters (which is our setting), we can expand $\omega_{g,n}$ with respect to $\gamma$ as $\omega_{g,n}=\sum_k\gamma^k\omega_{g,n}^{(k)}$ -- this truncates at finite $k$. Then, our third main statement (\cref{thm:STR}) can be summarised as below:

\begin{theorem}
    There exists an explicit formula to solve super loop equations for $\omega_{g,n}^{(k)}$. It is a recursion not only in terms of $2g-2+n$, but also, for fixed $(g,n)$, it is a recursion with respect to $k$. We call such a formalism that recursively constructs $\omega_{g,n}$ the superconformal topological recursion.
\end{theorem}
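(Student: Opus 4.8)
\section*{Proof proposal}

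The plan is to start from the super loop equations of \cref{thm:SLE}, which constrain the multidifferentials $\omega_{g,n}$ on the super spectral curve $\mc{S}_\Sigma$, and to resolve them order by order in the formal parameter $\gamma$ grading odd‑parameter degree. Writing $\omega_{g,n}=\sum_k \gamma^k\,\omega_{g,n}^{(k)}$ — a finite sum, since there are finitely many independent odd parameters — I would first substitute this expansion into the super linear and super quadratic loop equations and collect the coefficient of each power of $\gamma$. The super quadratic loop equation is bilinear in the $\omega$'s through the super quadratic Casimir operators of \cref{SuperQuadraticCasimir,TwistedSuperQuadraticCasimir}; the only term in its expansion carrying $\omega_{g,n}$ with the full topological weight $2g-2+n$ is the one pairing it against $\omega_{0,1}$ and $\omega_{0,2}$, and since $\omega_{0,1}$ and $\omega_{0,2}$ are part of the data of $\mc{S}_\Sigma$ and may have a non‑trivial $\gamma$‑expansion on a non‑split curve, this pairing splits into a ``$\gamma^0$ background part'', linear in $\omega_{g,n}^{(k)}$, plus strictly lower‑order‑in‑$\gamma$ corrections involving $\omega_{g,n}^{(k')}$ with $k'<k$.

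Second, I would isolate the background linear operator and invert it, which is the step reproducing a super recursion kernel. Near the ramification locus the super linear loop equation forces $\omega_{g,n}^{(k)}(z_1,\dots,z_n)$ plus its local super‑involution image to be regular, while the polar part of the super quadratic loop equation controls the singular behaviour; dividing by the background odd one‑form $\omega_{0,1}^{(0)}$ and using the polarisation and prime‑form data recorded in $\mc{S}_\Sigma$ yields a kernel $K$ such that $\omega_{g,n}^{(k)}$ equals a sum of residues of $K(z_1,\cdot)$ against the right‑hand side. Here the super Givental pairing of \cref{SuperGiventalPairing} is what certifies that this is the correct inversion — it is the pairing under which the relevant linear super‑Virasoro action is self‑adjoint, hence compatible with the free fermion / Clifford description — and it is also what makes the passage back to the functions $F_{g,n}$ of the partial super Airy ideal $\mc{I}_{\rm STR}^{\,r}$ explicit, so that solving for $\omega_{g,n}^{(k)}$ is equivalent to solving the corresponding piece of the super Virasoro constraints.

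Third, I would check that the resulting formula is a genuine recursion: the right‑hand side for $\omega_{g,n}^{(k)}$ involves only (i) $\omega_{g',n'}^{(k')}$ with $2g'-2+n'<2g-2+n$ and $k'$ arbitrary, arising from the genuinely quadratic and the $\omega_{g-1,n+1}$‑type terms, and (ii) $\omega_{g,n}^{(k')}$ with $k'<k$, arising from the $\gamma$‑corrections to the $\omega_{0,1},\omega_{0,2}$ pairing described above. Since $2g-2+n$ is bounded below and the $\gamma$‑expansion truncates, this double induction is well‑founded and produces a unique $\omega_{g,n}^{(k)}$ from the base data $\omega_{0,1},\omega_{0,2}$ and the orders $2g-2+n\leq 0$. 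Finally I would verify that the $\omega_{g,n}$ produced this way are symmetric in $z_1,\dots,z_n$, the formula singling out $z_1$: as in the ordinary setting, the full super quadratic loop equation together with the super quadratic Casimir relations gives an exchange identity $K(z_1,\cdot)\,[\cdots]_{z_2}=K(z_2,\cdot)\,[\cdots]_{z_1}+(\text{exact})$, and residue manipulations then conclude symmetry by the same induction.

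The main obstacle I expect is the inversion step together with the bookkeeping of the $\gamma$‑grading inside the super quadratic Casimir. In the non‑super setting ``squaring'' $\omega_{0,1}$ is trivial; here the super quadratic Casimir replaces the naive square and, being built from derivatives and normal‑ordered products of odd objects, it does not act diagonally on $\gamma$‑degree. One must show precisely that its action raises $\gamma$‑degree by a controlled amount, so that no term of the same $(g,n)$ \emph{and} the same $k$ reappears on the right‑hand side — which would make the recursion circular — and that the kernel $K$ obtained after dividing by $\omega_{0,1}^{(0)}$ is globally well defined on $\Sigma$, in particular that the Ramond punctures do not obstruct its construction. This is exactly where the tools developed in the geometric part of the paper, the super quadratic Casimir operators and the super Givental pairing on $\Sigma$, carry the weight of the argument.
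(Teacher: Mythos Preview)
Your overall architecture --- expand in $\gamma$, isolate the leading linear piece acting on $\omega_{g,n}^{(k)}$, invert it via a kernel built from the spectral curve data, and argue the right-hand side only involves strictly lower $(2g-2+n,k)$ --- matches the paper's strategy. But the actual inversion step you describe does not work, and this is precisely the point at which the paper has to depart from the ordinary argument.

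You propose to ``divide by the background odd one-form $\omega_{0,1}^{(0)}$'' to produce the kernel $K$. In the super setting this fails: $\omega_{0,1}$ is a section of $\mc{Q}'$, not a function, so the quotient is not a section of $\mc{Q}'$ and taking residues is undefined. The paper discusses this explicitly (\cref{sec:simple curves}): even the more sophisticated attempt of inverting the super quadratic Casimir as a differential operator $\mc{G}$ produces objects that are singular along the Ramond divisor, so the residue computation cannot kill the unknown holomorphic right-hand side. The fix the paper uses is not conceptual but genuinely \emph{ad hoc}: one restricts to \emph{simple} super spectral curves, introduces the modified function $\Delta_a\tilde y = \Delta_a y + \frac{\Delta_a\lambda'\,\Delta_a\lambda}{2\Delta_a y}$, applies the operator $(2-\phi D_\phi)$ to the quadratic loop equation so that $[d\phi]^2\Delta_a\tilde y\cdot\Delta_a\omega_{g,n+1}$ can be isolated, and only then divides by the \emph{scalar} $[d\phi]^2\Delta_a\tilde y$. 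The $\gamma$-rescaling is specifically of $\lambda$ while keeping $\tilde y$ fixed; this is what guarantees that the residual terms after isolating $\omega_{g,n+1}$ carry a strictly positive power of $\gamma$ (\cref{0,1 and g,n+1}). None of this structure is visible from the naive division you sketch.

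There is a second, independent gap: the zero modes $d\eta_{a,0}$ are holomorphic and self-paired under the super Givental pairing, so the projection by $\su{B}$ does \emph{not} kill the holomorphic right-hand side of the loop equation --- it leaves a $\frac{1}{2}d\eta_0\cdot\rho_0$ term with unknown $\rho_0$ (\cref{rem:regularised projection}, \cref{sec:regularised projection}). The paper handles this by introducing the regularised bidifferential $\su{B}^{\textup{reg}} = \su{B} - \frac{1}{2}\sum_a d\eta_{a,0}\otimes d\eta_{a,0}$, and the price is that the recursion acquires a blobbed structure (\cref{lem:blob}): the holomorphic part $\mc{H}_0\omega_{g,n+1}$ must be supplied separately via the zero-mode blobs $\psi_{g,n}$, and only the polar part $\mc{P}_0\omega_{g,n+1}$ is computed by the residue formula. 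Your proposal does not account for this, and the claim that the super Givental pairing alone ``certifies the correct inversion'' is too optimistic --- it is exactly the self-pairing of $d\eta_0$ that breaks the clean projection.

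Finally, your plan to establish symmetry of $\omega_{g,n}$ by direct residue manipulation is not how the paper proceeds and would be hard to execute given the above complications. The paper instead relies on \cref{thm:SLE}: the $\omega_{g,n}$ are \emph{defined} from the $F_{g,n}$ of the partial super Airy structure via \cref{stable wgn}, which are symmetric by construction, and the recursion formula is then shown to reproduce this already-existing unique solution.
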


Perhaps surprisingly, the recursion formula in \cref{thm:STR} exhibits a similar structure to blobbed topological recursion \cite{BS17} rather than the ordinary one. There is actually an intriguing reason why we encounter such a phenomenon. Algebraically, it originates from a generator $\Gamma_0$ (called a zero mode) of the Clifford (free fermion) algebra  satisfying $[\Gamma_0,\Gamma_0]=\hslash^2$ where $[\cdot,\cdot]$ refers to a super commutator. Geometrically, it is inherited from an odd element $d\theta_0$ in $H^1(\Sigma)$ satisfying $\langle d\theta_0,d\theta_0\rangle=1$, where $\langle \cdot,\cdot \rangle$ is the super Givental pairing. Such a self-conjugate element does not exist in the ordinary setting, and this causes us to take care of such holomorphic contributions upon solving super loop equations.

\subsection{Future directions}

Since topological recursion has relations to many different areas of mathematics and several of these also have super analogues, a general question is whether these super analogues are `the same' as ours, i.e. if the relations in the non-super setting also have analogues in the super setting. Let us give a few explicit cases of this as potential future directions.\footnote{There is a very extensive literature on different aspects on supersymmetry. We do not claim expertise most of these aspects, and as such may have missed relevant references.}

\begin{itemize}
	\item The moduli spaces of super Riemann surfaces have been studied in e.g. \cite{DW15,BR21,DO23a,DO23}, and their compactification by stable curves in \cite{FKP23}. Furthermore, there has been recent progress \cite{KSY22,KSY20,KSY23,SYZ25} on defining super Gromov--Witten theory in the symplectic framework. We would be interested to know if superconformal topological recursion captures invariants of this sort.
	\item The super Weil--Petersson volumes of moduli spaces of super Riemann surfaces were studied via 
	$N = 1$ Jackiw--Teitelboim gravity in \cite{SW20} (and \cite{Mertens:2022irh} for review on JT) and related to topological recursion in \cite{Nor23, AN24}. It is reasonable to speculate whether our superconformal topological recursion can recast this relation in a more natural form and that it can be extended to a larger class of integrals over these moduli spaces. In addition, the Weil-Petersson form on moduli space descends from the Weil-Petersson form on Teichmüller space. Towards this direction, the Super Teichmüller space introduced in \cite{Penner:2015xla, Aghaei:2015bqi, Aghaei:2020otq} could be relevant.
	\item Super analogues of integrable hierarchies were discussed in the 90's, see e.g. \cite{Mul91,ABBEM93,FS93}, but there has been very little progress since then. Related concepts are super analogues of matrix models \cite{CHMS16,CHJMS17,BO18,Osu19} which suggest a possibility of quantisation of super spectral curves. Towards this direction, the superopers on supercurves introduced in \cite{Zei15} may be relevant.
	\item We do not prove all expected properties of superconformal topological recursion in the present article as it requires more careful analysis and it is beyond our scope. Two important ones are super analogues of the \emph{dilaton equation} and the \emph{variational formula} \cite{EO07}. These will be important to define and study what would be called \emph{super free energy} $F_g$. We hope to return this point in the future.
\end{itemize}

\subsection{Outline}

In \cref{sec:super_riemann_surfaces}, we introduce super Riemann surfaces with Ramond divisor. Most of the material in this section is not new, but we expect it not to be known to most of our audience. We tailor our presentation with applications to super topological recursion in mind, which means that we also deal with certain less standard concepts, such as symmetric multidifferentials. This section does contain some original material, e.g. the super Givental pairing in \cref{SuperGiventalPairing} and the (twisted) super quadratic Casimirs in \cref{SuperQuadraticCasimir,TwistedSuperQuadraticCasimir} as mentioned above.

In \cref{sec:the_symmetric_bidifferential}, we investigate a particular object that is essential for (super) topological recursion: the fundamental bidifferential of the second type. On an ordinary Riemann surface, this object encodes all holomorphic one-forms and their periods, and also yields information on meromorphic forms. An analogous story holds in the super setting, but the extension is not straightforward. We will review such subtleties and study its properties.

\Cref{SuperAiry} recalls super Airy structures and generalises this theory. For this purpose, we will highlight differences from ordinary Airy structures to motivate the necessity of defining partial super Airy structures. We then show (\cref{thm:SASmain}) that partial super Airy structures of our interests can be obtained by twisted modules of the $\mathcal{N}=1$ super Virasoro algebra. We will also explore new phenomena that only appear in the super setting, e.g. relaxing constraints due to nilpotent parameters.

In \cref{sec:super loop equations}, we translate the algebraic theory of partial super Airy structures to the geometric counterpart, i.e. loop equations among multidifferentials on a super spectral curve. To be more concrete, we derive loop equations from the associated super Virasoro constraints in \cref{thm:SLE}. We find that compact super spectral curves yield extra global constraints which imply that the corresponding Airy ideal is no longer just generated by direct sums of the Airy ideals at their ramification loci. We overcome this issue by finding a set of extra algebraic constraints that correctly captures such a global geometric feature.

\Cref{sec:SuperTR} contains the actual formula of superconformal topological recursion, i.e. a recursive way to solve super loop equations. Due to the indeterminacy of the partial Airy structures we started with, we require some choice of holomorphic parts, unlike the ordinary topological recursion but rather analogous to blobbed topological recursion. We also find that both the naive and the elegant super analogue of the derivation of topological recursion do not give rise to a useful formula, so we introduce an \emph{ad hoc} formula instead to make it work with a secondary recursion.

\Cref{sec:Examples} contains two examples of super spectral curves, namely the super versions of the Airy and Weber curves. These examples are explained in some detail, and they are used to highlight interesting features of superconformal topological recursion. We present a piece of evidence that the variational formula also holds in the super setting, which by itself is an interesting observation in terms of deformation of super Riemann surfaces.

\subsubsection*{Notation}

\begin{itemize}
\item Since this paper is concerned with a super analogue of non-super structures, we will frequently use the word `ordinary' to refer to the non-super version. 
\item We use the standard notations for super algebra and super spaces. For an algebraic superobject $V$ (a super vector space, superring, {\&}c.), we write $ V^0$ for its even and $ V^1$ for its odd part. In order to avoid confusion with other meanings of even and odd (e.g. even and odd integers), we occasionally use another common terminology: \emph{bosonic} for even and \emph{fermionic} for odd.
\item Following \cite{DM99}, we will consider the super grading and the cohomological grading to be independent. That is, forms $\omega,\tau $ have a \emph{parity}, a $ \Z_2$-valued super grading $ | \omega |$, as well as a $\Z$-valued cohomological grading $ \deg \omega$, and if we take their wedge product,
\begin{equation}
	\omega \wedge \tau = (-1)^{\deg \omega \deg \tau + |\omega | |\tau |} \tau \wedge \omega \,.
\end{equation}
For instance, $ dz \cdot \theta = \theta \cdot dz$ while $ d\theta  \cdot \theta = - \theta \cdot d\theta $  for even $ z$ and odd $ \theta$, and $d\theta\wedge d\theta$ is symmetric and non-zero.

\item Derivatives act from the left, so $ \del (ab) = (\del a) b + (-1)^{|\del | |a|} a(\del b)$. The contraction between a form $da$ and a vector field $\partial$ is given via $ da (\del) = (-1)^{|\del| |a|} (\del a)$. This is different from \cite{Wit19}.
\item When considering forms on powers of a space, as usual in the topological recursion formalism, we consider the cohomological grading in each copy of the space to be independent, as well as independent of the single super grading. That is, forms on a space $ \Sigma^n$ have a grading in $ \Z^n \times \Z /2\Z$. See \cref{sec:multi differential} for the meaning of symmetric forms with these gradings. 
\item When we consider a Lie bracket $[\cdot,\cdot]$, it is always a super Lie bracket. That is, it is either commutator or anti-commutator, depending on the parities of arguments.
\item We will often denote superconformal coordinates, e.g. $ (x \, | \, \phi )$, by blackboard bold letters such as $ \su{x}$. The superconformal analogue of non-super objects may receive the same treatment, e.g. the superconformal fundamental bidifferential will be denoted $ \su{B} $ in analogy with the non-super $B$.
\item For a set $I=\{i_1,...,i_n\}$ and objects $X_{i_1},...,X_{i_n}$, we write their tuple as $X_I\coloneqq\{X_{i_1},...,X_{i_n}\}$. We also write $ [n] \coloneqq \{ 1, \dotsc, n\}$, and hence, $ \su{z}_{[n]} = \{ \su{z}_1, \dotsc , \su{z}_n \}$. 
\end{itemize}

\subsubsection*{Acknowledgments}

We would like to thank Vincent Bouchard, Enno Ke{\ss}ler, Katherine Maxwell, Paul Norbury, Nadia Ott, and Artan Sheshmani for interesting and relevant discussions.

We are grateful to the organisers of the conference `TRSalento2021', where this project started. The authors also thank Syddansk Universitetet, the University of Alberta, the Universit\'{e} de Gen\'{e}ve, and the University of Tokyo for providing great working conditions during the collaborations that led to this work.

R.K. acknowledges support from the National Science and Engineering Research Council of Canada and the Pacific Institute for the Mathematical Sciences. The research and findings may not reflect those of these institutions. The University of Alberta respectfully acknowledges that we are situated on Treaty 6 territory, traditional lands of First Nations and M\'{e}tis people.

R.K. is partially supported by funds of the Istituto Nazionale di Fisica Nucleare, by IS-CSN4 Mathematical Methods of Nonlinear Physics. R.K. is also thankful to GNFM (Gruppo Nazionale di Fisica Matematica) of INdAM for supporting activities that contributed to the research reported in this paper.

K. O. acknowledges the support by JSPS KAKENHI Grant-in-Aid for JSPS Fellows (22KJ0715) and for Early-Career Scientists (23K12968), and in part also by 24K00525. K.O. also acknowledges the support from the Kobayashi--Maskawa Institute (KMI) for the Origin of Particles and the Universe at Nagoya University.  

N.A. acknowledges the funding and support during the time of this project; the European Union’s Horizon 2020 research and innovation programme under the Marie Sk{\l}odowska-Curie grant agreement No. 89875 and ReNewQuantum ERC-2018-SyG No. 810573 hosted by Centre for Quantum Mathematics, Department of Mathematics and Computer Science (IMADA), Southern Denmark University(SDU).  N.A. acknowledges NCCR SwissMAP, funded by the Swiss National Science Foundation at the University of Geneva and also support of the SNF Grant No. 20002-192080 at University of Zürich.

\section{Super Riemann surfaces}
\label{sec:super_riemann_surfaces}

In this section, we review and introduce geometric objects in the super setting. Our primary interests are in differentials and multidifferentials that play fundamental roles in superconformal topological recursion. For a good introduction into super Riemann surfaces, which we used as a basis for this section, see \cite{Wit19}. 

\subsection{Super Riemann surfaces}

We work in the framework of supergeometry, i.e. locally ringed spaces whose local models are open subspaces of 
\begin{equation}
	\C^{p| q} \coloneqq ( \C^p, \bigwedge \mc{O}^q_{\C^p} )\,,
\end{equation}
with transition maps  preserving \emph{parity}, i.e. the $\Z_2$ grading induced by putting $ \mc{O}_{\C^p}^q$ in degree $1$. We may take either the analytic or the Zariski topology. We will not indicate which choice we make unless it affects the results given. For a thorough introduction into superalgebra and supergeometry, see \cite{Ber87}. For a recent introduction into supergeometry in the algebro-geometric setting, see \cite{BHP23}. For a modern introduction in the differential-geometric setting focused more on super Riemann surfaces, see \cite{Kes19}.

\begin{definition}
	A \emph{super Riemann surface} $ \Sigma$ over a base complex supermanifold $T$ is a  $(1|1)$-dimensional complex supermanifold over $T$, together with the additional structure of a completely non-integrable distribution $ \mc{D} \subset \mc{T}_{\Sigma/T}$, i.e. there is a short exact sequence

\begin{equation}
	0 \to \mc{D} \to \mc{T}_{\Sigma/T} \to \mc{D}^2 \to 0\,.
\end{equation}
This means that for any non-vanishing local section $ \del $ of $ \mc{D}$, $ [\del, \del ]$ is linearly independent of $ \del$ at any point. The distribution $\mc{D}$ is called the \emph{superconformal structure} of $\Sigma$
\end{definition}

Every super Riemann surface $\Sigma$ admits an underlying reduced Riemann surface, which we often denote by $\Sigma_{\textup{red}}$, and which is naturally a subspace of $ \Sigma$ (although of course not a sub-super Riemann surface). Note that not every $(1|1)$-dimensional supermanifold is a super Riemann surface, unlike the ordinary case. This makes the study of super Riemann surfaces and their moduli interesting and at the same time complicated.

To understand super Riemann surfaces, or super geometry in general, it is sometimes useful to first start from the reduced underlying space, and a natural question is how to extend. For this, we need the following definition.

\begin{definition}
    A super space $ S $ over another super base $ T$ is \emph{projected} if there is a projection map $ S \to S_{\textup{red}}$ which is left inverse to the inclusion $ S_{\textup{red}} \to S$.\par
    Such a super space $ S/ T$ is \emph{split} if it is projected, and $ S $ is isomorphic to the total space of a parity-reversed vector bundle over $ S_{\textup{red}}$.
\end{definition}

In the smooth category, every super manifold is equivalent to a split one \cite{Bat79}. However, this is not true in the algebraic category if the base $T$ is not equal to its reduced subspace, and this is in particular also the case for super Riemann surfaces: moduli spaces of super Riemann surfaces are not projected for $ g \geq 5 $ \cite{DW15} (and not for $ g \geq 5r + 1 \geq 6$ in presence of Ramond punctures, introduced below \cite{DO23a}). Therefore, it is important to explicitly develop the relative theory (i.e. theory over a base $T$), because certain relevant super phenomena only occur in this setting.  In practice, working over a base $T$ amounts to considering families of super Riemann surfaces. We shall always consider super Riemann surfaces over a base $T$. In order to shorten the notations, however, this will be implicit in the following of this article.

On such a super Riemann surface, there exist specific coordinates called superconformal coordinates. 
\begin{definition}
	A pair of coordinates $ \su{z} = (z | \theta) $ is said to be \emph{superconformal} if 
	\begin{equation}
		D_\theta = \frac{\del}{\del \theta} + \theta \frac{\del}{\del z}
	\end{equation}
	is a local generator of $ \mc{D}$.
\end{definition}
One can easily check that in superconformal coordinates $[D_\theta,D_\theta]=\partial_z$, which is linearly independent of $D_\theta$.

Dually, we may consider the short exact sequence

\begin{equation}\label{SESCotangent}
	0 \to \mc{D}^{-2} \to \Omega^1_{\Sigma} \to \mc{D}^{-1} \to 0\,.
\end{equation}
In superconformal coordinates, the subbundle of $ \Omega^1_\Sigma $ isomorphic to $\mc{D}^{-2}$ is generated by $ \varpi_\theta = dz - d \theta \cdot \theta $ where one can  check, with care on sign conventions, that $\varpi_\theta(D_\theta)=0$. Using this notation, we see that the exterior derivative on $\Sigma$ can be written as

\begin{equation}\label{Extd}
	d = dz \frac{\del}{\del z} + d \theta \frac{\del}{\del \theta} = \varpi_\theta \frac{\del}{\del z} + d \theta D_\theta = \varpi_\theta D_\theta^2 + d \theta D_\theta \,.
\end{equation}

\begin{example}\label{P1|1}
	The simplest compact super Riemann surface is $ \P^{1|1} $, which is the quotient of $ \C^{2|1} \setminus \{ (0,0 |\epsilon)\, , \; \epsilon \in \mathbb{C}^{0|1} \}$ under the $ \C^*$-action $ c \cdot ( u, v | \epsilon ) = ( c u, c v | c \epsilon )$. Outside the locus $ v = 0$, the coordinates $ ( z = \frac{u}{v}| \theta=\frac{\epsilon}{v} )$ are well-defined. Furthermore, there is a natural anti-symmetric bilinear form on $\C^{2|1}$ (unique up to $ \mathop{\textup{PGL}} (2|1) $-action):
	\begin{equation}
		\< (u_1, v_1| \epsilon_1 ), (u_2, v_2 | \epsilon_2) \> = u_1 v_2 - u_2 v_1 - \epsilon_1 \epsilon_2\,.
	\end{equation}
This bilinear form induces a one-form $\varpi_\theta$ by specialisation $(u_1,v_1|\epsilon_1)=(u_2,v_2|\epsilon_2)=(z,1|\theta)$:
\begin{equation}
		\varpi_\theta\coloneqq d_1\< (u_1, v_1| \epsilon_1 ), (u_2, v_2 | \epsilon_2) \>|= dz- d\theta\cdot\theta\,.
	\end{equation}
This defines a (dual) superconformal structure $\mc{D}^{-2}$ on $\mb{P}^{1|1}$ making the coordinates $\su{z}=(z|\theta)$ superconformal.
\end{example} 

It turns out that a superconformal structure is a severe constraint. E.g. a ramified cover of a super Riemann surface cannot be given a compatible superconformal structure in general. Thus, we need a more general setup, allowing for the superconformality to break down in a controlled manner. A reasonable idea is to introduce a(n even) divisor\footnote{An even divisor is of codimension $1|0$.} $\mc{R}\subset\Sigma$, called the Ramond divisor, along which the superconformal structure degenerates.

\begin{definition}
	A \emph{super Riemann surface $\Sigma$ with a Ramond divisor $\mc{R}$} is a $(1|1)$-dimensional complex supermanifold together with a subbundle $ \mc{D} \subset \mc{T}_{\Sigma}$ such that there is a short exact sequence
	\begin{equation}
		0 \to \mc{D} \to \mc{T}_{\Sigma} \to \mc{D}^2(\mc{R}) \to 0\,.
	\end{equation}
	Generally, $ \mc{R}$ is assumed to be reduced, i.e. all components have multiplicity one. Its components are called \emph{Ramond punctures} (R). Points outside the support of $ \mc{R}$ may be called \emph{Neveu--Schwarz points} (NS).
\end{definition}

The idea behind this definition is that if $ \del $ is a local generator of $ \mc{D}$, then $ [\del, \del ] $ is proportional to $ \del$ along $\mc{R} $, which implies the breakdown of the complete non-integrability along $ \mc{R}$. As a consequence, $ \mc{T}_{\Sigma} / \mc{D} $ is actually generated by $ [\del, \del]/z $, where $ z $ is a local generator of $ \mc{O}(-\mc{R})$, the ideal sheaf of $\mc{R}$.

Note that Neveu--Schwarz points are of dimension $0|0$, whereas Ramond punctures are indeed of dimension $0|1$, despite the terminology `puncture'. From now, we will always implicitly allow our super Riemann surfaces to have a Ramond divisor, which we always denote by $\mc{R}$. 

\begin{lemma}[\cite{Wit19}]
	The restriction $ \mc{D}^2 (\mc{R})\big|_{\Sigma_{\textup{red}}}$ is isomorphic to $ \mc{T}_{\Sigma_{\textup{red}}}$. If $ \Sigma$ is proper, then $ \deg \mc{R}$ is even.
\end{lemma}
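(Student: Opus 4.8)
The plan is to prove the two assertions separately, both by restricting the defining short exact sequence to $\Sigma_{\textup{red}}$ and extracting information about degrees and isomorphism classes of line bundles.

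First I would establish the isomorphism $\mc{D}^2(\mc{R})\big|_{\Sigma_{\textup{red}}}\cong\mc{T}_{\Sigma_{\textup{red}}}$. Restricting the short exact sequence $0\to\mc{D}\to\mc{T}_\Sigma\to\mc{D}^2(\mc{R})\to 0$ to $\Sigma_{\textup{red}}$ is not automatically exact, but the relevant fact is more elementary: the reduced tangent bundle $\mc{T}_{\Sigma_{\textup{red}}}$ is the even part of $\mc{T}_\Sigma\big|_{\Sigma_{\textup{red}}}$, while $\mc{D}\big|_{\Sigma_{\textup{red}}}$ lies in the odd part (the generator $D_\theta$ of $\mc{D}$ is odd). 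Hence the composite $\mc{T}_{\Sigma_{\textup{red}}}\hookrightarrow\mc{T}_\Sigma\big|_{\Sigma_{\textup{red}}}\to\mc{D}^2(\mc{R})\big|_{\Sigma_{\textup{red}}}$ is a map of even line bundles on $\Sigma_{\textup{red}}$; I would check in local superconformal coordinates that it is an isomorphism. Concretely, away from $\mc{R}$ one has $[D_\theta,D_\theta]=2\partial_z$ (up to the sign convention in use), so $\partial_z$ generates $\mc{D}^2$ and restricts to a generator of $\mc{T}_{\Sigma_{\textup{red}}}$; near a Ramond puncture one uses the local generator $z$ of $\mc{O}(-\mc{R})$ so that $[D_\theta,D_\theta]/z$ generates $\mc{D}^2(\mc{R})$, and its restriction again generates $\mc{T}_{\Sigma_{\textup{red}}}$. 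Patching these local trivialisations gives the global isomorphism.

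For the parity statement, suppose $\Sigma$ is proper, so $\Sigma_{\textup{red}}$ is a compact Riemann surface of some genus $g$. Taking Berezinians (or equivalently, restricting to $\Sigma_{\textup{red}}$ and taking ordinary determinants of the pieces) of the short exact sequence yields $\mc{D}^2=\Ber(\mc{T}_\Sigma)\otimes\mc{D}^{-1}(-\mc{R})$ on $\Sigma$, hence after restriction a relation among line bundles on $\Sigma_{\textup{red}}$. Using the previous part, $\mc{D}^2(\mc{R})\big|_{\Sigma_{\textup{red}}}\cong\mc{T}_{\Sigma_{\textup{red}}}$ has degree $2-2g$, and the superconformal structure means $\mc{D}\big|_{\Sigma_{\textup{red}}}$ is a square root of $\mc{T}_{\Sigma_{\textup{red}}}(-\mc{R})$ — indeed restricting the sequence $0\to\mc{D}\to\mc{T}_\Sigma\to\mc{D}^2(\mc{R})\to 0$ shows $\mc{D}^2\big|_{\Sigma_{\textup{red}}}\cong\mc{D}^2(\mc{R})\big|_{\Sigma_{\textup{red}}}\otimes\mc{O}(-\mc{R})$, so $2\deg\mc{D}\big|_{\Sigma_{\textup{red}}}=(2-2g)-\deg\mc{R}$. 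Since the left-hand side is an even integer, $\deg\mc{R}\equiv 2-2g\equiv 0\pmod 2$, which is the claim.

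The main obstacle I anticipate is the careful bookkeeping of parities and sign conventions when restricting a short exact sequence of super vector bundles to the reduced space: the restriction functor is not exact on the nose, so I would need to argue either via the separate even/odd decompositions of the bundles involved (the clean way, since $\mc{D}$ is purely odd and $\mc{D}^2(\mc{R})$ purely even as a subquotient of $\mc{T}_\Sigma$), or via a local-coordinate computation patched globally. A secondary subtlety is making sure the identification "$\mc{D}$ restricted to $\Sigma_{\textup{red}}$ is a spin structure twisted by $\mc{R}$" is used consistently with whatever sign convention the paper has fixed for $[D_\theta,D_\theta]$; once that is pinned down, the degree count is immediate.
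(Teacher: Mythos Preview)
The paper does not give its own proof of this lemma; it is stated with a citation to \cite{Wit19} and no argument. Your proposal is correct and is essentially the standard proof. A couple of minor cleanups: the detour through Berezinians in the second paragraph is unnecessary, since the final degree count you give uses only that $\mc{D}^2(\mc{R})$ is literally the tensor square of $\mc{D}$ twisted by $\mc{O}(\mc{R})$, together with the first part; and for the first part you do not need to worry about exactness of restriction --- the map $\mc{T}_\Sigma\to\mc{D}^2(\mc{R})$ is a surjection of locally free sheaves, so it stays surjective after restriction, and your parity observation shows the even part $\mc{T}_{\Sigma_{\textup{red}}}$ maps isomorphically.
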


Throughout the paper, we assume that the degree of $\mc{R}$ is finite, and we denote this by either $\deg\mc{R}=r$ or $\deg\mc{R}=2r$, depending on the context -- the latter notation is useful for compact super Riemann surfaces.

The dual short exact sequence is
\begin{equation}
	0 \to \mc{D}^{-2}(-\mc{R}) \to \Omega^1_{\Sigma} \to \mc{D}^{-1} \to 0\,.
\end{equation}

\begin{definition}
	We will denote the subbundle of $ \Omega^1_{\Sigma/T}$ isomorphic to $ \mc{D}^{-2}(-\mc{R})$ by $ \mc{P}$, and define $ \mc{Q} \coloneqq  \Omega^1_{\Sigma} / \mc{P} $ for the quotient bundle isomorphic to $ \mc{D}^{-1}$. We write $ [\omega ] $ for the image of $ \omega \in \Omega_{\Sigma}^1$ in $ \mc{Q}$.
\end{definition}

At the Ramond divisor, the notion of superconformal coordinates does not make sense (as the superconformal structure degenerates there), but there is a notion of Ramond coordinates. 

\begin{definition}\label{def:Ramond coordinates}
	A pair of coordinates $\su{z} = ( z \, | \, \theta )$ is called \emph{Ramond coordinates} if for some Ramond puncture $ p$,  $z(p)=0$ and
	\begin{equation}
		D^*_\theta 
        =
        \frac{\del}{\del \theta} +  \theta z \frac{\del}{\del z}
	\end{equation}
	is a local generator of $ \mc{D}$.
\end{definition}

We immediately see $ (D^*_\theta)^2 = z \frac{\del}{\del z}$. The dual generator of $ \mc{P}$ is $ \varpi_\theta^* = dz -  d\theta \cdot \theta z$, while the induced generator of $ \mc{P}(\mc{R}) \cong \mc{D}^{-2}$  is $\frac{\varpi^*_\theta}{z}$. In particular, we have
\begin{equation}\label{RExtd}
    d
    =
    \frac{\varpi^*_\theta}{z}(D_\theta^*)^2+d\theta D_\theta^*.
\end{equation}

\begin{example}\label{exa:P11|0}
	Let us consider a super version of $ \P^1 $ with two Ramond punctures: the weighted projective space $ \P (1,1| 0)$. It is the quotient of $ \C^{2|1} \setminus \{ (0,0 \, | \, \theta )\}$ under the $ \C^*$-action $ c \cdot (u, v \, | \, \theta ) = (cu, cv \, | \, \theta )$. A natural skew-symmetric bilinear form in this case is:
\begin{equation}
		\< (u_1, v_1 \, | \, \theta_1 ), (u_2, v_2 \, | \, \theta_2) \> = u_1 v_2 - u_2 v_1 - \frac{1}{2} ( u_1 v_2 + u_2 v_1) \theta_1 \theta_2
	\end{equation}
Outside the locus $v=0$, consider the specialisation $(u_1,v_1|\theta_1)=(u_2,v_2|\theta_2)=(z,1|\theta)$ on the following one-form
	\begin{equation}
    	\varpi^*_\theta= d_1\< (u_1, v_1 \, | \, \theta_1), (u, v | \theta) \>|=dz -  d\theta\cdot  \theta z\,.
	\end{equation}
This verifies that $ \P (1,1| 0)$ is a super Riemann surface with $\{(0,1\, | \, \theta)\}$ as a Ramond puncture and $\su{z}=(z|\theta)$ are Ramond coordinates. Another Ramond puncture is located at  $\{(1,0 \, | \, \theta)\}$.
\end{example}

\subsection{Superconformal maps}

In the non-super setting, a conformal map is a holomorphic, or complex-analytic map, i.e. a map which locally around any point $ p$ in the source coincides with its Taylor expansion (in complex variables). For any local coordinate $z$ near $p$, and any coordinate near $ f(p)$, one can write 
\begin{equation}
	f(z) = f(p) + \sum_{j \geq k} c_j z^j 
\end{equation}
with $ c_k \neq 0$, for some $k \geq 0$ independent of the chosen coordinate, but dependent on $f$ and $p$.\par
A superconformal map should be a (superholomorphic/superanalytic) map between the $(1|1)$-dimensional supermanifolds underlying super Riemann surfaces, which moreover preserves the superconformal structure in some way. The right definition is the following.

\begin{definition}\label{def:SCtrans}
	Let $\Sigma$, $\mc{C}$ be two super Riemann surfaces. A map $f\colon\Sigma\to\mc{C}$ is said to be \emph{superconformal} if it is a holomorphic map with an even ramification divisor $R = \sum_i e_i R_i$\footnote{A ramification divisors $R$ and a Ramond divisor $\mc{R}$ are different in general, and we denote them by different letters.} such that $ f^* \mc{D}_C \cong \mc{D}_\Sigma \big(\ceil{\frac{R}{2}} \big) $ where $ \ceil{\frac{R}{2}}\coloneqq \sum_i \ceil{ \frac{e_i}{2} } R_i$. We call $e_i$ the \emph{ramification index of $f$ at $R_i$} and $r_i=e_i+1$ the \emph{ramification order of $f$ at $R_i$}, and we impose that ramification indices of $f$ are equal to those of its bosonic reduction.
\end{definition}
 
\begin{remark}
	The definition of superconformal transformation given here seems to not have appeared in the literature before. It is essentially the same as the description of branched covers of super Riemann surfaces by \cite{DW15,DO23a}. The description there is given in two parts: first, given a super Riemann surface together with a branched cover of its reduced space and a divisor whose reduction is the branch divisor, a branched cover of $(1|1)$-dimensional supermanifolds is constructed by pulling back the sheaf of sections. However, the resulting cover is not a super Riemann surface. In order to fix this, the cover must be blown up along the divisor. Because of this blowup, the inclusion $ f^* \mc{D}_C \supseteq \mc{D}_\Sigma$ should be twisted along the ramification divisor.
\end{remark}

Let us generalise a local description of ramified maps into the super setting.

\begin{proposition}\label{prop:SCtrans}
Let $ f \colon \Sigma \to C$ be a superconformal map. In local superconformal or Ramond coordinates $ \su{z} = (z\, | \, \theta)$ and $ \su{x} = (x \, | \, \phi)$ on $ \Sigma $ and $ C$, respectively, we may write
	\begin{equation}
		\begin{split}
			x \circ f(\su{z}) & = a(z) + \theta \alpha (z)\,,
			\\
			\phi \circ f (\su{z})&= \beta (z) + \theta b(z)\,,
		\end{split}
	\end{equation}
	where $ a,b$ are bosonic and $ \alpha, \beta$ are fermionic holomorphic functions of $ z$. Then, depending on the types at $z=0$ and $x=0$ of ramification order $r$, the four holomorphic functions $a,b,\alpha,\beta$ are related as below:
	\begin{align}
		&\textup{NS to NS} & \alpha &=b\beta\,, & a' &= b^2+ \beta' \beta\,, & r=&2k+1, &\beta(z)  &=\beta_0+\mc{O}(z^{k+1});&\label{NStoNS}
				\\
		&\textup{R to NS} & \alpha &= b\beta, & za' &=b^2+z\beta' \beta\,,  & r= &2k  &\beta(z)&=\beta_0+\mc{O}(z^{k}),;& \label{RtoNS}
		\\
		&\textup{NS to R} &  &  &  & \text{only constant maps;}   &  & & &   & \label{NStoR}
		\\
		&\textup{R to R} & \alpha &= ab\beta , & za' &=a(b^2 +z \beta' \beta) \,, & r =& 1, & &\hspace{-5mm} \text{any }  \beta(z),\;\;\;\; a(0)=0 &  \label{RtoR}
	\end{align}
\end{proposition}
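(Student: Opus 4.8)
The plan is to work entirely in local coordinates and impose the defining condition $f^*\mc{D}_C \cong \mc{D}_\Sigma(\ceil{R/2})$ by comparing generators. First I would recall from \eqref{Extd} and \eqref{RExtd} that $\mc{D}_\Sigma$ is generated by $D_\theta$ (NS case) or $D^*_\theta$ (Ramond case), and similarly for $C$ with the coordinate $\su{x}=(x|\phi)$. The map $f$ is superconformal exactly when $f^* D^{(*)}_\phi$ is, up to an invertible even function times the appropriate power of the local parameter of the ramification divisor, a multiple of $D^{(*)}_\theta$ — that is, the pullback of $\del/\del\phi + \phi\,(\cdot)\,\del/\del x$ must have no component along the other generator of $\mc{T}_\Sigma$. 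Concretely, writing $x\circ f = a(z)+\theta\alpha(z)$ and $\phi\circ f = \beta(z)+\theta b(z)$, I would compute $D^{(*)}_\theta(x\circ f)$ and $D^{(*)}_\theta(\phi\circ f)$ and express $f^*D^{(*)}_\phi$ in the basis $\{D^{(*)}_\theta, (D^{(*)}_\theta)^2\}$ (equivalently $\{\del/\del\theta,\del/\del z\}$), using that $f^*D^{(*)}_\phi = D^{(*)}_\phi(x\circ f)\,\del_x + D^{(*)}_\phi(\phi\circ f)\,\del_\phi$ reorganised via the chain rule on $\Sigma$.

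The key computational step is the chain rule: for a function $g$ on $C$, $D^{(*)}_\theta(g\circ f) = \bigl(D^{(*)}_\theta(x\circ f)\bigr)(\del_x g)\circ f + \bigl(D^{(*)}_\theta(\phi\circ f)\bigr)(\del_\phi g)\circ f$. Applying this to $g = \phi$ and $g = x$, one reads off that $f^*D^{(*)}_\phi$ is proportional to $D^{(*)}_\theta$ precisely when the "$\del_z$-component" vanishes. This produces two equations: one linear relation forcing $\alpha = b\beta$ (or $\alpha = ab\beta$ in the R-to-R case, where the $x$-coordinate on $C$ enters the generator $D^*_\phi = \del_\phi + \phi x\,\del_x$), and one relating $a'$ to $b^2 + \beta'\beta$ with the appropriate insertion of $z$ or $a$ coming from whether the source/target generator is $D_\theta$ or $D^*_\theta$. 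I would do the four cases (NS$\to$NS, R$\to$NS, NS$\to$R, R$\to$R) separately but in parallel, since they differ only in which of $D_\theta,D^*_\theta$ appears on each side. The NS$\to$R case is special: the target generator $D^*_\phi$ vanishes to order one along the Ramond divisor of $C$, but a superconformal $f$ with NS source cannot hit it with the required twist unless everything degenerates, forcing $f$ constant; I would extract this from the impossibility of matching the twisting divisors.

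Next I would pin down the ramification order and the behaviour of $\beta$. The ramification index $e$ is by hypothesis that of the bosonic reduction, so $a(z) = a(0) + c\,z^{r_0} + \cdots$ (with $r_0$ the reduced ramification order) and the twist $\ceil{R/2}$ translates, via the relation $f^*\mc{D}_C \cong \mc{D}_\Sigma(\ceil{R/2})$, into a divisibility condition: the even function $b^2 + (\text{stuff})\,\beta'\beta$ that appears as the "multiplier" must vanish to the correct order, which forces $\beta(z) = \beta_0 + \mc{O}(z^{k+1})$ in the NS$\to$NS case (with $r = 2k+1$) and $\beta(z) = \beta_0 + \mc{O}(z^k)$ in the R$\to$NS case (with $r = 2k$), while in the R$\to$R case $r=1$ and no condition on $\beta$ arises beyond $a(0)=0$ (so that $0$ maps to the Ramond puncture). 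The parity bookkeeping — wedge and commutator signs from the conventions fixed in the Notation section, in particular $\del_\theta\theta = 1$ but the sign in $d\theta\cdot\theta = -\theta\cdot d\theta$, and that $\beta'\beta = -\beta\beta' $ with $\beta\beta$ possibly nonzero in families — is where I expect the main obstacle: getting every sign right so that $\alpha = b\beta$ rather than $\alpha = -b\beta$, and so that the square-root-type relation $a' = b^2 + \beta'\beta$ emerges with the correct sign on the nilpotent correction. I would handle this by carefully using $(D^{(*)}_\theta)^2$ as computed in \eqref{Extd}/\eqref{RExtd} as a consistency check: squaring the relation $f^*D^{(*)}_\phi = (\text{unit})\cdot z^{?}\,D^{(*)}_\theta$ and comparing with $f^*(D^{(*)}_\phi)^2 = f^*(x^{?}\del_x)$ must reproduce the $a'$-equation, which fixes all ambiguous signs.
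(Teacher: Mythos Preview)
Your approach is essentially the paper's: apply the chain rule to $D^{(*)}_\theta$ acting on $x\circ f$ and $\phi\circ f$, impose that the result is a scalar multiple of (the pullback of) $D^{(*)}_\phi$, and read off the two relations in each of the four source/target cases. The paper organises this exactly so, arriving at $D_\theta = (b+\theta\beta')D_\phi$ in the NS$\to$NS case and its three variants, from which $\alpha = b\beta$ (or $\alpha = ab\beta$) and the $a'$-equation drop out immediately; the twist condition $f^*\mc{D}_C \cong \mc{D}_\Sigma(\ceil{R/2})$ then forces the vanishing orders of $b$ and $\beta'$ and hence the stated behaviour of $\beta$ and the parity of $r$.

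One point where your sketch is underspecified: the exclusion of non-constant NS$\to$R maps (and the restriction $r=1$ for R$\to$R) is not really a divisor-matching argument but a direct leading-order comparison. In the NS$\to$R case the relation reads $a' = a\,(b^2 + \beta'\beta)$ with $b,\beta$ holomorphic, so $a'/a$ must be holomorphic at $z=0$; but a Ramond target forces $a(0)=0$, and if $a$ vanishes to order $k\ge 1$ then $a'/a \sim k/z$, a contradiction. The R$\to$R case is the same computation with an extra factor of $z$ on the left, $za' = a\,(b^2 + z\beta'\beta)$, which together with the required vanishing of $b$ permits exactly $r=1$. Your phrase ``impossibility of matching the twisting divisors'' points in the right direction, but you should expect to produce this concrete order argument.
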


\begin{proof}
	Constant maps are always allowed, with a condition that $x=0$ when the target is Ramond by \cref{def:Ramond coordinates}. Thus, we assume that $f$ is not constant. One of the requirements of superconformality is that $ D_\theta $ be proportional to the pullback of $ D_\phi$ (c.q. $D^*_\theta$, $D^*_\phi$). Calculating by the chain rule, we obtain the following conditions.
	\begin{align}
		&\textup{NS to NS} & \alpha &=b\beta\,, & a' &= b^2+ \beta' \beta\,, & D_\theta = (b+\theta \beta')D_\phi\,;\label{NStoNS1}
		\\
		&\textup{R to NS} & \alpha &=  b \beta \,, & za' &=b^2+z\beta' \beta\,, & D^*_\theta = (b+\theta z \beta')D_\phi\,; \label{RtoNS1}
		\\
		&\textup{NS to R} & \alpha &=  ab\beta\,, & a'&= a(b^2+ \beta' \beta\,)\,, & D_\theta=(b+\theta\beta')D_\phi^*;\label{NStoR1}
		\\
		&\textup{R to R} & \alpha &=  ab\beta \,, & za' &=a(b^2 +z \beta' \beta) \,, & D^*_\theta = (b+\theta z \beta')D^*_\phi\, , \label{RtoR1}
	\end{align}
	where the proportionality between the superconformal structures can be also written as $D_\theta=(D_\theta\phi)D_\phi$ (c.q. $D^*_\theta$, $D^*_\phi$) and we dropped arguments of $a,b,\alpha,\beta$ and the pullback notation $f^*$ for brevity. Note that the above constrains need to be modified if the Ramond divisors in $\su{z}$- or $\su{x}$-coordinates are not along $z=0$ or $x=0$. Then, the twist condition implies that $b(z)=\mc{O}(z^{\floor{r}})$ and $\beta'(z)= \mc{O}(z^{\floor{\frac{r}{2}}-s})$ where $s=0$ when $z=0$ is NS and $s=1$ when $z=0$ is R, using that $\ceil{\frac{e}{2}} = \floor{\frac{r}{2}}$. The constant term $\beta(0)$ is not fixed by the twist condition.

	For the NS-NS case, the ramification order must be odd, because $b=\sqrt{a'-\beta'\beta}$ should make sense. For the R-NS case, the order of the argument of the square root is shifted by one, and hence the ramification order must be even.

	When the target is Ramond, $a(0)=0$ by the definition of Ramond coordinates, so let us denote $a(z)=u z^k+\mc{O}(z^{k+1})$ for $k\geq1$ where $u$ may be nilpotent. Then for the NS-R scenario,  $a'(z) = ku z^{k-1} + \mc{O}(z^{k})$ while $a(z)(b(z)^2 + \beta'(z) \beta(z)) = \mc{O}(z^{k})$, which is a contradiction. Therefore, only constant maps are allowed in this case. Similarly for R-R maps, one finds a contradiction, except when $r=1$, because of the required vanishing order of $ b$. Thus, every R-R superconformal map is unramified.
\end{proof}

\begin{remark}\label{rem:NS shift}
	When the target point is NS, only $a'$ appears in the constraints, hence any even constant can be freely added to $x$, similarly to the ordinary conformal setting. However, adding an odd constant is constrained. More concretely, if one considers a superconformal transformation such that the odd coordinate $\theta$ is shifted by an odd constant $\beta_0$, i.e., $\phi=\beta_0+\theta$, then the even coordinate $z$ is forced to become $x = z+\theta\beta_0$ by the case of $r=1$ NS-NS transformations of \cref{prop:SCtrans}. 
\end{remark}

We will only be interested in the case where the target has empty Ramond divisor, so we will focus on that and consider the local model $ \C^{1|1}$. In this situation, we can find local coordinates, as in the ordinary case, in which the superconformal map has a simple shape. The following two propositions (cf. \cite[Proposition~3.3]{DW15} and \cite[Section~3.1.1]{DO23a}) give this shape in the case of a Ramond and Neveu--Schwarz source point.

\begin{proposition}\label{prop:NS-NS SCmap}
	Let $ f \colon \Sigma \to C$ be a superconformal map, which at a given point $ p \in \Sigma $ is of the type NS to NS. Then for any local superconformal coordinates $ \su{x} = ( x \, | \, \phi )$ near $ f(p)$, we may find local superconformal coordinates $ \su{z}=( z \, | \, \theta )$ near $p$ such that $ f(z \, | \, \theta ) = ( z^{2k+1} \, | \, \sqrt{2k+1} \theta z^k)$, where $2k+1$ is the ramification index of $ f$.
\end{proposition}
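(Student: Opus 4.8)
The plan is to put $f$ into the stated form by an explicit superconformal change of coordinates on the source, which will turn out to be unique up to the residual $\Z/(2k+1)$–symmetry $z\mapsto\zeta z$ and a sign. \emph{Step 1 (centring and normal form of the data).} First I would replace $\su{x}$ by superconformal coordinates centred at $f(p)$, which exist by \cref{rem:NS shift} (an even translation composed with the odd shift described there); this is harmless since the desired normal form already sends $p$ to the origin. Fixing any superconformal coordinates $\su{w}=(w\,|\,\psi)$ near $p$ centred at $p$, \cref{prop:SCtrans} (NS-to-NS case) gives $x\circ f=a(w)+\psi\alpha(w)$ and $\phi\circ f=\beta(w)+\psi b(w)$ with $\alpha=b\beta$, $a'=b^2+\beta'\beta$, $\beta(w)=\beta_0+\mc{O}(w^{k+1})$ and, by the twist condition, $b(w)=\mc{O}(w^{k})$. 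Centring forces $a(0)=0$ and $\beta_0=\beta(0)=0$, hence $\beta=\mc{O}(w^{k+1})$ and $\beta'\beta=\mc{O}(w^{2k+1})$; combined with $a'=b^2+\beta'\beta$ and the ramification order $2k+1$ of the bosonic reduction, this forces $b(w)=b_kw^k+\mc{O}(w^{k+1})$ with $b_k$ a unit, $a(w)=uw^{2k+1}+\mc{O}(w^{2k+2})$ with $u=b_k^2/(2k+1)$ a unit, and $\alpha=\mc{O}(w^{2k+1})$.

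\emph{Step 2 (reduction to two equations).} A superconformal change of coordinates $\su{w}\mapsto\su{z}=(z\,|\,\theta)$ centred at the NS point $p$ is, again by the $r=1$ NS-to-NS case of \cref{prop:SCtrans}, of the form $w=g(z)+\theta\gamma(z)$, $\psi=\delta(z)+\theta h(z)$ with $\gamma=h\delta$, $g'=h^2+\delta'\delta$, $g(0)=\delta(0)=0$ and $h(0)$ a unit, so the free data is the pair $(h,\delta)$. Substituting and expanding in $\theta$ yields
\begin{align*}
	x\circ f&=\big(a(g)+\delta\,\alpha(g)\big)+\theta\big(\gamma\, a'(g)+h\,\alpha(g)\big),\\
	\phi\circ f&=\big(\beta(g)+\delta\, b(g)\big)+\theta\big(\gamma\,\beta'(g)+h\, b(g)\big).
\end{align*}
Using $\gamma=h\delta$, $\alpha=b\beta$, $a'=b^2+\beta'\beta$ and repeatedly $\delta^2=0$, I would check that vanishing of $\beta(g)+\delta\, b(g)$ already forces $\delta\,\alpha(g)=0$ and $\gamma\, a'(g)+h\,\alpha(g)=0$, and that superconformality of $f$ in the coordinates $\su{z}$ then forces $\big(a(g)\big)'=\big(\gamma\,\beta'(g)+h\,b(g)\big)^2$. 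Hence it is enough to solve the two equations
\begin{equation*}
	a(g(z))=z^{2k+1},\qquad \beta(g(z))+\delta(z)\,b(g(z))=0;
\end{equation*}
the $\theta$–coefficient of $\phi\circ f$ then squares to $(2k+1)z^{2k}$, so equals $\pm\sqrt{2k+1}\,z^k$, and the sign is turned into $+$ by the further superconformal change $\theta\mapsto-\theta$ if necessary.

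\emph{Step 3 (solving the system).} By Step 1 I can write $a(w)=\big(w\mu(w)\big)^{2k+1}$ for a unit $\mu\in\mc{O}_T[[w]]$, by extracting a $(2k+1)$–th root of the unit $w^{-(2k+1)}a(w)$ (roots of units of $\mc{O}_T$ exist locally, and the power-series part is a convergent binomial series). Letting $\nu$ be the local inverse of $w\mapsto w\mu(w)$, the first equation forces $g=\nu(\zeta z)$ for a constant $(2k+1)$–th root of unity $\zeta$, which I fix to $\zeta=1$; then the second equation determines $\delta=-\beta(\nu)/b(\nu)$, a holomorphic fermionic function vanishing at $0$ (since $\beta(\nu)=\mc{O}(z^{k+1})$ while $b(\nu)=z^k\times(\text{unit})$); and finally $h$ is fixed by $h^2=\nu'-\delta'\delta$, whose right-hand side is a unit, hence admits a local square root with $h(0)$ a unit. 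It remains to check that this $(h,\delta)$ does define a superconformal change of coordinates centred at $p$ (the identity $g'=h^2+\delta'\delta$ holds by construction), that it puts $f$ in the stated form, and that, all operations being holomorphic, the resulting coordinates are analytic.

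The step I expect to be most delicate is Step 2: organising the $\theta$–expansion of the composition and verifying, with the super sign conventions, that the four scalar conditions collapse to the two displayed equations is pure bookkeeping of signs and of $\delta^2=0$, but it is easy to slip. By contrast Step 3 is a mild super-enhancement of the classical normal form for ramified covers; the only genuinely new inputs are the local existence of roots of units in $\mc{O}_T$ and the observation that the odd datum $\delta$ is determined — not free — once $g$ has been fixed. One should also make sure in Step 1 that centring $\su{x}$ at $f(p)$ is legitimate, which it is by \cref{rem:NS shift}.
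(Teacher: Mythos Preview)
Your proof is correct and follows essentially the same strategy as the paper's: solve the bosonic equation (your $a(g)=z^{2k+1}$, the paper's $c^{2k+1}=a$) by extracting a $(2k+1)$-th root, then determine the odd datum $\delta$ from it, and finally check consistency of the remaining components. The only organisational difference is that you express the \emph{old} coordinates $(w\,|\,\psi)$ as functions of the \emph{new} ones $(z\,|\,\theta)$ and compute $f$ in the new coordinates by substitution, whereas the paper writes the new coordinates as functions of the old and matches against the target $(z^{2k+1}\,|\,\sqrt{2k+1}\,\theta z^k)$; these are inverse coordinate changes and lead to the same pair of equations. Your explicit reduction to two scalar equations in Step~2 (using $\gamma=h\delta$ so that $\delta\gamma=0$, and the a~priori superconformality of the composite to recover the $\theta$-coefficient of $\phi\circ f$) is a bit more systematic than the paper's direct matching, and your observation of the residual $\Z/(2k{+}1)\times\Z/2$ ambiguity is a small bonus the paper does not record. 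One cosmetic point: the statement already takes $\su{x}$ to be coordinates \emph{at} $f(p)$ (otherwise the normal form is impossible), so your centring step is vacuous rather than a change of the given $\su{x}$.
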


\begin{proof} 
	Let us take some local superconformal coordinates $\tilde{\su{z}}= (\tilde{z} | \tilde{\theta} ) $ near $p$. We write $ f = \big( a(\tilde{z} ) + \tilde{\theta} \alpha( \tilde{z} ) \, \big| \, \beta ( \tilde{z} ) + \tilde{\theta} b(\tilde{z}) \big) $. 
	
	We want to find a change of NS coordinates on $\Sigma$ to $\su{z}= (z \, | \, \theta ) = \big( c (\tilde{z}) + \tilde{\theta} \gamma (\tilde{z}) \, \big| \, \delta (\tilde{z}) + \theta d (\tilde{z}) \big)$ (subject to \cref{NStoNS}) such that $ f = ( z^{2k+1} \, | \, \sqrt{2k+1} \theta z^k )$. Such a change of coordinates on $\Sigma$ would have to satisfy (as equations in $\tilde{\su{z}}$-coordinates)
	\begin{equation*}
		\begin{split}
			(a + \tilde{\theta} \alpha \, | \, \beta + \tilde{\theta} b)
			&=
			\big( (c+ \tilde{\theta} \gamma )^{2k+1} \, \big| \, \sqrt{2k+1} (\delta + \tilde{\theta} d) (c + \tilde{\theta} \gamma)^k \big)
			\\
			&=
			\big( c^{2k+1} + (2k+1) \tilde{\theta} \gamma c^{2k} \, \big| \, \sqrt{2k+1} (\delta + \tilde{\theta} d) (c^k + k \tilde{\theta} \gamma c^{k-1}) \big)
			\\
			&=
			\big( c^{2k+1} + (2k+1) \tilde{\theta} \gamma c^{2k} \, \big| \, \sqrt{2k+1} \delta c^k + \sqrt{2k+1} \tilde{\theta} ( d c^k - k \delta \gamma a c^{k-1}) \big)
		\end{split}
	\end{equation*}
	Requiring that $ \gamma = \sqrt{c'} \delta$ and $ d = \sqrt{ c' - \delta' \delta}$, we see that $ \delta \gamma = 0$, so we find
	\begin{equation*}
		\begin{split}
			(a + \tilde{\theta} \alpha \, | \, \beta + \tilde{\theta} b)
			&=
			\big( c^{2k+1} - (2k+1) \tilde{\theta} \sqrt{c'} \delta c^{2k} \, \big| \, \sqrt{2k+1} \delta c^k + \sqrt{2k+1} \tilde{\theta} \sqrt{ c' - \delta' \delta} c^k \big)
		\end{split}
	\end{equation*}
	This can be solved: we may find $ c$ such that $ c^{2k+1} = a = \mc{O}(z^{2k+1})$, hence $c= \mc{O}(z)$. Then, take $ \delta = \frac{\beta}{\sqrt{2k+1} c^k}$ which can be done because $ \beta = \mc{O}(z^{k+1}) = \mc{O}(c^{k+1})$; there is no constant term due to the assumption that $\su{x}$ are coordinates near $f(p)$. Moreover, as a consistency check, we do get
	\begin{align*}
		\alpha
		&=
		 \sqrt{a'} \beta 
		= 
		\sqrt{ (2k + 1) c^{2k} c'} \cdot \sqrt{2k+1} \delta c^k 
		= 
		 (2k + 1) \sqrt{c'} \delta c^{2k}\,,
		\\
		b
		&=
		\sqrt{ a' - \beta' \beta } = \sqrt{ (2k + 1) c^{2k} c' - \sqrt{2k + 1} \delta' c^k \sqrt{2k + 1} \delta c^k } = \sqrt{ (2k + 1) (c' -\delta' \delta )} c^k \,,
	\end{align*}
	which is again consistent with the previous calculation.
\end{proof}

\begin{proposition}\label{prop:NS-R SCmap}
	Let $ f \colon \Sigma \to C$ be a superconformal map, which at a given point $ p \in \Sigma $ is of the type R to NS. Then for any local superconformal coordinates $ \su{x} = ( x \, | \, \phi )$ near $ f(p)$, we may find local Ramond coordinates $ \su{z}=( z \, | \, \theta )$ near $p$ such that $ f(z \, | \, \theta ) = ( z^{2k} \, | \, \sqrt{2k} \theta z^k)$, where $ 2k$ is the ramification index of $ f$.
\end{proposition}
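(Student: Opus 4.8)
The plan is to run the proof of \cref{prop:NS-NS SCmap} essentially verbatim, but with Ramond coordinates on the source and with the $r=1$ R-to-R coordinate-change constraints of \cref{RtoR1} in place of the NS-to-NS ones. First I would fix arbitrary local Ramond coordinates $\tilde{\su{z}} = (\tilde z\,|\,\tilde\theta)$ near $p$, which exist by \cref{def:Ramond coordinates}, and write $f = \big( a(\tilde z) + \tilde\theta\, \alpha(\tilde z) \,\big|\, \beta(\tilde z) + \tilde\theta\, b(\tilde z) \big)$ subject to the R-to-NS relations \cref{RtoNS}, namely $\alpha = b\beta$ and $\tilde z\, a' = b^2 + \tilde z\, \beta'\beta$, with $a = \mc{O}(\tilde z^{2k})$ and $b = \mc{O}(\tilde z^{k})$ (the orders being dictated by the ramification order $2k$ and the twist condition). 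As in \cref{prop:NS-NS SCmap}, since $\su{x}$ are coordinates near $f(p)$ the constant term $\beta_0$ of $\beta$ vanishes --- $f$ maps the entire Ramond puncture $\{\tilde z = 0\}$ to the single point $(x\,|\,\phi) = (0\,|\,\beta_0)$, because $a(0) = \alpha(0) = b(0) = 0$ --- so in fact $\beta = \mc{O}(\tilde z^{k})$.

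Next I would look for a change of Ramond coordinates $\su{z} = (z\,|\,\theta) = \big( c(\tilde z) + \tilde\theta\, \gamma(\tilde z) \,\big|\, \delta(\tilde z) + \tilde\theta\, d(\tilde z) \big)$. Being an automorphism of $\Sigma$ near $p$, hence superconformal and unramified, it must satisfy $\gamma = c\, d\, \delta$ and $\tilde z\, c' = c\,(d^2 + \tilde z\, \delta'\delta)$ with $c(0) = 0$ by \cref{RtoR1}, and I would impose $f(z\,|\,\theta) = \big( z^{2k} \,\big|\, \sqrt{2k}\, \theta\, z^{k} \big)$. Expanding $(c + \tilde\theta\gamma)^{2k}$ and $(\delta + \tilde\theta d)(c + \tilde\theta\gamma)^{k}$ and using $\tilde\theta^{2} = 0$ together with $\delta\gamma = c\, d\, \delta^{2} = 0$, this reduces to the system $a = c^{2k}$, $\alpha = 2k\, c^{2k-1}\gamma$, $\beta = \sqrt{2k}\, \delta\, c^{k}$, $b = \sqrt{2k}\, d\, c^{k}$. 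I would solve $a = c^{2k}$ by taking $c$ to be a $2k$-th root of $a$, so $c = \mc{O}(\tilde z)$ with $c'(0)$ a unit; then set $\delta \coloneqq \beta /(\sqrt{2k}\, c^{k})$, which is holomorphic precisely because $\beta = \mc{O}(\tilde z^{k}) = \mc{O}(c^{k})$ (this is where $\beta_0 = 0$ is used); then $d$ is the square root of $\tilde z\, c'/c - \tilde z\, \delta'\delta = 1 + \mc{O}(\tilde z)$, so $d(0)$ is a unit and the coordinate change is invertible; and finally $\gamma \coloneqq c\, d\, \delta$.

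It then remains to check the two consistency conditions $\alpha = 2k\, c^{2k-1}\gamma$ and $b = \sqrt{2k}\, d\, c^{k}$. For this I would substitute $a = c^{2k}$ and the constraint $\tilde z\, c' = c\,(d^2 + \tilde z\, \delta'\delta)$ into $\tilde z\, a'$, compute $\beta'\beta = 2k\, \delta'\delta\, c^{2k}$ (the term proportional to $c'\delta^{2}$ dropping out), and compare with $\tilde z\, a' = b^{2} + \tilde z\, \beta'\beta$ to obtain $b^{2} = 2k\, d^{2}\, c^{2k}$; likewise $\alpha = b\beta = 2k\, d\, \delta\, c^{2k} = 2k\, c^{2k-1}\gamma$. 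After fixing the $2k$-th root of $a$ and the sign of $d$ these are consistent with the given data, exactly as in the proof of \cref{prop:NS-NS SCmap}. One should also record that the new $\su{z}$ really are Ramond coordinates: $(z) = (\tilde z)$ as ideals, so the Ramond divisor is $\{z = 0\}$, and by \cref{RtoR1} one has $D^*_{\tilde\theta} = (d + \tilde\theta\, \tilde z\, \delta')\, D^*_\theta$ with $d(0)$ a unit, so $D^*_\theta$ generates $\mc{D}_\Sigma$. The main things to get right are the R-to-R constraints --- which differ from the NS-to-NS ones, notably by the extra factor of $c$ in $\gamma = c\, d\, \delta$ and in the relation between $c$ and $d$ --- together with the vanishing of $\beta_0$ and the bookkeeping of nilpotents such as $\delta^{2} = 0$ and the borderline division $\beta/c^{k}$ of two functions vanishing to order exactly $k$; I do not expect any genuinely new difficulty compared with the Neveu--Schwarz case.
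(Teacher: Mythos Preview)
Your proposal is correct and follows exactly the approach the paper intends: the paper's own proof is the single sentence ``A parallel proof to the previous proposition works,'' and you have carefully spelled out precisely that parallel argument, correctly replacing the NS-to-NS coordinate-change constraints by the R-to-R ones from \cref{RtoR1} and tracking the modified relations $\gamma = c\,d\,\delta$ and $\tilde z\,c' = c(d^2 + \tilde z\,\delta'\delta)$ through the computation. Your consistency checks and the verification that the new $\su{z}$ are genuine Ramond coordinates are all in order.
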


\begin{proof}
	A parallel proof to the previous proposition works.	
\end{proof}

In order to clarify potential confusions, it is perhaps helpful to present examples of non- super conformal maps that one may naively think of as superconformal.

\begin{example}
	For odd parameters $\tau_0,\tau_1$ and positive integers $k,s$ with $k<s$ and $k+3\neq s$, consider the following NS-NS map $f \colon \Sigma \to C$:
	\begin{equation}
		x
		=
		\frac{(k+3-s)\tau_0\tau_1}{k+s+1}z^{k+s+1}+\frac{z^{2s+1}}{2s+1}+\theta(\tau_0z^{k+s+1}+\tau_1z^{2s}), \quad 
		\phi
		=
		\tau_0 z^{k+1}+\tau_1 z^s+\theta(\tau_0\tau_1 z^k+z^s)
	\end{equation}
	By setting $\tau_0=\tau_1=\theta=0$, i.e. when the base $T$ is reduced, this map induces a conformal transformation of ramification order $2s+1$. Moreover, $D_\theta=((\tau_0\tau_1+(k+1)\theta\tau_0)z^k+\mc{O}(z^{k+1}))f^*D_\phi$, so that $f$ preserves the superconformal structure, but $f^* \mathcal{D}_{C}=\mathcal{D}_\Sigma(k)\not\cong \mathcal{D}_\Sigma(\floor{\frac{2s+1}{2}})$ for generic $\tau_0,\tau_1$ due to the assumption $k<s$, violating the condition that the ramification index of $f$ is equal to that of its reduction in \cref{def:SCtrans}. Thus, we do not call $f$ of this example a superconformal map. The reduction condition ensures that the leading order of $a(z)$ is invertible, i.e. not nilpotent.
	
\end{example}

\begin{example}
One may wonder what happens if we take any effective divisor instead of a reduced Ramond divisor. Equivalently, what if we consider a limit over $T$ such that several components of a Ramond divisor coincide. Locally, this would mean that $\mc{D}$ is now generated in local coordinates by
\begin{equation}
\frac{\del}{\del \theta} + \theta z^{k} \frac{\del}{\del z}
\end{equation}
for some $k \in \mb{Z}_{\geq2}$. Existence of such a subbundle is referred to as a \emph{parabolic structure}, which is a generalisation of superconformal structures (cf. \cite{DW15}). Thus, by construction, any map to/from such a surface would not be called superconformal. Nonetheless, one may consider a simple holomorphic coordinate transformation $(x|\phi)= (z|z^m\theta)$ for $m\in\mb{Z}_{\geq0}$ which gives
\begin{equation}
\frac{\del}{\del \theta} + \theta z^{k} \frac{\del}{\del z}=x^m\left(\frac{\del}{\del \phi} + \phi x^{k-2m} \frac{\del}{\del x}\right).
\end{equation}
This shows that considering a supermanifold  with a parabolic structure is closely to related to a higher ramified cover of a super Riemann surface.
\end{example}

It is worth mentioning that Ramond coordinates are often called superconformal in the literature. We will try to avoid such a use, as they behave differently. In particular, they are not superconformal anywhere in a punctured neighbourhood of the Ramond divisor. The relation between the Ramond coordinates $(z^{\textup{R}}\, | \, \theta^{\textup{R}})$ and some superconformal coordinates $(z^{\textup{SC}} \, | \, \theta^{\textup{SC}})$ on a simply connected open subset $U$ of that punctured neighbourhood is
\begin{equation}
	\begin{split}
		z^{\textup{R}} & = z^{\textup{SC}}\,;
		\\
		\sqrt{z} \theta^{\textup{R}} &= \theta^{\textup{SC}}\,,
	\end{split}
\end{equation}
as we can see from the fact that $ D^*_{\theta^{\textup{R}}} = \sqrt{z^{\textup{R}}} D_{\theta^\textup{SC}}$. The square root may be chosen on all of $U$ as it is simply connected and $z$ is non-zero on it. The transformation does not meet the condition of \cref{def:SCtrans} at $z=0$\footnote{Another possible approach is taken in \cite{DO24}: in that paper, superconformal coordinates are always taken to be such that $ D_\theta = \frac{\del}{\del \theta} + z \theta \frac{\del}{\del z}$. If the coordinate chart does not intersect the Ramond divisor, this can still be achieved by choosing $z$ non-zero on the entire chart. We will not use this convention, as we want to be able to speak of coordinates centred at any point, be it Ramond or Neveu-Schwarz.}.

\begin{remark}\label{rem:R at infinity}
In \cref{exa:P11|0}, there is another Ramond puncture at which the superconformality breaks down, namely, at $z = \infty $. To see this explicitly, let us consider a coordinate transformation from $\su{z}_0$ to $\su{z}_\infty$ 
\begin{equation}
z_\infty = \frac{1}{z_0} \,,\quad
    	\theta_\infty = i \theta_0
\end{equation}
where we have chosen the sign of $b$ in \cref{RtoR} by $\theta_\infty = i \theta_0 $, not $\theta_\infty = -i \theta_0 $. One can show that this is an R-R superconformal transformation. In particular, the generators of $\mc{D}^{-2}$ in coordinates $\su{z}_0$ and $\su{z}_\infty$ are related by
	\begin{equation}\label{z-t SCR}
        	\frac{\varpi^*_{\theta_0}}{z_0} = - 	\frac{\varpi^*_{\theta_\infty}}{z_\infty}  \,,
	\end{equation}
\end{remark}

Let us close this subsection by introducing the notion of orientation of a Ramond divisor. First, one can obtain the following lemma from \cref{prop:SCtrans}; it also appears in \cite{Wit15}.

\begin{lemma}\label{RamondAut}
	If $ \phi $ is a superconformal automorphism near a  Ramond puncture and $ (z \, | \, \theta ) $ are Ramond coordinates on it, then
	\begin{equation}
	\phi^* \theta = \pm (\theta + \tau ) \pmod{z} \,,
	\end{equation}
	where $ \tau $ is an odd function on the base $ T$.
\end{lemma}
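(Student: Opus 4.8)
The plan is to deduce the statement from the R-to-R case \eqref{RtoR} of \cref{prop:SCtrans}. First I would note that a superconformal automorphism $\phi$ defined on a neighbourhood of a Ramond puncture must map the Ramond divisor to itself, since the latter is intrinsically characterised as the locus along which the superconformal structure degenerates; after shrinking the neighbourhood so that it meets $\mc{R}$ in a single puncture $p$, this forces $\phi(p)=p$, so $\phi$ is of type R to R at $p$. Consequently \cref{prop:SCtrans} applies in the form \eqref{RtoR}: in Ramond coordinates $(z\,|\,\theta)$ centred at $p$ (using the same names for the source and target coordinates, which is legitimate because $\phi$ is an automorphism) we may write $\phi^*x = a(z)+\theta\alpha(z)$ and $\phi^*\theta = \beta(z)+\theta b(z)$, with $\alpha = ab\beta$, $za' = a(b^2+z\beta'\beta)$, $a(0)=0$, $\beta$ arbitrary, and $\phi$ unramified ($r=1$).

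Next I would determine the leading term of $b$. Since $\phi$ is an automorphism fixing $p$, its bosonic reduction fixes $z=0$ with invertible derivative, so $a(z)=uz+\mc{O}(z^2)$ with $u$ invertible in $\mc{O}_T$. Substituting this into $za'=a(b^2+z\beta'\beta)$ and comparing the coefficients of $z$ gives
\begin{equation}
    u = u\,b(0)^2 ,
\end{equation}
hence $b(0)^2=1$; writing the even element $b(0)$ as its body plus a nilpotent part, the body squares to $1$ and the nilpotent part is then forced to vanish, so $b(0)=\pm 1$.

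Finally I would pass to the quotient modulo $z$, i.e. modulo the ideal sheaf $\mc{O}(-\mc{R})$ of the puncture. Since $b(0)=\pm 1$ and $\beta(0)$, being the constant term of a fermionic function of $z$, is an odd element of $\mc{O}_T$, we obtain $\phi^*\theta\equiv\beta(0)+b(0)\theta\equiv\pm(\theta+\tau)\pmod{z}$, where $\tau\coloneqq\pm\beta(0)$ with the sign chosen to match $b(0)=\pm 1$. Here $\tau$ is an odd function on the base $T$, which is exactly the claim.

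There is no serious obstacle: the real content is already packaged in \cref{prop:SCtrans}. The only points requiring a little care are (a) justifying that a neighbourhood automorphism fixes the Ramond puncture and is of R-to-R type, which relies on the intrinsic characterisation of $\mc{R}$, and (b) upgrading $b(0)^2=1$ to $b(0)=\pm 1$ via the short nilpotent-part argument; both are routine.
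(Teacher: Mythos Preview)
Your proposal is correct and follows exactly the approach the paper indicates: the paper does not give an explicit proof but simply states that the lemma ``can be obtained from \cref{prop:SCtrans}'' (and also appears in \cite{Wit15}). You have carefully filled in the details of this deduction from the R-to-R case \eqref{RtoR}, and your argument that $b(0)^2=1$ forces $b(0)=\pm 1$ via invertibility of $\pm 2 + n$ is the right way to handle the nilpotent part.
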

This behaviour is fundamentally different from Neveu--Schwartz points: if $(z \, | \, \theta )$ are superconformal coordinates near an NS point, it is possible to rescale the $ \theta$, provided one also rescales $z$. For Ramond coordinates, this is not possible.

It becomes particularly important to choose the sign at every Ramond puncture in order to define so-called odd periods in \cref{sec:odd periods}.

\begin{definition}\label{DefOrientation}
	An \emph{orientation} of a Ramond divisor $\mathcal{R}$ is a  reduction of the ambiguity in the previous definition to
	\begin{equation}
	\phi^* \theta = \theta + \tau \pmod{z} \,,
	\end{equation}
	at every component of $\mathcal{R}$.
\end{definition}

An oriented Ramond divisor $\mathcal{R}$ has a well-defined form $ d\theta$ associated to every component -- this is the generator of $ \Omega^1_{\mathcal{R}/S}$ and as such is really an orientation.

\subsection{Differential forms}

Let us look into different kinds of differential forms. A good reference for this material is \cite{DO24}.

\subsubsection{Berezinian}

The first important geometric structure on $\Sigma$ is the \emph{Berezinian bundle} $ \omega_{\Sigma/T} = \Ber (\Sigma /T) = \Ber ( \Omega^1_{\Sigma/T})$, which is the super analogue of the canonical bundle -- the Berezinian being the super analogue of the determinant. For readers unfamiliar with this notions, let us recall that for a linear automorphism $ L$ of a super vector space $ V$, we can write $L$ as a block matrix with respect to the grading of $ V$ as
\begin{equation}
    L = 
    \begin{pmatrix}
        A & \Gamma
        \\
        \Delta & B
    \end{pmatrix}
\end{equation}
and
\begin{equation}
    \Ber L = \det A \det (B - \Delta A^{-1} \Gamma)^{-1} \,.
\end{equation}
Given local superconformal coordinates $ \su{z}= (z\, |\, \theta ) $, the Berezinian has a local section $ [ dz | d\theta ] $; given other superconformal coordinates $ \su{w}= (w\, |\, \psi )$, the transition function is
\begin{equation}
	[ dw | d\psi ] = [dz | d\theta ] \Ber \begin{pmatrix} \del_z w & \del_z \psi \\ \del_\theta w & \del_\theta \psi \end{pmatrix} \,=[dz|d\theta](D_\theta\psi).
\end{equation}
The second equality explicitly shows that, for a SRS, there is an isomorphism $ \Ber (\Sigma) \cong \mc{D}^{-1}$, which is given in local coordinates by $ [dz | d\theta ] \mapsto [d\theta] $.

In the presence of a Ramond divisor, however, the Berezinian becomes $ \omega_{\Sigma/T} \cong \mc{D}^{-1} (- \mc{R}) \cong \mc{Q}(-\mc{R})$, with the isomorphism given locally by $ [ dz | d\theta ] \mapsto [d \theta ] z $.  One can check this by explicitly computing the Berezinian for R-R superconformal transformation, or see \cite{Wit15} for more details.

\subsubsection{Closed forms}
One thing to note regarding differential forms on super Riemann surfaces is that there exist non-trivial forms of arbitrary order, unlike the ordinary case. Namely, if $ \theta$ is a local odd coordinate, then $ (d\theta)^k $ is non-vanishing for any $ k \geq 0$. In particular, this means that one-forms are not necessarily closed. For our application, we are interested in closed holomorphic/meromorphic one-forms, as we want to calculate periods and residues.

There are different ways to think about closed one-forms on super Riemann surfaces. One is the extended Berezinian; this is the point of view taken in \cite{DO24}. The isomorphism $ \omega_{\Sigma} \cong \mc{D}^{-1}(-\mc{R}) \cong \mc{Q}(-\mc{R})$ induces a short exact sequence

\begin{equation}
	0 \longrightarrow \omega_{\Sigma} \longhookrightarrow \omega_{\Sigma} (\mc{R}) \stackrel{\big|_\mc{R}}{\longrightarrow} j_*\Omega^1_{\mc{R}} \longrightarrow 0 \,,
\end{equation}
where the first map is an inclusion while the last map is restriction (and $ j $ is the closed immersion of $ \mc{R}$ in $ \Sigma$): it is the composition given in local Ramond coordinates as
\begin{equation}
 	\omega_{\Sigma}(\mc{R} ) \stackrel{\sim}{\longrightarrow} \mc{Q} \to j_*\Omega_{\mc{R}}^1 \colon \frac{[dz|d\theta]}{z} \mapsto [d\theta] \mapsto d\theta\big|_\mc{R} \,.
 \end{equation} 

\begin{definition}
	The \emph{extended Berezinian sheaf} of $ \Sigma$ is
	\begin{equation}
		\omega_{\Sigma}' 
        \coloneqq
        \Ker \big( d \circ \big|_\mc{R} \colon \omega_{\Sigma}(\mc{R}) \to j_* \Omega^1_{\mc{R}} \big) \,.
	\end{equation}
\end{definition}

Concretely, sections of $ \omega_{\Sigma}' $ may have simple poles along $ \mc{R}$, but these poles must be constant along $ \mc{R}$.
That is, they must be of the form $ a \frac{[dz|d\theta]}{z}$ in local Ramond coordinates with $ a $ a local holomorphic function on $\Sigma$ with no term $ z^0 \theta$.

\begin{proposition}[{\cite{RSV88,Wit15}}]\label{ClosedForms}
	The extended Berezinian sheaf is isomorphic to the sheaf of closed holomorphic one-forms:
	\begin{equation}
		\alpha \colon \omega_{\Sigma}' \stackrel{\sim}{\longrightarrow} \mc{Z}^1_{\Sigma} \,.
	\end{equation}
\end{proposition}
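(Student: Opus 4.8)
The plan is to construct the isomorphism $\alpha\colon\omega'_\Sigma\to\mc{Z}^1_\Sigma$ locally, verify it is well-defined on overlaps (compatible with transition functions), and then check injectivity and surjectivity, which can both be done stalk-wise. First I would recall, from \cref{Extd} and \cref{RExtd}, the explicit expressions for the exterior derivative on $\Sigma$: away from the Ramond divisor, in superconformal coordinates $\su{z}=(z|\theta)$, a general one-form is $\omega = f\,\varpi_\theta + g\,d\theta$ with $f,g$ local functions, and
\begin{equation*}
	d\omega = \varpi_\theta\wedge d\theta\,(D_\theta f - \del_z g) + d\theta\wedge d\theta\,(D_\theta g)\,,
\end{equation*}
up to signs I would fix carefully. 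Hence $\omega$ is closed iff $D_\theta g = 0$ and $D_\theta f = \del_z g$; the first condition forces $g$ to be a function pulled back from the reduced curve in a suitable sense, and then $f$ is determined by $g$ up to a closed piece. The upshot is that a closed one-form is, away from $\mc{R}$, determined by a single local datum; the natural candidate is to send the Berezinian section $[dz|d\theta]\mapsto$ (the closed form with that "residue data"), matching the isomorphism $\omega_\Sigma\cong\mc{Q}$ already recorded in the excerpt. Near a Ramond puncture I would use \cref{RExtd} and the description just before the proposition: a section of $\omega'_\Sigma$ is $a\,\frac{[dz|d\theta]}{z}$ with $a$ holomorphic having no $z^0\theta$ term, and I would write down the corresponding closed meromorphic one-form explicitly in Ramond coordinates, checking that the "no $z^0\theta$ term" condition is exactly what makes $d$ of that form vanish (this is the content of the kernel defining $\omega'_\Sigma$).

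The key steps, in order: (1) establish the local formula for $d$ on one-forms in both superconformal and Ramond coordinates, and read off the closedness equations; (2) solve these equations to see that $\mc{Z}^1_\Sigma$ is, locally, a free rank-$1|1$ module matching $\omega_\Sigma(\mc{R})$ cut down by the Ramond condition — i.e. matching $\omega'_\Sigma$; (3) define $\alpha$ on local generators and check it intertwines the transition functions of $\omega'_\Sigma$ (which are those of $\mc{Q}(-\mc{R})$ twisted by $\mc{O}(\mc{R})$, i.e. of $\mc{Q}$) with the natural gluing of closed forms — here I would invoke \cref{prop:SCtrans}, specifically \cref{NStoNS} and \cref{RtoNS}/\cref{RtoR}, to compute how $\varpi_\theta$, $d\theta$, $\varpi^*_\theta/z$ transform and confirm compatibility; (4) conclude bijectivity: injectivity because a one-form vanishing has vanishing local data, surjectivity because any closed holomorphic one-form locally solves the closedness equations and hence lies in the image, while the global statement follows since $\alpha$ is a map of sheaves that is a local isomorphism. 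Throughout, I would lean on \cref{ClosedForms}'s cited sources \cite{RSV88,Wit15} for the normalisations and only reprove what is needed for the Ramond-divisor generality.

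The main obstacle I anticipate is step (2)–(3) at the Ramond punctures: the subtlety is precisely that one-forms on a super Riemann surface need not be closed, and that the "allowed" simple pole along $\mc{R}$ must be *constant along* $\mc{R}$ — translating this sheaf-theoretic condition (kernel of $d\circ|_\mc{R}$) into the coordinate statement "$a$ has no $z^0\theta$ term" and then matching it with closedness of the corresponding meromorphic form requires careful bookkeeping of the sign conventions fixed in the Notation (the $da(\del) = (-1)^{|\del||a|}(\del a)$ convention, and $d\theta\wedge d\theta \neq 0$). A secondary nuisance is the relative setting: everything must be checked over the base $T$, so "constant/holomorphic function" means a section of $\mc{O}_{\Sigma}$ as an $\mc{O}_T$-algebra, and nilpotents in $\mc{O}_T$ cannot be discarded — but since all computations are polynomial identities in the local generators this does not cause a genuine difficulty, only requires care not to divide by things that may be zero-divisors. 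I expect the NS-to-NS overlap compatibility to be essentially the classical computation, so the real work is isolating the Ramond contribution and the $\mc{R}$-twist.
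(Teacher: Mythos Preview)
The paper does not prove this proposition; it is cited from \cite{RSV88,Wit15}, and the text immediately following the statement merely \emph{describes} the isomorphism explicitly in local coordinates by writing a general one-form as $\sigma = dz\,f + d\theta\,g$, computing $d\sigma$, reading off the closedness conditions $\del_\theta g = 0$ and $\del_\theta f = \del_z g$, and then writing down $\alpha$ and $\alpha^{-1}$ in both NS and Ramond coordinates. Your proposal is essentially this same computation, differing only in that you work in the $(\varpi_\theta, d\theta)$ basis rather than $(dz, d\theta)$, and you add the extra step of checking compatibility under superconformal transition functions --- a step the paper omits, presumably deferring to the cited sources.
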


It is useful to describe this isomorphism explicitly. Consider a $1$-form in local coordinates $\su{z}=(z \, | \, \theta)$
\begin{equation}
	\sigma = dz \, f(z | \theta ) + d\theta \, g( z | \theta )\,.
\end{equation}
Since $\deg d=1$ and $|d|=0$, we have
\begin{equation}
	- d\sigma = dz \wedge d\theta \frac{\del f}{\del \theta} + d\theta \wedge dz \frac{\del g}{\del z} + d\theta^2 \frac{\del g}{\del \theta} \,.
\end{equation}
This vanishes exactly if $ \frac{\del g}{\del \theta } = 0$ and $ \frac{\del f}{\del \theta} = \frac{\del g}{\del z}$, so $ g $ is independent of $ \theta$, and the $ \theta$-dependent part of $ f$ is determined by $ g$. Writing $ \bar{f} = (1- \theta \del_\theta ) f$ for the $\theta$-independent part of $ f$, we can write
\begin{equation}
	\sigma 
	=
	dz \big( \bar{f} + \theta \frac{\del g}{\del z} \big) + d\theta \, g
	=
	\begin{cases}
		d\theta ( \theta \bar{f} + g) + \varpi D_\theta (\theta \bar{f} + g)
		&
		\text{NS}
		\\
		d\theta (\theta z \bar{f} +g) + \frac{\varpi^*}{z} D_\theta^*(\theta z \bar{f} +g)
		&
		\text{R} 
	\end{cases}\,.
\end{equation}
Keeping in mind that under the isomorphism $ \omega_{\Sigma} \cong \mc{Q} (-\mc{R})$ we have in Ramond coordinates that $ [dz|d\theta] \mapsto [d\theta] z$, we find
\begin{equation}
	\alpha^{-1} (\sigma)
	=
	\begin{cases}
		[dz |d\theta ] (\theta \bar{f} + g) & \text{NS}
		\\
		[dz|d\theta] ( \theta \bar{f} + \frac{g}{z}) & \text{R}
	\end{cases} \,,
\end{equation}
and the inverse is given by
\begin{equation}
	\begin{cases}
		\alpha ( [dz|d\theta] h(\su{z} )) = d\theta \, h + \varpi D_\theta h & \text{NS}
		\\
		\alpha ([dz|d\theta ]h ( \su{z} )) = d\theta z h + \frac{\varpi^*_\theta}{z} D_\theta^* (z h) & \text{R}
	\end{cases} \,.
\end{equation}

Comparing with \cref{Extd,RExtd}, we find that under $ \alpha$, exact forms are strongly related to the image of the superconformal operator:

\begin{equation}
	\alpha^{-1} (df)
	=
	\begin{cases}
		[dz|d\theta] D_\theta f & \text{NS}
		\\
		\frac{[dz|d\theta]}{z} D_\theta^* f & \text{R}
	\end{cases} \,.
\end{equation}

We can also interpret the extended Berezinian sheaf $ \omega_{\Sigma}'$ as a subsheaf of $ \mc{Q}$, of sections of the shape $ [dz]f(z) + [d\theta] g(z)$, with $ f,g$ holomorphic -- note that $ [d\theta ] \theta  = [dz] $ in superconformal coordinates while $ [d\theta ] \theta = \frac{[dz]}{z} $ in Ramond coordinates. So in local coordinates $\su{z}= (z \, | \, \theta )$ in a formal neighbourhood, superconformal or Ramond, the space of meromorphic sections of $ \omega_{\Sigma}'$ has a $\C$-basis $ \{ [d\theta] z^k\}_{k \in \Z} $ and $ \{ [dz] z^{k-1} \}_{k \in \Z^*} $.

\subsubsection{Residues}

Let $ p \in \Sigma_{\textup{red}}$. The residue of a (closed) one-form defined on a punctured neighbourhood $ U$ of $ p$ in $ \Sigma_{\textup{red}}$ is defined by integration along the homology class $ \gamma \in H_1(U;\Z )$  represented by counterclockwise simple closed curves around $p$.\par
On the super Riemann surface $ \Sigma$, there is a canonical lift of the point $p$: given a local coordinate $ z$ at $p \in \Sigma_{\textup{red}}$, this can be completed to superconformal/Ramond coordinates $\su{z}= (z \, | \, \theta)$, and $ P = \{ z = 0\}$ defines an odd divisor lifting $ p$ independent of the choice. Moreover, $ \gamma $ lifts to a unique $ \Gamma \in H_1 ( \Sigma \setminus P ; \Z)$ circling around $ P$, and hence the following definition makes sense.

\begin{definition}\label{def:Residue}
	Let $ p \in \Sigma_\textup{red}$, let $ P$ be the divisor described above, with a neighbourhood $ \mc{U} \supset P$ and let $ q \in \mc{Z}^1 (\mc{U} \setminus P) $. Then, with more notation as above the \emph{even residue} (or simply \emph{residue}) of $ q$ at $p$ is defined as
	\begin{equation}
		\Res_{z=p}q = \frac{1}{2\pi i} \int_\Gamma q \,.
	\end{equation}
\end{definition}

\begin{lemma}[\cite{DO24}]
	Let $ p \in \Sigma_\textup{red}$ with local coordinate $ z$, and extend it to superconformal c.q. Ramond coordinates  $\su{z}= (z|\theta) $ on $ \Sigma$.  Let
	\begin{equation}
		q = \sum_{k = -N}^\infty ([dz] z^k u_k +  [d\theta] z^k \upsilon_k)
	\end{equation}
	be a meromorphic section of $ \mc{Q}$ near $ p$. Then, under the identifications $ \mc{Z}_{\Sigma}^1 \cong \omega_{\Sigma}' \subset \mc{Q}$, its residue is
	\begin{equation}
		\Res_{z = p} q = u_{-1}\,.
	\end{equation}
\end{lemma}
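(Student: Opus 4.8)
The plan is to transport the given section $q$ of $\mc{Q}$ back to an honest closed meromorphic one-form $\sigma$ under the identification $\mc{Z}^1_\Sigma\cong\omega_\Sigma'\subset\mc{Q}$, and then to evaluate the contour integral of \cref{def:Residue} by restricting $\sigma$ to the reduced curve, where it becomes a classical one-variable residue.

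First I would use the local description of closed forms from \cref{ClosedForms} and the discussion following it, which carries over verbatim to meromorphic forms on a formal punctured neighbourhood of $p$: in the chosen coordinates $\su{z}=(z\,|\,\theta)$ --- superconformal if $p$ is Neveu--Schwarz, Ramond if $p$ lies on $\mc{R}$ --- every closed meromorphic one-form can be written
\[
\sigma \;=\; dz\,\bigl(\bar f(z)+\theta\,g'(z)\bigr)+d\theta\,g(z)
\]
for Laurent series $\bar f,g$ in $z$. Passing to the class $[\sigma]\in\mc{Q}=\Omega^1_\Sigma/\mc{P}$ and using $[dz]=[d\theta]\,\theta$ (resp.\ $[dz]=[d\theta]\,\theta z$) together with $\theta^2=0$, the middle term $[dz]\,\theta g'$ vanishes, so $[\sigma]=[dz]\,\bar f(z)+[d\theta]\,g(z)$. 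Matching this against $q=\sum_{k\ge -N}\bigl([dz]z^k u_k+[d\theta]z^k v_k\bigr)$ identifies $\bar f(z)=\sum_k u_k z^k$ and $g(z)=\sum_k v_k z^k$.

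Next I would invoke \cref{def:Residue}: $\Res_{z=p}q=\frac{1}{2\pi i}\int_\Gamma\sigma$, where $\Gamma\in H_1(\Sigma\setminus P;\Z)$ is the canonical lift of a small counterclockwise loop $\gamma$ around $p$ in $\Sigma_{\textup{red}}$. Since $\Sigma$ and $\Sigma_{\textup{red}}$ share the same underlying topological space, $\Gamma$ is represented by the image of $\gamma$, and closedness of $\sigma$ makes $\int_\Gamma\sigma=\int_\gamma\sigma|_{\Sigma_{\textup{red}}}$ well defined. Restriction to $\Sigma_{\textup{red}}=\{\theta=0\}$ sends both $\theta$ and $d\theta$ to $0$, hence $\sigma|_{\Sigma_{\textup{red}}}=\bar f(z)\,dz=\sum_k u_k z^k\,dz$. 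Pulling the $z$-independent coefficients $u_k$ (elements of the base ring) out of the integral and applying the classical residue theorem yields $\frac{1}{2\pi i}\int_\gamma\bar f(z)\,dz=u_{-1}$, which is the assertion.

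The only delicate step is the first one: one must chase $q$ through the chain of identifications --- the explicit isomorphism $\alpha$ of \cref{ClosedForms}, the Berezinian isomorphism $\omega_\Sigma\cong\mc{Q}(-\mc{R})$, and the relations between $[dz],[d\theta]$ and $dz,d\theta,\varpi_\theta$ (resp.\ $\varpi^*_\theta$) --- and check that the Neveu--Schwarz and Ramond coordinate descriptions produce the \emph{same} form $\sigma$; this works out, and it re-confirms the coordinate-independence of the residue. Everything afterwards is classical. It is worth noting that the $[d\theta]$-coefficients $v_k$ of $q$ play no role in the even residue, as they should: they would instead contribute to an odd residue obtained by additionally Berezin-integrating over the $\theta$-direction.
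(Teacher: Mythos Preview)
The paper does not give its own proof of this lemma: it is stated with a citation to \cite{DO24} and no proof block follows. So there is nothing in the paper to compare your argument against directly.

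That said, your argument is correct and is the standard one. The two substantive points --- that under $\mc{Z}^1_\Sigma\cong\omega_\Sigma'\subset\mc{Q}$ the class $[\sigma]$ has the form $[dz]\bar f(z)+[d\theta]g(z)$ with $\bar f(z)=\sum_k u_k z^k$, and that the period $\int_\Gamma\sigma$ of a closed form may be computed after restriction to $\Sigma_\textup{red}$ because the inclusion $\Sigma_\textup{red}\hookrightarrow\Sigma$ is a homotopy equivalence on the punctured neighbourhood --- are both right, and together they reduce the statement to the classical one-variable residue. Your remark that the $[d\theta]$-coefficients $\upsilon_k$ drop out is exactly the content of the subsequent paragraph in the paper, which observes that the even residue at a Ramond puncture can nonetheless be nonzero for a \emph{holomorphic} $q$ because of the identity $[d\theta]\theta=[dz]/z$.
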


There is a subtlety about residues that appears only in the super setting. If $ P$ happens to be a Ramond puncture, then for a holomorphic
\begin{equation}
	q = \sum_{k = 0}^\infty [d\theta] ( \theta z^k \tilde{u}_k + z^k \tilde{\upsilon}_k)\,,
\end{equation}
its residue is
\begin{equation}
	\Res_{z=p} q = u_{-1} = \tilde{u}_0 \,,
\end{equation}
because $ [d\theta] \theta = \frac{[dz]}{z}$. This may be non-zero unlike for holomorphic one-forms on an ordinary Riemann surface. However, notice that sections of the extended Berezinian sheaf have no residues. In summary, we have:

\begin{corollary}\label{ExtendedBerezinianIsRestrictedQ}
	The extended Berezinian sheaf $\omega_{\Sigma/T}'$ is isomorphic to the subsheaf of $ \mc{Q}$ of residueless sections. That is, the residues at $\mc{R}$ piece together to a map
	\begin{equation}
		\mathop{\textup{RES}} \colon \mc{Q} \to j_* \mc{O}_\mc{R} \,,
	\end{equation}
	and the isomorphism $ \omega_{\Sigma}(\mc{R} ) \cong \mc{Q}$ restricts to an isomorphism $ \omega_{\Sigma}' \cong \mc{Q}' \coloneqq \Ker ( \mathop{\textup{RES}})$.
\end{corollary}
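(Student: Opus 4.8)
The plan is to get $\mathop{\textup{RES}}$ essentially for free, by transporting the intrinsic map $d\circ|_{\mc{R}}$ through the Berezinian isomorphism, and then to match it with the residue formula obtained just above.

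First I would recall that the Berezinian isomorphism $\omega_{\Sigma/T}\cong\mc{Q}(-\mc{R})$, given in local Ramond coordinates by $[dz|d\theta]\mapsto[d\theta]z$, twists to an isomorphism $\omega_{\Sigma/T}(\mc{R})\cong\mc{Q}$ under which $\frac{[dz|d\theta]}{z}\mapsto[d\theta]$. Since by definition $\omega_{\Sigma/T}'=\Ker(d\circ|_{\mc{R}})$, it remains only to transport $d\circ|_{\mc{R}}$ through this isomorphism and to identify its target with $j_*\mc{O}_{\mc{R}}$. For the latter: $d$ maps $j_*\Omega^1_{\mc{R}}$ into $j_*\Omega^2_{\mc{R}}$, which is freely generated over $\mc{O}_{\mc{R}}$ by $(d\theta)^2$; and $(d\theta)^2$ is a globally well-defined section because, by \cref{RamondAut}, a change of Ramond coordinates acts on $\theta|_{\mc{R}}$ by $\theta\mapsto\pm(\theta+\tau)$ with $\tau$ a function on the base, whence $d\theta\mapsto\pm d\theta$ and $(d\theta)^2\mapsto(d\theta)^2$. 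This gives a canonical trivialisation $\Omega^2_{\mc{R}}\cong\mc{O}_{\mc{R}}$, hence a well-defined morphism $\mathop{\textup{RES}}\colon\mc{Q}\cong\omega_{\Sigma/T}(\mc{R})\xrightarrow{\,d\circ|_{\mc{R}}\,}j_*\mc{O}_{\mc{R}}$ which is zero away from $\mc{R}$ (there $|_{\mc{R}}=0$), so it genuinely is ``the residues at $\mc{R}$ pieced together''. By construction the isomorphism $\omega_{\Sigma/T}(\mc{R})\cong\mc{Q}$ restricts to $\omega_{\Sigma/T}'=\Ker(d\circ|_{\mc{R}})\xrightarrow{\ \sim\ }\Ker(\mathop{\textup{RES}})=\mc{Q}'$, which is the assertion.

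Second I would run the short local check that $\mathop{\textup{RES}}$ is indeed the even residue, so that $\mc{Q}'$ is the residueless subsheaf. In Ramond coordinates $\su{z}=(z\,|\,\theta)$ near a component of $\mc{R}$, write a local section of $\mc{Q}$ as $q=[d\theta]h$ with $h=h_0(z)+\theta h_1(z)$; then $|_{\mc{R}}(q)=d\theta\cdot h|_{z=0}=h_0(0)\,d\theta+h_1(0)\,d\theta\cdot\theta$, and since $d(d\theta)=0$ one computes, with the sign conventions of the paper, $d\circ|_{\mc{R}}(q)=-\,h_1(0)\,(d\theta)^2$; thus $\mathop{\textup{RES}}(q)$ is, up to sign, the coefficient of $z^0\theta$ in $h$. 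On the other hand $q=[d\theta]h_0+\frac{[dz]}{z}h_1$, so in the notation of the Lemma preceding this corollary that coefficient is exactly $u_{-1}=\Res_{z=p}q$. Hence $\mathop{\textup{RES}}$ annihilates precisely the local sections of the shape $[dz]f(z)+[d\theta]g(z)$ with $f,g$ holomorphic --- i.e. the local sections of $\omega_{\Sigma/T}'$ viewed inside $\mc{Q}$ --- which both identifies $\mc{Q}'$ with the residueless subsheaf and re-proves $\mc{Q}'\cong\omega_{\Sigma/T}'$.

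The only step that needs genuine care --- and what I expect to be the main, in fact the sole, obstacle --- is the coordinate-independence and gluing of $\mathop{\textup{RES}}$. Treating it intrinsically, as above, reduces it to \cref{RamondAut} plus the canonicity of $|_{\mc{R}}$ and $d$; checking it by hand instead would need the explicit $r=1$ R-to-R transformation rules of \cref{prop:SCtrans} together with $\sqrt{z}\,\theta^{\textup{R}}=\theta^{\textup{SC}}$, and is considerably more laborious. Everything else is routine bookkeeping with the sign conventions.
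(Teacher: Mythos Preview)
Your proposal is correct and essentially matches the paper's (implicit) argument: the paper does not give a separate proof of this corollary, treating it as immediate from the preceding lemma that $\Res_{z=p}q=u_{-1}$ together with the concrete local description of $\omega_{\Sigma}'$ (sections $a\frac{[dz|d\theta]}{z}$ with no $z^0\theta$-term in $a$). Your second paragraph is precisely this computation; your first paragraph, defining $\mathop{\textup{RES}}$ intrinsically by transporting $d\circ|_{\mc{R}}$ and trivialising $\Omega^2_{\mc{R}}$ via $(d\theta)^2$, supplies a clean coordinate-free justification of well-definedness that the paper leaves to the reader.
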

 
 \subsubsection{Meromorphic differentials and super-Givental pairing}
Let us give a local description of the space of residueless meromorphic $1$-forms.

\begin{proposition}\label{SuperGiventalPairing}
	Consider the algebra $ V = \C (\! ( z )\! ) [ \theta ] $ of formal Laurent series, and $ \Omega_V$ its K\"{a}hler differentials. Let $ \varpi \in \Omega_V$ be  $ dz - d\theta z^a \theta $ for some $ a \in \Z $, and $ Q = \Omega_V/( \varpi )$. This gives an exact sequence
	\begin{equation}
		0 \rightarrow \C \to V \xrightarrow{[d]} Q \xrightarrow{\Res} \C \to 0\,.
	\end{equation}
Then, the image of $[d]$, i.e. the space of residueless elements $ Q' \coloneqq \im [d] \subset Q$, admits a super symplectic form, analogous to Givental's pairing, defined by
	\begin{equation}
		\< [df], [dg] \> = \Res_{z=0} [df \cdot g] \,.
	\end{equation}
	We call it the \emph{super Givental pairing}.
	\end{proposition}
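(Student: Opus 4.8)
The plan is to prove the proposition in three stages: establish the exact sequence, verify that $\langle\cdot,\cdot\rangle$ is well-defined on $Q'$, and finally check that it is super skew-symmetric and non-degenerate.

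\textbf{The exact sequence.} First I would identify $Q$ concretely. Since $\varpi = dz - d\theta\, z^a\theta$, in $Q = \Omega_V/(\varpi)$ we have the relation $[dz] = [d\theta]\, z^a\theta$, so every element of $Q$ is uniquely of the form $[dz]\,u(z) + [d\theta]\,v(z)$ with $u,v \in \C(\!(z)\!)$; equivalently, writing things in terms of the basis $\{[d\theta]z^k\}_{k\in\Z}$ and $\{[dz]z^{k-1}\}_{k\in\Z^*}$ mentioned in the text just before the statement, with the understanding that $[d\theta]\theta$ is $[dz]$ or $z^{a}[dz]$ according to the value of $a$ (this is exactly the NS case $a=0$ versus the Ramond case $a=-1$ discussed above). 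The map $[d]\colon V \to Q$ sends $f \mapsto [df]$; by the computation preceding \cref{ClosedForms}, $[df]$ has components determined by $D_\theta f$ (NS) or $z^{-1}D^*_\theta f$ (R), and one reads off that $[d]$ kills exactly the constants (the kernel is $\C$, as $df=0$ iff $f$ is constant even in the super setting since $V$ has no nilpotents besides those built from $\theta$, and $d(\theta)=d\theta\neq 0$). For exactness on the right I would invoke the preceding \cref{def:Residue} and its lemma: $\Res_{z=0}$ picks out the $[dz]z^{-1}$-coefficient, it is surjective onto $\C$, and $\ker(\Res) = \im([d])$ because an element of $Q$ is $[df]$ for some $f\in V$ precisely when its corresponding closed one-form under $\alpha$ is exact, which by the local analysis of closed forms is equivalent to having vanishing residue. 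This gives the four-term exact sequence.

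\textbf{Well-definedness of the pairing.} The formula $\langle[df],[dg]\rangle = \Res_{z=0}[df\cdot g]$ must be shown independent of the representatives, i.e. of the choice of $f,g$ modulo constants. Replacing $g$ by $g+c$ changes $df\cdot g$ by $c\, df$, and $\Res_{z=0}[c\,df] = c\,\Res_{z=0}[df] = 0$ since $[df]\in Q'$ is residueless. Replacing $f$ by $f+c$ changes nothing since $d(f+c) = df$. So the pairing descends to $Q'\times Q'$. I would also note $df\cdot g$ is a priori a one-form times a function, i.e. lives in $\Omega_V$, and $[df\cdot g]$ is its class in $Q$, on which $\Res$ is defined — so the expression makes sense.

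\textbf{Skew-symmetry and non-degeneracy.} For the super skew-symmetry, the key identity is the Leibniz rule $d(fg) = (df)g + (-1)^{|f|}f\,(dg)$, hence $[d(fg)] = [df\cdot g] + (-1)^{|f|}[f\cdot dg]$. Taking residues, and using that $\Res_{z=0}[d(fg)] = 0$ because $d(fg)$ is exact hence residueless, we get $\langle[df],[dg]\rangle = -(-1)^{|f|}\Res_{z=0}[f\cdot dg]$; then commuting $f$ past $dg$ (which costs $(-1)^{|f||dg|} = (-1)^{|f|(|g|)}$ since $|dg|=|g|$ as $d$ is even) yields $\langle[df],[dg]\rangle = -(-1)^{|f|+|f||g|}\langle[dg],[df]\rangle = -(-1)^{(|f|+1)(|g|+1)+1}\langle[dg],[df]\rangle$ — I would simplify the sign bookkeeping carefully so that it matches the convention that a \emph{super symplectic} (i.e. super skew-symmetric, even, non-degenerate) form satisfies $\langle x,y\rangle = -(-1)^{|x||y|}\langle y,x\rangle$ with $|x| = |df|$; here $|[df]| = |f|$ in the odd-$\theta$ grading since $d$ is even, so the sign works out. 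For non-degeneracy, I would exhibit a dual basis explicitly: on $Q'$ a topological basis is given by the images $[d(z^k)] $ for $k\neq 0$ together with $[d(z^k\theta)]$ for $k\in\Z$ (removing the constant and checking these are residueless and span $\ker\Res$). Pairing $z^k$ against $z^{-k}$ gives $\Res_{z=0}[d(z^k)z^{-k}] = \Res_{z=0}[k z^{k-1}\,dz\cdot z^{-k}] = k \neq 0$, and pairing $z^k\theta$ against $z^{-k-1-a}\theta$ (or the appropriate shift by $a$) gives a non-zero constant coming from $\Res_{z=0}[d(z^k\theta)\cdot z^{-k-1-a}\theta]$, while the even–odd pairings vanish for parity reasons. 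This shows the Gram "matrix" is block anti-diagonal with invertible blocks, hence the form is non-degenerate.

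\textbf{Main obstacle.} The routine parts are the exact sequence and well-definedness; the delicate part is tracking the signs in the skew-symmetry computation, since there are two independent gradings (cohomological degree of forms and the $\Z_2$-parity), and the contraction/wedge conventions fixed in the Notation section (e.g. $d\theta\cdot\theta = -\theta\cdot d\theta$, and $d\theta\wedge d\theta\neq 0$) interact nontrivially with moving a function past a differential inside $[\cdot]$. I expect the bulk of the genuine work to be checking that $\Res_{z=0}[d(fg)]=0$ with the correct sign and that the resulting symmetry relation is precisely $\langle x,y\rangle = -(-1)^{|x||y|}\langle y,x\rangle$ rather than some variant; a secondary subtlety is confirming that the Ramond case ($a=-1$, where $[d\theta]\theta = z^{-1}[dz]$) and the NS case ($a=0$) are genuinely uniformly covered by allowing arbitrary $a\in\Z$, which amounts to noting that the residue pairing is insensitive to the integer shift $a$ beyond relabelling the odd dual basis.
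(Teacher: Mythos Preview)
Your proposal is correct and follows essentially the same route as the paper: the exact sequence is handled by the same local analysis, well-definedness by the vanishing of $\Res[df]$, super skew-symmetry by the Leibniz rule (integration by parts), and non-degeneracy by explicit evaluation on a basis. One small slip: the Ramond case corresponds to $a=1$, not $a=-1$ (since $\varpi^*_\theta = dz - d\theta\cdot\theta z$), so the odd--odd pairing comes out as $\langle [d\theta\,z^k],[d\theta\,z^l]\rangle = \delta_{k+l,a-1}$ and the dual partner of $z^k\theta$ is $z^{a-1-k}\theta$; your hedge ``or the appropriate shift by $a$'' already anticipates this.
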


\begin{proof}
	For the exact sequence, note that $ z^k d \theta \, \theta \in \Omega_V$ is not in the image of $ d$, because $ \theta^2 = 0$. With this observation, the proof is the analogous to the ordinary case.\par
	
	Let us see what this pairing looks like in practice. We first need that
\begin{equation}
	[d\theta \, z^l] = [d( \theta z^l ) - dz \, l z^{l-1} \theta ] = [d(z^l \theta )]\,,
\end{equation}
because $ [dz \, \theta] = [  d\theta \, z^a \theta^2] = 0$. From this, we find

\begin{equation}
	\< [ dz \, z^{k-1}] , [ dz\, z^{l-1}] \> = \frac{\delta_{k,-l}}{k} \,,
	\qquad
	\< [ dz \, z^{k-1}], [ d\theta \, z^l ]\> = 0 \,,
	\qquad
	\< [  d\theta \, z^k], [ d\theta \, z^l ]\> = \delta_{k+l-a+1} \, .
\end{equation}
	The pairing is well-defined because adding a constant to $g$ will not affect the value since $ \Res_{z=0} [df] = 0$. It is super antisymmetric by the usual integration by parts, and non-degenerate by inspection.
\end{proof}

\subsubsection{Multidifferentials}\label{sec:multi differential}

For our applications, we are interested in symmetric multidifferentials in the super setting.
\begin{definition}
An \emph{$n$-differential} on a super Riemann surface $ \Sigma $ is a meromorphic section of $\mc{Q}'^{\boxtimes n} \to \Sigma^n$. We often call it a multidifferential when $n$ is not specified. A $2$-differential is also called a bidifferential. 
\end{definition}

Let us focus on the case with $n=2$ and clarify the notion of symmetric bidifferentials. For an ordinary Riemann surface $\Sigma_0$, one can naturally consider the involution $\tau \colon \Sigma_0 \to \Sigma_0 \times \Sigma_0$ that exchanges the first and second copy. Strictly speaking, for $(p_1,p_2)\in\Sigma_0\times\Sigma_0$ and for a bidifferential $\omega$, one cannot naively compare $\omega(p_1,p_2)$ with $\omega(\tau(p_1,p_2)) = \omega( p_2, p_1)$ as they live in different vector spaces (fibres of the vector bundle). However, the generator $dz_1\otimes dz_2$ can be canonically identified with $dz_2\otimes dz_1$ as fibres are one-dimensional, hence the notion of symmetric bidifferentials is clear.

Fibres become $1|1$-dimensional in the super setting, so we need to introduce another involution to define symmetric differentials in a consistent way. More concretely, let us define a bundle morphism $\iota \colon \mc{Q}'^{\boxtimes2}\to \tau^* \mc{Q}'^{\boxtimes2}$ whose action on the generators of bidifferentials is given by
\begin{equation}
	\iota \colon [dz_1][dz_2] \mapsto [dz_2][dz_1] \,,
	\quad
	[dz_1][d\theta_2] \mapsto [d\theta_2][dz_1] \,,
	\quad
	[d\theta_1][d\theta_2] \mapsto -[d\theta_2][d\theta_1] \,.
\end{equation}
This morphism $\iota$ is an involution, in the sense that $ \tau^*(\iota) \circ \iota = \Id$.
The first action is a consequence of the last action i.e.
$ \iota ( [dz_1 ] [dz_2] ) = \iota ([ d\theta_1 ] [d\theta_2] \theta_2 \theta_1 ) = - [d\theta_2 ] [d\theta_1 ] \theta_2 \theta_1 = [dz_2] [dz_1]$ in superconformal coordinates. One can similarly check the consistency in Ramond coordinates.

\begin{definition}
	A bidifferential $\omega$ is said to be \emph{symmetric} if it satisfies $\tau^* \omega =\iota (\omega )$. 
\end{definition}

In general, a symmetric bidifferential in local coordinates $\su{z} = (z \, | \, \theta)$ is given by\
\begin{align}
		\omega(\su{z}_1,\su{z}_2)
		&=
		[d\theta_1][d\theta_2]( - \theta_1 \theta_2 s(z_1,z_2) - \theta_1 \alpha(z_1,z_2) + \theta_2 \alpha(z_2,z_1) + a(z_1,z_2) )\nonumber\\
		&=
		\frac{[dz_1][dz_2]}{z_1^kz_2^k}s(z_1,z_2)+\frac{[dz_1][d\theta_2]}{z_1^k}\alpha(z_1,z_2)+\frac{[d\theta_1][dz_2]}{z_2^k}\alpha(z_2,z_1)+[d\theta_1][d\theta_2]a(z_1,z_2)
\end{align}
for $s(z_1,z_2)=s(z_2,z_1)$ and $a(z_1,z_2)=-a(z_2,z_1)$, and $k=0$ if NS and $k=1$ if R. 

We extend the notion to symmetric $n$-differentials for any $n$ in an obvious way.

\subsubsection{Super quadratic Casimirs}

In the ordinary case, a quadratic differential is a section of $\Omega^{\otimes2}$, the square of the canonical sheaf $\Omega$ on a Riemann surface. Quadratic differentials play important roles in algebraic geometry and mathematical physics. In particular in the context of topological recursion, they are related to quadratic loop equations and Virasoro constraints. It turns out that the super analogue of a quadratic differential should be a section of $\mc{Q}^{\otimes3}$. 

In the ordinary case, any quadratic differential $q$ can be (non-uniquely) constructed by a bidifferential $\omega$ of $\Omega^{\boxtimes2}$ and specialising to the diagonal, assuming that $\omega$ has no poles there. Indeed, if $q(z)=f(z)dz^{\otimes2}$ in some local coordinates $z$, then any bidifferential $ \omega=dz_1\otimes dz_2 \, g(z_1,z_2)$ satisfying $g(z,z)=f(z)$ gives $q(z)=\omega(z,z)$. This perspective is crucial to define loop equations in the ordinary topological recursion. In the super setting, however, specialisation is problematic. 

Imagine a natural lift of a bidifferential $\omega$ to a super Riemann surface, i.e., in local coordinates $\su{z}=(z|\theta)$, consider $\omega(\su{z}_1,\su{z}_2)=[dz_1][dz_2]g(z_1,z_2)$. A problem arises upon specialisation, which gives $[dz][dz]=[d\theta]\theta[d\theta]\theta=0$ because $\theta^2=0$. Therefore, specialisation of a bidifferential in the ordinary setting gives a quadratic differential while it vanishes in the super setting. A similar problem also occurs upon specialisation of any multidifferential. 

How can one construct a section of $\mc{Q}^{\otimes3}$ from a bidifferential in the super setting, in particular, in such a way that it is compatible with super Virasoro constraints? We will show that the following definition captures all desired properties:

\begin{definition}\label{SuperQuadraticCasimir}
	We define the \emph{super quadratic Casimir operator} $\mc{C} \colon \mc{Q}'^{\boxtimes2}\to\mc{Q}'^{\otimes3}$ by the action on a symmetric bidifferential $\omega$ in any coordinates $\su{z}=(z|\theta)$ as follows:
	\begin{align}
		\mc{C} \colon \omega(\su{z}_1,\su{z}_2)=[d\theta_1][d\theta_2]g(\su{z}_1,\su{z}_2)\mapsto \mc{C}(\omega)(\su{z})\coloneqq [d\theta]^3
		\begin{cases}
			D_{\theta_1} g(\su{z}_1,\su{z}_2)\big|_{\su{z}_1=\su{z}_2=\su{z}} 
			&
			\text{NS}
			\\
			D^*_{\theta_1} g(\su{z}_1,\su{z}_2)\big|_{\su{z}_1=\su{z}_2=\su{z}} 
			&
			\text{R}
		\end{cases}
	\end{align}
\end{definition}

\begin{proposition}\label{prop:Hamiltonian}
	The super quadratic Casimir operator is well-defined. That is, it is independent of the choice of coordinates.
\end{proposition}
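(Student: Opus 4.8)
To prove that $\mc{C}$ is well-defined I would check invariance under the two relevant types of coordinate changes separately: (a) changes between superconformal coordinates near an NS point, and (b) changes between Ramond coordinates near a Ramond puncture. (One does not need to worry about going between the two, since the superconformal-versus-Ramond character of a point is intrinsic; the square-root identification $\sqrt{z}\,\theta^{\mathrm{R}} = \theta^{\mathrm{SC}}$ is not a superconformal change of coordinates in the sense of \cref{def:SCtrans}, so it falls outside the scope of the definition.) In each case the input datum is the component $g$ of the symmetric bidifferential $\omega$ in the $[d\theta_1][d\theta_2]$-slot, and the output is the component of $\mc{C}(\omega)$ in the $[d\theta]^3$-slot; so everything reduces to a transformation law for a single scalar function, together with a transformation law for $[d\theta]$ and for the operator $D_\theta$ (resp. $D^*_\theta$).

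\textbf{Key steps.} First I would record how the pieces transform. Under a superconformal change $\su{w} = (w|\psi)$ with $w = w(\su{z})$, $\psi = \psi(\su{z})$, we have $[d\theta] \mapsto [d\psi] = (D_\theta\psi)[d\theta]$ from the isomorphism $\mc{Q}\cong\mc{D}^{-1}$, hence $[d\psi]^3 = (D_\theta\psi)^3[d\theta]^3$; and, crucially, the superconformal generators are related by $D_\theta = (D_\theta\psi)\,D_\psi$ (this is exactly the proportionality used throughout \cref{prop:SCtrans}, specialised to an automorphism of the target-free local model, i.e. the $r=1$ NS-NS case). Next I would use the fact that $\omega$ is a \emph{fixed} section of $\mc{Q}'^{\boxtimes 2}$: writing it in the $\su{z}$-frame as $[d\theta_1][d\theta_2]\,g(\su{z}_1,\su{z}_2)$ and in the $\su{w}$-frame as $[d\psi_1][d\psi_2]\,\tilde g(\su{w}_1,\su{w}_2)$, comparing the two expressions gives $g(\su{z}_1,\su{z}_2) = (D_{\theta_1}\psi_1)(D_{\theta_2}\psi_2)\,\tilde g(\su{w}(\su{z}_1),\su{w}(\su{z}_2))$, with a sign that is $+$ because of the antisymmetry of $[d\theta_1][d\theta_2]$ (the two $[d\psi]$-factors are odd and get swapped the same way the $[d\theta]$-factors do — this is the content of the involution $\iota$ in \cref{sec:multi differential}, and it is why restricting to \emph{symmetric} $\omega$ is the right hypothesis). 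Then I would compute:
\begin{equation}
  [d\theta_1]^{?}\,D_{\theta_1} g\big|_{\mathrm{diag}}
  = D_{\theta_1}\!\big[(D_{\theta_1}\psi_1)(D_{\theta_2}\psi_2)\,\tilde g(\su{w}_1,\su{w}_2)\big]\Big|_{\su{z}_1=\su{z}_2=\su{z}}.
\end{equation}
When the derivative $D_{\theta_1}$ hits $(D_{\theta_1}\psi_1)$ it produces a term proportional to $D_{\theta_1}^2\psi_1 = \partial_{z_1}\psi_1$; I would show this term cancels against a matching term coming from the chain rule acting on $\tilde g$ through the $\su{w}_1$-dependence, precisely because $D_{\theta_1} = (D_{\theta_1}\psi_1)D_{\psi_1}$ and $D_{\theta_1}^2 = \partial_{z_1}$, so that the "extra" pieces reorganise into $(D_{\theta_1}\psi_1)^2$ times $D_{\psi_1}\tilde g$ plus a $\partial_{z_1}$-term that is symmetric and hence killed by antisymmetry or by the diagonal restriction combined with the structure of $\tilde g$. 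The upshot should be $D_{\theta_1}g\big|_{\mathrm{diag}} = (D_\theta\psi)^3\,D_{\psi_1}\tilde g\big|_{\mathrm{diag}}$, which is exactly the factor $(D_\theta\psi)^3$ needed to match $[d\theta]^3 \mapsto [d\psi]^3$. For the Ramond case I would repeat this with $D^*_\theta$, using $(D^*_\theta)^2 = z\,\partial_z$, the Ramond transition rule $D^*_\theta = (D^*_\theta\psi)D^*_\psi$ from \cref{RtoR1}, $[d\theta]z \mapsto [d\psi]w$ (the Berezinian-with-Ramond-twist identification), and \cref{RamondAut}, which constrains the admissible changes of Ramond coordinates enough that the computation goes through identically up to bookkeeping of the $z$-factors.

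\textbf{Main obstacle.} The delicate point is the cancellation of the $D_\theta^2\psi = \partial_z\psi$ (resp. $z\partial_z\psi$) term: a priori applying $D_{\theta_1}$ to the prefactor $(D_{\theta_1}\psi_1)$ produces a genuinely new contribution that is not of the form (scalar)$\times D_{\psi_1}\tilde g$, and it is only after restricting to the diagonal and invoking the symmetry $g(\su{z}_1,\su{z}_2)=\pm g(\su{z}_2,\su{z}_1)$ that the unwanted terms disappear — equivalently, one must exploit that the Casimir reads off a specific symmetric combination of first derivatives of $\tilde g$ rather than $\tilde g$ itself. Getting the signs right here, given the conventions that $d\theta\wedge d\theta\neq 0$ and that contraction carries a parity sign, is where most of the care is needed; this is also the step that explains structurally why $\mc{C}$ lands in $\mc{Q}'^{\otimes 3}$ (one derivative of a bidifferential, diagonalised) rather than in some square of $\mc{Q}'$.
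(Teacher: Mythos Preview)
Your approach is correct and essentially the same as the paper's: use $[d\psi]=(D_\theta\psi)[d\theta]$, $D_\theta=(D_\theta\psi)D_\psi$, write $g=(D_{\theta_1}\psi_1)(D_{\theta_2}\psi_2)\tilde g$, apply $D_{\theta_1}$, and restrict to the diagonal; the Ramond case follows by the same computation with $D^*$ replacing $D$.

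One comment on your ``main obstacle'': you are overcomplicating the cancellation. When you apply $D_{\theta_1}$ to $g=(D_{\theta_1}\psi_1)(D_{\theta_2}\psi_2)\tilde g$, the extra term is exactly $(\partial_{z_1}\psi_1)(D_{\theta_2}\psi_2)\,\tilde g$; no chain-rule reorganisation or intricate sign-tracking is needed. On the diagonal this vanishes simply because $\tilde g(\su{w},\su{w})=0$, which is immediate from the symmetry hypothesis on $\omega$ (the $[d\theta_1][d\theta_2]$-coefficient of a symmetric bidifferential is antisymmetric under swapping arguments, and the $\theta_1\theta_2$-piece dies by nilpotency). The paper states this one-line fact and is done. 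So the obstacle you identify is real, but its resolution is the observation $g(\su{z},\su{z})=0$ rather than a delicate cancellation between several terms.
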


\begin{proof}
	Let us take the same notation as \cref{prop:SCtrans} and consider an NS-NS transformation. Recall that $D_\theta = (D_\theta\phi) D_\phi$ and $[d\phi] = [d\theta] (D_\theta\phi)$, hence we have
	\begin{align}
		\omega(\su{z}_1,\su{z}_2)
		=
		[d\phi_1][d\phi_2](D_{\theta_1}\phi_1)^{-1}(D_{\theta_2}\phi_2)^{-1}g(\su{z}_1,\su{z}_2)
		\eqqcolon
		[d\phi_1][d\phi_2]\tilde{g}(\su{x}_1, \su{x}_2).
	\end{align}
	Then,
	\begin{align}\label{sqCoCalc}
		D_{\phi_1}\tilde{g}(\su{x}_1, \su{x}_2)
		=
		(D_{\theta_1}\phi_1)^{-1} (D_{\theta_2}\phi_2)^{-1} \left((D_{\theta_1} (D_{\theta_1}\phi_1)^{-1}) g(\su{z}_1,\su{z}_2) + ( D_{\theta_1}\phi_1)^{-1} (D_{\theta_1} g(\su{z}_1,\su{z}_2)) \right)
	\end{align}
	Since $g(\su{z},\su{z})=0$ for a symmetric bidifferential, we have:
	\begin{equation}
		[d\phi]^3D_{\phi_1} \tilde{g}( \su{x}_1, \su{x}_2) \Big|_{\su{x}_1 = \su{x}_2 = \su{x}}
		=
		[d\phi]^3 (D_{\theta}\phi)^{-3}D_{\theta} g(\su{z}_1,\su{z}_2) \Big|_{\su{z}_1=\su{z}_2=\su{z}}
		=
		[d\theta]^3D_{\theta}g(\su{z}_1,\su{z}_2)\Big|_{\su{z}_1=\su{z}_2=\su{z}}.
	\end{equation}
	Thus, $\mc{C}(\omega)(\su{z})=\mc{C}(\omega)(\su{x})$ as claimed.

	Superconformal transformation of other types can be similarly shown by replacing $D_\theta$ with $D_\theta^*$ and/or $D_\phi$ with $D_\phi^*$.
\end{proof}

Let us justify why we consider the super quadratic Casimir operator. Given an even differential $\omega=y[dx]+\lambda [d\phi]$ in superconformal coordinates $\su{x}=(x \, | \, \phi)$, one can construct a bidifferential by doubling it, i.e., $\omega ^{\boxtimes2}(\su{z}_1,\su{z}_2) \coloneqq \omega(\su{z}_1)\otimes\omega(\su{z}_2)$. Then, the super quadratic Casimir operator $\mc{C}$ gives:
\begin{equation}\label{SuperQuadraticCasimirInCoordinates}
	\mc{C}(\omega ^{\boxtimes2})(\su{x})
	=
	[d\phi]^3(y\lambda + \phi(y^2+\lambda'\lambda))
	\eqqcolon
	[d\phi]^3(C_1+\phi C_0).
\end{equation}

If one takes $y$ as a free boson and $\lambda$ as a free fermion in the context of $\mc{N}=1$ superconformal field theory in two dimensions, then the $\phi$-dependent part $C_0=y^2+\lambda'\lambda$ and the $\phi$-independent part $C_1=y\lambda$ respectively correspond to the so-called \emph{energy-momentum tensor} and \emph{supercurrent}. Thus, after quantisation, $C_0$ and $C_1$ generate super Virasoro operators, which is the property we would like to have. We also note that under the superconformal rescaling $(x \,| \, \phi) \mapsto (cx \, | \, \sqrt{c}\phi)$, we have $C_0\mapsto c^{-2}C_0 $ and $ C_1 \mapsto c^{-\frac32}C_1$, and this shows that their conformal weights are $2$ and $\frac32$ respectively, which is consistent with the physics expectation.

\subsubsection{Twisted super quadratic Casimir operator}

In the construction of super loop equations, we are primarily interested in an R-NS map with simple ramification $f:\Sigma \to C$ with a (local) involution $\sigma_p:\Sigma\to\Sigma$ for $p\in\text{supp}(\mc{R})$ such that$ f \sigma_p = f$ -- in the ordinary setting, $\sigma_p$ corresponds to a deck transformation at a simple ramification point.\par
In terms of the local coordinates $\su{z}=(z \, | \, \theta) $, $\su{x}=(x \, | \, \phi)$ on $\Sigma$ and $C$ as described in \cref{prop:NS-R SCmap}, the action on the superconformal structure is given as $(\sigma_p)_*  D^*_\theta = -D_\theta^*$. In this setting, it is natural to define a $\sigma_p$-invariant super quadratic operator as below:

\begin{definition}\label{TwistedSuperQuadraticCasimir}
	Given the involution $\sigma_p$ as above, we define the \emph{twisted super quadratic Casimir operator} $\mc{C}^{\sigma_p}:\mc{Q}'^{\boxtimes2}\to\mc{Q}^{\otimes3}$ by the action on a symmetric bidifferential $\omega$ in any (superconformal/Ramond) coordinates $\su{z} = (z \, | \, \theta)$ as follows:
	\begin{multline}
		\mc{C}^{\sigma_p}:\omega(\su{z}_1,\su{z}_2)
		=
		[d\theta_1][d\theta_2]g(\su{z}_1,\su{z}_2)\nonumber\\
		\mapsto \mc{C}^{\sigma_p}(\omega)(\su{z})\coloneqq\frac12 [d\theta]^3
		\begin{cases}
			D_{\theta_1} g(\su{z}_1,\su{z}_2)\big|_{\su{z}_1=\sigma_p(\su{z}_2)=\su{z}} \;-\;D_{\theta_1} g(\su{z}_1,\su{z}_2)\big|_{\sigma_p(\su{z}_1)=\su{z}_2=\su{z}}
			&
			\text{NS}
			\\
			D^*_{\theta_1} g(\su{z}_1,\su{z}_2)\big|_{\su{z}_1=\sigma_p(\su{z}_2)=\su{z}} \;-\;D^*_{\theta_1} g(\su{z}_1,\su{z}_2)\big|_{\sigma_p(\su{z}_1)=\su{z}_2=\su{z}} 			&
			\text{R}
		\end{cases}
	\end{multline}
\end{definition}

\begin{proposition}
	The twisted super quadratic Casimir operator is well-defined. That is, it is independent of the choice of coordinates.
\end{proposition}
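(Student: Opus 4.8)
The plan is to re-run the proof of \cref{prop:Hamiltonian}, but now specialising along the two twisted diagonals $\{\su{z}_1 = \sigma_p(\su{z}_2) = \su{z}\}$ and $\{\sigma_p(\su{z}_1) = \su{z}_2 = \su{z}\}$ instead of along the honest diagonal. Let $\su{z} = (z\,|\,\theta)$ and $\su{x} = (x\,|\,\phi)$ be two coordinate systems near the fixed point $p$ of $\sigma_p$; I take both adapted to $\sigma_p$, so that $(\sigma_p)_* D^*_\theta = -D^*_\theta$ and $\phi\circ\sigma_p = -\phi$ (such coordinates exist by \cref{prop:NS-R SCmap}, and it is enough to check invariance among them). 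Exactly as in \cref{prop:Hamiltonian}, I write $\omega = [d\theta_1][d\theta_2]g = [d\phi_1][d\phi_2]\tilde{g}$ with $\tilde{g} = (D^*_{\theta_1}\phi_1)^{-1}(D^*_{\theta_2}\phi_2)^{-1}g$, and use $D^*_{\phi_1} = (D^*_{\theta_1}\phi_1)^{-1}D^*_{\theta_1}$ to get the analogue of \eqref{sqCoCalc},
\[
D^*_{\phi_1}\tilde{g}
=
(D^*_{\theta_1}\phi_1)^{-1}(D^*_{\theta_2}\phi_2)^{-1}\Big( \big(D^*_{\theta_1}(D^*_{\theta_1}\phi_1)^{-1}\big)\, g + (D^*_{\theta_1}\phi_1)^{-1} D^*_{\theta_1} g\Big).
\]

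The step that is genuinely new is the specialisation. Setting, say, $\su{x}_1 = \sigma_p(\su{x}_2) = \su{x}$, i.e.\ $\su{z}_1 = \su{z}$, $\su{z}_2 = \sigma_p(\su{z})$, the prefactor $(D^*_{\theta_2}\phi_2)^{-1}$ gets evaluated at $\sigma_p(\su{z})$ rather than at $\su{z}$. The key lemma is that $(\sigma_p)_* D^*_\theta = -D^*_\theta$ is the statement $D^*_\theta(h\circ\sigma_p) = -(D^*_\theta h)\circ\sigma_p$ for all $h$; feeding in $h = \phi$ (with $\phi\circ\sigma_p = -\phi$) and then $h = (D^*_\theta\phi)^{-1}$ gives
\[
(D^*_\theta\phi)\circ\sigma_p = D^*_\theta\phi
\qquad\text{and}\qquad
\big(D^*_\theta(D^*_\theta\phi)^{-1}\big)\circ\sigma_p = -\,D^*_\theta(D^*_\theta\phi)^{-1}.
\]
By the first identity every Jacobian factor at $\sigma_p(\su{z})$ is traded for its value at $\su{z}$, and these collapse against $[d\phi]^3 = [d\theta]^3(D^*_\theta\phi)^3$ just as in \cref{prop:Hamiltonian}. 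The only difference from that proof is that, because $g$ does not vanish on the twisted diagonals, each of the two specialisations now leaves behind an anomalous term proportional to $(D^*_\theta\phi)\,\big(D^*_\theta(D^*_\theta\phi)^{-1}\big)$ times $g(\su{z},\sigma_p(\su{z}))$, resp.\ $g(\sigma_p(\su{z}),\su{z})$; by the second identity these two coefficients have opposite signs relative to the antisymmetrisation in the definition of $\mc{C}^{\sigma_p}$.

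Finally, the anomalous terms cancel: their combination is proportional to $g(\su{z},\sigma_p(\su{z})) + g(\sigma_p(\su{z}),\su{z})$, and this vanishes because the symmetry condition $\tau^*\omega = \iota(\omega)$ on $\omega = [d\theta_1][d\theta_2]g$ unwinds to $g(\su{z}_2,\su{z}_1) = -g(\su{z}_1,\su{z}_2)$ (equivalently, this is visible in the explicit form of a symmetric bidifferential in \cref{sec:multi differential}), so $g(\sigma_p(\su{z}),\su{z}) = -g(\su{z},\sigma_p(\su{z}))$. What survives of the two specialisations is then exactly $(D^*_{\theta_1}g)(\su{z},\sigma_p(\su{z}))$ and $(D^*_{\theta_1}g)(\sigma_p(\su{z}),\su{z})$, i.e.\ $\mc{C}^{\sigma_p}(\omega)(\su{x}) = \mc{C}^{\sigma_p}(\omega)(\su{z})$. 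The superconformal and mixed coordinate cases are handled by the identical computation with $D^*_\theta$ replaced by $D_\theta$ wherever appropriate and with the corresponding form of $\sigma_p$.

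I expect the main obstacle to be the bookkeeping of the factors evaluated at $\sigma_p(\su{z})$: one must verify carefully that the involution lemma really does make the prefactors at $\sigma_p(\su{z})$ and at $\su{z}$ fuse into the single cubic power of $D^*_\theta\phi$ supplied by $[d\phi]^3$, and that the sign $D^*_\theta(D^*_\theta\phi)^{-1}$ picks up under $\sigma_p$ is exactly the one that makes the symmetry $g(\su{z}_2,\su{z}_1) = -g(\su{z}_1,\su{z}_2)$ annihilate the two anomalous terms. Once that is pinned down, the conclusion follows as in \cref{prop:Hamiltonian}.
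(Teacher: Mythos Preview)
Your proposal is correct and follows essentially the same approach as the paper: re-run the proof of \cref{prop:Hamiltonian}, observe that $g(\sigma_p(\su{z}),\su{z}) = -g(\su{z},\sigma_p(\su{z}))$ does not vanish on its own but cancels after summing the two specialisations. You spell out more detail than the paper does --- in particular the $\sigma_p$-invariance of $D^*_\theta\phi$ and the anti-invariance of $D^*_\theta(D^*_\theta\phi)^{-1}$, which is precisely what makes the Jacobian factors collapse and the anomalous terms combine into the vanishing sum --- but this is exactly the mechanism the paper's terse proof has in mind.
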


\begin{proof}
	Almost the same proof as \cref{prop:Hamiltonian} works. The only difference is that upon specialisation, we find $g(\sigma_p(\su{z}),\su{z})=-g(\su{z},\sigma_p(\su{z}))$, which may not disappear identically in contrast to $g(\su{z},\su{z})=0$. This contribution vanishes only after summing the two terms. 
\end{proof}

The twisted super quadratic Casimir operator is closely related to the notion of the twisted module of the $\mc{N}=1$ super Virasoro algebra of \cite{BCHORS20,BO21}. We also note that the twisted super quadratic Casimir operator is invariant under the involution $\sigma_p$ (the generator $[d\theta]^3$ is anti-invariant), hence it is well-defined not only on $\Sigma$ but also on $C$.

\section{Super fundamental bidifferential and periods}
\label{sec:the_symmetric_bidifferential}

In this section, we focus on the so-called fundamental bidifferential in the super setting. This object encodes geometric information of $\Sigma$. Let us start by reviewing properties of the fundamental bidifferential in the ordinary setting.

\subsection{Fundamental bidifferential}

In the context of Eynard--Orantin topological recursion, one piece of the data of a spectral curve is a symmetric bidifferential with double pole on the diagonal, $B$. On $\P^1$, this has the canonical form

\begin{equation}
	B (z_1, z_2) = \frac{dz_1 \, d z_2}{(z_1 - z_2)^2} = d_1 d_2 \log (z_1 - z_2)\,.
\end{equation}
It is a very important classical object, intimately related to the function theory of Riemann surfaces, and a super analogue is available in the literature. Let us start by recalling some basic function theory of Riemann surfaces, cf. e.g. \cite{Fay73,Mum83,Mum84}, along with its superconformal analogues from \cite{Wit15,DP15,Wit19}.\par

For an ordinary compact Riemann surface $ \Sigma_0$ of genus $g$, $ H_1 (\Sigma_0; \Z ) $ is a symplectic lattice of rank $2g$, and we will pick a symplectic basis (also called Torelli marking) $ \{ A_j, B_j \}_{j\in[g]}$ such that $ A_j \cap A_k = B_j \cap B_k = 0$ and $ A_j \cap B_k = \delta_{jk} $. By the integration pairing, we can consider $ H_1 (\Sigma_0; \Z) \hookrightarrow H^0( \Sigma_0; \Omega^1 )^*$. Then, we call the map
\begin{equation}\label{EvenPeriod}
	P_e 
	\colon
	H^0 (\Sigma_0 ; \Omega^1) \to H^1 (\Sigma_0; \mathbb{C}) 
	\colon
	\omega \mapsto \Big(  \gamma \mapsto \int_\gamma \omega \Big)
\end{equation}
the \emph{(even) period map}. The $g$-dimensional space $ H^0 ( \Sigma_0; \Omega^1 )$ has a unique basis $ \{ \omega_j \}_{j=1}^g $ normalised on $A$-cycles: $ \int_{A_j} \omega_k = \delta_{jk} $. Moreover, the \emph{period matrix} $ \Omega_{jk} \coloneqq \int_{B_j} \omega_k$ is symmetric with positive definite imaginary part. 

The \emph{fundamental bidifferential} (\emph{of the second kind}) $ B$ is a symmetric bidifferential on $ \Sigma_0^2$, uniquely characterised by 
\begin{itemize}
	\item Its only pole is a double one along the diagonal, with principal part $ d_1 d_2 \log (z_1 - z_2)$;
	\item $ \int_{A_j} B = 0$.
\end{itemize} 
It satisfies $ \int_{B_j} B = \omega_j$, and on the universal cover of $ \Sigma_0$, we can define the third kind differential with poles only at $ p$ and $q$ by $\omega^{p-q}(x) = \int_{y = q}^p B(x,y)$. We note that for a non-compact Riemann surface, there is no unique fundamental bidifferential as $H_1(\Sigma_0,\mathbb{Z})$ is not a symplectic lattice any more. \par

In the ordinary setting, $B$ is constructed as follows: to a Riemann surface of genus $g$, we associate its Jacobian $ \Jac(\Sigma_0)$, i.e. the abelian variety of dimension $g$ parametrising the degree zero divisor classes on $\Sigma_0$. Alternatively, $ \Jac (\Sigma_0) \cong H^0 (\Sigma_0; \Omega^1 )^*/ H_1(\Sigma_0; \Z )$, with the map given by $ D \mapsto ( \omega \mapsto \int_D \omega )$. Namely, given $D$, one can take a line integral with $D$ as the signed boundary and integrate one-forms along it -- but this is only well-defined up to integrals along closed cycles, i.e. $ H_1 (\Sigma_0;\Z)$.

Using the Torelli marking, $ \Jac (\Sigma_0 ) \cong \C^g/ (\Z^g \oplus \Omega \Z^g)$, where $ \Omega $ is the period matrix, via $ D \mapsto \{ \int_D \omega_j \}_{j=1}^g$. On this space, we have the theta functions

\begin{equation}
	\vartheta \begin{bmatrix} \delta \\ \epsilon \end{bmatrix} (z; \Omega ) = \sum_{n \in \Z^g} \exp \Big( \pi i ( n+ \delta )^T \Omega (n + \delta ) + 2 \pi i ( n + \delta )^T (z + \epsilon )\Big) \,,
\end{equation}
where $ \begin{bmatrix} \delta \\ \epsilon \end{bmatrix} \in ( \Z^g + \Omega \Z^g ) \otimes_\Z \R \cong \C^g$ and $ z \in \C^g$. This is not quite a function on $ \Jac (\Sigma_0)$, but it is a section of an appropriate line bundle -- accounting for twists along $B$-periods.\par
Given a pair of points $ (x, y) \in \Sigma_0^2$, $ [x] - [y] \in \Jac (\Sigma_0)$, and this gives a variant of the Abel--Jacobi map. For $ g (\Sigma_0) > 0$, we define the prime form in $\Omega^{-\frac12}_{\Sigma_0}\boxtimes \Omega^{-\frac12}_{\Sigma_0}$ by choosing an odd spin structure $ \alpha$, with a section $ h_\alpha$, and setting
\begin{equation}
	E (x,y) \coloneqq \frac{\vartheta [\alpha ]( \int_x^y \omega_1, \dotsc, \int_x^y \omega_g )}{h_\alpha (x) h_\alpha (y)}\,.
\end{equation}
This is independent of the choice of $ \alpha$. For $ g (\Sigma_0) = 0$, there are no odd spin structures, and we define $E (x,y ) = \frac{x-y}{\sqrt{dx \, dy}}$\,. Then
\begin{equation}
	B(x,y) \coloneqq d_x d_y \log E (x,y)\,.
\end{equation}
This is completely independent of any choices, apart from the Torelli marking.

\subsection{Odd periods and super fundamental bidifferential}\label{sec:odd periods}

In the super setting, we should first of all recall that a form being closed is a non-trivial condition. Following \cref{ClosedForms,ExtendedBerezinianIsRestrictedQ}, we identify closed forms with holomorphic sections of $ \mc{Q}'$. 

\begin{lemma}[{\cite{Wit15}}]
	On a compact super Riemann surface $\Sigma$ of genus $g$ with $\deg \mc{R} = 2r$, a Torelli marking $\{ A_j, B_j \}_{j\in[g]}$ of $ \Sigma_\textup{red}$ can be lifted, uniquely in homology, to an even Torelli marking $ \{ A_j, B_j \}_{j\in[g]}$ on $ \Sigma$.
\end{lemma}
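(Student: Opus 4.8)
The plan is to compare the homology of $\Sigma$ with that of its reduction $\Sigma_{\textup{red}}$. The key structural fact is that $\Sigma$ and $\Sigma_{\textup{red}}$ are homeomorphic as topological spaces — indeed, the underlying topological space of a supermanifold is that of its reduction, since the nilpotent part of the structure sheaf does not affect the topology. More precisely, working over a base $T$, one has $H_1(\Sigma;\Z) = H_1(\Sigma_{\textup{red}};\Z)$ because the structure sheaf of $\Sigma$ is a nilpotent extension of that of $\Sigma_{\textup{red}}$, and nilpotent extensions do not change the étale/singular homotopy type. Hence any $1$-cycle $\gamma$ on $\Sigma_{\textup{red}}$ is canonically a $1$-cycle on $\Sigma$, and a symplectic basis $\{A_j,B_j\}_{j\in[g]}$ of $H_1(\Sigma_{\textup{red}};\Z)$ is automatically a symplectic basis of $H_1(\Sigma;\Z)$. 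This already gives the existence and uniqueness ``in homology'' part of the statement: the lifted cycles are literally the same cycles.

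The substantive content is the word \emph{even}: one must check that these lifted cycles can be represented by submanifolds (or chains) of dimension $1|0$, i.e. purely bosonic, so that integration of the closed holomorphic one-forms (sections of $\mc{Q}'$, per \cref{ClosedForms,ExtendedBerezinianIsRestrictedQ}) against them is the ordinary real integral without odd contributions. First I would note that each $A_j,B_j$ is represented in $\Sigma_{\textup{red}}$ by a smooth embedded circle avoiding the (reduced) branch/Ramond loci; the canonical inclusion $\Sigma_{\textup{red}}\hookrightarrow\Sigma$ then realises this circle as an embedded $1|0$-dimensional submanifold of $\Sigma$. This is the ``even'' representative. One should observe that this choice of representative is not canonical at the chain level — one could add odd bits — but it \emph{is} canonical in homology, which is all that is claimed, and among chains the purely even representative is distinguished.

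The main obstacle — really the only place where something must be verified rather than quoted — is making precise that the period pairing $\int_\gamma\colon H^0(\Sigma;\mc{Q}')\to\mc{O}_T$ (now valued in the structure ring of the base, since we work over $T$) depends only on the homology class of $\gamma$ in $H_1(\Sigma;\Z)$, and that for the even representatives this reproduces the naive real line integral. For this one invokes that sections of $\mc{Q}'$ are closed (\cref{ClosedForms}) together with a super Stokes' theorem: the integral of a closed form over a boundary vanishes, so the pairing descends to homology. The subtlety specific to the super/Ramond setting is that $\mc{Q}'$-sections may have simple poles along $\mc{R}$ with constant residue along each component (the extended Berezinian condition), so one must ensure the cycle representatives avoid $\mathrm{supp}(\mc{R})$ — which is automatic since $\mc{R}$ is a divisor (codimension $1|0$) and the $A_j,B_j$ are $1$-cycles, generically disjoint from it. Putting these together: the reduction is a homotopy equivalence, hence gives the isomorphism $H_1(\Sigma;\Z)\cong H_1(\Sigma_{\textup{red}};\Z)$; the image of a Torelli marking is a Torelli marking; and choosing the reduced circles as representatives makes the marking even. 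I would close by remarking that this lemma is exactly \cite{Wit15}, and that the only thing to add to the ordinary story is the bookkeeping of nilpotents, which is inert here precisely because homology is a topological invariant.
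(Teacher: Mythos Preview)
The paper does not supply its own proof of this lemma: it is stated with attribution to \cite{Wit15} and used as a black box. Your argument is essentially the correct one (and is essentially Witten's): the underlying topological space of a supermanifold coincides with that of its reduction, so $H_1(\Sigma;\Z)=H_1(\Sigma_{\textup{red}};\Z)$ tautologically, and the inclusion $\Sigma_{\textup{red}}\hookrightarrow\Sigma$ furnishes purely even ($1|0$-dimensional) representatives of each cycle.

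One small remark: the discussion of the period pairing, super Stokes, and avoiding $\mc{R}$ is not needed to prove the lemma as stated --- that material explains why the lifted marking is \emph{useful} for defining even periods, but the lemma itself is purely a statement about homology classes and their representatives, which you dispatch in your first two paragraphs. You could trim the third paragraph or flag it explicitly as motivation rather than proof.
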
 

As a consequence, we can write the even period map \cref{EvenPeriod} also in the super setting, and in families:
\begin{equation}
	P_e 
	\colon
	\pi_* \Omega^1_{\Sigma/T} \to R^1\pi_* \Z \otimes \mc{O}_T 
	\colon
	\omega \mapsto \Big(  \gamma \mapsto \int_\gamma \omega \Big)
\end{equation}
In this language, the even periods $ \{ a_j = \int_{A_j}, b_j = \int_{B_j} \}_{j\in[g]}$ are just a trivialisation of the vector bundle $ R^1 \pi_* \Z \otimes \mc{O}_T$.\par
But there are additional, odd periods $ \{ \alpha_\mu, \beta_\mu \}_{\mu\in[r]}$: namely odd residues\footnote{These are not the residues defined in \cref{def:Residue}; those are even.} at the Ramond divisor (cf. \cite{Wit15}). We give the definition of \cite[Section 4]{DO24}.

\begin{definition}\label{def:super periods}
	The \emph{odd period map} is the pushforward of the map
	\begin{equation}
		\omega_{\Sigma/T}' = \Ker (d \circ \big|_\mc{R}) \to \Ker d = \mc{Z}^1_{\mc{R}/T}
	\end{equation}
	along the structure map $ \pi \colon \Sigma \to T$:
	\begin{equation}
		P_o \colon \pi_* \omega_{\Sigma/T}' \to \pi_* \mc{Z}^1_{\mc{R}/T} \,.
	\end{equation}
\end{definition}

\begin{lemma}[{\cite[Lemma 2.2]{DO24}}]
	The codomain of $P_o$ is a $ 0|2r$-dimensional vector bundle associated to a local system of free abelian groups generated by $ \{ d\theta_p\}$, where $ (z_p \, | \, \theta_p )$ are local coordinates near the Ramond punctures $p\in\textup{supp}(\mc{R})$,  with a symmetric bilinear pairing $ \frac{1}{2 \pi i} \sum_{p} d\theta_p^2$. If the Ramond divisor is oriented, cf. \cref{DefOrientation}, the local system (and hence the vector bundle) is trivialised by the well-defined $\frac{d\theta_p}{\sqrt{2 \pi i}}$.
\end{lemma}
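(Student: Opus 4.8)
The plan is to identify the codomain $\pi_*\mc{Z}^1_{\mc{R}/T}$ of $P_o$ by analysing $\mc{Z}^1_{\mc{R}/T}$ componentwise and then reassembling the local pieces into a global local system carrying the stated pairing. Since $\mc{R}$ is reduced of degree $2r$, \'etale-locally over $T$ it is a disjoint union of $2r$ sections $\mc{R}_p$, one for each Ramond puncture $p\in\textup{supp}(\mc{R})$, each a $(0|1)$-dimensional super scheme over $T$. Choosing Ramond coordinates $(z_p\,|\,\theta_p)$ near $p$ as in \cref{def:Ramond coordinates} identifies $\mc{R}_p$ with $\Spec\mc{O}_T[\theta_p]$, so $\Omega^1_{\mc{R}_p/T}$ is free on $d\theta_p$ and a general relative $1$-form reads $d\theta_p\,(a+b\theta_p)$ with $a,b\in\mc{O}_T$. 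A one-line computation with the relative de Rham differential gives $d\big(d\theta_p(a+b\theta_p)\big)=\pm b\,(d\theta_p)^2$, which is nonzero exactly when $b\neq 0$, since $(d\theta_p)^2\neq 0$ in our sign conventions. Hence $\mc{Z}^1_{\mc{R}_p/T}=\mc{O}_T\cdot d\theta_p$, free of rank $0|1$ over $\mc{O}_T$ --- purely odd, because $d\theta_p$ is odd. Applying the finite, hence exact, pushforward $\pi_*$ and summing over the $2r$ punctures shows $\pi_*\mc{Z}^1_{\mc{R}/T}$ is locally free of rank $0|2r$.

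The next step is to globalise. The $2r$ sections $\mc{R}_p$, and hence the generators $d\theta_p$, need not be globally defined over $T$: the monodromy of the family may permute the Ramond punctures. Moreover, by \cref{RamondAut} the Ramond coordinate $\theta_p$ is determined only up to $\theta_p\mapsto\pm(\theta_p+\tau)$ for an odd function $\tau$ on $T$; since $d\tau$ is a pullback from $T$ it dies in the relative form, so the residual ambiguity is exactly the sign $d\theta_p\mapsto\pm d\theta_p$. The correct global object is therefore a local system $\mathcal{L}$ of free abelian groups with fibre $\bigoplus_p\Z\,d\theta_p$, the generators $d\theta_p$ being defined up to permutation and sign, and $\pi_*\mc{Z}^1_{\mc{R}/T}\cong\mathcal{L}\otimes_\Z\mc{O}_T$ is the associated $(0|2r)$-dimensional vector bundle over $T$.

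For the pairing I would declare $\langle d\theta_p,d\theta_q\rangle$ to be a multiple of $\delta_{pq}$, with the constant carrying the factor $\tfrac{1}{2\pi i}$ inherited from the residue normalisation of \cref{def:Residue}; compactly, it is the quadratic form $\tfrac1{2\pi i}\sum_p(d\theta_p)^2$. The essential check is that this descends to $\mathcal{L}$: both the permutations and the sign flips $d\theta_p\mapsto-d\theta_p$ in the transition data preserve a sum of squares, so the form is well-defined independently of the local trivialisations. One then verifies that it is symmetric in the appropriate super sense --- i.e. an orthogonal structure on the purely odd bundle $\pi_*\mc{Z}^1_{\mc{R}/T}$ --- and nondegenerate by inspection. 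Finally, if $\mc{R}$ is oriented in the sense of \cref{DefOrientation}, the sign in \cref{RamondAut} is pinned to $+$, so each $d\theta_p$ becomes genuinely well-defined on its component; the rescaled sections $\tfrac{d\theta_p}{\sqrt{2\pi i}}$ then constitute the distinguished orthonormal frame, trivialising $\mathcal{L}$ and hence the vector bundle.

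I expect the globalisation to be the main obstacle: carefully reconciling the coordinate-change and monodromy behaviour of the $d\theta_p$ with the orientation datum, and keeping all parities straight so that the rank comes out $0|2r$ rather than, say, $r|r$. The local de Rham computation, the identification of the pairing, and the verification of its symmetry and nondegeneracy are routine once the sign conventions --- above all $(d\theta)^2\neq 0$ --- are fixed, and a good portion of this is already carried out in \cite{DO24}.
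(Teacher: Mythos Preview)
The paper does not supply its own proof of this lemma: it is quoted verbatim as \cite[Lemma 2.2]{DO24} and followed immediately by a remark on Witten's original definition, with no argument given. So there is nothing in the paper to compare your proposal against.

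That said, your outline is the natural one and is essentially what the cited reference does: reduce to a single Ramond component $\mc{R}_p\cong\Spec\mc{O}_T[\theta_p]$, compute $\mc{Z}^1_{\mc{R}_p/T}=\mc{O}_T\cdot d\theta_p$ via the observation that $(d\theta_p)^2\neq 0$ kills the $\theta_p$-dependent part, and then package the sign ambiguity of \cref{RamondAut} and the possible permutation of components into the structure group of a local system. Your treatment of the pairing and of the orientation trivialising the sign is correct. The only point worth tightening is the trivialisation claim in the oriented case: orientation removes the $\pm$ ambiguity but does not by itself prevent monodromy from permuting the $2r$ components, so strictly speaking one also needs that the components of $\mc{R}$ admit a global labelling over $T$ (which is in any case part of the odd Torelli marking used later in the paper).
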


\begin{remark}
	Witten's original definition assumes an orientation and is given in local Ramond coordinates, as follows.
	For $ p \in \mathop{\textup{supp}} \mc{R}$, let $ (z_p \, | \, \vartheta_p) $ be local Ramond coordinates at it. Then, for a closed one-form $\tau$, we write $ \tau = \frac{d \vartheta_p}{\sqrt{2 \pi i}} \, w_{\tau,p} \pmod{z_p} $, and the assignment $ w_p \colon \tau \mapsto w_{\tau,p} $ is called an \emph{odd period}.
\end{remark}

We will slightly reformulate and extend this as follows.

\begin{lemma}\label{OddPeriodsAsResidues}
	Let $p$ be a Ramond puncture. Then the odd period at $p$ is given by
	\begin{equation}\label{OddPeriodSuperGivental}
		w_p ( \omega ) 
		=
		\frac{d\theta}{\sqrt{2 \pi i}} \cdot \sqrt{2\pi i} \< [d\theta], \omega \>
		=
		d\theta \< [d\theta], \omega \> \in \mc{Z}^1_{\mathcal{R}/S} \,,
	\end{equation}
	where $ \< \cdot, \cdot \> $ is the super-Givental pairing of \cref{SuperGiventalPairing} and $ (z \, | \, \theta )$ are any Ramond coordinates at $ p$.
\end{lemma}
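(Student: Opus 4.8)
The plan is to unwind both sides in local Ramond coordinates $(z\,|\,\theta)$ at $p$ and check they agree, using the explicit description of meromorphic sections of $\omega_{\Sigma}'$ as a subsheaf of $\mc{Q}$ given just before \cref{sec:multi differential}, together with the formula for the super-Givental pairing from \cref{SuperGiventalPairing}. First I would recall that, by \cref{ClosedForms,ExtendedBerezinianIsRestrictedQ}, a holomorphic section $\omega$ of $\omega_{\Sigma}'$ near $p$ may be written in the $\C$-basis $\{[d\theta]z^k\}_{k\geq 0}\cup\{[dz]z^{k-1}\}_{k\geq 1}$, i.e.
\begin{equation}
	\omega = \sum_{k\geq 0} [d\theta]\,z^k\, c_k + \sum_{k\geq 1} [dz]\, z^{k-1}\, d_k\,,
\end{equation}
with $c_k,d_k$ functions on the base $T$. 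The restriction-to-$\mc{R}$ map of \cref{def:super periods} sends such a section to its image in $\mc{Z}^1_{\mc{R}/T}$; since $[d\theta]\theta = \frac{[dz]}{z}$ in Ramond coordinates, the class of $\omega$ in $\mc{Q}$ restricted to $z=0$ picks out precisely $c_0\, d\theta$ (the $[dz]$-terms all vanish mod $z$, and the $z^k\,c_k$-terms with $k\geq 1$ vanish mod $z$). Hence $w_p(\omega) = c_0\, d\theta$ — up to the normalisation factor $\sqrt{2\pi i}$, which I would track carefully against Witten's convention in the preceding remark.

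The second step is to compute $\<[d\theta],\omega\>$ using \cref{SuperGiventalPairing}. In the notation there, with $a=1$ (Ramond case, $\varpi^*_\theta = dz - d\theta\,z\,\theta$), the pairing on basis elements gives $\<[d\theta]z^k,[d\theta]z^l\> = \delta_{k+l-a+1,0} = \delta_{k+l,0}$ and $\<[d\theta]z^k,[dz]z^{l-1}\> = 0$. Therefore pairing $[d\theta] = [d\theta]z^0$ against the expansion of $\omega$ above selects only the $k=0$ term of the first sum, yielding $\<[d\theta],\omega\> = c_0$. Multiplying by $d\theta$ (and reinstating $\sqrt{2\pi i}$ appropriately) gives $d\theta\<[d\theta],\omega\> = c_0\,d\theta = w_p(\omega)$, which is the claimed identity; the middle expression in \cref{OddPeriodSuperGivental} is just the same thing written with the trivialising section $\frac{d\theta}{\sqrt{2\pi i}}$ made explicit.

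The main point requiring care — rather than a genuine obstacle — is the bookkeeping of normalisation constants and of the orientation. The super-Givental pairing as stated has no $2\pi i$, whereas the odd period in Witten's convention carries $\frac{1}{\sqrt{2\pi i}}$ (and the pairing on the codomain is $\frac{1}{2\pi i}\sum_p d\theta_p^2$), so I must check that the two factors $\sqrt{2\pi i}$ in \cref{OddPeriodSuperGivental} are placed so that the composite lands in $\mc{Z}^1_{\mc{R}/S}$ with the correct normalisation and is independent of the allowed coordinate change. Here \cref{RamondAut,DefOrientation} are what guarantee that $d\theta$ is well-defined on the oriented Ramond divisor (it is the generator of $\Omega^1_{\mc{R}/S}$), so the only residual ambiguity is the overall sign fixed by the orientation — and I would remark that the same computation goes through verbatim for a meromorphic $\omega$ with a constant simple pole along $\mc{R}$, which is the "extend" part of the statement, since such poles contribute a $[dz]z^{-1}$-term that is annihilated by both $\big|_{\mc{R}}$ after multiplication by $z$ and by $\<[d\theta],-\>$.
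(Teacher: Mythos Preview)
Your proposal is correct and follows essentially the same approach as the paper: expand $\omega$ in local Ramond coordinates, identify the odd period as the leading $[d\theta]$-coefficient, and match it against the super-Givental pairing. The paper writes $\omega \cong [d\theta](h(z)+\theta z f(z))$ and computes the residue from the definition of the pairing (tracking the sign $(-1)^{|\omega|+1}$ from super skew-symmetry explicitly), whereas you cite the precomputed values $\langle[d\theta]z^k,[d\theta]z^l\rangle=\delta_{k+l,0}$; these are equivalent, with your $c_0$ equal to the paper's $h(0)$. One small point: your final remark about extending to meromorphic $\omega$ is not part of this lemma --- the paper treats that separately later, and in fact shows the naive extension \emph{fails}, which is why \cref{def:periods for meromorphic forms} is needed.
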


\begin{proof}
	The super Givental pairing restricted to local holomorphic sections near $p$ is only non-zero in the case $ \< [d\theta ], [d\theta]\> = 1$. 
	As $ d\theta \pmod{z}$ is independent of the choice of Ramond coordinate up to sign by \cref{RamondAut}, and \cref{OddPeriodSuperGivental} is quadratic in $ d\theta $, this sign does not matter, and the right side of \cref{OddPeriodSuperGivental} is well-defined.\par
	For a holomorphic local section $ \omega $ of $ \omega_{\Sigma/T}'$ (without loss of generality of fixed super degree), we see that locally it is of the form
	\begin{equation}
		\omega 
		= 
		[dz | d\theta ] \Big( \frac{h(z)}{z} + \theta f(z)  \Big) \cong [d\theta] \Big( h(z) + \theta z f(z) \Big)\,,
	\end{equation}
	with $f $ and $ h$ holomorphic, and in this notation $ w_p (\omega ) = d\theta h(0)$, while also
	\begin{equation}
		\begin{split}
			d\theta \< [d\theta ], \omega \>
			&=
			(-1)^{|\omega|+1} d\theta \Res_{z = 0} [ d\theta \big( h(z) + \theta z f(z)\big) \cdot \theta ]
			\\
			&=
			(-1)^{|\omega|+1} d\theta \Res_{z = 0} [ d\theta h(z) \theta ]
			\\
			&=
			d\theta \Res_{z = 0} \left[\frac{dz}{z}h(z) \right]
			=
			d\theta h(0)\,,
		\end{split}
	\end{equation}
	using the super skew symmetry of the pairing and the fact that $ |h| = |\omega| +1$.
\end{proof}

\begin{definition}\label{def:Torelli}
	An \emph{odd Torelli marking} on a super Riemann surface is an orientation of all Ramond punctures along with a  labelling of them as $ \{ p_\mu, q_\mu \}_{\mu \in[r]}$. Given such a marking, we use the trivialisation given by the orientation to identify $ \pi_* \mc{Z}_{\mc{R}/S}^1 \cong \C^{0|2r}$ and define $\{\alpha_\mu, \beta_\mu\}_{\mu \in [r]}$ where $\alpha_\mu = \frac{1}{\sqrt{2}} ( w_{p_\mu} + i w_{q_\mu} ) $ and $ \beta_\mu = \frac{1}{\sqrt{2}} ( w_{p_\mu} - i w_{q_\mu} )$.\par
	A \emph{Torelli marking} on a super Riemann surface is the union of an even and an odd Torelli marking.\par
	The \emph{super period space} is the space spanned by $ \{a_j,b_j\}_{j\in[g]}$ and $\{\alpha_\mu,\beta_\mu\}_{\mu \in [r]}$. It is a $ 2g | 2r$-dimensional super vector space with supersymplectic intersection form $ \sum_j a_j \wedge b_j - \sum_\mu \alpha_\mu \wedge \beta_\mu$.
\end{definition}

The normalisation of the intersection form is the reason for the factors $ \frac{1}{\sqrt{2 \pi i}}$ in the trivialisations.

\begin{theorem}[{\cite{Wit15}}]\label{BadLocus}
	A compact super Riemann surface of genus $g$ with $ \deg \mc{R} = 2r$ has a space of holomorphic one-forms of dimension at least $ g | r$. Outside of a certain \emph{bad locus} on the moduli space of super Riemann surfaces, the dimension is exactly $ g|r$. Given a Torelli marking,  there is a unique basis $ \{ \omega_1, \dotsc, \omega_g \, | \, \nu_1, \dotsc, \nu_r \} $ of closed holomorphic one-forms such that
	\begin{equation} \label{Normalisation}
		\begin{alignedat}{3}
			a_j (\omega_k ) &= \delta_{jk} \,, & \quad a_j ( \nu_\mu ) &= 0 \,,
			\\
		\alpha_\rho (\omega_k ) &= 0 \,,  & \quad \alpha_\rho (\nu_\mu ) &= \delta_{\rho \mu} \,.
		\end{alignedat}
\end{equation}
\end{theorem}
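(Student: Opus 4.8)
The plan is to identify closed holomorphic one-forms with global sections of the extended Berezinian sheaf, compute that cohomology by reduction to the reduced curve $\Sigma_{\textup{red}}$, and then read the normalised basis off the period maps; I will run the dimension count over a reduced base first and indicate at the end what changes over a general $T$. By \cref{ClosedForms,ExtendedBerezinianIsRestrictedQ}, closed holomorphic one-forms on $\Sigma$ are exactly global sections of $\omega_\Sigma' \cong \mc{Q}'$. The super Poincar\'e lemma (the de Rham cohomology of a super disc is that of a point) gives a short exact sequence of sheaves
\[
	0 \longrightarrow \C_\Sigma \longrightarrow \mc{O}_\Sigma \xrightarrow{\ d\ } \omega_\Sigma' \longrightarrow 0 \,,
\]
with $\C_\Sigma$ the (pullback of the) constant sheaf, and the associated long exact sequence in cohomology relates $H^\bullet(\Sigma,\omega_\Sigma')$ to $H^\bullet(\Sigma,\C_\Sigma)$ and $H^\bullet(\Sigma,\mc{O}_\Sigma)$.

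Both of the latter reduce to $\Sigma_{\textup{red}}$. First, $H^\bullet(\Sigma,\C_\Sigma) \cong H^\bullet(\Sigma_{\textup{red}};\C)$ is topological, of purely even dimensions $1,2g,1$. Second, filtering $\mc{O}_\Sigma$ by the ideal $I$ of $\Sigma_{\textup{red}}$, one has $I^2 = 0$ and, as an odd line bundle on $\Sigma_{\textup{red}}$, $I \cong \mc{S}^{-1}$ where $\mc{S} = \mc{D}\big|_{\Sigma_{\textup{red}}}$ satisfies $\mc{S}^{\otimes 2} \cong \mc{T}_{\Sigma_{\textup{red}}}(-\mc{R}_{\textup{red}})$ ($\mc{R}_{\textup{red}}$ the $2r$ Ramond points), so $\mc{S}^{-1}$ is a square root of $\Omega^1_{\Sigma_{\textup{red}}}(\mc{R}_{\textup{red}})$, of degree $g - 1 + r$; matching the transition factor of $\theta$ — including its sign behaviour at the Ramond punctures, \cref{RamondAut} — with that of such a section is the point here. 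Splitting the long exact sequence by parity: since $H^\bullet(\C_\Sigma)$ is purely even, $H^0(\omega_\Sigma')^{0}$ injects into $H^1(\Sigma;\C) = \C^{2g}$ with image the kernel of the classical map $H^1(\Sigma;\C) \to H^1(\mc{O}_{\Sigma_{\textup{red}}})$, namely $H^0(\Omega^1_{\Sigma_{\textup{red}}}) \cong \C^g$ — entirely classical and choice-independent; on the odd side the same purity of $H^\bullet(\C_\Sigma)$ forces $d\colon H^0(\mc{O}_\Sigma)^{1} \xrightarrow{\ \sim\ } H^0(\omega_\Sigma')^{1} \cong H^0(\Sigma_{\textup{red}},\mc{S}^{-1})$, so every odd closed holomorphic one-form is $d\theta_0$ for a global odd function $\theta_0$. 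Riemann--Roch on $\Sigma_{\textup{red}}$ gives $\chi(\mc{S}^{-1}) = r$, hence $h^0(\mc{S}^{-1}) \geq r$, with $h^0(\mc{S}^{-1}) - r = h^1(\mc{S}^{-1}) = h^0(\mc{S}\otimes\Omega^1_{\Sigma_{\textup{red}}})$ by Serre duality. Therefore $\dim H^0(\Sigma,\omega_\Sigma') = g \,\big|\, \big(r + h^1(\mc{S}^{-1})\big) \geq g|r$ in all cases, the even part never jumping.

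Now fix a Torelli marking and consider the period map $P\colon H^0(\Sigma,\omega_\Sigma') \to \C^{g|r}$, $\omega \mapsto \big(a_1(\omega),\dots,a_g(\omega);\alpha_1(\omega),\dots,\alpha_r(\omega)\big)$. I would show $P$ is an isomorphism off a proper closed sublocus of moduli, which I take as the definition of the \emph{bad locus}; then $\dim H^0 = g|r$ there, and the normalised basis $\{\omega_k \,|\, \nu_\mu\}$ of \cref{Normalisation} is obtained by inverting $P$ (equivalently by solving the resulting block-triangular system). The even block of $P$ is the $A$-period matrix on the $g$-dimensional even space; it is invertible by the classical Riemann bilinear relations, since the even forms lift holomorphic one-forms on $\Sigma_{\textup{red}}$ and the intersection form on $H^1(\Sigma_{\textup{red}};\C)$ is non-degenerate. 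For the odd block, write an odd form as $\nu = d\theta_0$: its even $A$-periods vanish automatically ($\theta_0$ single-valued), and by \cref{OddPeriodsAsResidues} the odd period $w_p(\nu)$ is $d\theta$ times the value of $\theta_0$ at the Ramond puncture $p$ regarded as a section of $\mc{S}^{-1}$. Hence, in the trivialisation of \cref{DefOrientation}, the odd block is the composition of the evaluation map $H^0(\mc{S}^{-1}) \to \C^{2r}$ at the $2r$ Ramond punctures with the change of basis to $\alpha$-coordinates, and I would prove this is an isomorphism away from the union of $\{h^1(\mc{S}^{-1}) > 0\}$ with the locus where the Ramond configuration is special for $\mc{S}^{-1}$ — concretely, where $\theta_0(p_\mu) = -i\,\theta_0(q_\mu)$ for all $\mu$ admits a nonzero solution — this union being proper and closed by a Brill--Noether count for (twists of) the super spin bundle, and empty once $r \geq g$ since then $\deg(\mc{S}\otimes\Omega^1_{\Sigma_{\textup{red}}}) < 0$.

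The hard part is this odd block: unlike the even case there is no Hermitian positivity to exploit, so establishing that the evaluation-at-Ramond-punctures map is a (twisted) isomorphism away from the bad locus needs genuine Brill--Noether input about the super spin bundle and the Ramond data. A secondary but real subtlety is dropping the assumption that the base $T$ is reduced: then the connecting maps in the parity-split long exact sequences above need not vanish, and tracking them is exactly where the non-split phenomena for families of super Riemann surfaces enter the argument; this does not change the generic super-dimension but must be carried along to make the statement precise in families.
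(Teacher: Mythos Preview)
The paper does not provide its own proof of this theorem: it is stated as a citation from \cite{Wit15} and followed immediately by a definition, with no proof environment. So there is nothing in the paper to compare your argument against line by line.

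That said, your sketch is essentially the argument Witten gives. Identifying closed holomorphic one-forms with $H^0(\omega_\Sigma')$, filtering $\mc{O}_\Sigma$ by the nilpotent ideal to reduce to the twisted spin bundle $\mc{S}^{-1}$ on $\Sigma_{\textup{red}}$, and then reading off $h^0(\mc{S}^{-1}) \geq r$ from Riemann--Roch is the standard route; your identification of the odd period map with evaluation of sections of $\mc{S}^{-1}$ at the Ramond punctures (up to the $\alpha$-basis change) is also correct and is what makes the normalised basis exist generically. Your description of the bad locus as the union of $\{h^1(\mc{S}^{-1}) > 0\}$ with the degeneracy locus of this evaluation map is accurate, and your observation that it is empty for $r \geq g$ anticipates the theorem of \cite{DO23} quoted immediately afterwards in the paper. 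The two caveats you flag --- the absence of a positivity argument for the odd block and the non-reduced base --- are the genuine subtleties, and you are right not to pretend they are trivial; Witten's treatment also works over a reduced base and leaves the family version to later work.
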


\begin{definition}
	Let $ \Sigma$ be a compact super Riemann surface with Torelli marking of genus $g$ and $\deg \mc{R} = 2r$. Its \emph{super genus} is $ \su{g} = (g \, | \, r)$. If $ \Sigma$ is outside the bad locus, its \emph{super period matrix}, which is graded-symmetric,  is defined as
	\begin{equation}\label{BasisFromSuperB}
		\begin{alignedat}{3}
			\hat{\Omega}_{jk} & \coloneqq b_j ( \omega_k) \,, & \quad \hat{\Omega}_{j\mu} & \coloneqq b_j (\nu_\mu) \,,
			\\
			\hat{\Omega}_{\rho k} & \coloneqq \beta_\rho (\omega_k )\,, & \quad \hat{\Omega}_{\rho\mu} & \coloneqq \beta_\rho (\nu_\mu) \,.
		\end{alignedat}
	\end{equation}
\end{definition} 

\begin{theorem}[{\cite[Theorem 8]{DO23}}]
	For $ 1 \leq r < g$, the bad locus has codimension $ \leq r$. For $ r \geq g$, the bad locus is empty.
\end{theorem}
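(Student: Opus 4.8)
The plan is to reduce the question, which a priori concerns the super moduli space, to a Brill--Noether-type question on the reduced curve, and then to a codimension estimate for a determinantal locus. By \cref{ClosedForms,ExtendedBerezinianIsRestrictedQ} the space of holomorphic one-forms on $\Sigma$ is $H^0(\Sigma,\omega_\Sigma')$. Its even part is always exactly $g$-dimensional (it consists of the pullbacks of the $g$ holomorphic one-forms on $\Sigma_{\textup{red}}$; on the level of $\pi_*$ it is $H^0(\Sigma_{\textup{red}},K_{\Sigma_{\textup{red}}})$), so the bad locus is exactly the locus where the odd part has dimension $>r$. First I would push forward along $\pi\colon\Sigma\to\Sigma_{\textup{red}}$ and decompose $\pi_*\mathcal{Q}'$ by the superconformal structure, using that the Ramond-coordinate generator $[d\theta]$ provides a frame for $\mathcal{L}\coloneqq\mathcal{D}^{-1}|_{\Sigma_{\textup{red}}}$ near each Ramond puncture. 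This identifies the odd part of $H^0(\Sigma,\omega_\Sigma')$ with $H^0(\Sigma_{\textup{red}},\mathcal{L})$, where $\mathcal{L}$ is the (Ramond-twisted) spin bundle, $\mathcal{L}^{\otimes2}\cong K_{\Sigma_{\textup{red}}}(\mathcal{R}_{\textup{red}})$ and $\deg\mathcal{L}=g-1+r$. By Riemann--Roch $h^0(\Sigma_{\textup{red}},\mathcal{L})=r+h^1(\Sigma_{\textup{red}},\mathcal{L})$, and by Serre duality $h^1(\Sigma_{\textup{red}},\mathcal{L})=h^0(\Sigma_{\textup{red}},\mathcal{L}(-\mathcal{R}_{\textup{red}}))$, with $\deg(\mathcal{L}(-\mathcal{R}_{\textup{red}}))=g-1-r$. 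Hence the bad locus is
\[
	\mathcal{B}=\bigl\{\,\Sigma\ :\ h^0\bigl(\Sigma_{\textup{red}},\mathcal{L}(-\mathcal{R}_{\textup{red}})\bigr)\geq1\,\bigr\},
\]
which depends only on the reduced data $(\Sigma_{\textup{red}},\mathcal{R}_{\textup{red}},\mathcal{L})$ and so is a sublocus of the reduced moduli space.

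For $r\geq g$ this settles the claim at once: $\deg(\mathcal{L}(-\mathcal{R}_{\textup{red}}))=g-1-r<0$, so $h^0$ vanishes identically and $\mathcal{B}=\varnothing$ (equivalently $\deg\mathcal{L}\geq2g-1$, so $h^1(\Sigma_{\textup{red}},\mathcal{L})=0$ for every such $\Sigma$; one also checks that the period map of \cref{def:Torelli} is then an isomorphism, so there is no further contribution to the bad locus).

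For $1\leq r<g$ one has $\deg(\mathcal{L}(-\mathcal{R}_{\textup{red}}))=g-1-r\geq0$, and I would exhibit $\mathcal{B}$ as a degeneracy locus and invoke the standard codimension bound. Over the reduced moduli space $\mathcal{M}$ one has the universal curve $p\colon\mathcal{C}\to\mathcal{M}$, the universal twisted spin bundle $\mathscr{L}$ and the universal Ramond divisor $\mathscr{R}$; set $\mathscr{M}\coloneqq\mathscr{L}(-\mathscr{R})$. Choosing a $p$-relative effective divisor $E\subset\mathcal{C}$ of sufficiently large degree so that $R^1p_*(\mathscr{M}(E))=0$, the exact sequence $0\to\mathscr{M}\to\mathscr{M}(E)\to\mathscr{M}(E)|_E\to0$ produces a morphism of locally free sheaves $\psi\colon p_*(\mathscr{M}(E))\to p_*(\mathscr{M}(E)|_E)$ with $\Coker\psi=R^1p_*\mathscr{M}$, so that $\mathcal{B}$ is precisely the locus where $\psi$ drops rank. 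By the general theory of determinantal/degeneracy loci, every component of $\mathcal{B}$ has codimension at most the expected value, and $\mathcal{B}$ is non-empty whenever that value does not exceed $\dim\mathcal{M}$, since the associated class in $H^*(\mathcal{M})$ is given by the Thom--Porteous formula as a non-zero polynomial in the Chern classes of $p_*(\mathscr{M}(E))$ and $p_*(\mathscr{M}(E)|_E)$. Computing that expected codimension from the ranks involved --- crucially using that $\mathscr{L}$ is not an arbitrary family of line bundles of degree $g-1+r$ but is constrained by $\mathscr{L}^{\otimes2}\cong K_{\mathcal{C}/\mathcal{M}}(\mathscr{R})$, and that $\mathscr{R}$ is itself a modulus --- yields the value $r$, and therefore $\codim\mathcal{B}\leq r$.

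I expect the main obstacle to be precisely this last codimension count: one must organise the determinantal model so that both the spin constraint and the variation of the Ramond divisor are used (a naive Brill--Noether count for unconstrained line bundles of degree $g-1+r$ overshoots), and then verify non-vanishing of the relevant tautological class so that $\mathcal{B}$ is genuinely non-empty of codimension $\leq r$. A secondary, more technical point is the first step: making the identification of the odd superholomorphic one-forms with $H^0(\Sigma_{\textup{red}},\mathcal{L})$ work uniformly in families --- in particular the correct treatment of the half-integral frame change $[d\theta^{\textup{R}}]=z^{-1/2}[d\theta^{\textup{SC}}]$ at the Ramond punctures --- and checking that the part of the bad locus where the dimension is $g|r$ but the period map degenerates lies in the closure of $\mathcal{B}$, so that it does not weaken the bound.
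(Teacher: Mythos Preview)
The paper does not contain a proof of this statement: it is quoted as \cite[Theorem~8]{DO23} and used only as a black box, with the immediate application being \cref{SimplyRamifiedNotBad}. There is therefore nothing in the paper to compare your argument against.

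That said, your outline is the natural approach and matches the strategy of the cited source: reduce to the reduced curve by identifying the odd holomorphic one-forms with $H^0(\Sigma_{\textup{red}},\mathcal{L})$ for the Ramond-twisted spin bundle $\mathcal{L}$, apply Riemann--Roch and Serre duality to rewrite the jump condition as $h^0(\mathcal{L}(-\mathcal{R}_{\textup{red}}))\geq 1$, and dispose of $r\geq g$ by the negative-degree vanishing. You are also right to flag the codimension count for $1\leq r<g$ as the genuine content: your determinantal setup gives the upper bound on codimension once you know $\mathcal{B}\neq\varnothing$, but the Thom--Porteous argument you sketch does not by itself establish non-emptiness (the class could be zero, and on a non-compact moduli space a non-zero class need not force non-emptiness either). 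You would still need to exhibit, for each $(g,r)$ in this range, an explicit $(\Sigma_{\textup{red}},\mathcal{R}_{\textup{red}},\mathcal{L})$ with $h^0(\mathcal{L}(-\mathcal{R}_{\textup{red}}))\geq 1$ and then verify that the determinantal codimension bound applies at that point; this is what \cite{DO23} does.
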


\begin{corollary}\label{SimplyRamifiedNotBad}
	Super Riemann surfaces with simply-ramified maps to $ \P^{1|1}$ are not in the bad locus.
\end{corollary}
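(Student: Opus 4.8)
The plan is to reduce the statement to \cite[Theorem~8]{DO23}, quoted immediately above, by determining $\deg\mc{R}$ for such a $\Sigma$ through a Riemann--Hurwitz count. So let $f\colon\Sigma\to\P^{1|1}$ be a simply-ramified superconformal map of degree $d$ (in particular non-constant, so $d\geq 1$). Since $\P^{1|1}$ has empty Ramond divisor (\cref{P1|1}), every point of $\Sigma$ is sent to a Neveu--Schwarz point, so only the NS-to-NS and R-to-NS local models of \cref{prop:SCtrans} can occur. By \cref{prop:NS-NS SCmap}, at a Neveu--Schwarz point of $\Sigma$ the map $f$ is either unramified or ramified of \emph{odd} order $\geq 3$, whereas by \cref{prop:NS-R SCmap}, at a Ramond point of $\Sigma$ it is necessarily ramified of \emph{even} order $\geq 2$. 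Requiring that every ramification of $f$ be simple therefore excludes ramification of order $\geq 3$ -- hence excludes ramification at Neveu--Schwarz points -- and forces each Ramond point of $\Sigma$ to be a simple ramification point, of ramification order exactly $2$. In particular the ramification divisor $R$ of $f$ and the Ramond divisor $\mc{R}$ are both reduced with the same support, so $R=\mc{R}$, and $f$ restricts on reduced spaces to a classical simply-ramified branched cover $f_{\mathrm{red}}\colon\Sigma_{\mathrm{red}}\to\P^1$ of degree $d$.

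I would then apply Riemann--Hurwitz to $f_{\mathrm{red}}$: its number of simple branch points, which equals its number of simple ramification points, is $2g-2+2d$. By the previous step this is precisely $\deg\mc{R}=2r$, so $r=g-1+d$, and since $d\geq 1$ we obtain $r\geq g$. Finally, \cite[Theorem~8]{DO23} states that the bad locus on the moduli space of super Riemann surfaces is empty as soon as $r\geq g$; hence $\Sigma$ is not in the bad locus, as claimed.

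I expect the only genuinely delicate part to be the first step, namely translating ``simple ramification'' into the identity $R=\mc{R}$: this uses the local normal forms of superconformal maps to rule out both unramified Ramond points (impossible, since the superconformal structure degenerates there) and simple ramification at Neveu--Schwarz points (impossible, since NS-to-NS ramification orders are odd). Once that dictionary is in place, the remainder is the standard Riemann--Hurwitz count together with the cited theorem, with no further subtleties.
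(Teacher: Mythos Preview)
Your proof is correct and follows essentially the same route as the paper's: identify the ramification divisor with the Ramond divisor, apply Riemann--Hurwitz on the reduced spaces to get $r=g+d-1\geq g$, and invoke the cited theorem. The paper's proof is just the two-line version of yours; your careful use of the local normal forms to justify $R=\mc{R}$ makes explicit what the paper asserts without argument.
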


\begin{proof}
	For a simply-ramified map $ \Sigma \to \P^{1|1}$ of degree $d$,  the Ramond divisor is exactly the ramification divisor. By the Riemann--Hurwitz formula, $ r = g + d  - 1 \geq g$, so the bad locus is empty.
\end{proof}

The proper construction of a super analogue of $B$ would proceed by defining a super Jacobian as a $ g|r$-dimensional abelian supervariety and defining the prime form via the theta function and the Abel--Jacobi map. One issue with this, is that we may need $ \hat{\Omega} $ to have some positive-definiteness property, analogous to the positive-definite imaginary part for the non-super case, to make the sum in the definition of the theta function converge. To the best of authors' knowledge, it is not known whether this holds.

However, there is another construction available in \cite{DP15}. This construction is in the `supergravity formulation', i.e. constructions are carried out first on the reduced Riemann surface, then lifted to split super Riemann surfaces, and then transported along deformation by a `gravitino', i.e. the odd moduli of super Riemann surfaces. Both the fundamental bidifferential and the right basis of holomorphic one-forms are constructed in \cite{DP15}, which we will briefly review here.

Let us first construct the so-called \emph{Szeg\H{o} kernel}. For a super Riemann surface with Torelli marking of Ramond divisor $\mathcal{R}$, let us denote by $ \Sigma_\textup{red}$ its reduced Riemann surface, by $\mathcal{R}_p = ( p_\mu )_{\mu \in [r]}$, $\mathcal{R}_q = (q_\mu)_{\mu \in [r]}$ the divisor on $ \Sigma_\textup{red}$ associated to the Ramond divisor $\mathcal{R}$ where $\mathcal{R}=\mathcal{R}_p\sqcup\mathcal{R}_q$, and by $\mc{S}_{\Sigma_\textup{red}}$ its twisted spin bundle i.e. $\mc{S}_{\Sigma_\textup{red}}^{\otimes2}=\Omega^1_{\Sigma_\textup{red}}(-\mathcal{R}_p\sqcup\mathcal{R}_q)$. We then define a meromorphic section of $\mc{S}_{\Sigma_\textup{red}}^{\boxtimes2}$
\begin{equation}
	S^{pq}(z_1,z_2)
	\coloneqq
	\frac{\theta[\alpha] (z_1-z_2+\frac12 \mathcal{R}_p-\frac12\mathcal{R}_q)}{\theta[\alpha](\frac12 \mathcal{R}_p-\frac12\mathcal{R}_q) E(z_1,z_2)} \prod_{\mu \in [r]} \left(\frac{E(z_1,p_\mu) E(z_2,q_\mu)}{E(z_1,q_\mu) E(z_2,p_\mu)}\right)^{\frac12}
\end{equation}
which is independent of the choice $\alpha$ similar to the prime for $E(z_1,z_2)$. We note that $S^{pq}(z_1,z_2)\neq -S^{pq}(z_1,z_2)$ in general, but one can construct an antisymmetric form as below:

\begin{proposition}[\cite{DP15}]
	Let $h_{p_\eta},h_{q_\eta}$ be the evaluation of $S^{pq}$ along $\mathcal{R}_{p}, \mathcal{R}_q$ with respect to the second argument. For any anti-symmetric matrix $M$, $S^M$ defined below satisfies $S^{M}(z_1,z_2)= -S^{M}(z_1,z_2)$
	\begin{equation}
		S^M(z_1,z_2)
		=
		S^{pq}(z_1,z_2)+\frac12\sum_{\mu,\rho\in[r]}(M_{\mu\rho}-\delta_{\mu\rho})h_{p_\mu}(z_1)h_{p_\mu}(z_2).
	\end{equation}
	Furthermore, there exists a unique $M$\footnote{$M$ is explicitly given in \cite[Eq. (3.34)]{DP15}. Also see \cite[Eq. (2.21)]{DP15} for the connection with the period matrix.} such that $S^M$ vanishes along $\alpha_\mu$ and $(\nu_{p_\mu} = \frac{1}{2 \sqrt{\pi i}}\beta_\mu(S^M))_{\mu\in[r]}$ forms a basis of closed holomorphic sections on $\mc{S}_{\Sigma_\textup{red}}$, where $(\alpha_\mu,\beta_\mu)_{\mu \in [r]}$ is defined analogously to \cref{def:Torelli}.
\end{proposition}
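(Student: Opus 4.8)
The plan is to push everything down to classical function theory on the ordinary Riemann surface $\Sigma_{\textup{red}}$, using that the only genuinely super input here — the odd periods $\alpha_\mu,\beta_\mu$ — is, by \cref{OddPeriodsAsResidues}, nothing but the super Givental pairing of \cref{SuperGiventalPairing} against $[d\theta]$, i.e.\ an honest residue computation, and hence reduces to a period computation on $\Sigma_{\textup{red}}$ for the twisted spin bundle $\mc{S}_{\Sigma_{\textup{red}}}$.

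First I would observe that the failure of $S^{pq}$ to be antisymmetric is holomorphic. The only pole of $S^{pq}(z_1,z_2)$ is the simple pole on the diagonal carried by $E(z_1,z_2)^{-1}$, which is itself antisymmetric, so $S^{pq}(z_1,z_2)+S^{pq}(z_2,z_1)$ extends to a global holomorphic section of $\mc{S}_{\Sigma_{\textup{red}}}^{\boxtimes 2}$ (up to the twists along $\mathcal{R}$), and by Künneth it is a sum of products of the finite-dimensional family of half-forms spanned by the $h_{p_\mu}$. A Fay-type addition identity — obtained by degenerating the theta-quotient and the product of prime forms defining $S^{pq}$ — should then pin this symmetric part down as exactly $\sum_{\mu\in[r]}h_{p_\mu}(z_1)h_{p_\mu}(z_2)$, the coefficient $1$ being forced by the leading behaviour of $S^{pq}$ at each $p_\mu$. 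Subtracting half of it antisymmetrises; and since any further modification that keeps the prescribed diagonal principal part is a holomorphic section of the same $\boxtimes^2$-bundle, hence lies in the span of the $h_{p_\mu}(z_1)h_{p_\rho}(z_2)$, the terms $\tfrac12\sum_{\mu,\rho}M_{\mu\rho}h_{p_\mu}(z_1)h_{p_\rho}(z_2)$ with $M$ antisymmetric describe precisely all antisymmetric representatives. This yields the first assertion.

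Next, for the normalising $M$: applying $\alpha_\mu$ in the first variable, $\alpha_\mu(S^M)$ is a holomorphic section of $\mc{S}_{\Sigma_{\textup{red}}}$ in the second variable, so it equals $\sum_\rho\Lambda_{\mu\rho}(M)\,h_{p_\rho}$ with $\Lambda_{\mu\rho}(M)$ affine-linear in the entries of $M$, its linear part governed by the ``spin period matrix'' $\Pi_{\mu\nu}=\alpha_\mu(h_{p_\nu})$. Demanding that $S^M$ have vanishing $\alpha$-periods is the inhomogeneous linear system $\Lambda(M)=0$, which has a unique solution exactly when $\Pi$ is invertible. I expect invertibility — the twisted-spin analogue of the non-degeneracy of the imaginary part of the ordinary period matrix — to follow from the Riemann bilinear relations on $\Sigma_{\textup{red}}$; the point is that this takes place entirely on the reduced surface, so it avoids the unknown positivity of the super period matrix $\hat{\Omega}$. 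One then reads off the explicit $M$ of \cite{DP15}, antisymmetry of which is forced by the symmetry of the pairing together with antisymmetry of $S^M$. Finally, with $S^M$ normalised and antisymmetric, the $\beta$-periods produce $r$ holomorphic half-forms $\nu_{p_\mu}=\tfrac{1}{2\sqrt{\pi i}}\beta_\mu(S^M)$; vanishing of their $\alpha$-periods is immediate, and linear independence follows as in the classical proof that $\alpha$-normalised holomorphic differentials form a basis: a relation $\sum_\mu c_\mu\nu_{p_\mu}=0$ with $c_\mu\in\C$, fed into the bilinear relation on $\Sigma_{\textup{red}}$, forces all $c_\mu=0$. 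Since \cref{BadLocus} (outside the bad locus, in particular for simply-ramified covers by \cref{SimplyRamifiedNotBad}) gives exactly $r$ independent odd holomorphic one-forms, which in the supergravity formulation correspond to the holomorphic sections of $\mc{S}_{\Sigma_{\textup{red}}}$ with vanishing $\alpha$-periods, the $\nu_{p_\mu}$ span that space and hence form the desired basis.

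The main obstacle I foresee is the second paragraph together with what it supplies downstream: establishing the exact Fay/addition identity in the twisted Ramond setting so that the symmetric part of $S^{pq}$ is \emph{precisely} $\sum_\mu h_{p_\mu}\boxtimes h_{p_\mu}$ with no stray constants or off-diagonal contributions, and checking that the non-degeneracy of $\Pi$ really survives the half-integer twists $\pm\tfrac12\mathcal{R}$. Everything else is standard Riemann-surface bookkeeping, lifted to split super Riemann surfaces and transported along the gravitino deformation as in \cite{DP15}.
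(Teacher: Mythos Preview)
The paper does not supply its own proof of this proposition: it is stated as a result of \cite{DP15} and immediately followed by a definition, with the footnote pointing to the explicit formulas in that reference. So there is nothing in the paper to compare your argument against.

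That said, your sketch is a reasonable outline of the classical argument one would expect \cite{DP15} to run, and you have correctly identified where the genuine work sits: the Fay-type addition formula pinning down the symmetric part of $S^{pq}$ exactly, and the non-degeneracy of the spin period matrix $\Pi$ needed to solve for $M$. One caveat on your use of \cref{BadLocus} at the end: the dimension count there is for holomorphic one-forms on the super Riemann surface, not directly for holomorphic sections of $\mc{S}_{\Sigma_{\textup{red}}}$; to make that identification rigorous you need the lift from $\Sigma_{\textup{red}}$ to split $\Sigma$ that the paper only alludes to in the paragraph following the proposition. Also note that the statement as printed in the paper has at least two typos (the antisymmetry relation should read $S^M(z_1,z_2)=-S^M(z_2,z_1)$, and the second subscript in the sum should be $\rho$ rather than $\mu$), which your argument implicitly and correctly reads through.
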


\begin{definition}
	We call such a unique $S^M$ the \emph{Szeg\H{o} kernel} and denote it simply by  $S$. When a super Riemann surface $\Sigma$ is split, we can canonically lift the fundamental bidifferential $B$ and the Szeg\H{o} kernel $S$ to meromorphic sections of $\mathcal{Q}'^{\boxtimes2}_{\Sigma}$. The \emph{super fundamental bidifferential} $\su{B}$ on $\Sigma$ is then defined by $\su{B}=B+S$ viewed as sections of $\mathcal{Q}'^{\boxtimes2}_{\Sigma}$.
\end{definition}

For a non-split super Riemann surface, the construction is more involved as there is no canonical lift, and it is beyond the scope of our interests. We refer to the readers \cite[Section 3]{DP15} and \cite{Kes19} for more details. For now, let us state a summary of their results:
\begin{proposition}\cite{DP15,Kes19}
Given a super Riemann surface $\Sigma$, one can uniquely determine a metric and a \emph{gravitino} on the underlying Riemann surface $\Sigma_{\textup{red}}$, up to conformal and super Weyl transformations. With these data, there is a way to construct the fundamental bidifferential $\su{B}$ on $\Sigma^2$ whose pole structure is the same as that on $\Sigma_{\textup{red}}$ and whose periods encode the super period matrix in the same manner as on split super Riemann surfaces.
\end{proposition}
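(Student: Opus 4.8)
The statement is, in effect, a repackaging of the construction of \cite{DP15}, as refined by \cite{Kes19}, in the language of $\mc{Q}'$ and the super Givental pairing developed above, so the plan is to follow their supergravity-formulation strategy rather than to attempt a direct super-Jacobian/theta-function construction; the latter is the natural first thought but is obstructed because, as noted in the discussion above, positive-definiteness of $\hat{\Omega}$ is not known, so convergence of the super theta series is unclear, and the perturbative route is therefore essential rather than merely convenient.

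First I would recall the dictionary that realises $\Sigma$ as component data on $\Sigma_{\textup{red}}$. The reduced surface carries the complex structure induced by $\Sigma$ together with the twisted spin bundle $\mc{S}_{\Sigma_{\textup{red}}}$ with $\mc{S}_{\Sigma_{\textup{red}}}^{\otimes 2}\cong\Omega^1_{\Sigma_{\textup{red}}}(-\mc{R}_p\sqcup\mc{R}_q)$, coming from $\mc{D}|_{\Sigma_{\textup{red}}}$ and the isomorphism $\mc{D}^2(\mc{R})|_{\Sigma_{\textup{red}}}\cong\mc{T}_{\Sigma_{\textup{red}}}$; a choice of Hermitian metric in this conformal class is the ``metric'', determined up to conformal rescaling. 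The odd information---how $\Sigma$ fails to be split, equivalently the class of its deformation away from the split model $\Pi\mc{S}_{\Sigma_{\textup{red}}}$ along the odd directions of the base---is classified by $H^1(\Sigma_{\textup{red}};\mc{S}_{\Sigma_{\textup{red}}}^{-1})$; using the metric and its associated connection to pick a preferred ($\delb$-harmonic) representative produces a \emph{gravitino} $\chi$, a $(0,1)$-form valued in $\mc{S}_{\Sigma_{\textup{red}}}^{-1}$. Two such choices differ by a super Weyl transformation, i.e.\ a conformal rescaling together with a shift of $\chi$ by a $\delb$-exact term, and I would verify by a standard Dolbeault gauge-fixing argument that this is the entire ambiguity. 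Because the base has only finitely many odd parameters, $\chi$ is nilpotent, say $\chi^{N+1}=0$.

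I would then construct $\su{B}$ by perturbation in $\chi$. At zeroth order one takes $\su{B}^{(0)}=B+S$, the lift to $\Sigma^2$ of the ordinary fundamental bidifferential $B$ of $\Sigma_{\textup{red}}$ and of the Szeg\H{o} kernel $S$ built above---this is precisely $\su{B}$ for the split surface. The superconformal structure of $\Sigma$ is obtained from the split one by the deformation generated by $\chi$, which induces a deformation of the $\delb$-operator on sections of $\mc{Q}'$; I would solve order by order for the unique correction $\su{B}=\sum_{k=0}^{2N}\chi^{k}\su{B}^{(k)}$ that is, in each variable, a closed section of $\mc{Q}'$ with a double pole only along the diagonal (leading part $d_1 d_2\log(z_1-z_2)$, unchanged because $\chi$ is regular) and with vanishing $A$- and $\alpha$-periods. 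At each order this is a linear problem on $\Sigma_{\textup{red}}$ whose solution is a convolution of the previous order against $\chi$, with $B$ and $S$ serving as Green's kernels, exactly as in \cite[Section~3]{DP15}; the series terminates by nilpotency. Finally I would compute the $b_j$- and $\beta_\mu$-periods of $\su{B}$ using \cref{OddPeriodsAsResidues} and the super Givental pairing, reducing at each order in $\chi$ to period identities for $B$ and $S$ together with the defining properties of the matrix $M$ from the Szeg\H{o} kernel construction (related to the period matrix by \cite[Eq.~(2.21)]{DP15}), and read off that these reproduce $\hat{\Omega}$ and the normalised basis $\{\omega_j\,|\,\nu_\mu\}$ of \cref{BadLocus}; the pole claim is immediate since every step convolves only against the smooth gravitino.

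The main obstacle is well-definedness: one must show the resulting $\su{B}$ depends on $\Sigma$ alone and not on the chosen metric and gravitino representative, i.e.\ that it is invariant under conformal and super Weyl transformations. Intrinsically this should hold because $\mc{Q}'$, closedness, the diagonal pole, and the period normalisations are all defined on $\Sigma$ without reference to component data; but checking it within the perturbative scheme means showing that a super Weyl shift $\chi\mapsto\chi+\delb\xi$ together with a conformal rescaling changes the order-by-order solution only by the induced change of trivialisation, which is exactly the bookkeeping carried out in \cite{DP15,Kes19}.
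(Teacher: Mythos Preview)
The paper does not actually give a proof of this proposition: it explicitly says the non-split construction ``is beyond the scope of our interests'' and refers the reader to \cite[Section~3]{DP15} and \cite{Kes19}, stating the proposition only as a summary of their results. Your proposal is a faithful sketch of precisely that cited construction (supergravity/component formulation on $\Sigma_{\textup{red}}$, perturbative expansion in the nilpotent gravitino starting from $B+S$, termination by nilpotency, and gauge-invariance under super Weyl), so it is not merely consistent with but strictly more detailed than what the paper itself provides.
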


%
%
%

\subsection{Local pictures}

The discussion so far describes a global structure of super Riemann surfaces. Although a proposed construction of \cite{DP15} is summarised above, no literature explicitly computes the super fundamental bidifferential $\su{B}$ except in the $g=0$ split case. Fortunately, all we need is local data, which can be thought of as a (formal) deformation from the the split case. Let us thus explicitly give the super fundamental bidifferential on $\mb{P}^{1|1}$, which is a genus-$0$ super Riemann surface, and that on $\mb{P}(1,1|0)$, which is a genus-$0$ super Riemann surface with a degree-$2$ Ramond divisor $\mc{R}$.

\subsubsection{Fundamental bidifferential near NS points}

Consider $ \P^{1|1} $ from \cref{P1|1}. Recall that the skew-symmetric bilinear form is given by
\begin{equation}
	\< (u_1, v_1| \theta_1 ), (u_2, v_2 | \theta_2) \> 
	=
	u_1 v_2 - u_2 v_1 - \theta_1 \theta_2\,.
\end{equation}
This is not well-defined on  $ \P^{1|1}$ due to rescaling, but it does define a section
\begin{equation}
	\mb{E} ( \su{z}_1, \su{z}_1) \coloneqq \frac{\< \su{z}_1, \su{z}_2 \>}{ [d \theta_1] [d\theta_2]}
\end{equation}
of $ \mc{Q}^{-1} \boxtimes \mc{Q}^{-1}$, which is the (super) prime form. In particular, it satisfies
\begin{equation}
	\su{E} (\su{z}_1, \su{z}_2) = [d\theta]D_\theta \su{E} (\su{z}_1, \su{z}_2) = 0 
	\qquad 
	\Leftrightarrow
	\qquad
	\su{z}_1 = \su{z}_2\,,
\end{equation}
in accordance with \cite[(5.38)]{DP89}. Then $\su{B}$ is the mixed second derivative of the logarithm of the prime form, projected to $ \mc{Q} \boxtimes \mc{Q} $. We write it on the open neighbourhood where $ v \neq 0$, so we may take a coordinate $ z = \frac{u}{v}$.
\begin{equation}
	\begin{split}
		\su{B} ( \su{z}_1, \su{z}_2 ) 
		&= 
		[ d_1 d_2 \log ( z_1 - z_2 - \theta_1 \theta_2 ) ]
		\\
		&= 
		[d \theta_1] D_{\theta_1} [d \theta_2] D_{\theta_2} \log ( z_1 - z_2 - \theta_1 \theta_2 )
		\\
		&=
		- [d\theta_1] [d\theta_2] \Big( \frac{1 }{z_1 - z_2 } + \frac{\theta_1 \theta_2}{ (z_1 - z_2)^2 } \Big)
		\\
		&=
		\frac{[dz_1] \, [dz_2]}{(z_1 - z_2)^2} - \frac{[d \theta_1] [d\theta_2]}{z_1 - z_2}\,,
	\end{split}
\end{equation}
where in the last line, we used $ [d \theta] \cdot \theta = [dz] $. Note also that 
\begin{equation}
	d_{\su{z}} \su{E} (\su{z}, \su{w}) \Big|_{\su{w}=\su{z}} = \frac{\varpi_\theta}{[d\theta]^2}\neq0
\end{equation}
gives the isomorphism $ \mc{Q}^2 \cong \mc{P}$.

\subsubsection{Fundamental bidifferential near Ramond punctures}
\label{exa:BinR}

Consider $ \P (1,1|0)$ from \cref{exa:P11|0}. Recall that a skew-symmetric bilinear form is
\begin{equation}
	\< (u_1, v_1| \theta_1 ), (u_2, v_2 | \theta_2) \> 
	=
	u_1 v_2 - u_2 v_1 -\frac12 ( u_1 v_2 + u_2 v_1) \theta_1 \theta_2
\end{equation}
This again induces a prime form by $ \mb{E} ( \su{z}_1, \su{z}_1) \coloneqq \frac{\< \su{z}_1, \su{z}_2 \>}{ \sqrt{\varpi_1^* \varpi_2^*}}$. This is now a meromorphic section of $ \mc{P}^{-1/2} \boxtimes \mc{P}^{-1/2}$, not of $ \mc{Q}^{-1} \boxtimes \mc{Q}^{-1}$ -- recall that the $\mb{C}^*$-action keeps $\theta$ (and $d\theta$) invariant while $\varpi^*$ is of weight $1$. Again, following \cite[(5.38)]{DP89}\footnote{That paper does not deal with Ramond punctures, but the condition we consider generalises directly.}, the prime form should satisfy
	
\begin{equation}
	\su{E} (\su{z}_1, \su{z}_2) =[d\theta]D_\theta^* \su{E} (\su{z}_1, \su{z}_2) = 0 
	\qquad 
	\Leftrightarrow
	\qquad
	\su{z}_1 = \su{z}_2\,
\end{equation}
which one can check explicitly. We now get
\begin{equation}
	d_{\su{z}} \su{E} (\su{z}, \su{w}) \Big|_{\su{w}=\su{z}} = 1\,,
	\qquad
	[d\theta_z]D_{\theta_z}^*[d\theta_w]D_{\theta_w}^* \su{E} ( \su{z},\su{w}) \Big|_{\su{w} = \su{z}}  = \frac{[d\theta]^2 z}{\varpi^*} \,,
\end{equation}
which exhibits the isomorphism $ \mc{P} \cong \mc{Q}^2 (- \mc{F})$. Furthermore, similarly to before, we obtain

\begin{equation}\label{RamondB}
	\begin{split}
		\su{B} ( \su{z}_1 , \su{z}_2 ) 
		&= 
		[d_1 d_2 \log ( z_1 - z_2 - \frac{1}{2} (z_1 + z_2) \theta_1 \theta_2 ) ]
		\\
		&= 
		- [d\theta_1] [d\theta_2] D^*_{\theta_1} D^*_{\theta_2} \log ( z_1 - z_2 - \frac{1}{2}(z_1 + z_2) \theta_1 \theta_2 )
		\\
		&= 
		[d \theta_1] [d\theta_2] \Big( - \frac{(z_1 + z_2)}{2(z_1 - z_2)} - \frac{z_1 z_2 \theta_1 \theta_2}{(z_1 - z_2)^2} \Big)
		\\
		&= 
		\frac{[dz_1] \, [dz_2]}{(z_1 - z_2)^2} -[d\theta_1] [d\theta_2] \frac{(z_1 + z_2)}{2(z_1 - z_2)}.
	\end{split}
\end{equation}


\subsubsection{Odd periods on \texorpdfstring{$ \P(1,1|0)$}{P110}}\label{sec:local odd periods}

Let us continue focusing on $\mb{P}(1,1|0)$. Setting $ p = 0$ and $ q = \infty$, such that $ \alpha = \frac{1}{\sqrt{2}} (w_0 + i w_\infty ) $ and $ \beta = \frac{1}{\sqrt{2}} (w_0 - i w_\infty ) $, we find that

\begin{equation}
	\alpha ([d\theta] ) = 2 \sqrt{\pi i}\,,\quad \beta ([d\theta] ) = 0.
\end{equation}

Accordingly, the unique normalised one-form is $ \nu = \frac{1}{2 \sqrt{\pi i}} [d\theta]$. Therefore, the odd period can be non-zero even when $g=0$ as long as the Ramond divisor is not trivial. However, we also see that the period matrix in this case is $ \hat{\Omega} = \beta(\nu ) = 0$, which is consistent, as it should be a $ (0|1) \times (0|1)$ dimensional graded-symmetric matrix. It is also consistent with the fact that the moduli space of super Riemann surfaces of genus $0$ with one marked Neveu-Schwarz point and two Ramond punctures is of dimension $0|0$, i.e. a point (cf. \cite{Wit19}). Note that by taking the periods of $ \su{B}$ with respect to the first variable, we get 
\begin{equation}
	\begin{split}
		\alpha (\su{B}(\su{z}_1,\su{z}_2))
		&			
		= - \frac{1}{2\sqrt{2}} (w_0 + i w_\infty) [d\theta_1] [d\theta_2]\frac{(z_1 + z_2)}{z_1 - z_2}
		\\
		&
		= \frac{\sqrt{\pi i}}{2} \Big([d\theta_2]  + i [d\psi_2] \Big)=0,
	\end{split}
\end{equation}

Similarly, 
\begin{equation}
	\beta (\su{B}(\su{z}_1,\su{z}_2)))=2 \sqrt{\pi i} \nu (\su{z}_2) 
\end{equation}
which conforms with \cref{Normalisation,BasisFromSuperB}. Thus by analogy, one may think of $\alpha$ as the odd $A$-period and of $\beta$ as the odd $B$-period.

\subsubsection{Ramond coordinate transformation}

	Instead of having Ramond punctures at $0$ and $\infty$, one may wish to consider coordinates $\tilde{z}$ such that $\tilde{z}=c_1,c_2$ are the Ramond punctures by the following M\"{o}bius transformation:
	\begin{equation}
    	\tilde{z}=\frac{c_2z_0-c_1}{z_0-1}\,, 
    	\quad
    	\tilde{\theta}=\theta_0\,
    	\quad 
    	\tilde{\varpi}=d\tilde{z}-\frac{(\tilde{z}-c_1)(\tilde{z}-c_2)}{c_1-c_2}\tilde{\theta} d\tilde{\theta}\,,
    	\quad 
    	\varpi_0=\left(\frac{c_1-c_2}{(\tilde{z}-c_2)^2}\right)\cdot\tilde{\varpi} \,.\label{pi*}
	\end{equation}
	where $(z_0|\theta_0)$ are local coordinates around $(0|0)$.
	
	Be careful that \eqref{RtoR} holds only when $x=z=0$ is a Ramond puncture, which is not the case for the above transformation. \Cref{pi*} is basically the same as \cite[Eq. (4.9)]{Wit19}, except for the normalisation in the denominator. In these coordinates, \eqref{RamondB} can be written as
	\begin{equation}
    	\su{B} = \frac{[d\tilde{z}_1][d\tilde{z}_2]}{(\tilde{z}_1-\tilde{z}_2)^2} - [d\tilde{\theta}_1][d\tilde{\theta}_2]\frac{1}{2}\frac{N(\tilde{z}_1,\tilde{z}_2)}{\tilde{z}_1-\tilde{z}_2} \,, 
    	\quad 
    	N(\tilde{z}_1,\tilde{z}_2) = \frac{(\tilde{z}_1-c_1)(\tilde{z}_2-c_2)+(\tilde{z}_1-c_2)(\tilde{z}_2-c_1)}{c_1-c_2}\, .
	\end{equation}

\subsubsection{Projection property near Ramond punctures}\label{sec:projection}

The form $ \su{B}$ on $ \P (1,1|0)$ has an expansion for $ |z_1| < |z_2|$ as
\begin{equation} \label{Blocal}
	\begin{split}
		\su{B} ( \su{z}_1 , \su{z}_2 ) 
		&= 
		\frac{[dz_1] \, [dz_2]}{(z_1 - z_2)^2} -[d\theta_1] [d\theta_2] \frac{(z_1 + z_2)}{2(z_1 - z_2)}
		\\
		&= 
		\frac{[dz_1] \, [dz_2]}{z_2^2(1 - \frac{z_1}{z_2})^2} + [d\theta_1] [d\theta_2]
		\frac{1}{z_2} \frac{(z_1 + z_2)}{2(1-\frac{z_1}{z_2} )}
		\\
		&= 
		\sum_{k=0}^\infty (k+1)  [dz_1] \, [dz_2]\frac{1}{z_2^2} \left( \frac{z_1}{z_2} \right)^k+ 
		\sum_{k=0}^\infty  [d\theta_1] [d\theta_2]\frac{1}{2z_2} (z_1 + z_2)\left( \frac{z_1}{z_2} \right)^k
		\\
		&= 
		\sum_{l=1}^\infty l  [dz_1][dz_2]z_1^{l-1} \, z_2^{-l-1} + \frac{1}{2} [d\theta_1] \, [d\theta_2 ] +
		\sum_{k=1}^\infty [d\theta_1][d\theta_2]]z_1^k z_2^{-k} \,.
	\end{split}
\end{equation}
Comparing this with the super symplectic form from \cref{SuperGiventalPairing}, we find that for any local meromorphic section $ \omega (\su{z}) = \sum_{k\in \Z} [dz] u_k z^{k-1} +[d\theta] \upsilon_k z^k $ of $ \mc{Q}'$,
\begin{equation}
	\< \su{B} (\su{z}_1,\su{z}_2), \omega (\su{z}_1) \>
	=
	\Res_{z_1 = 0} \int_0^{\su{z}_1} \su{B} (\cdot, \su{z}_2 ) \omega (\su{z}_1) 
	= 
	\frac{1}{2}[d\theta_2] \tau_0  +\sum_{k<0}\left(  [dz_2 ] u_k z_2^{k-1}+ [d\theta_2]\upsilon_k z_2^{k} \right) \,.
\end{equation}
where the pairing is taken with respect to $\su{z}_1$. So we find that pairing with $ \su{B}$ is nearly a projector on the principal part near $ 0$, except that the global superconformal form $ d \theta$ is split over the two principal parts. It turns out that the projection property becomes less simple when one considers polarisation, which we will discuss in \cref{sec:SAStoTR}. In this example, we also have
\begin{equation}
	\Big( \Res_{z_1 = 0} + \Res_{z_1 = \infty} \Big)  \int_0^{\su{z}_1} \su{B} (\cdot, \su{z}_2, ) \omega (\su{z}_1) = \omega (\su{z}_2) \, .
\end{equation}

\begin{remark}
	Now, let us compare this to \cite{BO21}. The formula for the standard $ \omega_{0,0|2}$ is in \cite[(2.24)]{BO21}:

	\begin{equation}\label{BOB}
		\omega_{0,0|2} = - \frac{1}{2} \frac{z_1 + z_2}{z_1 - z_2} \frac{\Theta_1 \Theta_2}{z_1 z_2}\,,
	\end{equation}
	where $ \Theta $ is such that $ \Theta^2 = zdz$. More precisely, we need the pairing \cite[(2.17)]{BO21} to make sense. Comparing to \cref{SuperGiventalPairing} with $ a = 1$, we find that $ \frac{\Theta}{z} $ should correspond to $ d \theta $ in this way. This identifies \cref{BOB} with the second term in \cref{RamondB}. On a heuristic level, this is not surprising, as $ \theta$ and $d\theta $ behave similarly: over the reduced Riemann surface, they generate isomorphic bundles. So the distinction between them is lost easily in a local picture, and hence the heuristic of $ \Theta^2 \simeq z dz \simeq [dz]z = [d\theta] \theta z^2$.
\end{remark}

\subsection{Odd periods for meromorphic differentials}

\Cref{def:super periods,OddPeriodsAsResidues} are well-defined for any holomorphic section of $\mc{Q}'$ because no matter how local coordinates are taken, the difference appears only in mod $z_p$ parts and the leading behaviour remains the same.\par
However, we are interested in meromorphic sections of $\mc{Q}'$. For these, \cref{def:super periods} does not make sense any more, because it would restrict a meromorphic form to a locus where it has a pole, while \cref{OddPeriodSuperGivental} is not well-defined because non-leading orders depend on the choice of coordinates. Let us show this by example.

\begin{example}
	Let $p$ be a Ramond puncture with local Ramond coordinates $ \su{z} = ( z \, | \, \theta )$. Then also
	\begin{equation}
		\hat{\su{z}} = ( \hat{z} \, | \, \hat{\theta} )
		=
		(z e^{2z + \frac{z^2}{2}} \, | \, (1+z) \theta )
	\end{equation}
	are local Ramond coordinates, as can easily be checked with \cref{RtoR}.\par
	Taking $ \omega = [d \hat{\theta} ] \hat{z}^{-1}$, we find that
	\begin{align}
		\< [d\hat{\theta}], \omega \> 
		&=
		0 \,;
		\\
		\< [d\theta ], \omega \>
		&=
		-1 \,,
	\end{align}
	while clearly $ d\theta|_p = d\hat{\theta} |_p$. So \cref{OddPeriodSuperGivental} does not extend to meromorphic forms.
\end{example}

In the context of topological recursion, we have more data to use, though: we also have the fundamental bidifferential $ \su{B}$. We propose a way of defining odd periods for meromorphic differentials as follows. First, we construct a canonical local one-form.

\begin{lemma}\label{lem:nu_p}
	Let $\Sigma$ be a compact super Riemann surface with Ramond divisor $\mc{R}$ with Torelli marking, and $\su{B}$ be the fundamental super bidifferential.
	Let $p$ be an oriented Ramond puncture, and choose local Ramond coordinates $ \su{z}$ at $ p$. Then
	\begin{equation}\label{oddoneform}
		d\eta_{p,0} (\su{z}) \coloneqq \< \su{B} (\su{z}_1, \su{z}), [d\theta_1] \>_{1,p} = \Res_{z_1 = 0} \su{B} (\su{z}_1, \su{z}) \theta_1 
	\end{equation}
	is a globally defined section of $ \mc{Q}'$, and is independent of the choice of $ \su{z}$.
\end{lemma}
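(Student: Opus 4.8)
The plan is to show that the expression $\langle \su{B}(\su{z}_1, \su{z}), [d\theta_1]\rangle_{1,p}$, where the pairing is the super Givental pairing of \cref{SuperGiventalPairing} taken in the first variable near $p$, first makes sense as a local meromorphic section of $\mc{Q}'$ in $\su{z}$, then extends to a global section, and finally is independent of the choice of Ramond coordinates $\su{z}$ at $p$. The second equality in \cref{oddoneform} is essentially the definition of the super Givental pairing combined with \cref{OddPeriodsAsResidues}: pairing with $[d\theta_1]$ only picks out the coefficient of $[d\theta_1]$ in $\su{B}(\su{z}_1,\su{z})$, which by the local expansion \cref{Blocal} is a residue against $\theta_1$. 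So the first step is a short unwinding of definitions, using that $\su{B}$ is a section of $\mc{Q}'^{\boxtimes 2}$, hence $\su{B}(\cdot,\su{z})$ is a (residueless, closed) section of $\mc{Q}'$ in the first variable for each fixed $\su{z}$, and that pairing a residueless section with the holomorphic section $[d\theta_1]$ is well-defined by \cref{SuperGiventalPairing}.

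Next I would establish global well-definedness in $\su{z}$. Since $\su{B}$ is globally defined on $\Sigma^2$ as a meromorphic section of $\mc{Q}'^{\boxtimes 2}$ with poles only on the diagonal, and since $p$ is a fixed Ramond puncture not on the diagonal once $\su{z}$ ranges outside a neighbourhood of $p$, the residue/pairing operation $\su{z}\mapsto \Res_{z_1=0}\su{B}(\su{z}_1,\su{z})\theta_1$ produces a meromorphic section of $\mc{Q}'$ on all of $\Sigma$; its only possible singularity is where $\su{z}\to p$, but there the fundamental bidifferential has the controlled local behaviour of \cref{Blocal} (a double pole in $z_1-z_2$ together with the $[d\theta_1][d\theta_2]$ term), so taking the residue in $\su{z}_1$ against $\theta_1$ yields a form with at worst the behaviour of a closed holomorphic section near $p$ — concretely, reading off \cref{Blocal}, the coefficient of $[d\theta_1]$ contributes a $\tfrac12[d\theta]$ plus higher-order terms in $z$, so $d\eta_{p,0}$ is holomorphic near $p$ (this also justifies the subscript $0$, anticipating a family $d\eta_{p,k}$ with poles of order $k$). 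One should note $d\eta_{p,0}$ lands in $\mc{Q}'$, i.e. it is residueless (equivalently closed by \cref{ExtendedBerezinianIsRestrictedQ}), because $\su{B}(\cdot,\su{z})$ is residueless in the first variable and the pairing against $[d\theta_1]$ preserves this.

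For coordinate-independence, the key point is that although the bare super Givental pairing against $[d\theta_1]$ of a \emph{meromorphic} form is not coordinate-independent (as the example just before \cref{lem:nu_p} shows), here the first argument $\su{B}(\cdot,\su{z})$ is \emph{holomorphic} near $p$ for $\su{z}$ away from $p$. Then \cref{OddPeriodsAsResidues} applies verbatim: for a holomorphic section the odd period $w_p(\omega) = d\theta\langle[d\theta],\omega\rangle$ depends only on $d\theta \pmod{z}$, which by \cref{RamondAut} is independent of the choice of Ramond coordinate up to an irrelevant sign (the expression is quadratic in $d\theta$). Thus $\langle \su{B}(\su{z}_1,\su{z}),[d\theta_1]\rangle_{1,p}$, read as $w_p$ applied to $\su{B}(\cdot,\su{z})$ up to the $\tfrac{1}{2\pi i}$-type normalisation, is intrinsic. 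I would phrase this by saying: $d\eta_{p,0}(\su{z}) = \tfrac{1}{\sqrt{2\pi i}}\,(\text{coefficient extracting }w_p \text{ from } \su{B}(\cdot,\su{z}))$ up to the normalisation already fixed in \cref{def:Torelli,OddPeriodsAsResidues}, and invoke those lemmas.

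The main obstacle I anticipate is the subtlety at the locus where $\su{z}$ approaches $p$: one must check that the "pairing against $[d\theta_1]$" of the \emph{singular} part of $\su{B}$ (the diagonal double pole) does not spoil holomorphy or introduce a spurious pole of $d\eta_{p,0}$ at $p$, and in particular that the would-be residue of $\su{B}(\cdot,\su{z})$ in the first variable genuinely vanishes (so that the pairing is defined) even as $\su{z}\to p$. This is exactly controlled by the explicit local expansion \cref{Blocal} on $\P(1,1|0)$, together with the statement (cited from \cite{DP15,Kes19}) that on any super Riemann surface the pole structure of $\su{B}$ near a Ramond puncture matches this local model; so the resolution is to reduce to the local picture already computed in \cref{exa:BinR,sec:projection} and read off the claim. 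Everything else is bookkeeping with the super Givental pairing and the identifications $\mc{Z}^1_\Sigma \cong \omega'_\Sigma \cong \mc{Q}'$ of \cref{ClosedForms,ExtendedBerezinianIsRestrictedQ}.
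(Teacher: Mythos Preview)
Your proposal is correct and follows essentially the same approach as the paper's proof: both hinge on the fact that $\su{B}$ is globally defined with only a double pole on the diagonal, so $\su{B}(\cdot,\su{z})$ is holomorphic in the first variable near $p$ and the pairing with $[d\theta_1]$ depends only on $d\theta \pmod z$, which is fixed by the orientation. The paper's proof is much terser (three sentences), while you spell out the global extension and the behaviour as $\su{z}\to p$ more carefully; your extra discussion of the diagonal pole is a welcome clarification that the paper leaves implicit.
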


The notation $ d\eta_{p,0}$ may look a little strange. We will come back to these sections in \cref{sec:SAStoTR}, where we introduce a sequence of related sections $ d\eta_{p,k}$ for $ k \in \Z$.

\begin{proof}
	The fundamental bidifferential is a globally defined bidifferential with only a double pole on the diagonal. Therefore, it is holomorphic at $ p$ and its Laurent expansion around $p$ has no negative powers in $ \su{z}_1$. Since the same holds for $ [d\theta_1]$ and since superconformal maps do not change the leading order behaviour, $ \nu_p$ only depends on $ d\theta \pmod{z}$, which is well-defined given the orientation.
\end{proof}

\begin{definition}\label{def:periods for meromorphic forms}
	Let $\Sigma$ be a compact super Riemann surface with Ramond divisor $\mc{R}$ with Torelli marking, and $\su{B}$ be the fundamental super bidifferential. Then, for a residueless meromorphic section $\omega$ of $\mc{Q}'$,  its odd periods $\alpha_{\mu}(\omega)$ and $\beta_{\mu}(\omega)$ are defined for all $\mu \in [r]$ as
	\begin{equation}
		\begin{split}
			\alpha_{\mu}(\omega)
			&\coloneqq
			\sqrt{\pi i} \Res_{z=p_\mu} \omega(\su{z})\int_*^\su{z} d\eta_{p_\mu,0} + i\sqrt{\pi i}\Res_{z=q_\mu} \omega(\su{z}) \int_*^\su{z} d\eta_{q_\mu,0},\\
			\beta_{\mu} (\omega)
			&\coloneqq
			\sqrt{\pi i} \Res_{z=p_\mu} \omega(\su{z})\int_*^\su{z} d\eta_{p_\mu,0} - i\sqrt{\pi i}\Res_{z=q_\mu} \omega(\su{z}) \int_*^\su{z} d\eta_{q_\mu,0} \,.
		\end{split}
	\end{equation}
\end{definition}
The above formulae do not depend on the base point $*\in\Sigma$, because $\omega$ is assumed to have no residues. This definition is consistent with \cref{def:super periods} when $\omega$ is holomorphic. One may write the definition in terms of the super Givental paring.

\begin{remark}\label{rem:local zero modes}
	\Cref{lem:nu_p} makes sense for any super Riemann surface with a $ \su{B}$ of the shape described in \cref{def:LSC}, and hence for any local super spectral curve, as defined there. With the same setup, we can define the odd period of a meromorphic section $ \omega$ of $ \mc{Q}'$ at any Ramond puncture $p$ as
	\begin{equation}
		w_p (\omega) 
		\coloneqq
		\sqrt{2\pi i} \Res_{z = p} \omega (\su{z}) \int_*^\su{z} d\eta_{p,0} \,.
	\end{equation}
\end{remark}

%

\section{Super Airy structures} \label{SuperAiry}

In this section, we will recall and generalise the notion of super Airy structures from \cite{BCHORS20}, as well as the relation to super topological recursion of \cite{BO21}. (Super) Airy structures, introduced in \cite{KS18}, are an algebraic encoding of loop equations, the equations that topological recursion solves. They encode the local behaviour of the multidifferentials $ \omega_{g,n}$ near the ramifications of the map $ x$ of the spectral curve.

\subsection{Super Airy structures}

We will consider a more general setup than \cite{BCHORS20}. This is because we need a framework properly capturing odd periods and odd parameters that behave fundamentally differently from even periods and parameters.

We will use grading conventions from \cite{BCJ22}. Let $ R = \C [u_1, \dotsc ,u_k, \upsilon_1, \dotsc, \upsilon_l ]$ be a freely generated supercommutative $\C$-algebra\footnote{The spectrum $\Spec R$ will be a local patch of the base space $T$ of our family of spectral curves.}, and consider a free super module over $R $, $ V = V_0 \oplus V_1 $, where $ V_0 $ is even and $ V_1$ is odd. We will allow for both $ V_0$ and $V_1$ to be countably infinite-dimensional. Let $ A_0 $ be linear coordinates on $ V_0 $ and $ A_1$ linear coordinates on $V_1$. We also write $ A \coloneqq A_0 \cup A_1 $.

\begin{definition}
	The \emph{super Weyl algebra} on $ V$ is the $R$-superalgebra on even generators $ \{ x, \del_x \}_{x \in A_0}$, odd generators $ \{ x, \del_{x} \}_{x \in A_1}$, and with non-trivial supercommutators
	\begin{equation}
		[\del_{x}, x] = 1\,.
	\end{equation}
	The \emph{completed super Weyl algebra} $ \mc{D}_V$ is the  extension of the super Weyl algebra which contains elements of the form
	\begin{equation}
		\sum_{m=0}^M \sum_{a_1, \dotsc, a_m \in A} p^{a_1,\dotsc,a_m} (A) \del_{a_1} \dotsb \del_{a_m}\,,
	\end{equation}
	where $ M$ is finite, the second sum may be infinite, and the $p^{a_1,...,a_m}$ are all polynomial. Furthermore, we introduce a filtration on this algebra by setting the degree of all generators to be $1$ -- this degree is different from cohomological degrees or super degrees. Then, the \emph{completed super Rees Weyl algebra} is the graded superalgebra
	\begin{equation}
		\widehat{\mc{D}}^\hslash_V \coloneq \prod_{n =0}^\infty \hslash^n F_n \mc{D}_V\,.
	\end{equation}
	where $F_n \mc{D}_V$  contains all operators in $\mc{D}_V$ of  degree $\leq n$.
\end{definition}

\begin{definition}[{\cite{BCHORS20,BCJ22,Bou24}}]\label{def:SAS}
	Choose  $ \theta^0 \in A_1$. A \emph{super Airy structure} on $(V, \theta^0)$ or a \emph{super Airy ideal} in $ \widehat{\mc{D}}^\hslash_V$ is a left ideal $ \mc{I}$ of $ \widehat{\mc{D}}^\hslash_V$ which has a collection of homogeneous (with respect to parity) elements 
	\begin{equation}
		H_j = \sum_{n = 0}^\infty \sum_{m = 0}^n \sum_{a_1, \dotsc, a_m \in A } \hslash^n p_{j,n}^{a_1, \dotsc, a_m}(A) \del_{a_1} \dotsb \del_{a_m}
	\end{equation}
	for $ j \in A \setminus \{ \theta^0 \}$ such that
	\begin{enumerate}
		\item The collection $ \{ H_j\}$ is \emph{bounded}: for each $ n \in \N $, $ a_1, \dotsc, a_m \in A$, only finitely many of the polynomials $ p_{j,n}^{a_1, \dotsc, a_m}$ are non-zero;
		\item The ideal is given by
			\begin{equation}
				\mc{I} = \big\{ \sum_{j \in A} c_j H_j \, \big| \, c_j \in \widehat{\mc{D}}^\hslash_V \big\} \,,
			\end{equation}
			where the sum may be infinite. These sums are well-defined by the previous condition;
		\item For $ j \in A \setminus \{ \theta^0 \}$, $ H_j = \hslash \del_{j} + \hslash p_{j,1} (A)+ \mc{O}(\hslash^2) $;
		\item $ [\mc{I}, \mc{I} ] \subseteq \hslash^2 \mc{I}$.
	\end{enumerate}
\end{definition}

\begin{remark}
	By definition of $ \widehat{\mc{D}}^\hslash_V$, the polynomials $ p_{j,n}^{a_1,\dotsc, a_m}$ are of degree at most $ n-m$. Usually, e.g., in \cite{BBCCN18,BCHORS20}, point (3) of the definition requires that $ H_j = \hslash \del_j + \mc{O}(\hslash^2)$. As explained in \cite[Remark 2.12]{Bou24}, we may allow linear polynomials at order $ \hslash^1$, and this turns out to be useful for us. By the boundedness condition, only finitely many of these can be non-zero.
\end{remark}

The main reason these structures are interesting, is the following theorem.

\begin{theorem}[{\cite[Theorem~2.10]{BCHORS20}}]\label{thm:BCHORS20}
	Let $V = V_0 \oplus V_1$ be a freely generated $R$-super module as above, $ \theta^0 \in A_1$, and $ \mc{I} $ a super Airy structure on it. There exists a function $Z $, called the \emph{partition function}, of the shape
	\begin{equation}
		Z = \exp \Big( \sum_{\substack{g \in \frac12\N, n \in \N^* \\ 2g-2+n \geq 0}} \hslash^{2g-2+n} F_{g,n} (A)\Big) \,,
	\end{equation}
	with $ F_{g,n} $ even polynomials, homogeneous of degree $n$, such that
	\begin{equation}
		\mc{I} Z = 0\,.
	\end{equation} 
	This function is unique up to the addition of terms $ \lambda_g \theta^0 $ to $ F_{g,1}$, with $ \lambda_g $ odd elements of $R$. If $\deg p_{j,1}=1$ for all $ j \in A\backslash\{\theta_0\}$, i.e. no constant terms, then $F_{\frac{1}{2},1} (A\backslash\{\theta_0\}) = 0$, and if $ p_{j,1} = 0$ for all $ j \in A\backslash\{\theta_0\}$, then $ F_{0,2}(A\backslash\{\theta_0\})=F_{\frac{1}{2},1} (A\backslash\{\theta_0\})  =0$. Furthermore, $F_{\frac{1}{2},1} (A) = 0$ and $\lambda_g=0$ for all $g\in\mathbb{Z}+\frac12$ imply that $F_{g,1} (A) = 0$ for all $g\in\mathbb{Z}+\frac12$.
\end{theorem}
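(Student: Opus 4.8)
The plan is to adapt the standard existence-and-uniqueness argument for Airy structures (following \cite{KS18,BBCCN18}) to the super setting, the point being that it becomes a triangular recursion once one conjugates by the ansatz. Write $S = \log Z = \sum_{g,n} \hslash^{2g-2+n} F_{g,n}$; since $e^{-S}\del_a e^{S} = \del_a + (\del_a S)$ as operators on $\widehat{\mc{D}}^\hslash_V$, the relation $H_j Z = 0$ is equivalent to $\bigl(e^{-S} H_j e^{S}\bigr)\cdot 1 = 0$ for every $j \in A \setminus \{\theta^0\}$, and expanding the left-hand side turns each monomial $p\,\del_{a_1}\cdots\del_{a_m}$ of $H_j$ into a (super) Bell-polynomial-type expression in the partial derivatives of $S$. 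By condition (3) of \cref{def:SAS} the only term in which some $F_{g,n}$ appears differentiated at most once is the one produced by $\hslash\del_j$, so extracting the coefficient of $\hslash^{2g-1+n}$ and of polynomial degree $n-1$ in $A$ yields, for each $j \ne \theta^0$, an identity
\begin{equation*}
	\del_j F_{g,n} \;=\; R^{j}_{g,n}\bigl(\{F_{g',n'}\}_{2g'-2+n' < 2g-2+n};\, \{p_{k,m}^{\bullet}\}\bigr)\,, \qquad j \in A\setminus\{\theta^0\}\,,
\end{equation*}
with $R^{j}_{g,n}$ a universal polynomial. The bound $\deg p_{k,n}^{a_1\dots a_m} \le n-m$ built into $\widehat{\mc{D}}^\hslash_V$ is exactly what forces every $F_{g',n'}$ occurring on the right to have $2g'-2+n' < 2g-2+n$, so the system is triangular in $2g-2+n$; at the base level $2g-2+n = 0$, i.e.\ $(g,n) \in \{(0,2),(\tfrac12,1)\}$, the right-hand side is read off directly from the order-$\hslash$ part of the $H_k$. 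That the resulting $F_{g,n}$ are polynomial, homogeneous of degree $n$ and even is automatic, since the recursion preserves all three features and $\hslash$ is even.

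Uniqueness then follows immediately. The operators $H_j$ are prescribed for every $j$ except $j = \theta^0$, so every partial derivative of $F_{g,n}$ other than $\del_{\theta^0} F_{g,n}$ is determined; a polynomial all of whose partials other than $\del_{\theta^0}$ vanish is of the form $c_0 + c_1 \theta^0$ with $c_i \in R$, and homogeneity of degree $n \ge 1$ together with evenness forces $c_0 = 0$ and $c_1 \in R^1$, and forces $c_1 = 0$ as well when $n \ge 2$ because $(\theta^0)^2 = 0$. Hence $F_{g,n}$ is completely determined for $n \ge 2$, and for $n = 1$ up to the addition of $\lambda_g \theta^0$ to $F_{g,1}$ with $\lambda_g \in R^1$, which is precisely the stated indeterminacy.

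The heart of the proof, and the step I expect to be the main obstacle, is \emph{consistency}: the overdetermined gradient system $\{\del_j F_{g,n} = R^{j}_{g,n}\}_{j \ne \theta^0}$ admits a solution only if the super-integrability conditions $\del_k R^{j}_{g,n} = (-1)^{|j||k|}\del_j R^{k}_{g,n}$ hold, after which a polynomial super-Poincar\'e lemma reconstructs $F_{g,n}$ up to the $\theta^0$-ambiguity above. I would establish these by a single induction on $N = 2g-2+n$: assuming the $F_{g',n'}$ with $2g'-2+n' < N$ have been built so that every $H_k Z$ vanishes modulo $\hslash^{N+1}$, the hypothesis $[\mc{I},\mc{I}] \subseteq \hslash^2 \mc{I}$ gives $[H_j, H_k] \in \hslash^2 \mc{I}$, and expanding $[H_j,H_k]Z = H_j(H_k Z) - (-1)^{|j||k|} H_k(H_j Z)$ shows that, at order $\hslash^{N+1}$, the two sides can be simultaneously consistent only when $R^{j}_{g,n}$ and $R^{k}_{g,n}$ satisfy the required symmetry. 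The one genuinely new ingredient relative to the non-super case is the careful tracking of Koszul signs throughout, which is routine once organised. This simultaneously yields the existence of $Z$, so \cref{thm:BCHORS20} follows.

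Finally, the low-degree assertions come from inspecting the $\hslash^1$-component of $\bigl(e^{-S} H_j e^{S}\bigr)\cdot 1 = 0$, which reads $\del_j\bigl(F_{0,2} + F_{\frac12,1}\bigr) + p_{j,1}(A) = 0$; separating by polynomial degree gives $\del_j F_{0,2} = -(\text{linear part of }p_{j,1})$ and $\del_j F_{\frac12,1} = -(\text{constant part of }p_{j,1})$ for all $j \ne \theta^0$. Thus if every $p_{j,1}$ has no constant term the latter vanishes, so $F_{\frac12,1}$ depends only on $\theta^0$ and hence $F_{\frac12,1}(A \setminus \{\theta^0\}) = 0$; if in addition $p_{j,1} = 0$ then $\del_j F_{0,2} = 0$ for all $j \ne \theta^0$ as well, and since a homogeneous degree-$2$ polynomial cannot depend only on $\theta^0$ we conclude $F_{0,2} = 0$. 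For the last statement I would argue by induction on $g \in \Z + \tfrac12$, starting from $F_{\frac12,1} = 0$: for $n = 1$ the right-hand side $R^{j}_{g,1}$ is a constant, which forces every contributing monomial to use up the full polynomial degree of its coefficient and to be a product of constant factors $\del^{n_i} F_{g_i,n_i}$ with $2g_i-2+n_i < 2g-1$; a parity count of the half-integer indices $g_i$ occurring there (using that $2g$ is odd) then shows each such monomial must involve a factor from the half-integer sector, which vanishes by the inductive hypothesis together with $F_{\frac12,1} = 0$. Hence $\del_j F_{g,1} = 0$ for all $j \ne \theta^0$, so $F_{g,1} = \lambda_g \theta^0$, which is zero once the $\lambda_g$ are set to zero; making the interplay between the $n_i > 1$ contributions and this induction fully precise is the one point of this part that needs a little care.
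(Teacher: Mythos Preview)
Your existence-and-uniqueness argument is the standard one and matches what the paper invokes: the paper does not prove \cref{thm:BCHORS20} itself but cites \cite{BCHORS20} and remarks that the argument extends, and in the proof of \cref{prop:partial Airy} sketches exactly the mechanism you describe (induction on $\hslash$-degree, the ideal condition $[\mc{I},\mc{I}]\subseteq\hslash^2\mc{I}$ for consistency, and a Poincar\'e lemma to integrate the resulting gradient system). Your treatment of the low-degree claims about $F_{0,2}$ and $F_{\frac12,1}$ from the $\hslash^1$ level is likewise correct.

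The place where your caution is warranted is the final assertion, and the parity count as you outline it does not close up for super Airy structures in the generality of \cref{def:SAS}. Tracking the Fa\`a di Bruno contribution of a term $\hslash^m\,p\,\del_{a_1}\cdots\del_{a_k}$ in $H_j$ to $R^j_{g,n}$, one finds $2\sum_i g_i = 2g + (\deg p - m - k) + 2r$, so the number of half-integer $g_i$ is congruent to $1+\deg p+m+k \pmod 2$; this is guaranteed odd only when $\deg p+m+k$ is even, which holds for \emph{homogeneous} structures (those with $\deg p+k=m$, as in the free-field constructions of \cref{sec:SAS for STR}) but not in general. A minimal witness: with $A=\{x,y,\theta^0\}$ and $H_x=\hslash\del_x+\hslash^2 c$, $H_y=\hslash\del_y+\hslash^2\del_x$ (so $[H_x,H_y]=0$ and all $p_{j,1}=0$), one computes $F_{\frac12,1}=0$ yet $F_{\frac32,1}=cy$ after setting every $\lambda_g=0$. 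Thus either the last sentence of the theorem tacitly assumes the homogeneous setting of the cited reference, in which case your parity argument is exactly the right one, or the hypothesis needs strengthening; the ``little care'' you anticipated is a genuine issue rather than a formality.
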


Note that conditions (1) and (2) in \cref{def:SAS} are used for convergence of $F_{g,n}$, condition (3) is for uniqueness, and condition (4) is for the existence of such a solution. The statement and proof in \cite{BCHORS20} are given for the case $ R = \C$ and $ p_{j,1} = 0$ for all $ j \in A\backslash\{\theta_0\}$. However, the proof extends in this more general setting without any difficulty. In particular, the extension to non-zero $ p_{j,1}$ follows from \cite[Remark 2.12]{Bou24}, also partly discussed in \cite[Theorem 3.4]{Osu21}.

\begin{remark}
	In contrast to the purely bosonic case, there is no $H_0 $ with $\partial_{\theta^0}$ in the definition of super Airy structures. When working over $ R = \C$ (or any other even base ring), as in \cite{BCHORS20}, there is still a unique even partition function. In our setup, uniqueness breaks down due to possible addition of even terms $ \lambda_g \theta^0$ with odd $\lambda_g$.
\end{remark}

Because of the extra variable $ \theta^0$, the direct sum of several super Airy structures, which is necessary for spectral curves with several ramification points, is \emph{not} itself a super Airy structure, unlike the ordinary setting. Therefore, we add the following definition.

\begin{definition}\label{def:PSAS}
	Choose a finite subset $ \hat{A} \subset A$. A \emph{partial Airy structure} on $(V, \hat{A})$ or a \emph{partial Airy ideal} in $ \widehat{\mc{D}}^\hslash_V$ is a left ideal $ \mc{I}$ of $ \widehat{\mc{D}}^\hslash_V$ which has a collection of homogeneous (with respect to parity) elements 
	\begin{equation}
		H_j = \sum_{n = 0}^\infty \sum_{m = 0}^n \sum_{a_1, \dotsc, a_m \in A } \hslash^n p_{j,n}^{a_1, \dotsc, a_m}(A) \del_{a_1} \dotsb \del_{a_m}
	\end{equation}
	for $ j \in A \setminus \hat{A}$ such that
	\begin{enumerate}
		\item The collection $ \{ H_j\}$ is \emph{bounded}: for each $ n \in \N $, $ a_1, \dotsc, a_m \in A$, only finitely many of the polynomials $ p_{j,n}^{a_1, \dotsc, a_m}$ are non-zero;
		\item The ideal is given by
			\begin{equation}
				\mc{I} = \big\{ \sum_{j \in A} c_j H_j \, \big| \, c_j \in \widehat{\mc{D}}^\hslash_V \big\} \,,
			\end{equation}
			where the sum may be infinite. These sums are well-defined by the previous condition;
		\item For $ j \in A \setminus \hat{A}$, $ H_j = \hslash \del_{j} + \hslash p_{j,1} (A)+ \mc{O}(\hslash^2) $;
		\item $ [\mc{I}, \mc{I} ] \subseteq \hslash^2 \mc{I}$.
	\end{enumerate}
\end{definition}

\begin{proposition}\label{prop:partial Airy}
	Let $V = V_0 \oplus V_1$ be a freely generated $R$ super module as above, with linear coordinates $ A$. Let  $ \hat{A} \subset A$ be finite, and let $ \mc{I} $ be a partial Airy structure on $ (V, \hat{A})$. There exists a function $Z $, called a \emph{partition function}, of the shape
	\begin{equation}
		Z = \exp \Big( \sum_{\substack{g \in \frac12\N, n \in \N^* \\ 2g-2+n \geq 0}} \hslash^{2g-2+n} F_{g,n} (A)\Big) \,,
	\end{equation}
	with $ F_{g,n} $ even polynomials, homogeneous of  degree $n$, such that
	\begin{equation}
		\mc{I} Z = 0\,.
	\end{equation} 
	This function is unique up to addition of even polynomials in $ \hat{A} $ to the $ F_{g,n}$. If $\deg_\hslash p_{j,1}=1$ for all $ j \in A \setminus \hat{A}$, i.e. no constant terms, $F_{\frac{1}{2},1} (A \setminus \hat{A}) = 0$, and if $ p_{j,1} = 0$ for all $ j \in A \setminus \hat{A}$, then $ F_{0,2}(A \setminus \hat{A}) = F_{\frac{1}{2},1} (A \setminus \hat{A}) = 0$. Furthermore, if $F_{\frac12,1}(A)=0$ and $F_{g,n}(\hat A)=0$ for all $g\in\mathbb{Z}+\frac12$, then $F_{g,n}( A)=0$ for all $g\in\mathbb{Z}+\frac12$.
\end{proposition}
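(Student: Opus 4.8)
The plan is to run the same argument that proves \cref{thm:BCHORS20}, with the single distinguished variable $\theta^0$ replaced throughout by the finite subset $\hat A$; the mild extra generality (coefficients in $R$ rather than $\C$, and the linear-in-$\hslash$ polynomials $p_{j,1}$) is absorbed exactly as indicated in \cite{Bou24}. Concretely, I would plug the ansatz $Z=\exp\bigl(\sum_{2g-2+n\ge 0}\hslash^{2g-2+n}F_{g,n}\bigr)$ into the relations $H_j Z=0$ for $j\in A\setminus\hat A$, divide by $Z$, and expand in powers of $\hslash$. Using $H_j=\hslash\partial_j+\hslash p_{j,1}(A)+\mc{O}(\hslash^2)$ together with the homogeneity of $F_{g,n}$ (degree $n$), the coefficient of $\hslash^{2g-1+n}$ in polynomial degree $n-1$ becomes an identity that expresses $\partial_j F_{g,n}$ (for $j\notin\hat A$) in terms of the data $p_{j,\bullet}$ and the $F_{g',n'}$ with $2g'-2+n'<2g-2+n$, the latter arising from the product and derivative terms created when $\partial_{a_1}\cdots\partial_{a_m}$ hits $\exp$. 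This is a recursion on the Euler characteristic $2g-2+n$, the only subtlety being sign bookkeeping for the odd variables. Condition (3) in \cref{def:PSAS} is precisely what makes $\partial_j F_{g,n}$ appear with unit coefficient so the recursion is solvable; conditions (1)--(2) ensure every $F_{g,n}$ is a finite sum and hence an honest polynomial; and condition (4), $[\mc I,\mc I]\subseteq\hslash^2\mc I$, is exactly the hypothesis guaranteeing that the prescribed partials $\{\partial_j F_{g,n}\}_{j\notin\hat A}$ are graded-compatible, $\partial_k\partial_j F_{g,n}=(-1)^{|j||k|}\partial_j\partial_k F_{g,n}$, and so integrate to a well-defined $F_{g,n}$ --- this is the same computation as in \cite{KS18,BCHORS20}, which I would only cite. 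Since the recursion fixes every partial derivative of $F_{g,n}$ except those along $\hat A$, it determines $F_{g,n}$ up to an even homogeneous degree-$n$ polynomial in the $\hat A$-coordinates, which is the asserted ambiguity.

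For the statements at Euler characteristic $0$ I would simply read off the base of the recursion: there is no lower stratum, so, up to sign and normalisation, $\partial_j F_{1/2,1}$ equals the constant term of $p_{j,1}$ and $\partial_j F_{0,2}$ equals its linear part, for all $j\notin\hat A$. Hence if every $p_{j,1}$ with $j\notin\hat A$ has vanishing constant term then $F_{1/2,1}$ has no derivative along $A\setminus\hat A$, i.e. $F_{1/2,1}(A\setminus\hat A)=0$; and if moreover all these $p_{j,1}$ vanish then additionally $F_{0,2}(A\setminus\hat A)=0$.

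For the final claim I would induct on $2g-2+n$ among half-integer $g$, the base case $F_{1/2,1}(A)=0$ being a hypothesis. The key observation is that \emph{genus is additive} in the recursion: any product $\prod_i F_{g_i,n_i}$ that feeds $F_{g,n}$ satisfies $\sum_i g_i+(\text{a non-negative integer})=g$, so when $g$ is half-integer at least one factor has half-integer genus $g_i$ with $2g_i-2+n_i<2g-2+n$; by the inductive hypothesis every such term vanishes. What is left on the right-hand side of the recursion for $F_{g,n}$ is the contribution built purely from the $p_{j,\bullet}$'s with no $F$-factor. Mirroring \cite{BCHORS20}, a parity/weight analysis of the generators --- recalling that $p_{j,n}^{a_1\dots a_m}$ must have parity $|j|-\sum_i|a_i|$, which pins down the scalar ($m=0$) terms --- shows that this residual inhomogeneity does not couple into the half-integer-$g$ sector, so $\partial_j F_{g,n}=0$ for all $j\notin\hat A$. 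Therefore $F_{g,n}$ is an even polynomial in $\hat A$ alone, and the hypothesis $F_{g,n}(\hat A)=0$ forces $F_{g,n}(A)=0$, closing the induction.

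The step I expect to require the most care is precisely this last one: checking that the half-integer-$g$ sector of the recursion carries no source independent of $F_{1/2,1}$ and of the $\hat A$-ambiguities. Everything preceding it is a faithful transcription of the proof of \cref{thm:BCHORS20} with $\hat A$ substituted for $\{\theta^0\}$, so the bulk of the write-up would go into making that parity bookkeeping explicit, in parallel with the $n=1$ version already recorded in \cref{thm:BCHORS20}.
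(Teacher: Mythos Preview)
For existence, uniqueness, and the $p_{j,1}$ base cases your argument is exactly the paper's: induct on the $\hslash$-degree, observe that only the partials $\partial_j\log Z$ with $j\in A\setminus\hat A$ are prescribed, cite condition~(4) for their compatibility, and conclude that the remaining freedom at each step is an even polynomial in $\hat A$. The paper's write-up is in fact briefer than yours and says nothing beyond this.

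The half-integer-$g$ clause is another matter: the paper's proof does not address it at all, and your proposed argument has a genuine gap. Your induction rests on the claim that every product $\prod_i F_{g_i,n_i}$ feeding $F_{g,n}$ satisfies $\sum_i g_i+k=g$ with $k$ a non-negative \emph{integer}. Tracking degrees, a term $\hslash^N p\,\partial_{a_1}\!\cdots\partial_{a_m}$ with $\deg p=d$ contributes with shift $k=\tfrac12(N-m-d)$, which for the general operators of \cref{def:PSAS} is only a non-negative half-integer. Concretely, the single-generator structure $H_x=\hslash\partial_x+\hslash^3 c$ (take $\hat A=\{\theta^0\}$, or even $\hat A=\emptyset$) satisfies all four axioms and has $F_{\frac12,1}=0$, yet gives $F_{\frac32,1}=-cx\neq 0$; so the conclusion fails at the stated level of generality, and no parity bookkeeping will rescue it. What \emph{is} true is that for the quadratic super Airy structures actually constructed in \cref{sec:SAS for STR} every term has $N-m-d\in\{0,2\}$, and then your genus-additivity argument goes through verbatim. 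So you should either restrict the final assertion to that class or isolate the extra hypothesis (e.g.\ that $N-m-d$ is always even) needed to make your induction close.
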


Before proving the above, let us mention that this proposition is mostly useful for $ \hat{A} \subset A_1$, as polynomials in odd variables are at most linear in each of the individual variables, and this severely limits the freedom. But for the proof, this condition is not necessary, so we omit it from the statement.

\begin{proof}
	Let us first remark that, similar to the ordinary case, existence of such a solution $Z$ is guaranteed by the left ideal condition, i.e. condition (4) in \cref{def:PSAS}, hence we only focus on uniqueness. The usual proof of uniqueness of partition functions for Airy structures builds $  \log Z$ inductively on $ \chi$, the $\hslash$-degree, using that the $ H_j$ commute up to lower-order terms and invoking the Poincar\'{e} lemma to solve the resulting linear ODEs $ \del_j (\log Z)^\chi = (\text{previously determined})$. The integration constant is fixed because $ F $ is not allowed to have a constant term in $ A$. In the case of partial Airy structures, only the $ \del_j (\log Z)^\chi$ for $ j \in A \setminus \hat{A}$ are specified, so the integration constant may still depend on $ \hat{A}$. 
	
	The statement about $p_{j,1}$ is straightforward to check by inspection.
\end{proof} 

The main point of \cref{prop:partial Airy} is that the sum of partial Airy structures is still a partial Airy structure. So we can get a partial Airy structure from a spectral curve with several ramification points.

\subsection{Super Airy structures for superconformal topological recursion}\label{sec:SAS for STR}

We will now construct the explicit super Airy structures that we will use to encode the loop equations near a single component of the ramification (i.e. the Ramond) divisor, and discuss how to extend it to a partial Airy structure for the entire divisor at the end. In this subsection, we closely follow notations of \cite[Section~4.2]{BCHORS20} and \cite[Section~4.2]{BO21}.

Let $ V_0$ and $V_1$ both be infinite-dimensional, with coordinates $ \{ x^k \, | \, k \in \N^* \} $ and $ \{ \theta^k \, | \, k \in \N \}$ respectively. For $B$ as a notation for Bosonic and $F$ for Fermionic, we introduce
\begin{equation}\label{IndexSet}
	A \coloneqq \{ B, F\} \times \N^* \cup \{ (F,0)\} \,, 
    \qquad 
    \hat{A} \coloneqq \{ (F,0)\} \,, 
    \qquad 
    x^{(B,k)} \coloneqq x^k \,, 
    \quad 
    x^{(F,k)} \coloneqq \theta^k \,,
\end{equation}
which will be consistent with the notation of \cref{prop:partial Airy}. We then define the \emph{Heisenberg} and \emph{Clifford operators} $ \{ J_k, \Gamma_k \, | \, k \in \Z \} $ as follows:

\begin{equation}\label{HC modes}
	J_k
	\coloneqq 
	\begin{cases} \hslash \del_{x^k} & k > 0 \\ 0 & k = 0 \\ -k \hslash x^{-k} & k < 0 \end{cases} \,,
	\qquad 
	\Gamma_k
	\coloneqq 
	\begin{cases} \hslash \del_{\theta^k} & k > 0 \\  \hslash (\frac{\theta^0}{2} + \del_{\theta^0} ) & k = 0\\ \hslash \theta^{-k} & k < 0\end{cases}\,.
\end{equation}
They satisfy
\begin{equation}\label{Heisenberg-Clifford}
	[J_k, J_\ell]
	=
	\hslash^2 \ell \delta_{k+\ell} \,,
	\qquad
	[\Gamma_k, \Gamma_\ell]
	=
	\hslash^2 \delta_{k+\ell} \,,
	\qquad
	[J_k, \Gamma_\ell]
	=
	0\,.
\end{equation}
The space generated by these elements is the Weyl algebra of the supersymplectic space described in \cref{SuperGiventalPairing}: it is the central extension of the vector space with basis $ J_k = \hslash k[z^{k-1} dz]$, $ \Gamma_k = \hslash [z^k d\theta ]$ with commutator $ [\hslash X, \hslash Y] = \hslash^2 \< X,Y\>$.

This space has a canonical `vacuum' representation on the space $ \C [ \{ x^k \}_{k>0}, \{ \theta^k\}_{k \geq 0}]$, with a vacuum $ | 0 \> \coloneqq 1$, a covacuum $ \< 0| = \mathop{\textup{ev}}_0 \colon P (\{ x^k, \theta^k\}) \mapsto P(0)$, and this induces a normal ordering, defined for quadratic expressions in the modes by\footnote{Note that $ \normord{Z} = Z - \< Z\> $ does not hold for all $Z$ in the Weyl algebra. E.g. for $ Z$ scalar, $ \normord{Z} = Z \neq Z - \< Z\> = 0$.}
\begin{equation}\label{DefNormOrd}
	\normord{X Y} = X Y -  \< X Y\> = X Y - \< 0 | X Y | 0 \>\,.
\end{equation}
More explicitly, this gives
\begin{equation}
	\normord{X_k Y_\ell} = X_k Y_\ell - [X_k,Y_\ell] \mb{1}_{k> 0} - \frac{1}{2} [X_k, Y_\ell ] \delta_{\ell,0}\,, \qquad  X, Y \in \{ J, \Gamma\} \,.
\end{equation}
The last term may be unfamiliar. It is necessary to deal with the zero-mode: we require that $ \normord{\Gamma_0^2} = 0$ to get supercommutativity inside the normal ordering. It also follows explicitly from the definition, \cref{DefNormOrd}.

Our interests are in constructing a super Airy structure from (a module of) the $\mc{N}=1$ super Virasoro algebra. More concretely, we define for $ n \in \Z$
\begin{align}
	L_{2n}
	&
	\coloneqq \frac{1}{2} \sum_{j \in \Z } (-1)^{j-1} \normord{J_{-j} J_{2n+j} } + 
	\frac{1}{2} \sum_{j \in \Z} (-1)^j \big( j + n \big) \normord{ \Gamma_{-j} \Gamma_{2n+j}} + \frac{\hslash^2}{4} \delta_{n,0}
	\\
	G_{2n+1}
	&
	\coloneqq \sum_{j \in \Z} (-1)^{j-1} J_{-j} \Gamma_{2n+1+j}\,.
\end{align}
By removing $(J_{2k}|\Gamma_{2k+1})$ and relabelling $(L_{2n}|G_{2n+1})\mapsto(L_{n}|G_{n+\frac12})$ for all $n\in\mb{Z}$, they precisely match with the $\rho$-twisted super Virasoro operators of \cite{BCHORS20}. Then, they satisfy the following commutation relations \cite[Equations~(4.12)-(4.16)]{BO21}
\begin{align}
	[L_{2n} , J_{2j}]
	&=
	2j \hslash^2  J_{2n+2j} 
	\\
	[L_{2n}, \Gamma_{2j-1}]
	&=
	(n+2j-1) \hslash^2 \Gamma_{2n + 2j-1} 
	\\
	[G_{2n+1}, J_{2j} ]
	&=
	2j \hslash^2 \Gamma_{2n + 2j +1}
	\\
	[G_{2n+1}, \Gamma_{2j-1} ]
	&=
	\hslash^2 J_{2n+2j}
	\\
	[L_{2n}, L_{2m}]
	&=
	2 \hslash^2 (n-m) \Big( L_{2n+2m} + \sum_{j\in \Z} \normord{ J_{-2j} J_{2n+2m+2j}} 
	\\
	& \hspace{3cm} + (n+m+2j+1) \normord{ \Gamma_{-2j-1} \Gamma_{2n+2m+2j+1}} \Big)
	\\
	[L_{2n}, G_{2m+1}]
	&=
	\hslash^2 (n-2m-1) \Big( G_{2n + 2m +1} + 2 \sum_{j\in \Z } \normord{J_{-2j} \Gamma_{2n+2m+2j+1}} \Big)
	\\
	[G_{2n+1}, G_{2m+1}]
	&=
	2\hslash^2 \Big( L_{2n+2m+2} + \sum_{j\in \Z} \normord{ J_{-2j} J_{2n+2m+2+2j}} + \normord{\Gamma_{-2j-1} \Gamma_{2n+2m+3+2j} } \Big)
\end{align}
The main use of this for us is in the construction of super Airy structures. Indeed, \cite{BCHORS20,BO21} showed that the set of operators $  \{ J_{2j}, \Gamma_{2j-1},  L_{2j-s-1}, G_{2j-s} \, | \, j \geq1 \}$ gives a super Airy structure after taking an appropriate conjugation.

More concretely, let $ \{ t_k \}_{k \in \Z } $, $  \{ B_{kl}, V_{kl} \}_{k,l \geq 0} $ in $ R^0 $ and $ \{ \tau_k \}_{k \in \Z }$ and $ \{ \phi_{kl} \}_{k,l \geq 0}$ in $ R^1$ be constants with $ t_k = \tau_k = 0$ for $ k \ll 0$. Without loss of generality\footnote{Because $ J_0 = 0 $, $ \Gamma_0^2 = \frac{\hslash^2}{2}$, and for the remaining cases $ J_k J_l = J_l J_k$ and $ \Gamma_k \Gamma_l = - \Gamma_l \Gamma_k$ for $k,l>0$. This is an equivalent, but slightly different convention from the one taken in \cite{BO21}.}, we assume that
\begin{equation}\label{conditions on polarisation}
	t_0 = 0 \,, 
    \qquad 
    B_{k0} = B_{0k} = \phi_{0k} = V_{00} = 0 \,, 
    \qquad 
    B_{kl} = B_{lk} \,, 
    \qquad 
    V_{kl} = - V_{lk} \,.
\end{equation}
Define the operators
\begin{equation}\label{dilaton and polarisation}
	T
    \coloneqq
    \exp \Big( \hslash^{-2} \sum_{k} J_k \frac{t_k}{k} - \Gamma_k\tau_k  \Big) \,, 
	\qquad
	\Phi 
    \coloneqq
    \exp \Big( \frac{1}{2\hslash^2} \sum_{k,l \geq 0}\frac{J_kJ_l}{kl}  B_{kl} - 2  \frac{J_k \Gamma_l}{k} \phi_{kl}+  \Gamma_k \Gamma_l V_{kl}\Big)\,,
\end{equation}
called respectively \emph{dilaton shift} and \emph{change of polarisation}. The signs in front of $\tau_k$ and $\phi_{kl}$ are inserted for notational convenience. Note that $t_k$ with $k<0$ were not allowed in  \cite{BCHORS20,BO21} (and  $\tau_k=\phi_{kl}=0$ there). We will show that this causes a significant difference for the construction of super topological recursion.
		
\begin{proposition}\label{BOSuperAiry}
	Let $T$ and $\Phi$ be the dilaton shift and polarisation operator as above, and define 
	\begin{equation}
		\tilde{L}_{2n} \coloneqq \Phi T L_{2n} T^{-1} \Phi^{-1} \,,
		\qquad
		\tilde{G}_{2n+1} \coloneqq \Phi T G_{2n+1} T^{-1} \Phi^{-1} \,.
	\end{equation}
	Set\footnote{Our $s$ should be the same as the $ \epsilon $ of \cite[Proposition 4.3]{BO21}, but that statement has a typo in the definition of $ \epsilon$ --- the definition of $\epsilon$ in the proof is correct, but not  the one in the proposition itself.}
	\begin{equation}\label{Defs}
		s \coloneq \min \{ k \, | \, k \textup{ odd and } t_k \textup{ invertible}\}
	\end{equation}
	and assume that
    \begin{equation}\label{Conditions}
        s \in \{ 1, 3 \} \,.
    \end{equation}
	Then the collection
	\begin{equation}
		\mathcal{I}_{\textup{STR}}=\big\{ J_{2j}, \Gamma_{2j-1}, \tilde{L}_{2j -s -1}, \tilde{G}_{2j-s} \, \big| \, j \geq1\big\}
	\end{equation}
	generates a super Airy structure, if the residueless meromorphic section $ \omega_{0,1} $ of $\mc{Q}'$ on the formal disc $ \Spec R \llbracket z, \theta \rrbracket \to \Spec R$ defined by
	\begin{equation}\label{DefOmega}
		\omega_{0,1} (\su{z})
		\coloneqq
		\sum_{k}  [dz]  t_k z^{k-1}+ [d\theta]  \tau_k z^k
	\end{equation}
	satisfies that for $\sigma:\su{z}\mapsto-\su{z}$,
	\begin{equation}\label{LOLE}
		\omega_{0,1} (\su{z}) + \omega_{0,1} (\sigma(\su{z}))
		=
		[d\theta]\mc{O}(z)
		\qquad 
		\text{and}
		\qquad
		\mc{C}^{\sigma}(\omega_{0,1} \boxtimes \omega_{0,1} )(\su{z}) 
		=
		[d\theta]^3\mc{O}(z^{s}).
	\end{equation}

\end{proposition}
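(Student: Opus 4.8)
The plan is to verify the four conditions of \cref{def:SAS} for $\mathcal I_{\textup{STR}}$, with $\theta^0$ as the distinguished odd variable (so $\hat A=\{(F,0)\}$ in the notation of \cref{IndexSet}). This is the natural generalisation of \cite[Section~4.2]{BCHORS20} and \cite[Proposition~4.3]{BO21}; the genuinely new ingredients are the general base ring $R$, the presence of negative modes $t_k$ in the dilaton shift, and the odd parts $\tau_k,\phi_{kl}$, and the conditions \cref{LOLE} are precisely the clean reformulation of what these require. Throughout one uses that, since $t_k=\tau_k=0$ for $k\ll0$ and $B,V,\phi$ are finitely supported, $T$ and $\Phi$ of \cref{dilaton and polarisation} act on $\widehat{\mc D}^\hslash_V$ by conjugation and, as in \cite{BCHORS20}, are automorphisms respecting the $\hslash$-filtration ($\ad$ of each exponent raises the $\hslash$-order). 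Conjugation by $T$ shifts $J_k\mapsto J_k-t_{-k}$, $\Gamma_k\mapsto\Gamma_k\pm\tau_{-k}$, while conjugation by $\Phi$ adds to each annihilation mode a linear combination of creation modes (plus a constant). In particular $\Phi T$ commutes with $J_{2j}$ and $\Gamma_{2j-1}$ for all $j\ge1$: the shifts $t_{-2j}$ and $\tau_{1-2j}$ vanish by the expansion (via \cref{DefOmega}) of the first equation of \cref{LOLE}, which forces the even $t_k$ and the odd $\tau_k$ to vanish in the range of indices that occurs, and $B_{k,-2j},\phi_{-2j,l},V_{k,1-2j}$ vanish by \cref{conditions on polarisation}. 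Hence, writing $\tilde X\coloneqq\Phi TXT^{-1}\Phi^{-1}$, we have $\tilde J_{2m}=J_{2m}$ and $\tilde\Gamma_{2m-1}=\Gamma_{2m-1}$ for $m\ge1$, which is used repeatedly. Conditions (1) and (2) of \cref{def:SAS} then follow as in \cite[Section~4.2]{BCHORS20}: the super Virasoro family is bounded, lies in $\widehat{\mc D}^\hslash_V$, and conjugation preserves both.

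For condition (3): the cases $J_{2j}=\hslash\del_{x^{2j}}$ and $\Gamma_{2j-1}=\hslash\del_{\theta^{2j-1}}$ are immediate. For $\tilde L_{2j-s-1}$ and $\tilde G_{2j-s}$ one computes $TL_{2n}T^{-1}$, $TG_{2n+1}T^{-1}$ by substituting the shifted modes into the normal-ordered bilinears; the resulting order-$\hslash^1$ term is linear in the modes, and — using that only odd $t_k$ with $k\ge s$ survive, $t_s$ invertible, by \cref{LOLE} — its annihilation part has leading term $\hslash t_s\del_{x^{2j-1}}$ for $\tilde L_{2j-s-1}$ and $\hslash t_s\del_{\theta^{2j}}$ for $\tilde G_{2j-s}$. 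The hypothesis $s\in\{1,3\}$ together with minimality of $s$ are exactly what make these the leading terms and what make the leading indices $\{2j-1\}_{j\ge1}$ and $\{2j\}_{j\ge1}$ exhaust $\{x^m\}_{m\ge1}\cup\{\theta^m\}_{m\ge1}=A\setminus\{(F,0)\}$. Conjugating by $\Phi$ adds only creation and higher-degree contributions. Since $t_s$ is invertible in $R$, a triangular $R^0$-linear change of basis within $\{\tilde L_{2j-s-1}\}_{j\ge1}$ and within $\{\tilde G_{2j-s}\}_{j\ge1}$ — which leaves the generated left ideal unchanged — brings each generator to the form $\hslash\del_j+\hslash p_{j,1}(A)+\mc O(\hslash^2)$, the order-$\hslash^1$ linear term $p_{j,1}$ being produced by the odd dilaton shift $\tau_k$ (and absent when all $\tau_k=0$). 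This gives condition (3).

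Condition (4) is the heart of the argument. By the automorphism property, $[\tilde L_{2n},\tilde L_{2m}]$, $[\tilde L_{2n},\tilde G_{2m+1}]$, $[\tilde G_{2n+1},\tilde G_{2m+1}]$ are $\Phi T(\,\cdot\,)T^{-1}\Phi^{-1}$ applied to the right-hand sides of the super Virasoro relations recalled above; and, using that $\Phi T$ commutes with $J_{2j},\Gamma_{2j-1}$, the brackets $[\tilde L_{2n},J_{2j}]$, $[\tilde L_{2n},\Gamma_{2j-1}]$, $[\tilde G_{2n+1},J_{2j}]$, $[\tilde G_{2n+1},\Gamma_{2j-1}]$, $[J,J]$, $[\Gamma,\Gamma]$, $[J,\Gamma]$ reduce to $\hslash^2$ times a single mode of definite index. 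An index count, using $j\ge1$ and $s\in\{1,3\}$ (which force the surviving $\tilde L_{2n},\tilde G_{2n+1}$ to have $n\ge-1$), shows that every mode and every $\tilde L,\tilde G$ thus produced either is a generator of $\mathcal I_{\textup{STR}}$ (after the triangular change of basis) or carries a vanishing structure constant; so these commutators lie in $\hslash^2\mathcal I$. There remain the anomalous normal-ordered bilinears $\sum_j\normord{\tilde J_\bullet\tilde J_\bullet}$, $\sum_j\normord{\tilde\Gamma_\bullet\tilde\Gamma_\bullet}$, $\sum_j\normord{\tilde J_\bullet\tilde\Gamma_\bullet}$ in the $[\tilde L,\tilde L]$-, $[\tilde L,\tilde G]$-, $[\tilde G,\tilde G]$-relations. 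These split into summands with a positive-even $J$-index or positive-odd $\Gamma$-index — for which $\tilde{(\cdot)}=(\cdot)$ and the monomial lies in the left ideal generated by $\{J_{2j},\Gamma_{2j-1}\}_{j\ge1}$ — and a remainder supported on creation and wrong-parity modes. Resumming this remainder through the identification $J_k=\hslash k[z^{k-1}dz]$, $\Gamma_k=\hslash[z^kd\theta]$ of \cref{SuperGiventalPairing} and comparing with \cref{SuperQuadraticCasimir,TwistedSuperQuadraticCasimir}, it is exactly the coefficient data of $\mc C^{\sigma}(\omega_{0,1}^{\boxtimes2})(\su z)$; the second equation of \cref{LOLE} says this equals $[d\theta]^3\mc O(z^s)$, i.e. carries no mode of index $<s$, which is precisely the condition making it vanish on the index ranges that arise (anything left landing again in $\mathcal I$). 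Finally one checks that the order-$\hslash^2$ scalar contributions — from the failure of $\Phi T$ to commute with normal ordering when re-normal-ordering the conjugated bilinears, together with the $\tfrac{\hslash^2}{4}\delta_{n,0}$ term in $L_{2n}$ — cancel, so no pure constant survives and $[\mathcal I,\mathcal I]\subseteq\hslash^2\mathcal I$.

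The step I expect to be the main obstacle is this last one: disentangling the anomalous bilinears, identifying their "holomorphic" remainders with the coefficient data of $\mc C^{\sigma}(\omega_{0,1}^{\boxtimes2})$, and checking both the vanishing forced by \cref{LOLE} and the cancellation of the order-$\hslash^2$ scalars is where essentially all of the real computation lies, and it is the only place where the hypotheses $s\in\{1,3\}$ and \cref{LOLE} are genuinely used. Once this is in place, $\mathcal I_{\textup{STR}}$ is a super Airy structure on $(V,\theta^0)$, and \cref{thm:BCHORS20} applies.
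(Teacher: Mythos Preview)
You have misplaced where the second equation of \cref{LOLE} enters, and as a result your condition~(3) argument has a gap while your condition~(4) argument is both unnecessarily complicated and incorrect in its key identification.

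For condition~(4), the paper's argument is essentially one line. Since you have already established $J_{2j}=\tilde J_{2j}$ and $\Gamma_{2j-1}=\tilde\Gamma_{2j-1}$ for $j\ge1$, the ideal $\mathcal I_{\textup{STR}}$ is exactly $\Phi T\,\mathcal I_0\,T^{-1}\Phi^{-1}$, where $\mathcal I_0$ is generated by the \emph{unconjugated} $\{J_{2j},\Gamma_{2j-1},L_{2j-s-1},G_{2j-s}\mid j\ge1\}$. Conjugation is an algebra automorphism, so $[\mathcal I_{\textup{STR}},\mathcal I_{\textup{STR}}]\subseteq\hslash^2\mathcal I_{\textup{STR}}$ holds iff $[\mathcal I_0,\mathcal I_0]\subseteq\hslash^2\mathcal I_0$, and the latter follows from the displayed commutation relations by a straightforward index check: every normal-ordered bilinear that appears has at least one factor with positive even $J$-index or positive odd $\Gamma$-index, or vanishes because $J_0=0$. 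No dilaton or polarisation data enter; \cref{LOLE} is not used at all here. Your attempt to verify (4) directly on the conjugated generators goes wrong at the step where you identify the ``remainder supported on creation modes'' with $\mathcal C^\sigma(\omega_{0,1}^{\boxtimes2})$: the remainder is an operator-valued expression (products of creation modes $J_{-2j}$, $\Gamma_{-2j-1}$), while $\mathcal C^\sigma(\omega_{0,1}^{\boxtimes2})$ is a scalar built from $t_k,\tau_k$, so the two cannot be equated. There is also no separate ``$\hslash^2$ scalar cancellation'' to check once one uses the automorphism argument.

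The second equation of \cref{LOLE} is instead what makes condition~(3) work, and this is precisely the step missing from your account. The $\hslash^0$ part of $\tilde L_{2n}$ and $\tilde G_{2n+1}$ is $\tfrac12\ad^2_{\log T}$ applied to $L_{2n}$, $G_{2n+1}$; the paper computes these explicitly as
\[
\sum_{k}(-1)^{k-1}t_kt_{-k-2n}+(-1)^k(k+n)\tau_k\tau_{-k-2n}
\quad\text{and}\quad
2\sum_{k}(-1)^{k-1}t_k\tau_{-k-2n-1}\,,
\]
which are exactly the coefficients of $\mathcal C^\sigma(\omega_{0,1}^{\boxtimes2})$. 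Their vanishing for $2n\ge 1-s$ and $2n+1\ge 2-s$ is precisely the condition $\mathcal C^\sigma(\omega_{0,1}^{\boxtimes2})=[d\theta]^3\mc O(z^s)$. Your condition~(3) paragraph correctly analyses the $\hslash^1$ term and the triangular change of basis, but never addresses why the $\hslash^0$ constant is absent, so as written the generators are not yet shown to have the required form $\hslash\partial_j+\hslash p_{j,1}(A)+\mc O(\hslash^2)$.
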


\begin{proof}
	We need to check the four conditions of \cref{def:SAS}. First, we consider the $\hslash$-independent terms and $\hslash$-linear terms of all elements of the generating collection.
	
	The condition that $ \omega_{0,1} (\su{z}) + \omega_{0,1} (-\su{z})$ be holomorphic ensures that
	\begin{equation}
		J_{2j} 
		= 
		\Phi T J_{2j} T^{-1} \Phi^{-1} \,, 
		\qquad
		\Gamma_{2j-1} 
		=
		\Phi T \Gamma_{2j-1} T^{-1} \Phi^{-1}
	\end{equation}
	for $ j \geq1$.
	As for the $ L_{2j-s-1}$ and $ G_{2j-s}$, they are homogeneous of $\hslash$-degree $2$ and $ \Phi$ preserves this. Every adjunction by $ \log T$ decreases this by one. The first adjunction by $\log T$ (without $\Phi$) gives
	\begin{align}
		\ad_{\log T} L_{2n} 
		&= 
		\sum_{k \in \Z} (-1)^{k-1}J_{2n+k} t_k  - (-1)^k (k+n) \Gamma_{2n+k}\tau_k  \,;
		\\
		\ad_{\log T} G_{2n+1}
		&=
		\sum_{k \in \Z} (-1)^{k-1}\Gamma_{2n+1+k} t_k  + (-1)^{k} J_{2n+1+k} \tau_k  \,,
	\end{align}	
	which are homogeneous of $\hslash$-degree one. Further application of $ \log T$ gives
	\begin{align}
		\ad^2_{\log T} L_{2n} 
		&= 
		\ad_{\log T} \Big( \sum_{k \in \Z} (-1)^{k-1} J_{2n+k} t_k -  (-1)^k (k+n) \Gamma_{2n+k} \tau_k \Big)
		\\
		&=
		\sum_{k \in \Z} (-1)^{k-1} t_k t_{-k-2n} + (-1)^k (k+n) \tau_k \tau_{-k-2n} \,;
		\\
		\ad^2_{\log T} G_{2n+1}
		&=
		\ad_{\log T} \Big( \sum_{k \in \Z} (-1)^{k-1} \Gamma_{2n+1+k} t_k + (-1)^{k} J_{2n+1+k}\tau_k \Big)
		\\
		&=
		\sum_{k \in \Z} (-1)^{k-1} \tau_{-k-2n-1}t_k  + (-1)^k  t_{-k-2n-1}\tau_k
		\\
		&=
		2 \sum_{k \in \Z} (-1)^{k-1} t_k \tau_{-k-2n-1} \,,
	\end{align}
	which are of $\hslash$-degree zero. These sums are finite because the $ t_k$ and $ \tau_k$ are zero for $ k \ll 0$. The second condition of \cref{LOLE} ensures that all the $\hslash$-independent terms vanish.

	For condition (1), we note that it holds for the $J$, $\Gamma$, $L$, and $G$. Moreover, by interpreting $T$ and $\Phi$ as a linear map from $\{J,\Gamma,L,G\}$ to $\{\tilde J,\tilde \Gamma,\tilde L,\tilde G\}$,  $ \log T$ and $ \log \Phi$ have strictly upper triangular bosonic reduction by the condition on $s$, and only finitely many non-trivial diagonals below the main diagonal by our condition on $ t_k $ and $ \tau_k$, with only nilpotent coefficients. This shows\footnote{Using that $R$ has only finitely many odd generators.} that also $T$ and $\Phi$ have only finitely many non-trivial diagonals below the main diagonal, and therefore conjugation by them preserves boundedness. Condition (2) holds by definition.

    From this computation, and using that $ s > 0$, one sees that there exists a linear map, whose bosonic reduction is upper triangular with non-zero diagonal, and which is therefore invertible. (Cf. \cite[Appendix A.2]{BO21} for reduced cases.) Such an invertible matrix transforms the collection given in the proposition to one with $\hslash$-linear term as in (3) of \cref{def:SAS}. Finally, condition (4) follows directly from the commutation relations of the $ J$, $ \Gamma$, $ L$, and $ G$, since these are preserved by conjugation.
\end{proof}

\begin{remark}
	Because we allow for $ t_k$ and $ \tau_k$ with $ k $ negative, $T$ and $\Phi$ need not commute. We could have also chosen to conjugate first by $ \Phi $ and then by some 
	\begin{equation}
		\tilde{T} \coloneqq \exp \Big( \hslash^{-2} \sum_{k} \frac{J_k}{k}\tilde{t}_k  - \Gamma_k\tilde{\tau}_k \Big)
	\end{equation}
	Then we would have $ \tilde{T} = \Phi T \Phi^{-1}$, and
	\begin{equation}
		\omega_{0,1} (\su{z})
		=
		\sum_{k}d\xi_k ( \su{z}) \tilde{t}_k  + d\eta_k (\su{z})\tilde{\tau}_k 
	\end{equation}
	with notation introduced later, in \cref{sec:SAStoTR}, for the \emph{same} $ \omega_{0,1}$ as in \cref{DefOmega}, just expanded in a different basis. In fact, the polarisation $\Phi$ is closely related to a choice of a basis of the space of meromorphic section of $\mc{Q}'$. This perspective will become clearer in \cref{sec:SAStoTR}. 
	\end{remark}

\subsection{Partial super Airy structures}\label{sec:partial SAS}
We can naturally extend \cref{BOSuperAiry} into the case of a direct sum of super Airy structures. This will not be a super Airy structure any more, but it will still be a partial Airy structure. Since one only needs to repeat similar arguments and computations, let us simply outline changes:
\begin{itemize}
	\item The index set $A$ in \eqref{IndexSet} is now taken as $A_r\coloneqq [r]\times \big( \{ B, F\} \times \N^* \cup \{ (F,0)\} \big)$, and $\hat{A}_r\coloneqq[r]\times\{(F,0)\}$. Their coordinates now carry an extra index $a\in[r]$ i.e., $x^{a,k},\theta^{a,k}$;
	\item The operators $J_{a,k},\Gamma_{a,k}$ are defined in terms of $x^{a,k},\theta^{a,k}$ in an analogous manner, and the commutation relations \eqref{Heisenberg-Clifford} are equipped with $\delta_{a,b}$ for $a,b\in[r]$. That is, operators from different copies commute;
	\item Similarly, define $L_{a,2n},G_{a,2n+1}$ for every $a\in[r]$, and as a consequence, $\delta_{a,b}$ appears in the commutation relations among $J_{a,k},\Gamma_{a,k},L_{a,2n},G_{a,2n+1}$;
	\item For ${a,b\in[r]}$ and ${k,l\geq0}$, define $B^{ab}_{kl}, \phi^{ab}_{kl}, V^{ab}_{kl}$. And for $ a \in [r]$ and $k \in\mathbb{Z}$, define $(t_{a,k},\tau_{a,k})$ such that $t_{a,k}=\tau_{a,k}=0$ for $k\ll 0$. Similar to \eqref{conditions on polarisation}, we assume that
	\begin{equation}\label{constraints for multi components}
		t_{a,0} = 0 \,, \qquad B_{0l}^{ab} = B_{k0}^{ab} = \phi_{0l}^{ab} = V_{00}^{ab} = 0 \,\quad B_{kl}^{ab} = B_{lk}^{ba} \,, \qquad V_{kl}^{ab} = - V_{lk}^{ba} \,;
	\end{equation}
	\item Define $s_a$ for each $ a \in [r]$ similarly to \ref{Defs}, and assume $s_a\in \{ 1, 3 \}$. Then define the unique dilaton shift operator $T$ and the polarisation operator $\Phi$ in an analogous way to \eqref{dilaton and polarisation}, by also summing over $ [r]$. This gives $\tilde L_{a,2n},\tilde G_{a,2n+1}$ by conjugation;
	\item To extend the condition \eqref{LOLE} in \cref{BOSuperAiry}, let $\Sigma$ be a super Riemann surface with a degree $r$ Ramond divisor with  elements $p_a\in\textup{supp}(\mc{R})$ labelled by $a \in [r]$ , $\su{z}_a$ be local coordinates and $\sigma_a$ the local involution operator at each Ramond puncture $p_a$;
	\item The monomial basis $\{d\xi_{a,k},d\eta_{a,k}\}$ now carries the extra index $a\in[r]$. The Givental pairing is defined at every ramification point, and the pairing between two different ramification points are set to zero.
	\item $\omega_{0,1}$ now is a section of $\mc{Q}' \to \Sigma$ with the following local expansion at each  $p_a$:
	\begin{equation}
		\omega_{0,1}(\su{z})\underset{\text{near }p_a}{=}\sum_k[dz_a]t_{a,k}z_a^{k-1}+[d\theta_a]\tau_{a,k}z^k_a \,.
	\end{equation} 
\end{itemize}

We can then show the following theorem, whose proof is  parallel to \cref{BOSuperAiry}.

\begin{theorem}\label{thm:SASmain}
	The collection
	\begin{equation}
		\mathcal{I}_{\textup{STR}}^{\,r}=\big\{ J_{a,2j}, \Gamma_{a,2j-1}, \tilde{L}_{a,2j -s_a -1}, \tilde{G}_{a,2j-s_a} \, \big| \,a\in[r],\; j \geq1\big\}
	\end{equation}
	generates a partial Airy structure, if the residueless meromorphic section $ \omega $ of $\mc{Q}'$ satisfies at each neighbourhood of $p_a\in \textup{supp}(\mc{R})$ for $a\in[r]$,
	\begin{equation}
		\omega_{0,1} (\su{z}_a) + \omega_{0,1} (\sigma_a(\su{z}_a))=[d\theta_a]\mc{O}(z_a) \qquad \text{and} \qquad \mc{C}^{\sigma_a}(\omega_{0,1}\boxtimes\omega_{0,1})(\su{z}_a) =[d\theta_a]^3\mc{O}(z_a^{s_a}).
	\end{equation}

\end{theorem}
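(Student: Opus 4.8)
The plan is to verify the four conditions of \cref{def:PSAS} for the collection $\mathcal{I}_{\textup{STR}}^{\,r}$ with $\hat{A}_r = [r]\times\{(F,0)\}$, imitating the proof of \cref{BOSuperAiry} one copy at a time. The key structural simplification is that operators $J_{a,k},\Gamma_{a,k},L_{a,2n},G_{a,2n+1}$ carrying distinct labels $a\neq b$ supercommute, so the only place the copies interact is through the polarisation operator $\Phi$, whose exponent \eqref{dilaton and polarisation} (now with the $[r]$-sum) contains cross-terms $B^{ab}_{kl}J_{a,k}J_{b,l}$, $\phi^{ab}_{kl}J_{a,k}\Gamma_{b,l}$, $V^{ab}_{kl}\Gamma_{a,k}\Gamma_{b,l}$. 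I would first observe that the relations \eqref{constraints for multi components} are precisely the (anti)symmetry conditions needed for $\Phi$ to be well-defined. Condition (2) of \cref{def:PSAS} holds by construction of the ideal, and condition (4) — the left ideal property $[\mathcal{I},\mathcal{I}]\subseteq\hslash^2\mathcal{I}$ — follows as in \cref{BOSuperAiry} from the super Virasoro commutation relations, which are preserved under conjugation by $\Phi T$, together with the vanishing of all cross-copy brackets.

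Next I would extract the $\hslash$-independent and $\hslash$-linear parts of the generators. The holomorphicity hypothesis $\omega_{0,1}(\su{z}_a)+\omega_{0,1}(\sigma_a(\su{z}_a))=[d\theta_a]\mc{O}(z_a)$ at each $p_a$ forces $J_{a,2j}$ and $\Gamma_{a,2j-1}$ to be invariant under $\Phi T$-conjugation, so these are already in the shape required by condition (3). For $\tilde L_{a,2j-s_a-1}$ and $\tilde G_{a,2j-s_a}$, the operator $\Phi$ preserves $\hslash$-degree while each $\ad_{\log T}$ lowers it by one; computing $\ad_{\log T}$ and $\ad_{\log T}^2$ of $L_{a,2n}$ and $G_{a,2n+1}$ verbatim as in \cref{BOSuperAiry} (with $t_k,\tau_k$ replaced by $t_{a,k},\tau_{a,k}$) produces $\hslash^0$ terms that are, up to signs, the coefficients of $\mc{C}^{\sigma_a}(\omega_{0,1}\boxtimes\omega_{0,1})(\su{z}_a)$. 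Hence the second hypothesis $\mc{C}^{\sigma_a}(\omega_{0,1}\boxtimes\omega_{0,1})(\su{z}_a)=[d\theta_a]^3\mc{O}(z_a^{s_a})$ annihilates exactly those terms; all sums here are finite because $t_{a,k}=\tau_{a,k}=0$ for $k\ll0$.

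For boundedness (condition (1)) and the normalised form (condition (3)), I would regard $\log T$ and $\log\Phi$ as $[r]$-block-indexed linear maps on the span of $\{J,\Gamma,L,G\}$. Using $s_a\in\{1,3\}$ and the vanishing relations \eqref{constraints for multi components}, their bosonic reductions are strictly upper-triangular with only finitely many nonzero sub-diagonals, all with nilpotent coefficients (exploiting that $R$ has finitely many odd generators); therefore $T$ and $\Phi$ inherit this structure, conjugation by them preserves boundedness, and the total linear map carrying $\{J_{a,2j},\Gamma_{a,2j-1},L_{a,2j-s_a-1},G_{a,2j-s_a}\}$ to $\mathcal{I}_{\textup{STR}}^{\,r}$ is upper-triangular with invertible diagonal, hence invertible. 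Applying its inverse recasts the collection in the form required by condition (3), and since $\hat{A}_r$ is finite, \cref{prop:partial Airy} then delivers the partition function, unique up to even polynomials in $\hat{A}_r$.

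The one thing to be genuinely careful about — the only new point beyond \cref{BOSuperAiry} — is the cross-copy part of $\Phi$, namely $B^{ab}_{kl}J_{a,k}J_{b,l}$, $\phi^{ab}_{kl}J_{a,k}\Gamma_{b,l}$ and $V^{ab}_{kl}\Gamma_{a,k}\Gamma_{b,l}$ with $a\neq b$. I would check that, with respect to the ordering already used in \cref{BOSuperAiry} now refined by the $[r]$-block index, these terms contribute only to the strictly lower-triangular, nilpotent part of the linear map realising the conjugation, so that they neither spoil boundedness nor break the invertibility of its diagonal; with that in place, the remainder of the argument of \cref{BOSuperAiry} carries over without change. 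I expect this to be pure bookkeeping rather than a real obstacle: the content of the theorem is already contained in \cref{BOSuperAiry}, and, by design, the extension to several Ramond punctures is routine.
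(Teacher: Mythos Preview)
Your proposal is correct and follows exactly the route the paper itself takes: the paper's proof of \cref{thm:SASmain} consists of the single sentence ``whose proof is parallel to \cref{BOSuperAiry}'', and what you have written is precisely the spelling-out of that parallel, including the one genuinely new bookkeeping point about the cross-copy terms in $\Phi$. If anything, your write-up is more detailed than the paper's own treatment; the only minor imprecision is the phrase ``strictly lower-triangular'' for the cross-copy contributions, which in the ordering used in the proof of \cref{BOSuperAiry} are in fact upper-triangular corrections (mapping negative modes to non-negative ones, possibly in other copies) and hence harmless for both boundedness and the invertibility of the diagonal.
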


In the next section, we will construct a geometric counterpart of the partial Airy structure $\mathcal{I}_{\textup{STR}}^{\,r}$, i.e., super topological recursion on a disjoint union of super disks. Note that no generator of $\mathcal{I}_{\textup{STR}}^{\,r}$ contains constant terms with respect to $\hslash$. Thus, by setting $F_{g,n}(\hat{A} )=0$ for all $g\in\mathbb{Z}+\frac12$, we have $F_{g,n}=0$ for all $g\in\mathbb{Z}+\frac12$ as shown in \cref{prop:partial Airy} . From now on, we will work on this setting, and in particular, $g$ in the rest of the article is always an integer.

\begin{remark}\label{remark:4.1.1}
When we consider super topological recursion on a compact super Riemann surface of genus $\su{g} = (\tilde{g}|r)$, we will relate odd holomorphic differentials $d\eta_{a,0}$ as defined in equation \ref{oddoneform}
to algebraic zero modes $\theta^{a,0}$. Since $\{d\eta_{a,0}\}_{a\in[2r]}$ spans an $r$-dimensional space and hence the forms are mutually dependent, one has to add $r$ more constraints to the partial Airy structure in order to capture such relations. We will discuss this point in detail in \cref{sec:global constraints}.
\end{remark}

\section{Super Loop Equations}\label{sec:super loop equations}

In this section, we reinterpret Airy structures and the partition function in terms of super Riemann surfaces. We will construct a geometric counterpart $\omega_{g,n}$ which encodes the same information as $F_{g,n}$ of a partial Airy structures, and explore geometric properties equivalent to \cref{thm:SASmain}.

\subsection{Super spectral curves}\label{sec:super spectral curves}
In the non-super setting, a spectral curve is often given as a collection $ ( \Sigma, x, y, B)$ of a Riemann surface $ \Sigma$, two meromorphic functions $ x, y$ on $ \Sigma$, and a meromorphic symmetric bidifferential $B$ with double pole on the diagonal with biresidue $1$ -- a fundamental bidifferential. For the original topological recursion, we require $ x$ to only have simple ramification (we call the ramification locus $R$), and $y$ to be non-ramified at these ramifications of $x$.

More properly (for our setup), we may think of a simply-ramified map $ x \colon \Sigma \to \P^1$, together with a one-form $ \omega_{0,1} = y\, dx$ on $ \Sigma$, and still $B$. These data define an Airy structure  -- the local expansion of $ \omega_{0,1}$ defines the dilaton shift $T$, and the expansion of $ B$ defines the change of polarisation $\Phi$. Kontsevich--Soibelman \cite{KS18} showed that this Airy structure is equivalent to loop equations for $\omega_{g,n}$: near a ramification point $a$ of $x$, let $\sigma$ be the local deck transformation. Then

\begin{align}
	\mc{L}_{g,n} (z; z_{\llbracket n \rrbracket}) 
	&:= 
	\omega_{g,n+1} (z, z_{\llbracket n \rrbracket} ) + \omega_{g,n+1} (\sigma (z), z_{\llbracket n \rrbracket} ) \label{LLE}
	\\
	\mc{Q}_{g,n} (z; z_{\llbracket n \rrbracket})
	&:=
	\omega_{g-1,n+2} (z ,\sigma (z), z_{\llbracket n \rrbracket}) + \sum_{\substack{g_1 + g_2 = g \\ I \sqcup J = \llbracket n \rrbracket}} \omega_{g_1, |I| +1} (z, z_I) \omega_{g_2, |J| +1} (\sigma (z), z_J)\label{QLE}
\end{align}
are both holomorphic as $ z \to a$, with respectively a single and double zero when $s=3$ (the latter is of order $\mc{O}(1)$ when $s=1$).

The Airy structure then determines the principal parts of all $ \omega_{g,n}$ uniquely. To be able to reconstruct the $ \omega_{g,n}$ completely, we still need to specify the purely holomorphic part. This is where $B$ comes in. It fixes the purely holomorphic part via the \emph{projection property}:

\begin{equation}
	\omega_{g,n+1} (z, z_{\llbracket n \rrbracket}) = \sum_{a \in R} \Res_{z' = a} \Big( \int_a^{z'} B (\cdot, z) \Big) \omega_{g,n+1} (z', z_{\llbracket n \rrbracket}) \,.
\end{equation}
In case $ \Sigma $ is compact, this $B$ is the one from \cref{sec:the_symmetric_bidifferential}. Therefore, the projection
 \begin{equation}
	\omega (z) \mapsto \sum_{a \in R} \Res_{z' = a} \Big( \int_a^{z'} B (\cdot, z) \Big) \omega (z') 
\end{equation}
preserves the principal parts of $ \omega$ at the ramification points (which from the Airy structure construction are the only principal parts of the $ \omega_{g,n}$ for $ (g,n) \notin \{ (0,1), (0,2)\}$), and kills the $A$-periods.

Moreover, if $ \omega $ is only given as germs at the ramification points, the projection constructs the unique global form on $ \Sigma$ with the same principal parts and zero $A$-periods. Therefore, the spectral curve data suffice to construct all $ \omega_{g,n}$ globally and uniquely using the loop equations/Airy structures for the principal parts and the projection property for the holomorphic part.

Now, let us give the super analogue to this idea. We define a simply-ramified super spectral curve as follows.

\begin{definition}\label{def:GSC}
	A \emph{compact super spectral curve} is a tuple $ \su{S} = ( \Sigma, \mc{R}, \mc{T}, x, \omega_{0,1} )$, where
	\begin{enumerate}
		\item $ \Sigma $ is a compact super Riemann surface with Ramond divisor $ \mc{R}$;
		\item $ \mc{T} $ is a super Torelli marking on $ \Sigma$, i.e. a symplectic basis $ \{ A_j, B_j \}_{j = 1}^{\tilde{g}} $ of $ H_1 (\Sigma_{\textup{red}} ; \Z )$ along with an ordering $ \mc{R} = \{ p_\mu, q_\mu \}_{\mu = 1}^r $ and a choice of the sign of $ d\vartheta $ at each $ p_\mu $ and $ q_\mu$;
		\item $x \colon \Sigma \to \P^{1|1} $ is a superconformal cover with only simple ramifications;
		\item $ \omega_{0,1} $ is a meromorphic section of  $ \mc{Q}'$ satisfying \cref{Conditions,LOLE} at every ramification point.
	\end{enumerate}
\end{definition}

The \emph{genus} of $ \su{S}$ is $ \su{g}(\su{S}) = \su{g} (\Sigma) = ( \tilde{g} \, | \, r ) $. The definition implies that $ \mc{R}$ is the ramification divisor of $ x$. Given $\su{S}$, one can uniquely determine the \emph{super fundamental bidifferential} $ \su{B} $ of $ \su{S}$ as in \cref{sec:the_symmetric_bidifferential}. An analogue of the condition on the ramification locus of $y$ is taken care of by the condition \eqref{Conditions} -- in the ordinary setting, these are exactly the same conditions.

If we do not require compactness, we should define a bidifferential explicitly instead of Torelli markings as there is no unique choice.

\begin{definition}\label{def:LSC}
	A \emph{local super spectral curve} is a tuple $ \su{S} = ( \Sigma, \mc{R}, x, \omega_{0,1}, \su{B})$, where
	\begin{enumerate}
		\item $ \Sigma $ is a disjoint union of $r$ formal super disks $ \Spec R \llbracket z, \theta \rrbracket$ such that the union of the centres $ \Spec R [\theta]$ of all disks forms a Ramond divisor  $\mc{R}$ of degree $r$;
		\item $x \colon \Sigma \to \P^{1|1} $ is a nowhere-constant superconformal map with only simple ramifications;
		\item $ \omega_{0,1} $ is a meromorphic section of  $ \mc{Q}'$ satisfying \cref{Conditions,LOLE} at ramification points;
		\item $\su{B} \in H^0 (\Sigma^2; \mc{Q}'^{\boxtimes 2}(2\Delta))$ is symmetric with a local behaviour near the diagonal
			\begin{align}
				\su{B}(\su{z}_1,\su{z}_2) 
				&\sim_{z_1 \to z_2  \textup { NS}} [d\theta_1] [d\theta_2]\Big( - \frac{\theta_1 \theta_2}{(z_1-z_2)^2 } + \frac{1}{z_1-z_2} + \mc{O}(z_1-z_2)^0 \Big) \,,
				\\
				\su{B}(\su{z}_1,\su{z}_2) 
				&\sim_{z_1 \to z_2 \textup { R}} [d\theta_1] [d\theta_2]\Big( - \frac{\theta_1 \theta_2 z_1z_2}{(z_1-z_2)^2 } - \frac{z_1+z_2}{2(z_1-z_2)} + \mc{O}(z_1-z_2)^0 \Big).
			\end{align}
	\end{enumerate}
\end{definition}

Because of the last part of this definition, we introduce the notation

\begin{align}
	\su{B}^{\textup{NS}}(\su{z}_1 , \su{z}_2) 
	&\coloneqq
	[d\theta_1] [d\theta_2]\Big( - \frac{\theta_1 \theta_2}{(z_1-z_2)^2 } + \frac{1}{z_1-z_2} \Big)\,,
	\\
	\su{B}^{\textup{R}} (\su{z}_1, \su{z}_2) 
	&\coloneqq
	[d\theta_1] [d\theta_2]
	\Big( - \frac{\theta_1 \theta_2 z_1 z_2}{(z_1-z_2)^2 } - \frac{z_1+z_2}{2(z_1-z_2)} \Big) \,.
\end{align}
for the local behaviour in these two settings. We note that $\su{B}^{\textup{NS}}$ and $\su{B}^{\textup{R}}$ encode the local behaviour of the sum of the Bergman kernel $B$ and the Szeg\H{o} kernel $S$. Thus, a compact super spectral curve can be realised as a local super spectral curve, because the fundamental bidifferential $ \su{B}$ admits such an expansion by construction. Furthermore, the setting of local super spectral curves can also describe non-compact super Riemann surfaces.

\begin{remark}
	We restrict to cases with simple ramifications, because the non-simply ramified case poses significant extra difficulties. Firstly, this case is already a lot more complicated in the non-super context, both in writing down the formulas and in proving that they are actually well-posed. Secondly, the super context provides two more complications: whether a component of the ramification locus is Neveu--Schwarz or Ramond depends on the parity of the ramification index by \cref{prop:SCtrans}, and a super Riemann surface which is not simply-ramified over $ \P^{1|1}$ may lie in the bad locus of \cref{BadLocus} because \cref{SimplyRamifiedNotBad} does not hold. Finally, the expected algebra for higher-ramified cases, often called the parafermion algebra, has a complicated structure. As a consequence, it requires further studies to show e.g. whether some modules give rise to partial Airy structures as a generalisation of \cref{thm:SASmain}. 
\end{remark}

\subsection{From super Airy structures to super loop equations} \label{sec:SAStoTR}

In this section, we will derive super loop equations from the super Airy structures constructed in \cref{thm:SASmain} and the projection property associated to the super fundamental bidifferential. This was already done in \cite[Section~4.2]{BO21} in the reduced setting, but we will give an exposition closer to \cite[Sections~4~\&~5]{BKS23} for a non-reduced setting, i.e. with odd parameters.

We consider a local super spectral curve as in \cref{def:LSC}. Near each ramification, using \cref{prop:NS-R SCmap} we choose a local coordinate $ \su{z} $ such that $ \su{x}( \su{z} ) =(x(\su{z}) \, | \, \phi(\su{z}))= ( z^2 \, | \, \sqrt{2} z\theta )$ -- this implies that $ \{ z = 0 \}$ is a Ramond puncture by \cref{prop:SCtrans}. We will build a super Airy structure at each of these punctures. For ease of notation, we now fix one such puncture and ignore contributions from other ramifications, and we do not indicate it in notation. The space of local meromorphic sections (on a formal disc) of $ \mc{Q}'$ is then isomorphic to $ Q'$ from \cref{SuperGiventalPairing}, and hence has a super symplectic pairing.

We define the constants $ \{ t_k , \tau_k \}_{k \in \Z}$ by the expansion

\begin{equation}
	\omega_{0,1}(\su{z})
	\eqqcolon
	\sum_{k}[dz] t_k z^{k-1}+[d\theta]  \tau_k z^k\,.
\end{equation}
Similarly, we define the constants $ \{ B_{kl}, V_{kl}, \phi_{kl} \}_{k,l \geq 0}$ by the expansion of $\su{B}$ as
\begin{equation} \label{DefSuperB}
		\begin{split}
		\su{B} ( \su{z}_1 , \su{z}_2 ) 
		&=
		\su{B}^{\textup{R}} (\su{z}_1, \su{z}_2) +
		\sum_{k,l > 0}^\infty[dz_1] [dz_2] B_{lk} z_1^{k-1} z_2^{l-1} 
		\\
		&\quad
		+ \sum_{k> 0, l \geq 0}[dz_1] [d\theta_2 ] \left( 1- \frac{\delta_{0l}}{2}\right) \phi_{kl} z_1^{k-1} z_2^l 
		+ 
		\sum_{k \geq 0,l>0} [d\theta_1] [dz_2] \left(1- \frac{\delta_{k0}}{2}\right) \phi_{lk}z_1^k z_2^{l-1}
		\\
		&\quad
		+ \sum_{k,l \geq 0} [d\theta_1] [d\theta_2] \left(1-\frac{\delta_{k,0}}{2}-\frac{\delta_{l,0}}{2}\right) V_{kl}z_1^k z_2^l  \,,
	\end{split}
\end{equation}
The symmetry of $\su{B}$ implies that the constants satisfy the conditions \eqref{conditions on polarisation}. This implies that the dilaton shift $T$ and the polarisation $\Phi$ determines $\omega_{0,1}$ and $\su{B}$ and vice versa. This is why we use the same letters $(t_k,\tau_k,B_{kl},V_{kl},\psi_{kl})$ here.

The space $Q'$ has a natural maximally isotropic subspace, namely the space of holomorphic sections $Q^+$. The fundamental super bidifferential determines a nearly complementary maximally isotropic subspace $Q^-$, together with a one-dimensional rest $Q^0$, together called a \emph{polarisation}, as follows.

In the domain $ |z_1| < |z_2|$, we can rewrite $ \su{B}$ in the following way:
\begin{equation}
\su{B} ( \su{z}_1 , \su{z}_2 ) = 
		\sum_{k>0}^\infty k  d\xi_k (\su{z}_1) d\xi_{-k}(\su{z}_2) + \frac{1}{2} d\eta_0 (\su{z}_1) d\eta_0 (\su{z}_2) +
		\sum_{k>0}^\infty d\eta_k (\su{z}_1) d\eta_{-k}(\su{z}_2)\label{polarised expansion}
\end{equation}
where for $ k > 0$, we define, recalling the assumption \eqref{conditions on polarisation},
\begin{align}
	d\xi_k (\su{z})
	&=
	[dz]z^{k-1} 
	\\
	d\eta_k (\su{z})
	&=
	[d\theta]z^k 
	\\
	d\xi_{-k} (\su{z})
	&=
	\frac{[dz]}{z^{k+1}} + \sum_{l \geq 0} \Big( [dz]\frac{2 B_{kl} + \phi_{l0}\phi_{k0}}{2k} z^{l-1} + [d\theta] \frac{2\phi_{kl} + V_{0l}\phi_{k0}}{2k} z^l \Big)
	\\
	d\eta_0 (\su{z})
	&=
	[d\theta ] + \sum_{l>0} \Big( [dz]\phi_{l0} z^{l-1} +[d\theta] V_{0l} z^l \Big)
	\\
	d\eta_{-k} (\su{z})
	&=
	\frac{[d\theta]}{z^k} + \sum_{l \geq 0} \Big( [dz]\frac{2\phi_{lk} + V_{k0}\phi_{l0}}{2} z^{l-1} +[d\theta] \frac{2V_{kl} + V_{k0}V_{0l}}{2} z^l \Big) \,.
\end{align}

The sets $\{d\xi_k\}_{k\in\mb{Z}^*}$ and $\{d\eta_k\}_{k\in\mb{Z}}$ form a super symplectic basis of $Q'$. The subset $ \{ d\xi_k, d\eta_k \}_{k > 0}$ span $Q^+$, and we define $Q^- = \mathop{\textup{Span}} \{ d\xi_k,d\eta_k\}_{k < 0}$ and $ Q^0 = \mathop{\textup{Span}} \{ d\eta_0 \}$ to get a polarisation. Indeed, one can show by using the conditions \eqref{conditions on polarisation} that these forms satisfy
\begin{equation}
	\< d\xi_k, d\xi_l \> = \frac{\delta_{k,-l}}{k} \,, \quad \< d\xi_k, d\eta_l \> = 0 \,, \quad \< d\eta_k, d\eta_l \> = \delta_{k,-l}
\end{equation}
and therefore the projection operator
\begin{equation}
	\mc{P} \colon Q' \to Q' \colon \omega (\su{z}) \mapsto \< \su{B} (\su{z}, \su{z}'), \omega (\su{z}') \>_{\su{z}'}
\end{equation}
has eigenspaces $ Q^+$, $Q^0$, and $Q^-$, with respective eigenvalues $ 0$, $ \frac{1}{2}$, and $ 1$.

\begin{remark}\label{rem:regularised projection}
	Unlike the case without any polarisation discussed in \cref{sec:projection}, for a residueless local meromorphic section $ \omega (\su{z}) = [dz]\sum_{k\in \Z}  u_k z^{k-1} +[d\theta]\sum_{k\in \Z}  \upsilon_k z^k$ of $Q'$, we find
	\begin{equation}
		\< \su{B} (\su{z}_1,\su{z}_2), \omega (\su{z}_2) \>
		=
		\frac{1}{2} d\eta_0 (\su{z}_1) \left( \upsilon_0 + \sum_{l<0} \frac{\phi_{-l,0}u_{l-1}}{l-1}+V_{0,-l} \upsilon_{l} \right)  +\sum_{k < 0} d\xi_k ( \su{z}_1)  u_k+  d\eta_k (\su{z}_1 ) \upsilon_k\,.
		\label{projection with zero mode}
	\end{equation}
	The first term comes with extra coefficients because in the expansion \eqref{polarised expansion}, $d\eta_0(\su{z}_2)$ contains positive powers of $z_2$ which couples with negative power terms of $\omega(\su{z}_2)$. This motivates us to consider the following regularised projection to $Q^-$:
  	\begin{equation}
  		\begin{split}
    		\su{B}^{\textup{reg}}(\su{z}_1,\su{z}_2)
    		&\coloneqq 
    		\su{B} (\su{z}_1,\su{z}_2)-\frac12d\eta_0(\su{z}_1)d\eta_0(\su{z}_2),
    		\\
  			\<\su{B}^{\textup{reg}}(\su{z}_1,\su{z}_2), \omega (\su{z}_2) \>
			&=
			\sum_{k < 0} d\xi_k ( \su{z}_1)  u_k+  d\eta_k (\su{z}_1 ) \upsilon_k\,
		\end{split}
		\label{projection without zero mode}
	\end{equation}
	We will encounter another role of $ \su{B}^{\textup{reg}}$ in \cref{sec:regularised projection}.
\end{remark}

With these definitions and \cref{BOSuperAiry}, we have a super Airy structure, and therefore a partition function. We will translate this partition function back into principal parts of symmetric multidifferentials $ \omega_{g,n}$. To do this, we first need to translate the Airy structure into more geometric terms. We define differentials $J$ and $\mc{J}$ as
\begin{equation}
	J (\su{z} )\coloneqq \sum_{k \in \Z} [dz] \frac{J_k}{z^{k+1}} + [d\theta] \frac{\Gamma_k}{z^k} = \sum_{k \in \Z} \frac{[d\theta]}{z^k} \Big( \theta J_k  + \Gamma_k \Big),\quad \mc{J} (\su{z}) \coloneqq \sum_{k \in \Z} d\xi_{-k} (\su{z}) J_k + d\eta_{-k} (\su{z})\Gamma_k.
\end{equation}
Note that $J$ corresponds to a pair of free boson and free fermion (in the context of the $\mc{N}=1$ super Virasoro vertex operator algebra) in the monomial basis. On the other hand, one can show that $\mc{J}$ is related to $J$ by conjugation by the polarisation as below:
 \begin{lemma}\label{lem:DilatonPolarisationCurrent}
	The dilaton shift and the change of polarisation act on the differential $J$ by
	\begin{align}
		T J (\su{z}) T^{-1} 
		&= 
		J (\su{z}) + \omega_{0,1} (\su{z}) \,,
		\\
		\Phi T J (\su{z}) T^{-1} \Phi^{-1}
		&=
		\mc{J}(\su{z}) + \omega_{0,1} (\su{z}) \,.
	\end{align}
\end{lemma}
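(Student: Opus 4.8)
The plan is to compute both conjugations directly on the mode expansion of $J(\su{z})$, using that $T$ and $\Phi$ are exponentials of explicit quadratic-or-linear elements in the Heisenberg--Clifford algebra, so that the adjoint actions $\ad_{\log T}$ and $\ad_{\log \Phi}$ are at most first-order (they produce $c$-numbers resp. linear combinations of modes, and then terminate). First I would treat the dilaton shift. Writing $\log T = \hslash^{-2}\sum_k \big( \tfrac{t_k}{k} J_k - \tau_k \Gamma_k \big)$ and using the commutators \eqref{Heisenberg-Clifford}, one gets $[\log T, J_\ell] = -t_{-\ell}$ (a scalar, from $[\tfrac{t_k}{k}J_k, J_\ell] = \hslash^{-2}\cdot\hslash^2\,t_\ell\,\ell/\ell\cdot\delta_{k+\ell}$ up to signs) and similarly $[\log T, \Gamma_\ell] = -\tau_{-\ell}$, after which all higher adjoints vanish. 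Hence $T J_\ell T^{-1} = J_\ell - t_{-\ell}$ and $T \Gamma_\ell T^{-1} = \Gamma_\ell - \tau_{-\ell}$ (with the precise signs read off from \eqref{HC modes}); substituting into $J(\su{z}) = \sum_k \tfrac{[d\theta]}{z^k}(\theta J_k + \Gamma_k)$ and reindexing, the scalar corrections assemble into exactly $\sum_k [dz]\,t_k z^{k-1} + [d\theta]\,\tau_k z^k = \omega_{0,1}(\su{z})$ by \cref{DefOmega}. This is the first displayed identity.

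Next I would apply $\Phi$ to the already-shifted current. Since $\Phi T J T^{-1}\Phi^{-1} = \Phi(J + \omega_{0,1})\Phi^{-1}$ and $\omega_{0,1}$ is a $c$-number form commuting with everything, this equals $\Phi J \Phi^{-1} + \omega_{0,1}$, so it suffices to show $\Phi J(\su{z})\Phi^{-1} = \mc{J}(\su{z})$. With $\log\Phi = \tfrac{1}{2\hslash^2}\sum_{k,l\geq0}\big(\tfrac{B_{kl}}{kl}J_kJ_l - \tfrac{2\phi_{kl}}{k}J_k\Gamma_l + V_{kl}\Gamma_k\Gamma_l\big)$, the adjoint $\ad_{\log\Phi}$ acting on a single mode $J_\ell$ or $\Gamma_\ell$ (both positive and negative) produces a linear combination of modes — schematically $[\log\Phi, J_\ell]$ is a linear combination of the $J_k,\Gamma_k$ with coefficients built from $B,\phi,V$ — and $\ad^2_{\log\Phi}$ vanishes because $\log\Phi$ is quadratic in modes that pairwise commute or anticommute to scalars in a way that kills the second bracket (this is the standard fact that conjugating linear-in-modes operators by a Gaussian is again linear). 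Therefore $\Phi J(\su{z})\Phi^{-1}$ is obtained from $J(\su{z})$ by the linear substitution $J_k \mapsto J_k + (\text{corrections}), \Gamma_k \mapsto \Gamma_k + (\text{corrections})$, and the bookkeeping step is to check that, after collecting the coefficients of each $J_k$ and $\Gamma_k$, the resulting form is precisely $\sum_k d\xi_{-k}(\su{z}) J_k + d\eta_{-k}(\su{z})\Gamma_k = \mc{J}(\su{z})$ with $d\xi_{-k}, d\eta_{-k}$ as defined right before the lemma. Concretely: the positive modes $J_k,\Gamma_k$ ($k>0$) are unchanged by $\Phi$ (they commute past $\log\Phi$ only through scalars that cancel, or rather $\ad_{\log\Phi}$ lands on negative-index modes which appear in $J(\su z)$ paired with $z^{k}$ for $k<0$), giving the $d\xi_k,d\eta_k$ ($k>0$) pieces; the negative modes pick up the holomorphic tails $\sum_{l\geq0}(\ldots)z^l$ which is exactly the content of the displayed formulas for $d\xi_{-k},d\eta_{-k},d\eta_0$, including the characteristic $\tfrac{2B_{kl}+\phi_{l0}\phi_{k0}}{2k}$-type coefficients that arise from the non-commutativity of $\log\Phi$ with itself (the $\phi\phi$ and $\phi V$ cross-terms come from the single nested commutator, not a double one).

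The main obstacle I anticipate is purely combinatorial rather than conceptual: getting all the signs and the factors of $\tfrac12$ attached to the zero modes right. The zero mode $\Gamma_0 = \hslash(\tfrac{\theta^0}{2} + \del_{\theta^0})$ behaves asymmetrically (it satisfies $[\Gamma_0,\Gamma_0] = \hslash^2$, not $0$), and correspondingly $d\eta_0$ and the $\delta_{k0}/2$, $\delta_{l0}/2$ insertions in \eqref{DefSuperB} are there precisely to compensate; one must verify that the conjugation reproduces the coefficient $V_{0l}$ in $d\eta_0$ and $\tfrac{2V_{kl}+V_{k0}V_{0l}}{2}$ in $d\eta_{-k}$ with the $\tfrac12$'s correctly placed. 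I would handle this by computing $\ad_{\log\Phi}$ on $J_\ell$ and $\Gamma_\ell$ separately for $\ell>0$, $\ell<0$, and $\ell=0$, being careful that in $J(\su{z})$ the coefficient of $J_k$ is $[dz]z^{-k-1}$ and of $\Gamma_k$ is $[d\theta]z^{-k}$, then matching term by term against \eqref{polarised expansion}'s building blocks. Everything else — finiteness of the sums, termination of the adjoint series, the fact that $\omega_{0,1}$ drops through $\Phi$ — is immediate from the structure already set up, so the lemma reduces to this one careful but routine expansion.
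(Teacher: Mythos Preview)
Your treatment of the dilaton shift $T$ is fine and matches the paper. The gap is in the $\Phi$-conjugation: you assert that $\ad_{\log\Phi}^2$ vanishes on single modes, invoking the ``standard fact'' about Gaussians. That is false here, precisely because of the zero mode $\Gamma_0$. When $\ad_{\log\Phi}$ hits $J_{-k}$ (for $k>0$), one of the terms produced is $\phi_{k0}\Gamma_0$; but $\Gamma_0$ is not purely an annihilator --- it satisfies $[\Gamma_0,\Gamma_0]=\hslash^2$ --- so a second application of $\ad_{\log\Phi}$ on it is nonzero, yielding $\phi_{k0}\sum_{l>0}\big(\tfrac{\phi_{l0}}{l}J_l + V_{l0}\Gamma_l\big)$. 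The same happens for $\Gamma_{-k}$ via $V_{0k}\Gamma_0$. Only $\ad_{\log\Phi}^3$ vanishes. The paper computes exactly these second adjoints and then uses $\Phi J\Phi^{-1} = (\Id + \ad_{\log\Phi} + \tfrac12\ad_{\log\Phi}^2)J$.

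This matters for the answer: the quadratic cross-terms $\tfrac{\phi_{l0}\phi_{k0}}{2k}$, $\tfrac{V_{0l}\phi_{k0}}{2k}$, $\tfrac{V_{k0}\phi_{l0}}{2}$, $\tfrac{V_{k0}V_{0l}}{2}$ appearing in $d\xi_{-k}$ and $d\eta_{-k}$ come \emph{exactly} from $\tfrac12\ad_{\log\Phi}^2$, not from a single commutator as you suggest. A single $\ad_{\log\Phi}$ gives coefficients linear in $B,\phi,V$; anything quadratic in these parameters must come from the double bracket. Your later paragraph correctly flags the zero mode as the delicate point, but your stated mechanism for the cross-terms is inconsistent with your claim that $\ad^2=0$, and following the latter would cause you to miss those terms outright.
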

\begin{proof}
	The conjugation by $ T$ shifts $ J_{-k} $ by $ t_k$ and $ \Gamma_{-k}$ by $ \tau_k$, and these shifts combine exactly into $ \omega_{0,1} (\su{z})$.\par
	For the conjugation by $ \Phi$, we see that 
	\begin{align}
		\ad_{\log \Phi} J_{-k} 
		&= 
		\mb{1}_{k>0} \sum_{l \geq 0} \frac{B_{kl}}{l} J_l + \phi_{kl} \Gamma_l \,,
		&
		\ad_{\log \Phi} \Gamma_{-k}
		&=
		\mb{1}_{k\geq 0} \sum_{l \geq 0} \frac{\phi_{lk}}{l} J_l + V_{lk} \Gamma_l \,,
		\\
		\ad_{\log \Phi}^2 J_{-k}
		&=
		\mb{1}_{k>0} \phi_{k0} \sum_{l > 0 } \frac{\phi_{l0}}{l} J_l + V_{l0} \Gamma_l \,,
		&
		\ad_{\log \Phi}^2 \Gamma_{-k}
		&=
		\mb{1}_{k>0} V_{0k} \sum_{l > 0} \frac{\phi_{l0}}{l} J_l + V_{l0} \Gamma_l \,,
		\\
		\ad_{\log \Phi}^n J_{-k} &= \ad_{\log \Phi}^n \Gamma_{-k} = 0 \,, \qquad n > 2 \,.
	\end{align}
	The quadratic adjunction being non-zero is a consequence of the zero mode $\Gamma_0$. So we get
	\begin{equation}
		\begin{split}
			&\Phi J (\su{z}) \Phi^{-1} \\
			&=
			\Big( \Id + \ad_{\log \Phi} + \frac{1}{2} \ad_{\log \Phi}^2 ) J (\su{z})
			\\
			&=
			\sum_{ k \in \Z} [dz] \frac{J_k}{z^{k+1}} + [d\theta] \frac{\Gamma_k}{z^k} 
			+
			[dz]\sum_{k > 0} \Big( \sum_{l \geq 0} \frac{B_{kl}}{l} J_l + \phi_{kl} \Gamma_l \Big) z^{k-1}  + [d\theta] \sum_{k \geq 0} \Big( \sum_{l \geq 0} \frac{\phi_{lk}}{l} J_l + V_{lk} \Gamma_l \Big) z^k 
			\\
			&\qquad 
	       +\frac{1}{2}[dz] \sum_{k,l >0} \phi_{k0}\Big( \frac{\phi_{l0}}{l} J_l + V_{l0} \Gamma_l \Big)  z^{k-1} +\frac{1}{2}[d\theta] \sum_{k,l >0} V_{0k}\Big( \frac{\phi_{l0}}{l} J_l + V_{l0} \Gamma_l \Big)  z^k
			\\
			&=
			\sum_{k < 0}  [dz] \frac{J_k}{z^{k+1}} + [d\theta] \frac{\Gamma_k}{z^k} 
			+
			\Big( [d\theta ] + \sum_{l>0} [dz] \phi_{l0} z^{l-1} + [d\theta]V_{0l} z^l \Big)\Gamma_0 
			\\
			& \qquad +
			\sum_{k > 0}  \Big( \frac{[dz]}{z^{k+1}} + \sum_{l >0} [dz] \frac{B_{lk}}{k} z^{l-1} + \sum_{l \geq 0} \big( [d\theta]\frac{\phi_{kl}}{k} z^l  + [dz]\frac{\phi_{l0} \phi_{k0}}{2k} z^{l-1} + [d\theta] \frac{V_{0l}\phi_{k0}}{2k} z^l \big) \Big)J_k
			\\
			&\qquad +
			\sum_{k > 0} \Big( \frac{[d\theta]}{z^k} + \sum_{l > 0}  [dz] \phi_{lk} z^{l-1} + \sum_{l \geq 0} \big( [d\theta]V_{kl} z^l  +  [dz]\frac{V_{k0}\phi_{l0}}{2}  z^{l-1}  +  [d\theta]\frac{V_{k0}V_{0l}}{2} z^l \big) \Big) \Gamma_k
			\\
			&=
			\sum_{k < 0}  [dz] \frac{J_k}{z^{k+1}} + [d\theta] \frac{\Gamma_k}{z^k} 
			+
			 \Big( [d\theta ] + \sum_{l>0} [dz] \phi_{l0} z^{l-1} + [d\theta]V_{0l} z^l \Big)\Gamma_0
			\\
			& \qquad +
			\sum_{k > 0} \Big( \frac{[dz]}{z^{k+1}} + \sum_{l \geq 0}\big( [dz]\frac{2B_{lk} + \phi_{l0}\phi_{k0}}{2k} z^{l-1}  + [d\theta]\frac{2\phi_{kl} + V_{0l}\phi_{k0}}{2k} z^l \big) \Big)J_k
			\\
			&\qquad +
			\sum_{k > 0}  \Big( \frac{[d\theta]}{z^k} + \sum_{l \geq 0} \big( [dz] \frac{2\phi_{lk} + V_{k0}\phi_{l0}}{2} z^{l-1}  +[d\theta] \frac{2V_{kl} + V_{k0}V_{0l}}{2} z^l \big) \Big)\Gamma_k
			\\
			&=
			\sum_{k > 0} \Big( d\xi_k (\su{z}) J_{-k}  +  d\eta_k (\su{z}) \Gamma_{-k} \Big)+  d\eta_0 (\su{z}) \Gamma_0+
			\sum_{k >0} \Big(  d\xi_{-k} (\su{z})J_k+  d\eta_{-k} (\su{z})\Gamma_k \Big) \,.
		\end{split}
	\end{equation}
	Note that $ \Phi$ does not act directly on forms, and hence does not affect $ \omega_{0,1}$.
\end{proof}

We then define the \emph{Hamiltonians} $\su{H}^i$ as:
\begin{align}
	\su{H}^1 (\su{x} ) 
	&\coloneqq 
	[d\phi]\sum_{j \in \Z} \phi \frac{\Phi T J_{2j} T^{-1} \Phi^{-1}}{x^{j+1}} + \frac{\Phi T \Gamma_{2j+1} T^{-1} \Phi^{-1} }{x^{j+1}}
	\\
	\su{H}^2 (\su{x})
	&\coloneqq
	[d\phi]^3\sum_{j \in \Z} \Big(  \frac{ \phi \tilde{L}_{2j}}{4x^{j+2}} -  \frac{\tilde{G}_{2j+1}}{2\sqrt{2} x^{j+2}} \Big)
\end{align}

The quadratic Hamiltonian $\su{H}^2$ is closely related to the twisted super quadratic Casimir operator $\mc{C}^\sigma$ in \cref{TwistedSuperQuadraticCasimir}. More concretely, we have the following:

\begin{lemma}\label{Hamiltonians}
	Let $\sigma \colon \su{z} = (z \,| \,\theta)\mapsto(-z \,| -\theta)$. Then, the Hamiltonians $\su{H}^i$ can be written as follows:
	\begin{align}
		\begin{split}
			2\su{H}^1 (\su{x})
			&=
			\mc{J} (\su{z} ) + \mc{J} (- \su{z}) + \hslash \omega_{0,1} (\su{z}) + \hslash \omega_{0,1} (-\su{z})\,; \label{LinearHamiltonian}
		\end{split}
		\\
		\begin{split}
			2\su{H}^2 (\su{x})
			&=-\mc{C}^{\sigma}(\hslash^2\su{B}^{\textup{R}}\;+:(\mc{J}+\omega_{0,1})\otimes(\mc{J}+\omega_{0,1}):)(\su{z})\label{quadraticHamiltonian}
		\end{split}
	\end{align}
\end{lemma}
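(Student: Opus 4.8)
The two identities will be handled separately. Identity \eqref{LinearHamiltonian} is essentially a change of variables, while \eqref{quadraticHamiltonian} will be reduced to its un-conjugated analogue, which is a direct expansion of the twisted super quadratic Casimir of \cref{TwistedSuperQuadraticCasimir}.

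\emph{The linear Hamiltonian.} By \cref{lem:DilatonPolarisationCurrent}, conjugating the bare current $J(\su{z}) = \sum_{k\in\Z}\frac{[d\theta]}{z^k}(\theta J_k + \Gamma_k)$ by $\Phi T$ produces $\mc{J}(\su{z})$ up to the $\omega_{0,1}$-shift, and this conjugation is linear in the modes, hence commutes with the symmetrisation $J(\su{z})\mapsto J(\su{z}) + J(\sigma(\su{z}))$ for $\sigma\colon \su{z}\mapsto(-z\,|-\theta)$. A one-line computation with $\sigma$ shows that in $J(\su{z}) + J(\sigma(\su{z}))$ the bosonic terms survive only at even index and the fermionic ones only at odd index, leaving exactly the modes $J_{2j}, \Gamma_{2j+1}$ that appear in $\su{H}^1$. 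Matching coefficients is then elementary: in the local model $\su{x} = (x\,|\,\phi) = (z^2\,|\,\sqrt2\, z\theta)$ of \cref{prop:NS-R SCmap}, the Ramond relation $[dz] = z[d\theta]\theta$ gives $[d\phi] = \sqrt2\, z[d\theta]$, and together with $x^{j+1} = z^{2j+2}$, $\frac{\phi}{x^{j+1}} = \frac{\sqrt2\,\theta}{z^{2j+1}}$ this shows that the numerical factors in the definition of $\su{H}^1$ are precisely those needed for the two sides of \eqref{LinearHamiltonian} to agree.

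\emph{The quadratic Hamiltonian: bare version.} The first step is the identity
\begin{equation*}
	[d\phi]^3\sum_{j\in\Z}\Bigl(\frac{\phi\, L_{2j}}{2\, x^{j+2}} - \frac{G_{2j+1}}{\sqrt2\, x^{j+2}}\Bigr)
	=
	-\,\mc{C}^{\sigma}\bigl(\hslash^2\su{B}^{\textup{R}} + \normord{J\otimes J}\bigr)(\su{z})
\end{equation*}
with the untwisted super Virasoro modes and the bare current. This is proved by direct expansion: writing $J(\su{z}_1)\otimes J(\su{z}_2) = [d\theta_1][d\theta_2]\, g(\su{z}_1,\su{z}_2)$ with $g = \sum_{k,l}z_1^{-k}z_2^{-l}(\theta_1 J_k + \Gamma_k)(\theta_2 J_l + \Gamma_l)$, one applies $D^*_{\theta_1}$, restricts $g$ to the loci $\su{z}_1 = \sigma(\su{z}_2) = \su{z}$ and $\sigma(\su{z}_1) = \su{z}_2 = \su{z}$, and subtracts as in \cref{TwistedSuperQuadraticCasimir}. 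Reading off the operator-valued coefficients of $z^m$ and $\theta z^m$ in the resulting Laurent series, the $\theta$-free part assembles into the $G_{2n+1}$ (the ``supercurrent'' $C_1$ in the sense of \eqref{SuperQuadraticCasimirInCoordinates}) and the $\theta$-linear part into the $L_{2n}$ (the ``energy--momentum'' $C_0$); the extra $\hslash^2\su{B}^{\textup{R}}$ supplies exactly the scalar corrections that upgrade $\normord{\,}$ to the full operator product, and in particular one computes directly $\mc{C}^{\sigma}(\su{B}^{\textup{R}})(\su{z}) = -\tfrac{\theta}{2}[d\theta]^3$, which under the change of variables above reproduces the central term $\tfrac{\hslash^2}{4}\delta_{n,0}$ of $L_0$.

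\emph{The quadratic Hamiltonian: conjugation, and the main obstacle.} Conjugating the bare identity by $\Phi T$ replaces $L_{2n}, G_{2n+1}$ by $\tilde{L}_{2n}, \tilde{G}_{2n+1}$, hence the left side by $2\su{H}^2(\su{x})$; on the right, $\mc{C}^{\sigma}$ and $\su{B}^{\textup{R}}$ are differential-geometric data unaffected by conjugation, and $\Phi T J(\su{z})T^{-1}\Phi^{-1} = \mc{J}(\su{z}) + \omega_{0,1}(\su{z})$ by \cref{lem:DilatonPolarisationCurrent}, so the automorphism carries $J\otimes J$ to $(\mc{J} + \omega_{0,1})\otimes(\mc{J} + \omega_{0,1})$. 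The delicate point --- which I expect to be the main obstacle --- is that conjugation by $\Phi T$ does \emph{not} commute with normal ordering: it moves the reference vacuum, so $\normord{J\otimes J}$ is carried not to $\normord{(\mc{J}+\omega_{0,1})\otimes(\mc{J}+\omega_{0,1})}$ but to that plus a scalar bidifferential, namely the difference of the current two-point functions with respect to the two vacua. One must verify that, once fed through $\mc{C}^{\sigma}$, this discrepancy recombines so that the residual scalar in \eqref{quadraticHamiltonian} is exactly $\hslash^2\su{B}^{\textup{R}}$ and not the full $\su{B}$; this is where one uses that the polarisation $\Phi$ is constructed precisely so that the polarisation-dependent part of $\su{B}$ coincides with the normal-ordering anomaly, together with a careful separate treatment of the self-conjugate zero mode $\Gamma_0$ (with $[\Gamma_0,\Gamma_0] = \hslash^2$), which is responsible both for the $\tfrac12 d\eta_0\otimes d\eta_0$ term in the polarised expansion of $\su{B}$ and for the regularisation in \cref{rem:regularised projection}. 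Everything else is a long but mechanical expansion.
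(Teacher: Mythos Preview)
Your strategy coincides with the paper's. For $\su{H}^1$ the paper does exactly what you describe: symmetrise $J$ under $\sigma$, conjugate by $\Phi T$, and invoke \cref{lem:DilatonPolarisationCurrent}. For $\su{H}^2$ the paper also first establishes the bare identity (the paper computes $\mc{C}^\sigma(\su{B}^{\textup{R}})$ to get the central term $-\tfrac{\hslash^2}{2}\delta_{n,0}$ and then $\mc{C}^\sigma(\normord{J\otimes J})$ by the direct mode expansion you sketch) and then transports by $\Phi T$, using that $\mc{C}^\sigma$ is a purely geometric operation on coordinates while conjugation acts on the modes, so the two commute.

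The one place you diverge is in what you flag as the ``main obstacle''. The paper does \emph{not} treat the interaction of conjugation with normal ordering as an issue at all: it simply observes that conjugation commutes with $\mc{C}^\sigma$ and declares that it therefore suffices to check the identity at $T=\Phi=1$. In other words, the paper implicitly reads $\normord{(\mc{J}+\omega_{0,1})\otimes(\mc{J}+\omega_{0,1})}$ as the $\Phi T$-conjugate of $\normord{J\otimes J}$, so the transport is tautological and there is no anomaly to cancel. Your more careful analysis---that the discrepancy between the two possible meanings of the normal-ordered product is exactly the polarisation part $\su{B}-\su{B}^{\textup{R}}$, with the zero mode requiring separate care---is correct and more scrupulous than the paper's treatment, but it is not needed for the argument as the paper frames it. If you adopt the paper's reading, the step you call delicate disappears; if you insist on the mode-by-mode normal ordering of $\mc{J}+\omega_{0,1}$, then your proposed resolution (matching the anomaly to the regular part of $\su{B}$) is the right way to close the gap, and the paper is silent on it.
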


\begin{proof}
	The linear Hamiltonian $ \su{H}^1 $ can be obtained by taking only the $\sigma$-invariant part of $ J$ -- achieved by symmetrising -- and then conjugating by $ \Phi T$. The statement immediately follows from \cref{lem:DilatonPolarisationCurrent}.\par
	For the quadratic Hamiltonian $\su{H}^2$, notice that the conjugation by $ \Phi$ and $T$ commutes with the operation of the twisted super quadratic Casimir operator $\mc{C}^\sigma$. This is because the twisted super quadratic Casimir operator $\mc{C}^\sigma$ is a geometric operation in terms of coordinates $\su{z}$ while the conjugation is algebraic in terms of the modes $J_k,\Gamma_k$, and they are independent of each other. Therefore, it is sufficient to prove \eqref{quadraticHamiltonian} when $T=\Phi=1$.

The term coming from $ \su{B}^\textup{R}$ gives the central term $ -\frac{\hslash^2}{2} \delta_{n,0} $ in $ -2L_{2n}$ as below:
	\begin{equation}
		\begin{split}
			\hslash^2\mc{C}^\sigma (\su{B}^\textup{R})(\su{z})
			&=
			\frac12\hslash^2 [d\theta]^3 D_{\theta_1}^* \Big( \frac{\theta_2 \theta_1 z_1 z_2}{(z_1-z_2)^2 } - \frac{z_1+z_2}{2(z_1-z_2)} \Big) \Big|_{\su{z}_1 = - \su{z}_2=\su{z}}+(\su{z}\leftrightarrow\sigma(\su{z}))
			\\
			&=
			\frac12\hslash^2 [d\theta]^3 \Big(-\frac{\theta_2 z_1 z_2}{(z_1-z)^2 } +\frac{\theta_1 z_1 z_2}{(z_1-z_2)^2} \Big) \Big|_{\su{z}_1 = - \su{z}_2=\su{z}}+(\su{z}\leftrightarrow\sigma(\su{z}))
			\\
			&=- \hslash^2 \frac{[d\theta]^3 \theta}{2}=-\frac12\hslash^2[d\phi]^3\frac{\phi}{4x^2},
		\end{split}
	\end{equation}
	where we used $ [d\theta ] = \frac{[d\phi]}{\sqrt{2} z}$. The computations for other terms closely follow a combination of \cite[Proposition 4.14]{BCHORS20} and \cite[Theorem 4.4]{BO21}:
\begin{equation}
		\begin{split}
			\mc{C}^\sigma( \normord{J\otimes J})(\su{z})
			&=\frac12[d\theta]^3
			\normord{ \Big( D_\theta^*\Big( \sum_{l \in \Z} \theta \frac{J_l}{(z)^l} + \frac{\Gamma_l }{(z)^l} \Big) \sum_{k \in \Z}  \frac{-\theta J_k}{(-z)^k} + \frac{[\Gamma_k]}{(-z)^k} \Big)  }+(\su{z}\leftrightarrow\sigma(\su{z}))
			\\
			&=\frac12[d\theta]^3
			\normord{ \Big( \Big( \sum_{l \in \Z} \frac{J_l}{(z)^l} -\theta \frac{l\Gamma_l }{(z)^l} \Big) \sum_{k \in \Z}  \frac{-\theta J_k}{(-z)^k} + \frac{[\Gamma_k]}{(-z)^k} \Big)  }+(\su{z}\leftrightarrow\sigma(\su{z}))
			\\
			&=
			\frac12[d\theta]^3\sum_{k,l \in \Z}  \frac{\theta (-1)^k}{z^{k+l}} \normord{ -J_l J_k - l \Gamma_l \Gamma_k  }+ \frac{(-1)^k}{z^{k+l}}\normord{ J_l \Gamma_k } +(\su{z}\leftrightarrow\sigma(\su{z}))
			\\
			&=[d\theta]^3\sum_{n,k\in\mb{Z}}  \frac{\theta (-1)^k}{z^{2n}} \normord{ -J_{2n-k} J_k - (2n-k) \Gamma_{2n-k} \Gamma_k }+ \frac{(-1)^{k-1}}{z^{2n+1}}\normord{ J_{k} \Gamma_{2n+1-k} } \,,
		\end{split}
	\end{equation}

 	which after addition of the central term becomes
	\begin{equation}
		-2[d\theta]^3 \sum_{n\in \Z} \frac{ \theta L_{2n} }{z^{2n}} - \frac{G_{2n+1}}{z^{2n+1}} 
		= 
		-2[d\phi]^3 \sum_{n\in \Z} \frac{\phi L_{2n}}{4x^{n+2}} - \frac{G_{2n+1}}{2\sqrt{2}x^{n+2}} \,.
	\end{equation}
	This proves \cref{Hamiltonians}.
\end{proof}

Now we want to see what this means to the partition function of the Airy structure , which we write as
\begin{equation}
	Z = \exp \Big( \sum_{\substack{g,n \geq 0\\ n >0 \\ 2g - 2 + n \geq 0}} \frac{\hslash^{2g-2+n}}{n!} \sum_{i_1, \dotsc i_n \in {A}}  \prod_{k=1}^n x^{i_k}F_{g,n }[i_1, \dotsc, i_n ] \Big)\,,
\end{equation}
where $A= \{ B, F\} \times \N^* \cup \{ (F,0)\}$. Note that the parity of coefficients $F_{g,n}$ can be odd while the parity of $Z$ is even. As a consequence, the order of variables $x^i$ and coefficients $F_{g,n}$ matters due to possible signs, and our results below hold in this specific ordering.

This partition function is annihilated by $ \{J_{2j}, \Gamma_{2j-1}, \tilde{L}_{2j -s -1}, \tilde{G}_{2j-s}\}$ for all $j \geq1$. By \cref{LinearHamiltonian,LOLE}, $ H^1(\su{x} ) = \frac{1}{2} ( \mc{J} (\su{z}) + \mc{J}(-\su{z}) ) + \mc{O}([d \phi])$, and taking into account that $ J_0 = 0$, we get that

\begin{equation}\label{Gvanishing}
	G^i (\su{x}) \coloneqq Z^{-1} \su{H}^i( \su{x} )Z 
	=
	[d\phi]^{2i-1} \mc{O} \big( x^{\frac{(s-3)(i-1)}{2}}\big)
\end{equation}
We decompose $ G^i$ in homogeneous terms with respect to the $\hslash$-degree and the polynomial degree in $ x$ and $ \theta$, writing

\begin{equation}
	G^i (\su{x}) \eqqcolon \sum_{g,n} \frac{\hslash^{2g-2+n}}{n!} G^i_{g,n+1} (\su{x})\,,
\end{equation}
and define

\begin{equation}\label{def of Egn}
	\mc{E}^i_{g,n} (\su{x}; \su{z}_{[n]}) \coloneqq \prod_{j=1}^n \ad_{\hslash^{-1} \mc{J}(\su{z}_j)} G^i_{g,n} (\su{x})
\end{equation}
Furthermore, for $ k \in \Z$, let us define
\begin{equation}
	d\xi_{(B,k)} \coloneqq d\xi_k \,, \qquad d\xi_{(F,k)} \coloneqq d\eta_k \,.
\end{equation}
So $\{d\xi_k\}_{k\in A} = \{d\xi_k \}_{k\in \mb{Z}^*} \cup \{d\eta_k \}_{k\in \mb{Z}}$. For brevity, we denote by $A^{\pm} = \{ (B,\pm k), (F,\pm k) \mid k > 0 \}$ and $\hat A=\{(F,0)\}$, so that $A^+\sqcup A^-\sqcup\hat A=A$.
\begin{definition}
	Let $\su{S}$ be a local super spectral curve, and let  $Z$ be the partition function of the  associated partial Airy structure as in \cref{BOSuperAiry}. Then, we define the \emph{semi-stable correlators} as
	\begin{align}
		\omega_{0,2}(\su{z}_1, \su{z}_2) 
		&\coloneqq
		\su{B} (\su{z}_1, \su{z}_2) + \sum_{i, j \in {A^-}\sqcup \hat A} d\xi_{-i}(\su{z}_1)d\xi_{-j}(\su{z}_2)  F_{0,2}[i,j ]  \,;
		\\
		\omega_{g,n}(\su{z}_{[n]}) 
		&\coloneqq 
		\sum_{i_1, \dotsc, i_n \in {A^-\sqcup \hat A}}  \prod_{k=1}^n d\xi_{-i_k} (\su{z}_k) F_{g,n} [i_1, \dotsc, i_{n}  ]
		\label{def:omegagn}
	\end{align}
\end{definition}

We note that, unlike in the ordinary case, $\omega_{0,2}$ is not only the bidifferential $\su{B}$, but it admits singular terms at ramification points.\par
Then, we arrive at one of the main results:

\begin{proposition}\label{prop:SLE}
	The $\mc{E}^i_{g,n} $ can be expressed as
		\begin{align}
		\mc{E}^1_{g,n} (\su{x}; \su{z}_{[n]})
		&=
		\omega_{g,n+1}(\su{z}, \su{z}_{[n]}) + \omega_{g,n+1} (-\su{z}, \su{z}_{[n]})
		\\
		\mc{E}^2_{g,n} (\su{x}; \su{z}_{[n]})
		&=
		\mc{C}^\sigma( \omega_{g-1,n+2})( \su{z}; \su{z}_{[n] })+\mc{C}^\sigma( \sum_{\substack{g_1 + g_2 = g \\ I \sqcup J = [n]}} \omega_{g_1, |I|+1} \otimes \omega_{g_2, |J|+1})(\su{z}; \su{z}_I; \su{z}_J).\label{E and C}
	\end{align}
	Furthermore, they satisfy \emph{abstract loop equations}
	\begin{equation}
		\mc{E}^i_{g,n} (\su{x}; \su{z}_{[n]})
		=
		[d\phi]^{2i-1} \mc{O} \big( x^{\frac{(s-3)(i-1)}{2}}\big)\,.
	\end{equation}
\end{proposition}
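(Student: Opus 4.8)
The plan is to transcribe the non-super derivation of loop equations from (partial) Airy structures, in the style of \cite[Sections~4~\&~5]{BKS23}, into the super setting, reducing everything to the standard dictionary relating the partition function $Z$ to its correlators, and then feeding in the identifications of the Hamiltonians $\su{H}^i$ with (twisted) super quadratic Casimirs of the polarised current established in \cref{Hamiltonians}. Only three features are genuinely new compared to the bosonic case: the Clifford zero mode $\Gamma_0$, which is responsible for the $\frac12 d\eta_0\,d\eta_0$ term in the polarised expansion \cref{polarised expansion} of $\su{B}$ and is consistent with the convention $\normord{\Gamma_0^2}=0$; the central term $\hslash^2\su{B}^{\textup{R}}$ in \cref{quadraticHamiltonian}; and the ordering of the odd linear coordinates $x^i$ against the possibly-odd coefficients $F_{g,n}$, which must be kept consistent throughout.

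First I would establish the dictionary. Writing $\mc{W}(\su{z})\coloneqq Z^{-1}\mc{J}(\su{z})Z$ for the conjugated current, the standard argument — an induction on the $\hslash$-degree — shows that the connected correlators built from $\mc{W}$ by repeated application of $\ad_{\hslash^{-1}\mc{J}(\su{z}_j)}$ and extraction of the appropriate homogeneous components (correct $\hslash$-degree, polynomial degree $0$ in the linear coordinates) reproduce exactly the $\omega_{g,n}$ of \cref{def:omegagn} for $2g-2+n\geq 1$; the unstable pieces are put in by hand, with $\omega_{0,1}$ appearing as the leading c-number content of $\mc{W}$ (cf.\ \cref{lem:DilatonPolarisationCurrent}), and the scalar part of $\ad_{\hslash^{-1}\mc{J}(\su{z}_1)}\mc{W}(\su{z}_2)$ — produced by the Heisenberg--Clifford relations \cref{Heisenberg-Clifford} via the pairing identities underlying the polarised expansion \cref{polarised expansion} of $\su{B}$ — together with the $F_{0,2}$-contribution from the degree-two part of $\log Z$, giving $\omega_{0,2}$ as defined. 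This step is routine apart from the zero-mode bookkeeping.

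Given this dictionary, the linear case is immediate: by \cref{LinearHamiltonian}, $2\su{H}^1(\su{x})$ is $\mc{J}(\su{z})+\mc{J}(-\su{z})$ up to the central $\omega_{0,1}$-terms, with $\su{x}=(z^2\,|\,\sqrt2\,z\theta)$, hence $2G^1(\su{x})$ is $\mc{W}(\su{z})+\mc{W}(-\su{z})$ up to the same terms, and applying $\prod_j\ad_{\hslash^{-1}\mc{J}(\su{z}_j)}$ and extracting the relevant component gives $\omega_{g,n+1}(\su{z},\su{z}_{[n]})+\omega_{g,n+1}(-\su{z},\su{z}_{[n]})$ by the dictionary applied to each summand (the $\omega_{0,1}$-terms feeding the $(g,n)=(0,1)$ piece, the current commutators the $(0,2)$ piece). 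For the quadratic case I would use \cref{quadraticHamiltonian}, $2\su{H}^2(\su{x})=-\mc{C}^\sigma\big(\hslash^2\su{B}^{\textup{R}}+\normord{(\mc{J}+\omega_{0,1})\otimes(\mc{J}+\omega_{0,1})}\big)(\su{z})$. Since $\mc{C}^\sigma$ is a geometric operation on the $\su{z}$-variable, it commutes with conjugation by $Z$ and with every $\ad_{\hslash^{-1}\mc{J}(\su{z}_j)}$ (these act in the disjoint variable $\su{z}_j$), so it suffices to compute $Z^{-1}\normord{(\mc{J}+\omega_{0,1})(\su{z}_1)\otimes(\mc{J}+\omega_{0,1})(\su{z}_2)}Z$, expand the conjugated bilinear, distribute the $\ad$-operators by the Leibniz rule, and extract the component of the correct bidegree; the outcome is the disconnected-corrected sum $\omega_{g-1,n+2}(\su{z}_1,\su{z}_2,\su{z}_{[n]})+\sum_{g_1+g_2=g,\, I\sqcup J=[n]}\omega_{g_1,|I|+1}(\su{z}_1,\su{z}_I)\,\omega_{g_2,|J|+1}(\su{z}_2,\su{z}_J)$, with the normal ordering together with the central term $\hslash^2\su{B}^{\textup{R}}$ accounting precisely for self-contractions (the $\omega_{0,2}$ appearing inside $\omega_{g-1,n+2}$ and the pairings internal to a single factor). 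Specialising with $\sigma\colon\su{z}\mapsto-\su{z}$ and applying $\mc{C}^\sigma$ then yields \cref{E and C}.

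Finally, the abstract loop equations require no new input: $\su{H}^i$ is a combination of the partial Airy structure generators $\{J_{2j},\Gamma_{2j-1},\tilde L_{2j-s-1},\tilde G_{2j-s}\}_{j\geq1}$, all of which annihilate $Z$, plus contributions coming only from indices $j\leq 0$, so $G^i(\su{x})=[d\phi]^{2i-1}\mc{O}\big(x^{(s-3)(i-1)/2}\big)$ as already recorded in \cref{Gvanishing}; since each $\ad_{\hslash^{-1}\mc{J}(\su{z}_j)}$ acts only on operator coefficients and leaves the powers of $x$ untouched, the same vanishing order persists for $\mc{E}^i_{g,n}$. I expect the main obstacle to be the combinatorial bookkeeping in the quadratic case: matching the Wick-type expansion of the conjugated normal-ordered bilinear — including the $\Gamma_0$ zero-mode terms and the central $\su{B}^{\textup{R}}$ contribution — against the stable-graph-type sum defining $\mc{E}^2_{g,n}$, and verifying that no spurious $\omega_{0,1}$ or $\omega_{0,2}$ contributions are gained or lost along the way.
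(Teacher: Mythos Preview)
Your proposal is correct and follows essentially the same approach as the paper, which explicitly defers to \cite[Sections~4.3--4]{BKS23} and \cite[Appendix~A.3]{BO21} and sketches exactly the strategy you describe: feed the Hamiltonian identifications of \cref{Hamiltonians} into the vanishing \cref{Gvanishing}, expand in bidegree, and match the quadratic terms against $(D\log Z)^2$ and $D^2\log Z$. Your treatment is in fact more detailed than the paper's own proof, which leaves the computation as an exercise; the three super-specific subtleties you flag (zero mode, central $\su{B}^{\textup{R}}$ term, odd ordering) are precisely the places where care is needed.
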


\begin{proof}
    First, we note that $\mc{C}^\sigma$  in \eqref{E and C} is taken with respect to the first entry of $\omega_{g_1, |I|+1}$ and $\omega_{g_2, |J|+1}$ We also note that although each summand $ \omega_{g_1, |I|+1} \otimes \omega_{g_2, |J|+1}$ in $\mc{E}^2_{g,n} $ is not necessarily a symmetric bidifferential, the sum is and thus the operation of $\mc{C}^\sigma$ is well-defined.	

	The proof is standard, see e.g. \cite[Sections 4.3-4]{BKS23} and \cite[Appendix A.3]{BO21} for more details. In short, we consider the shape of the Hamiltonians from \cref{Hamiltonians}, and write down the vanishing from the Airy structure \cref{Gvanishing} for a particular ($\hslash$ and polynomial) degree. Because $ \su{H}^2$ is quadratic in derivatives, the equation for $ \mc{E}^2$ has terms of the kinds $ (D \log Z)^2 $ and $ D^2 \log Z$. As the proof is based on a tedious computation almost parallel to those in \cite{BO21,BKS23}, we leave it to the readers as an exercise.
\end{proof}

\subsection{Multiple components}\label{sec:multiple components}

We will now consider several ramification loci by considering a local spectral curve which consists of $r$ disks with a unique ramification of $ \su{x}$ on each. Let us label $\mc{R}=\sum_{a=1}^r[p_a]$ and recall from \cref{prop:NS-R SCmap} that we can always find local Ramond coordinates $\su{z}_a$ such that $\su{x}=(z_a^2 \, | \, \sqrt{2}z_a\theta_a)$ for each $a \in [r]$. Recall from \cref{sec:partial SAS} the definition of $ A_r$, which admits a decomposition $A_r = A_r^+ \sqcup A_r^- \sqcup \hat A_r$. All the parameters in $\su{B}$ in \eqref{DefSuperB} admit the extra index $a \in [r]$ for all $k_i\in\mb{Z}$ with constraints \eqref{constraints for multi components} which are encoded in the expansion of $\su{B}(\su{z}_1,\su{z}_2)$ near $z_1 = p_{a}$ and $z_2 = p_{b}$ for $a,b \in [r]$ respectively as
\begin{equation} \label{DefSuperBmulti}
		\begin{split}
		\su{B} ( \su{z}_{a1} , \su{z}_{b2} ) 
		&=
		\delta_{ab}\su{B}^{\textup{R}} (\su{z}_{a1}, \su{z}_{b2}) +
		\sum_{k,l > 0}^\infty[dz_{a1}] [dz_{b2}] B_{kl}^{ab} z_{a1}^{k-1} z_{b2}^{l-1} 
		\\
		&\quad
		+ \sum_{k> 0, l \geq 0}[dz_{a1}] [d\theta_{b2} ] \left( 1- \frac{\delta_{0l}}{2}\right) \phi_{kl}^{ab} z_{a1}^{k-1} z_{b2}^l 
		\\
		&\quad+ 
		\sum_{k \geq 0,l>0} [d\theta_{a1}] [dz_{b2}] \left(1- \frac{\delta_{k0}}{2}\right) \phi_{lk}^{ba}z_{a1}^k z_{b2}^{l-1}
		\\
		&\quad
		+ \sum_{k,l \geq 0} [d\theta_{a1}] [d\theta_{b2}] \left(1-\frac{\delta_{k,0}}{2}-\frac{\delta_{l,0}}{2}\right) V_{kl}^{ab}z_{a1}^k z_{b2}^l  \,,
	\end{split}
\end{equation}

The basis $\{d\xi_i\}_{i\in A}$ is also extended to $\{d\xi_i\}_{i\in A_r}=\{d\xi_{a,k},d\eta_{a,k}\}_{a\in[r],k \in \mathbb{Z}}$ (cf. \cite[Section 5]{BBCCN18}), which is given as below
\begin{align}
	d\xi_{a,k} (\su{z})
	&=
	[dz_a]z_a^{k-1} 
	\\
	d\eta_{a,k} (\su{z})
	&=
	[d\theta_a]z_a^k 
	\\
	d\xi_{a,-k} (\su{z})
	&=
	\frac{[dz_a]}{z_a^{k+1}} + \sum_{b\in[r]}  \sum_{l \geq 0} \Big( [dz_b] \frac{2 B_{kl}^{ab} + \phi_{l0}^{ba}\phi_{k0}^{ab}}{2k} z_b^{l-1}+ [d\theta_b] \frac{2\phi_{kl}^{ab} + V_{0l}^{ab}\phi_{k0}^{ab}}{2k} z_b^l\Big)
	\\
	d\eta_{a,0} (\su{z})
	&=
	[d\theta_a ] + \sum_{b\in[r]} \sum_{l>0} \Big( [dz_b] \phi_{l0}^{ba} z^{l-1}_b + [d\theta_b]V_{0l}^{ab} z^l_b\Big)
	\\
	d\eta_{a,-k} (\su{z})
	&=
	\frac{[d\theta_a]}{z_a^k} + \sum_{b\in[r]} \sum_{l \geq 0} \Big( [dz_b]\frac{2\phi_{lk}^{ba} + V_{k0}^{ab}\phi_{l0}^{ba}}{2} z_b^{l-1} + [d\theta_b]\frac{2V_{kl}^{ab} + V_{k0}^{ab}V_{0l}^{ab}}{2} z_b^l \Big)
\end{align}
With this basis, we define
\begin{align}
	\omega_{g,n}(\su{z}_{[n]}) 
	&\coloneqq
	\sum_{i_1, \dotsc, i_n \in {A_r^-}\sqcup \hat A_r} \prod_{k=1}^n d\xi_{-i_k} (\su{z}_k) F_{g,n} [i_1, \dotsc, i_{n}  ] \label{stable wgn}
	\\
	\omega_{0,2}(\su{z}_1, \su{z}_2) 
	&\coloneqq
	\su{B} (\su{z}_1, \su{z}_2) + \sum_{i, j \in {A_r^-}\sqcup \hat A_r}   d\xi_{-i}(\su{z}_1)d\xi_{-j}(\su{z}_2) F_{0,2}[i,j ]  \,.\label{unstable wgn}
\end{align}
Note that for $z_{a1},z_{b2}$ with $a\neq b$, $\su{B} (\su{z}_1, \su{z}_2)$ is written as a linear combination of $\{d\xi_i\}_{i\in A_r^+}$. So when one would like to consider the $d\eta_0$-dependence in $\omega_{0,2}$ it has to be captured by $F_{0,2}$ -- see \cref{sec:Weber} for a concrete example.

As $\sigma_a$ is only locally defined near $p_a$, formally we assume  it acts trivially on other coordinates, i.e., $\sigma_a \colon (z_b \, | \, \theta_b)=(-1)^{\delta_{a,b}}(z_b \, | \, \theta_b)$. Since the modes $\{J_{a,k},\Gamma_{a,k},L_{a,2k},G_{a,2k+1}\}$ for each $a\in[r]$ gives linear and quadratic loop equations respectively, \cref{prop:SLE} is extended as below. The proof is tedious but parallel to the proof of \cref{prop:SLE}, and hence we omit it.

\begin{theorem}\label{thm:SLE}
	For each $a\in[r]$, let us define\footnote{Although we derived \cref{prop:SLE} from the definition \eqref{def of Egn}, we can equivalently define $\mc{E}^{1,a}_{g,n} $ as in \cref{thm:SLE} and then show that they can be written in terms of the hamiltonian as \eqref{def of Egn}. Indeed, this is how one can prove the theorem.} $\mc{E}^{i,a}_{g,n} $ as
	\begin{align}
		\mc{E}^{1,a}_{g,n} (\su{x}; \su{z}_{[n]})
		&\coloneqq
		\omega_{g,n+1}(\su{z}, \su{z}_{[n]}) + \omega_{g,n+1} (\sigma_a(\su{z}), \su{z}_{[n]})
		\\
		\mc{E}^{2,a}_{g,n} (\su{x}; \su{z}_{[n]})
		&\coloneqq
		\mc{C}^{\sigma_a}( \omega_{g-1,n+2})( \su{z}; \su{z}_{[n] })+ \mc{C}^{\sigma_a}(\sum_{\substack{g_1 + g_2 = g \\ I \sqcup J = [n]}} \omega_{g_1, |I|+1} \otimes \omega_{g_2, |J|+1})(\su{z}; \su{z}_I; \su{z}_J)\label{QuaLE}
		\end{align}
	They satisfy \emph{abstract loop equations}
	\begin{equation}
		\mc{E}^{i,a}_{g,n} (\su{x}; \su{z}_{[n]}) 
		=
		[d\phi]^{2i-1} \mc{O}\big( x^{\frac{(s-3)(i-1)}{2}}\big)\,.\label{ALE}
	\end{equation}
\end{theorem}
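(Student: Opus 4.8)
The plan is to mirror the one-component argument behind \cref{prop:SLE}, upgrading every object by the ramification index $a\in[r]$ and exploiting that modes attached to distinct ramification points commute.

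First I would introduce, for each $a\in[r]$, the linear and quadratic Hamiltonians $\su{H}^{1,a}(\su{x})$, $\su{H}^{2,a}(\su{x})$ built from the $a$-th copy of the twisted super Virasoro modes $\tilde L_{a,2j-s_a-1}$, $\tilde G_{a,2j-s_a}$ exactly as the $\su{H}^i$ of \cref{sec:SAStoTR}, with $\su{x}=(z_a^2\,|\,\sqrt2\,z_a\theta_a)$ the local coordinate near $p_a$ furnished by \cref{prop:NS-R SCmap}. The first substantive step is the multi-component analogue of \cref{lem:DilatonPolarisationCurrent,Hamiltonians}: writing $\sigma_a\colon\su{z}_a\mapsto(-z_a\,|\,-\theta_a)$, one shows
\[
2\su{H}^{1,a}(\su{x})=\mc{J}(\su{z}_a)+\mc{J}(\sigma_a(\su{z}_a))+\hslash\,\omega_{0,1}(\su{z}_a)+\hslash\,\omega_{0,1}(\sigma_a(\su{z}_a)),
\]
\[
2\su{H}^{2,a}(\su{x})=-\mc{C}^{\sigma_a}\big(\hslash^2\su{B}^{\textup{R}}+\normord{(\mc{J}+\omega_{0,1})\otimes(\mc{J}+\omega_{0,1})}\big)(\su{z}_a),
\]
where $\mc{J}$ denotes the polarised current of \cref{lem:DilatonPolarisationCurrent} expressed in the extended basis $\{d\xi_{a,-k},d\eta_{a,-k}\}$. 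Because $J_{a,k},\Gamma_{a,k}$ commute with $J_{b,l},\Gamma_{b,l}$ for $a\neq b$, conjugation by the single global operators $T$ and $\Phi$ acts on the $a$-th Virasoro modes exactly as in the one-component computation, up to the off-diagonal dressing recorded in \eqref{DefSuperBmulti}; hence the proofs of \cref{lem:DilatonPolarisationCurrent,Hamiltonians} carry over essentially verbatim, the only new feature being that $\ad_{\log\Phi}^2$ may now pick up contributions from every zero mode $\Gamma_{b,0}$.

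Next, by \cref{thm:SASmain} the partition function $Z$ is annihilated by all generators of $\mc{I}_{\textup{STR}}^{\,r}$, in particular by $\tilde L_{a,2j-s_a-1}$ and $\tilde G_{a,2j-s_a}$ for all $a$ and $j\geq1$; together with $J_{a,0}=0$ this yields $G^{i,a}(\su{x})\coloneqq Z^{-1}\su{H}^{i,a}(\su{x})Z=[d\phi]^{2i-1}\mc{O}\big(x^{(s_a-3)(i-1)/2}\big)$ near $p_a$, exactly as in \eqref{Gvanishing}. I would then decompose $G^{i,a}$ into homogeneous pieces $G^{i,a}_{g,n+1}$ in the $\hslash$- and polynomial degree, put $\mc{E}^{i,a}_{g,n}\coloneqq\prod_{j=1}^n\ad_{\hslash^{-1}\mc{J}(\su{z}_j)}G^{i,a}_{g,n+1}$ in analogy with \eqref{def of Egn}, and evaluate these adjunctions. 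Each $\ad_{\hslash^{-1}\mc{J}(\su{z}_j)}$ inserts one leg: using the generating-series definitions \eqref{stable wgn}--\eqref{unstable wgn} of the correlators, an induction on $n$ identifies $\prod_j\ad_{\hslash^{-1}\mc{J}(\su{z}_j)}$ applied to $\su{H}^{1,a}$ with $\omega_{g,n+1}(\su{z},\su{z}_{[n]})+\omega_{g,n+1}(\sigma_a(\su{z}),\su{z}_{[n]})$, and applied to $\su{H}^{2,a}$ with $\mc{C}^{\sigma_a}(\omega_{g-1,n+2})$ plus the symmetrised sum $\sum_{g_1+g_2=g,\,I\sqcup J=[n]}\omega_{g_1,|I|+1}\otimes\omega_{g_2,|J|+1}$ --- the two kinds of term arising respectively from the $D^2\log Z$ and $(D\log Z)^2$ pieces that appear because $\su{H}^{2,a}$ is quadratic in derivatives. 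Feeding the vanishing of $G^{i,a}$ through these identifications gives precisely \eqref{ALE}.

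The main obstacle is not conceptual but bookkeeping: (i) signs, since the coefficients $F_{g,n}[i_1,\dotsc,i_n]$ may be fermionic, so the order of the arguments and of the successive $\ad_{\hslash^{-1}\mc{J}}$ insertions matters; (ii) checking that the off-diagonal part of $\su{B}$ in \eqref{DefSuperBmulti}, which in \eqref{unstable wgn} is split between $\su{B}$ itself and the $F_{0,2}$-correction --- notably the $d\eta_{a,0}$-mixing across components --- is reproduced correctly by the adjunction calculus; and (iii) confirming that symmetrising the a priori non-symmetric summands $\omega_{g_1,|I|+1}\otimes\omega_{g_2,|J|+1}$ is exactly what renders $\mc{C}^{\sigma_a}$ well-defined on them, as noted after \cref{prop:SLE}. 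All of this is dispatched by the same tedious-but-routine manipulations as in \cite[Appendix~A.3]{BO21} and \cite[Sections~4.3--4.4]{BKS23}, now carried out with the index $a$ along for the ride.
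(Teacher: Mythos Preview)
Your proposal is correct and follows essentially the same approach as the paper: the paper states that the proof is ``tedious but parallel to the proof of \cref{prop:SLE}'', which in turn invokes the Hamiltonian identities of \cref{lem:DilatonPolarisationCurrent,Hamiltonians}, the vanishing \eqref{Gvanishing}, the $(D\log Z)^2$ versus $D^2\log Z$ dichotomy, and the computations of \cite[Appendix~A.3]{BO21} and \cite[Sections~4.3--4.4]{BKS23}---all of which you reproduce with the extra index $a\in[r]$. The footnote to \cref{thm:SLE} notes one can equally well start from the geometric definition of $\mc{E}^{i,a}_{g,n}$ and then establish the adjunction formula \eqref{def of Egn}, whereas you go in the direction of \cref{prop:SLE}; this is an immaterial difference of presentation, since both routes amount to identifying the two descriptions and then invoking the Airy-structure vanishing.
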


In the literature, \cref{thm:SLE} is sometimes referred to as existence of a solution of super loop equations. That is, if one assumes that a sequence of symmetric multidifferentials satisfies super loop equations, then it is \emph{a priori} not guaranteed if such a sequence actually exists. Indeed, it is hard to prove existence even in the ordinary case if one only studies loop equations geometrically. However, we have constructed an Airy structure in such a way that its partition function(s) do yield such a solution. This is the power of Airy structures.

One remark is that, because our setup is in terms of partial Airy structures, the partition function is not unique, (cf. \cref{prop:partial Airy}) where in this case $F_{g,n}(\hat A)$ are not determined for $ \hat{A} = \hat{A}_r = [r] \times \{ (F,0) \}$.

\subsubsection{Global constraints}\label{sec:global constraints}
Recall from \cref{prop:partial Airy} that $F_{g,n}(i_1,...,i_n)$ for $i_1,...,i_n\in \hat A$ remain undetermined. When we take a global super spectral curve of $\deg\mathcal{R}=2r$, we know that the space of odd holomorphic one-forms is $r$-dimensional, so $\{[d\theta_a]\}_{a\in[2r]}$, or rather their images under projection $ \{ d\eta_{a,0} \}_{a \in [2r]}$, are mutually dependent. More concretely, let us decompose $\mathcal{R}=\mathcal{R}_p + \mathcal{R}_q$ based on the Torelli marking of a global super spectral curve. Then, since $\alpha_\mu(\su{B})=0$, we find that
\begin{equation}
\forall \mu\in[r]\quad d\eta_{p_\mu,0} + i d\eta_{q_\mu,0} = 0.\label{global constraints for zero modes}
\end{equation}

How can we capture such global constraints in terms of partial Airy structures? For each Ramond puncture $p$, let us define
\begin{equation}
	\tilde\Gamma_{p,0}
	=
	\hslash\left(\frac{\theta^{p,0}}{2}-\partial_{\theta^{p,0}}\right) \,.
\end{equation}
The relative sign is opposite to $\Gamma_0$ in \eqref{HC modes}. As a consequence, they commute not only with each other but also with all Heisenberg and Clifford modes including $\Gamma_{0,p}$. Since they all square to $\frac12\hslash^2$, however, one cannot naively add $\tilde \Gamma_0$ to a partial Airy structure, as it breaks the left-ideal condition in \cref{def:PSAS}. Rather, we consider a pair $(\tilde\Gamma_{p_\mu,0},\tilde\Gamma_{q_\mu,0})$ based on Torelli marking, and define $\tilde\Gamma_{\alpha_\mu,0} \coloneqq \tilde\Gamma_{p_\mu,0} + c \tilde\Gamma_{q_\mu,0}$. One then finds
\begin{equation}
[\tilde\Gamma_{\alpha_\mu,0}, \tilde\Gamma_{\alpha_\mu,0}]=0\quad\Leftrightarrow\quad c=\pm i.
\end{equation}
Therefore, whenever we set $c=\pm i$, we are free to add $\tilde\Gamma_{\alpha_\mu,0}$ to the partial Airy structure of \cref{thm:SASmain}. The new generator $\tilde\Gamma_{\alpha_\mu,0}$ imposes the following condition on $F_{g,n}$, except for $F_{0,2}$,
\begin{equation}
\partial_{\theta^{p_\mu,0}}F_{g,n}(x,\theta)+\pm \partial_{\theta^{q_\mu,0}}F_{g,n}(x,\theta)=0\label{global constraints for Fgn}
\end{equation}
which is equivalent to setting $\theta^{p_\mu,0}+\pm i \theta^{q_\mu,0}=0$ on $F_{g,n}$. To match with the global constraints \eqref{global constraints for zero modes}, we should choose $c=+ i$. The opposite choice corresponds to the case where $\alpha_\mu$ and $\beta_\mu$ are swapped. Notice that \eqref{global constraints for Fgn} with $c=+i$ indicates that all stable $\omega_{g,n}$ have vanishing $\beta$-cycles, not $\alpha$-cycles
\begin{equation}
\beta_\mu(\omega_{g,n})=0.
\end{equation}

For $F_{0,2}$, we find
\begin{equation}\label{blobs for F02}
	\forall\mu\in[r] \,,
	\qquad
	F_{0,2}
	\begin{bmatrix}
		p_\mu & q_\mu  
		\\
		0 & 0 
	\end{bmatrix}
	=
	i \,,
\end{equation}
which is a consequence of contributions from the first term in $\tilde\Gamma_{p_\mu,0},\tilde\Gamma_{q_\mu,0}$. We will show in \cref{sec:Weber} that the constraint on $F_{0,2}$ \eqref{blobs for F02} indeed appears in an example on a global super spectral curve.


\section{Super Topological Recursion}
\label{sec:SuperTR}

Since \cref{thm:SLE} shows that there exists a solution of super loop equations, and \cref{sec:global constraints} shows the solution space is very constrained for global spectral curves, the next question is: how do we compute such solutions? Can we find a similar formula to the one of Eynard--Orantin \cite{EO07}? It turns out that the recursive formula is more complicated and at the same time richer due to supersymmetric effects.

\subsection{Blobbed topological recursion}\label{sec:blobbed TR}

Let us first recall how to solve loop equations in the ordinary setting following \cite{BS17}. The formula, called \emph{blobbed topological recursion}, is more general than the Eynard--Orantin topological recursion \cite{EO07}. 

Let $V^{\pm}=\textup{Span}(d\xi_{k})_{k\in A_r^\pm}$. In the ordinary setting, $A_r^+ $ and $ A^-_r $ partition $A_r$, hence there exist natural projections $\mc{H} \colon V\to V^+$ and $\mc{P}\colon V\to V^-$ which give the holomorphic and principal part of a differential $\omega$ in a chosen basis $\{d\xi_{B,a,k}\}$. One can naturally extend such projections to multidifferentials $\omega_n$ in an obvious way. Recall from \cref{sec:super spectral curves} that the principal projection is realised by local residue computations, whereas the holomorphic part is precisely the kernel of the residue operation. \cite{BS17} shows that the most general solution of abstract loop equations (of rank $2$, as first introduced in \cite{BEO15}) is given as below:

\begin{proposition}[\cite{BS17}]\label{prop:blobbed}
	For each $(g,n)$ with $2g-2+n\geq1$, fix $\varphi_{g,n}\in (V^+)^n$. Then, if a solution $\{\omega_{g,n}\}$ of ordinary abstract linear and quadratic loop with purely holomorphic parts equal to the $ \varphi_{g,n}$ exists, it is constructed as below:
	\begin{equation}
    	\omega_{g,n+1}(z_0,z_{ [n]})
    	=
    	\mc{H}_0\omega_{g,n+1}(z_0,z_{ [n]})+\mc{P}_0\omega_{g,n+1}(z_0,z_{ [n]}),
	\end{equation}
	where
	\begin{align}
    	\mc{H}_0 \omega_{g,n+1}(z_0,z_{ [n]})
    	&=
    	\phi_{g,n+1}(z_0,z_{[n]}) + \sum_{\substack{A'\sqcup B = [n]\\ B\neq\emptyset}} \mc{H}_{{0}\sqcup A'}\mc{P}_B\omega_{g,n}(z_0,z_{ [n]}),\label{H1rec}
    	\\
       	\mc{P}_0 \omega_{g,n+1}(z_0,z_{ [n]})
    	&=
    	\sum_{a \in R}\Res_{z=a} \frac{\int^z_{a}\omega_{0,2}(z_0,\cdot)}{\Delta_a\omega_{0,1}(z)}{\rm Rec}^a_{g,n+1}(z,z_{ [n]}),\label{P1rec}
    \end{align}
	and moreover $\Delta_a\omega_{0,1}(z)=\omega_{0,1}(z)-\omega_{0,1}(\sigma_a(z))$, and for  $\mc{Q}_{g,n}(z;z_{[n]})$ in \eqref{QLE}, we define for each $a\in R$
    \begin{equation}
     	{\rm Rec}^a_{g,n+1}(z;z_{[n]})
     	=
     	\mc{Q}_{g,n}(z;z_{[n]})|_{\sigma=\sigma_a} + \big( \Delta_a\omega_{0,1}(z) \big) \omega_{g,n+1}(z,z_{ [n]}) \,.
    \end{equation}
\end{proposition}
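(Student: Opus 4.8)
The statement is \cite[Section~3]{BS17} specialised to rank two and to our normalisation, so the plan is to reproduce that argument while keeping track of the fact that $\omega_{0,2}$ need not equal $\su{B}$. First I would record the \emph{projection property}: for a meromorphic one-form $\omega$ whose poles all lie at ramification points, the operator
\[
	\mc{P}_0\colon\ \omega(z_0)\ \longmapsto\ \sum_{a\in R}\Res_{z=a}\Big(\int_a^z\omega_{0,2}(z_0,\cdot)\Big)\,\omega(z)
\]
is the projection onto $V^-$ attached to the polarisation, and $\mc{H}_0=\mathrm{id}-\mc{P}_0$ the projection onto $V^+$. This is read off from the local expansion of $\omega_{0,2}$ in the basis $\{d\xi_k\}_{k\in A_r}$ exactly as in \cref{sec:SAStoTR} (with $\su{B}$ replaced by $\omega_{0,2}$), together with the residue pairing; and by the abstract loop equations \eqref{ALE} a stable correlator $\omega_{g,n}$ with $(g,n)\notin\{(0,1),(0,2)\}$ has poles only at $R$, so $\mc{P}_0$ captures its entire principal part in $z_0$. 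Since $\omega_{g,n+1}=\mc{H}_0\omega_{g,n+1}+\mc{P}_0\omega_{g,n+1}$ trivially, it then suffices to identify each summand with the right-hand sides of \eqref{H1rec} and \eqref{P1rec}.

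For the polar part I would start from the quadratic loop equation \eqref{QLE} localised at $a\in R$. Separating the two unstable contributions $(g_1,I)=(0,\emptyset)$ and $(g_2,J)=(0,\emptyset)$ from the sum, one has
\[
	\mc{Q}_{g,n}(z;z_{[n]})\big|_{\sigma=\sigma_a}
	=\omega_{0,1}(z)\,\omega_{g,n+1}(\sigma_a(z),z_{[n]})+\omega_{0,1}(\sigma_a(z))\,\omega_{g,n+1}(z,z_{[n]})+\mc{Q}^{\mathrm{st}}_{g,n}(z;z_{[n]}),
\]
with $\mc{Q}^{\mathrm{st}}$ collecting $\omega_{g-1,n+2}$ and the products of \emph{stable} correlators. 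Adding $\big(\Delta_a\omega_{0,1}(z)\big)\,\omega_{g,n+1}(z,z_{[n]})$ recombines the two unstable terms, so that ${\rm Rec}^a_{g,n+1}(z;z_{[n]})=\omega_{0,1}(z)\big(\omega_{g,n+1}(z,z_{[n]})+\omega_{g,n+1}(\sigma_a(z),z_{[n]})\big)+\mc{Q}^{\mathrm{st}}_{g,n}(z;z_{[n]})$, and a one-line rearrangement gives
\[
	\omega_{g,n+1}(z,z_{[n]})-\frac{{\rm Rec}^a_{g,n+1}(z;z_{[n]})}{\Delta_a\omega_{0,1}(z)}=-\frac{\mc{Q}_{g,n}(z;z_{[n]})\big|_{\sigma=\sigma_a}}{\Delta_a\omega_{0,1}(z)}.
\]
By \eqref{ALE} the numerator on the right is holomorphic at $z=a$ with the vanishing order fixed by $s$, while $\Delta_a\omega_{0,1}$ has a simple zero there and $\int_a^z\omega_{0,2}(z_0,\cdot)$ is holomorphic in $z$ near $a$; hence the residue at $z=a$ of the right-hand side paired against $\int_a^z\omega_{0,2}(z_0,\cdot)$ vanishes, and summing over $a\in R$ identifies $\mc{P}_0\omega_{g,n+1}$ with \eqref{P1rec}.

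For the holomorphic part I would exploit the symmetry of $\omega_{g,n+1}$: inserting $\mathrm{id}=\mc{H}_i+\mc{P}_i$ successively for $i\in[n]$ gives $\mc{H}_0\omega_{g,n+1}(z_0,z_{[n]})=\sum_{A'\sqcup B=[n]}\mc{H}_{0\sqcup A'}\mc{P}_B\,\omega_{g,n+1}(z_0,z_{[n]})$, and the term $B=\emptyset$ is the part of $\omega_{g,n+1}$ holomorphic in all of its variables, which by hypothesis equals $\varphi_{g,n+1}$; this is precisely \eqref{H1rec}. Assembling the three pieces yields the claimed formula, and an induction on $2g-2+n$ (the $\mc{P}_B$-terms have $B\neq\emptyset$, hence involve the same $\omega_{g,n+1}$ restricted to fewer free variables, while $\mc{Q}^{\mathrm{st}}$ involves only correlators with strictly smaller $2g-2+n$) shows that any solution with the prescribed holomorphic parts coincides with the one produced by these formulas. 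I expect the main obstacle to be the bookkeeping of the polar-part step: verifying that ${\rm Rec}^a$ absorbs \emph{exactly} the unstable $\omega_{0,1}$-contributions, so that the residue against the kernel reproduces the polarisation projection --- this is the heart of the passage from the Eynard--Orantin kernel built from $\int_{\sigma_a(z)}^z\omega_{0,2}$ to the blobbed kernel built from $\int_a^z\omega_{0,2}$, while the remaining steps are a direct transcription of \cite{BS17}.
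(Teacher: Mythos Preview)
Your argument is correct and follows the same route as the paper's derivation (equations \eqref{QLE1}--\eqref{QLE3}), which is itself a recap of \cite{BS17}: both isolate the polar part by dividing the quadratic loop equation by $\Delta_a\omega_{0,1}$ and then project, and both obtain the holomorphic part by inserting $\mathrm{id}=\mc{H}_i+\mc{P}_i$ in the remaining variables. Your one-line identity $\omega_{g,n+1}-{\rm Rec}^a_{g,n+1}/\Delta_a\omega_{0,1}=-\mc{Q}_{g,n}/\Delta_a\omega_{0,1}$, read off directly from the definition of ${\rm Rec}^a$, is in fact slightly cleaner than the paper's intermediate step \eqref{QLE1}.

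Two small points. First, your claim that $\Delta_a\omega_{0,1}$ has a \emph{simple} zero at $a$ is only accurate when $s=3$; for $s=1$ it does not vanish. Either way $\mc{Q}_{g,n}/\Delta_a\omega_{0,1}$ is holomorphic at $a$, so the residue computation goes through unchanged, but you should phrase it as ``$\Delta_a\omega_{0,1}$ has vanishing order $s-1$ with invertible leading coefficient''. Second, the ``main obstacle'' you anticipate in the final paragraph is not an obstacle at all: as your own computation shows, the passage from $\mc{Q}_{g,n}$ to ${\rm Rec}^a$ is pure algebra from the definition, with no bookkeeping beyond what you already wrote.
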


The set of formulae in \cref{prop:blobbed} is called \emph{blobbed topological recursion}. Note that \eqref{H1rec} makes sense because $\omega_{g,n+1}$ is a symmetric differential, and hence as long as $B \neq \emptyset$ one can apply \eqref{P1rec} with respect to one variable (not necessarily $z_0$) in $z_B$. The purely holomorphic parts $\varphi_{g,n+1}$, which are the inaccessible parts by the residue formula \eqref{P1rec}, are called \emph{blobs}. When all blobs vanish, i.e. when $\omega_{g,n}=\mc{P}_{[n]}\omega_{g,n}$, then $\omega_{g,n}$ precisely corresponds to $\omega_{g,n}$ of Eynard--Orantin. Also, we note that existence of a solution with arbitrary blobs is not proved, though some explicit models are known -- see \cite{BS17} and references therein.

Let us briefly review how to derive blobbed topological recursion from abstract loop equations in the ordinary setting. 
 The quadratic loop equations \eqref{QLE} together with the linear loop equations \eqref{LLE} imply that
\begin{equation}
   -\Delta_a\omega_{0,1}(z) \Delta_a \omega_{g,n+1}(z,z_{ [n]})+ {\rm Rec}^a_{g,n+1}(z;z_{[n]})
   =
   (dz)^2 \mc{O}(z^{s-1}).\label{QLE1}
\end{equation}
A crucial point is that the right-hand-side has the same order of zero as $ \Delta_a \omega_{0,1}$ at $z=0$, hence dividing by it, we get:
\begin{equation}
   -\omega_{g,n+1}(z,z_{\llbracket n \rrbracket})+ \frac{{\rm Rec}^a_{g,n+1}(z;z_{[n]})}{\Delta_a \omega_{0,1}(z)}
   =
   dz\, \mc{O}(1) \,.\label{QLE2}
\end{equation}
Therefore, the projection operator kills the unknown right side, and
\begin{equation}\label{QLE3}
	\begin{split}
		0
		&=
		\sum_{a\in\mc{R}}\Res_{z=a}\int^z_{a}\omega_{0,2}(z_0,\cdot)\;\;dz \mc{O}(1)
		\\
		&=
		\sum_{a\in\mc{R}}\Res_{z=a}\int^z_{a}\omega_{0,2}(z_0,\cdot)\;\left(-\omega_{g,n+1}(z,z_{[n]})+\frac{{\rm Rec}^a_{g,n+1}(z,z_{ [n]})}{\Delta_a\omega_{0,1}(z)}\right)
		\\
		&=
		- \mc{P}_0 \omega_{g,n}(z_0,z_{\llbracket n \rrbracket})+\sum_{a\in\mc{R}}\Res_{z=a}\int^z_a\omega_{0,2}(z_0,\cdot)\;\frac{{\rm Rec}^a_{g,n+1}(z,z_{ [n]})}{\Delta_a \omega_{0,1}(z)} \,.
	\end{split}
\end{equation}

The power of the (blobbed) topological recursion formula is that although we do not know \emph{a priori} what the right-hand sides of loop equations are, we can kill the unknown right-hand side contributions by taking residues at every ramification point.

It turns out that super topological recursion admits a similar structure to blobbed topological recursion, even without blobs. This is essentially due to the existence of zero modes inducing $\{d\eta_i\}_{i\in \hat A_r}$ which are holomorphic along the Ramond divisor $\mc{R}$, (see remark \ref{remark:4.1.1})

 This motivates us to introduce two more new notions, \emph{simple super spectral curves}, and \emph{regularised fundamental bidifferential}. Let us consider them one by one.

\subsection{Simple super spectral curves}\label{sec:simple curves}

In order to motivate the notion of simple spectral curve which we shall introduce in this section, we focus on a degree one Ramond divisor $\mc{R}$ for brevity, because it is straightforward to generalise the arguments below to a higher degree. In this example, for brevity of notation, we drop $a$ from $\sigma$ and $\Delta$.

A constructive approach to solve super loop equations (\cref{prop:SLE} and \cref{thm:SLE}) is to follow the classical approach outlined above.  For this purpose, let us isolate terms involving $\omega_{0,1}$ from $\mc{E}^2_{g,n}$, which are
\begin{equation}
    \mc{C}^\sigma(\omega_{0,1}\boxtimes\omega_{g,n+1}+\omega_{g,n+1}\boxtimes\omega_{0,1})
    =
    \frac14 \mc{C}^\sigma(\Delta\omega_{0,1}\boxtimes\Delta\omega_{g,n+1}+\Delta\omega_{g,n+1}\boxtimes\Delta\omega_{0,1})+[d\phi]^3\mc{O}(x^{\frac{s-3}{2}})\label{QLE5}
\end{equation}
where $\Delta\omega(z) = \omega(z) - \omega(\sigma(z))$ as before and we used the linear loop equation at the equality.  One then immediately notices that it does not make sense to divide by $\omega_{0,1}$ in the super setting. This is simply because the resulting object is not a section of $\mc{Q}'$ and taking a residue is not defined. This is a clear contrast to the ordinary case \eqref{QLE2}.

A next reasonable attempt is to consider the inverse of $\mc{C}^\sigma$. In fact, by writing $\Delta\omega_{0,1}=[dx]\Delta y+[d\phi]\Delta \lambda$ let us define a differential operator
\begin{equation}
    \mc{G}
    \coloneqq
    (\Delta y+\phi\Delta\lambda') + (\Delta\lambda+\phi \Delta y)D_\phi
\end{equation}
Then for $\Delta \omega_{g,n+1}=[d\phi](\phi\Delta Y+\Delta \Lambda)$, it can be easily shown that
\begin{equation}
    [d\phi]^3\mc{G}(\phi\Delta Y+\Delta \Lambda)
    =
    \mc{C}^\sigma(\Delta\omega_{0,1}\boxtimes\Delta\omega_{g,n+1}+\Delta\omega_{g,n+1}\boxtimes\Delta\omega_{0,1}).
\end{equation}
It turns out that $\mc{G}$ admits an inverse $\mc{G}^{-1}$ which can be written explicitly as an differential operator with respect to $D_\phi$ whose coefficients are rational functions of $y,\lambda,\lambda',\phi$. Can one use $\mc{G}^{-1}$ to solve loop equations because this is an analogous operation to `dividing by $\omega_{0,1}$' in the ordinary setting? Unfortunately, one finds that for generic $\lambda$, $[d\phi]\mc{G}^{-1}(\mc{O}(x^{\frac{s-3}{2}}))$ becomes singular along the Ramond divisor, and as a consequence, one cannot kill the right-hand side of loop equations by taking residues.

At the moment, we do not know any elegant way of solving super loop equations. Nonetheless, we provide one way which works in a slightly more restricted setting:
\begin{definition}\label{def:admissible}
We say that a super spectral curve is \emph{simple} if, at every Ramond puncture, the expansion of $\Delta_a\tilde y\coloneqq\Delta_a y+\frac{\Delta_a\lambda'\Delta_a\lambda}{2\Delta_a y}$ in any local Ramond coordinate is of order $\mc{O}(z^{s-2})$ with invertible leading coefficient.
\end{definition}

Let us motivate why we consider simple super spectral curves. Using linear loop equations, we can write the quadratic loop equation $\mc{E}^2_{0,0}(\su{x})$ in terms of $\Delta_a \tilde y$ and $\Delta_a \lambda$ as
\begin{equation}
(\Delta_a \tilde y)^2=\mc{O}(x^{\frac{s-3}{2}}),\quad \Delta_a \tilde y\Delta_a \lambda=\mc{O}(x^{\frac{s-3}{2}}).
\end{equation}
By definition of $s$, the condition for simple spectral curves is always satisfied when $s=1$. On the other hand, when $s=3$, for instance, we can have a scenario with odd parameters $\kappa_1,\kappa_2$:
\begin{equation}
\Delta_a \tilde y=\frac{\kappa_1\kappa_2}{2z}+z+\mc{O}(z^3),\quad (\Delta_a \tilde y)^2=\kappa_1\kappa_2+\mc{O}(x)
\end{equation}
where we have chosen the local coordinate $z := x^{\frac{1}{2}}$.

Thus, $\Delta_a \tilde y$ has a pole with nilpotent coefficients, and in general, $\Delta_a \tilde y$ may have arbitrary higher order poles with more nilpotent coefficients as long as those coefficients are fine-tuned such that $(\Delta_a \tilde y)^2=\mc{O}(1)$. The condition of being simple imposes that $\Delta_a \tilde y$ does not admit such poles and it rather has a zero, i.e. the same behaviour as $\Delta_a y$ in the ordinary setting \footnote{$\Delta_a y$ itself may still have a pole because we can inversely write $\Delta_a y = \Delta_a \tilde y - \frac{\Delta_a \lambda'\Delta_a \lambda}{2\Delta_a \tilde y}$.}. Therefore, if one divides quadratic loop equations $\mc{E}^2_{g,n}$ by $[d\phi]^2\Delta_a\tilde y$, we have
\begin{equation}
\frac{\mc{E}^2_{g,n}}{[d\phi]^2\Delta_a \tilde y}=\frac{[d\phi]^3 \mc{O}(x^{\frac{s-3}{2}})}{[d\phi]^2\Delta_a \tilde y}=[d\theta]\mc{O}(1).\label{QLE4}
\end{equation}
which is a similar behaviour to the ordinary setting \eqref{QLE2}. This explains why we introduced the notion of simple super spectral curves. We will shortly present how to solve super loop equations with $\Delta_a \tilde y$.

\subsection{Regularised fundamental bidifferential and projection}\label{sec:regularised projection}

Recall from  \cref{rem:regularised projection} that the projection property behaves simpler with the regularised fundamental bidifferential $\su{B}^{\textup{reg}}$. Let us show another instance where $\su{B}^{\textup{reg}}$ plays an important role. 

Due to the anti-invariance with respect to $\sigma_a$, let us write the right-hand side of \eqref{QLE4} as
\begin{equation}
	\frac{[d\phi]^3 \mc{O}(x^{\frac{s-3}{2}})}{[d\phi]^2\Delta_a \tilde y}=[d\theta](\rho_0+\rho_1\theta z+\cdots)
\end{equation}
where $\rho_i$ are constants in $\su{z}$ but multidifferential in $\su{z}_{[n]}$ which are non-zero in general. Thus, as shown in \eqref{projection with zero mode}, the projection operator with $\su{B}$ gives:
\begin{equation}
	\Res_{z = 0} \int_0^{\su{z}} \su{B} (\su{z}_0 ,\cdot) \frac{[d\phi]^3 \mc{O}(x^{\frac{s-3}{2}})}{[d\phi]^2\Delta_a \tilde y}
	=
	\frac{1}{2} d\eta_0 (\su{z}_0) \rho_0
\end{equation}
This is a significant difference from the ordinary case as reviewed in \eqref{QLE3}, and $\rho_0$ is unknown for every $(g,n)$. Such unwanted contributions can be removed if we instead consider the projection with respect to the regularised fundamental bidifferential defined as below:
\begin{equation}
	\su{B}^{\textup{reg}}(\su{z}_1,\su{z}_2)=\su{B}(\su{z}_1,\su{z}_2)-\frac12d\eta_{0}(\su{z}_1)d\eta_{0}(\su{z}_2).
\end{equation}
That is,
\begin{equation}
	\Res_{\su{z} = 0} \int_0^{\su{z}} \su{B}^{\textup{reg}}(\su{z}_0\cdot)  d\theta\, \mc{O}(1)
	=
	0.
\end{equation}
We note that for a Ramond divisor of degree $r$, one has to consider the regularised fundamental bidifferential for projection at every Ramond puncture.

\begin{remark}
Because $\su{B}$ is globally defined on $\Sigma$ , $d\eta_{p,0}$ for each Ramond puncture $p$ is a globally holomorphic one-form. Thus, the regularised fundamental bidifferential $\su{B}^{\textup{reg}}$ is also globally defined. This is crucial because then the projection by $\su{B}^{\textup{reg}}$ also produces global meromorphic forms. However, note that $\su{B}^{\textup{reg}}$ is not symmetric any more, because the terms subtracted are anti-symmetric.
\end{remark}

Repeating the above discussions for every  Ramond puncture, the projection operator on a differential $\omega$ to $\textup{Span}(d\xi_i)_{i\in A_r^-}$ in the super setting is given by
\begin{equation}
\mc{P}(\omega)(\su{z}_0)=\sum_{a\in\mc{R}}\Res_{z = a} \int_a^{\su{z}} \su{B}^{\textup{reg}}(\su{z}_0\cdot)\; \omega(\su{z})
\end{equation}

On the other hand, recall that $\omega_{g,n}$ has a holomorphic part coming from $\{d\eta_{(a,F,0)}\}_{a\in[r]}$ by construction \eqref{def:omegagn}. Given the partition function of the partial Airy structure of \cref{thm:SASmain}, let us define \emph{zero mode blobs} $\psi_{g,n}$ by
\begin{equation}
	\label{def:psi_{g,n}}
 	\psi_{g,n}(\su{z}_{[n]})
 	\coloneqq
 	\sum_{i_1, \dotsc, i_n \in \hat{A}_r} \prod_{k=1}^n d\xi_{-i_k} (\su{z}_k) F_{g,n} [i_1, \dotsc, i_{n}  ] 
\end{equation}

Then, by construction, we have:

\begin{lemma}\label{lem:blob}
	Let $(\omega_{g,n})_{g,n}$ be the sequence of stable multidifferentials constructed as in \eqref{def:omegagn}. Then,
	\begin{equation}
		\omega_{g,n+1}(\su{z}_0,\su{z}_{[n]})
		=
		\mc{H}_0\omega_{g,n+1}(z_0,z_{[n]})+\mc{P}_0\omega_{g,n+1}(z_0,z_{[n]}) \,,
	\end{equation}
	where
	\begin{equation}
    	\mc{H}_0 \omega_{g,n+1}(z_0,z_{ [n] })
    	=
    	\psi_{g,n+1}(z_0,z_{[n]}) + \sum_{\substack{A'\sqcup B = [n] \\ B\neq \emptyset}} \mc{H}_{{0}\sqcup A'}\mc{P}_B\omega_{g,n+1}(z_0,z_{[n]}) \,.
	\end{equation}
\end{lemma}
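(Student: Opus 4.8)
The statement is essentially a bookkeeping identity: it asserts that the multidifferentials $\omega_{g,n+1}$ constructed algebraically in \eqref{def:omegagn} (or \eqref{stable wgn} in the multi-component case) decompose into a holomorphic part $\mc{H}_0\omega_{g,n+1}$ and a principal part $\mc{P}_0\omega_{g,n+1}$, and that the holomorphic part itself satisfies the recursive relation involving the zero mode blobs $\psi_{g,n+1}$. The plan is to exploit the fact that, by construction, the expansion \eqref{def:omegagn} is already written in the super symplectic basis $\{d\xi_{-i}\}_{i\in A_r^-\sqcup\hat A_r}$ adapted to the polarisation, so that both projections are \emph{coordinate projections} onto spanning subspaces. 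First I would fix one variable, $\su{z}_0$, and regard $\omega_{g,n+1}(\su{z}_0,\su{z}_{[n]})$ as a section of $\mc{Q}'$ in $\su{z}_0$ with coefficients that are multidifferentials in $\su{z}_{[n]}$; in the basis $\{d\xi_{-i}(\su{z}_0)\}_{i\in A_r}$ its coefficients are precisely the $F_{g,n+1}[i,i_1,\dots,i_n]$ contracted against $\prod_k d\xi_{-i_k}(\su{z}_k)$.

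The decomposition $\omega_{g,n+1}=\mc{H}_0\omega_{g,n+1}+\mc{P}_0\omega_{g,n+1}$ then follows immediately from the disjoint partition $A_r = A_r^+ \sqcup A_r^- \sqcup \hat A_r$: the summand with $i\in A_r^-$ gives, via \cref{rem:regularised projection} applied to the regularised bidifferential, exactly $\mc{P}_0\omega_{g,n+1} = \sum_{a\in\mc{R}}\Res_{z=a}\int_a^{\su{z}}\su{B}^{\textup{reg}}(\su{z}_0,\cdot)\,\omega_{g,n+1}(\su{z},\su{z}_{[n]})$, using that pairing with $\su{B}^{\textup{reg}}$ projects onto $\mathop{\textup{Span}}\{d\xi_i\}_{i\in A_r^-}$ with eigenvalue $1$; the summand with $i\in\hat A_r$ is the holomorphic, $d\eta_0$-part, which is $\mc{H}_0\omega_{g,n+1}$. (There is no $A_r^+$ contribution, as \eqref{def:omegagn} sums only over $A_r^-\sqcup\hat A_r$, consistent with the $\omega_{g,n+1}$ having vanishing $A$-periods.)

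For the recursion on $\mc{H}_0\omega_{g,n+1}$, I would expand the coefficient $F_{g,n+1}[i_0,i_1,\dots,i_n]$ with $i_0\in\hat A_r$ by resolving the other indices into their $\pm$ and $\hat A_r$ parts: writing $[n] = A'\sqcup B$ according to which of the remaining indices lie in $\hat A_r\sqcup A_r^+$ (producing $\mc{H}$) versus $A_r^-$ (producing $\mc{P}$), the $B=\emptyset$ term is by definition $\psi_{g,n+1}(\su{z}_0,\su{z}_{[n]})$ from \eqref{def:psi_{g,n}}, while the $B\neq\emptyset$ terms reassemble into $\sum_{A'\sqcup B=[n],\,B\neq\emptyset}\mc{H}_{0\sqcup A'}\mc{P}_B\omega_{g,n+1}$ by symmetry of $F_{g,n+1}$ in its arguments (so that it does not matter which variable in $\su{z}_B$ carries the residue). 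This is just a reorganisation of a finite sum; the symmetry of $\omega_{g,n+1}$ — established because $F_{g,n+1}$ is symmetric up to the graded signs recorded in \cref{sec:multi differential} — is what makes the iterated projections $\mc{H}_{0\sqcup A'}\mc{P}_B$ well-defined and consistent.

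\textbf{Main obstacle.} The delicate point is the zero-mode bookkeeping: in the unregularised expansion \eqref{polarised expansion} the form $d\eta_0(\su{z}_2)$ carries positive powers of $z_2$, so pairing with $\su{B}$ rather than $\su{B}^{\textup{reg}}$ mixes the $\hat A_r$-eigenspace (eigenvalue $\tfrac12$) into the principal part, as spelled out in \eqref{projection with zero mode}. I must verify carefully that with $\su{B}^{\textup{reg}}$ the principal projection $\mc{P}_0$ lands cleanly in $\mathop{\textup{Span}}\{d\xi_i\}_{i\in A_r^-}$ with no leakage into $\hat A_r$, so that the split of the single sum \eqref{def:omegagn} into $\mc{H}_0 + \mc{P}_0$ is exactly the $\hat A_r$ versus $A_r^-$ split — and simultaneously that the $\psi_{g,n+1}$ defined in \eqref{def:psi_{g,n}} captures precisely the all-$\hat A_r$ coefficients and nothing else. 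Once the eigenspace decomposition under $\su{B}^{\textup{reg}}$ is pinned down (which \cref{rem:regularised projection} and \cref{sec:regularised projection} essentially do), the rest is a routine, if notationally heavy, rearrangement of the polynomial coefficients $F_{g,n+1}$.
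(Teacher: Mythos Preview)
Your proposal is correct and is precisely the argument the paper intends: the paper's own proof consists of the single phrase ``by construction,'' and you have simply unpacked what that means---the decomposition $\omega_{g,n+1}=\mc{H}_0\omega_{g,n+1}+\mc{P}_0\omega_{g,n+1}$ follows from the partition $A_r^-\sqcup\hat A_r$ of the index set in \eqref{def:omegagn}, and the formula for $\mc{H}_0\omega_{g,n+1}$ follows by further resolving the remaining indices into $\hat A_r$ (giving $\psi_{g,n+1}$ when all lie there) versus $A_r^-$. Your identification of the zero-mode bookkeeping with $\su{B}^{\textup{reg}}$ as the one subtle point is exactly what motivates the paper's \cref{rem:regularised projection} and \cref{sec:regularised projection}, which it places immediately before stating the lemma.
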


\subsection{Recursion formula}

At the cost of having a good behaviour at the Ramond divisor as described in \cref{sec:simple curves}, a drawback of dividing \eqref{QLE5} by $[d\phi]^2\Delta_a \tilde y$ is that $\omega_{g,n+1}$ is not completely isolated from other terms, unlike the ordinary setting \eqref{QLE2}. Thus, we need one more technique in order to obtain a unique solution.

Let us rescale $\lambda$ by a formal even parameter $\gamma$ while keeping $x$, $\phi$, $\Delta_a \tilde y$ and $\su{B}$ unchanged. This means that $\Delta_a y$ is a quadratic polynomial in $\gamma$. Then, if we denote by $\Delta_a \omega_{g,n+1}=[d\phi](\phi\Delta_a Y+\Delta_a \Lambda)$ (the subscript is suppressed), we have:
\begin{multline}
	\frac{1}{2} (2-\phi D_\phi)\mc{C}^{\sigma_a}(\Delta_a \omega_{0,1}\boxtimes\Delta_a \omega_{g,n+1}+\Delta_a \omega_{g,n+1}\boxtimes\Delta_a \omega_{0,1})
	=
	-[d\phi]^2\Delta_a \tilde y\cdot\Delta_a \omega_{g,n+1}
	\\
	-\frac{\gamma}{2} [d\phi]^3\Bigg( \phi\bigg(\Delta_a \lambda'(z) \Delta_a \Lambda + \Delta_a \Lambda' \Delta_a \lambda(z) - \gamma\frac{(\Delta_a \lambda')(\Delta_a \lambda)}{\Delta_a \tilde y(z)} \Delta_a Y\bigg) - \gamma\frac{(\Delta_a \lambda')(\Delta_a \lambda)}{\Delta_a \tilde y(z)}  \Delta_a \Lambda +2 \Delta_a Y  \Delta_a \lambda(z) \Bigg) \,,
		\label{0,1 and g,n+1}
\end{multline}
where we declare that $ D_\phi ([d\phi]^3) = 0$ to make sense of $ D_\phi $ on sections of $ (\mc{Q}')^3$. This is a well-defined operation, as $ (x \, | \, \phi )$ are part of the data of our spectral curve.\par
The strange looking $(2-\phi D_\phi)$ is inserted in order to control the relative coefficients between $\phi$-dependent and $\phi$-independent terms such that the term $[d\phi]^2\Delta_a\tilde y \otimes \Delta_a\omega_{g,n+1}$ can be isolated. A crucial point is that the second line in \eqref{0,1 and g,n+1}, viewed as polynomials in $\gamma$, does not have constant term. Loosely speaking, $\gamma$ measures the degree of odd parameters in our family of super spectral curves. Our strategy is that we will write down a recursive formula not only with respect to $2g-2+n$ but also with respect to the degree of $\gamma$.

\begin{theorem}\label{thm:STR}
	Suppose we have a simple super spectral curve with rescaled $\Delta\lambda$ by a formal parameter $\gamma$, and a sequence of holomorphic multidifferentials $\{\psi_{g,n}\}_{g,n}^{2g-2+n\geq0}$.
    If there exists a sequence $\{\omega_{g,n}\}_{g,n}^{2g-2+n\geq0}$ of symmetric multidifferentials in $(\mc{Q}')^{\boxtimes n}[\gamma]$ which satisfies super loop equations and has the $ \psi_{g,n}$ as holomorphic parts, then it can be uniquely and recursively constructed by the following formulae
	\begin{equation}
    	\omega_{g,n+1}(\su{z}_0,\su{z}_{ [n]})
    	=
    	\mc{H}_0\omega_{g,n+1}(\su{z}_0,\su{z}_{ [n]})+\mc{P}_0\omega_{g,n+1}(\su{z}_0,\su{z}_{ [n]}),
	\end{equation}
	where
	\begin{equation}
    	\mc{H}_0 \omega_{g,n+1}(\su{z}_0,\su{z}_{ [n]})
    	=
    	\psi_{g,n+1}(\su{z}_0,\su{z}_{[n]}) + \sum_{\substack{A'\sqcup B = [n] \\ B\neq \emptyset}} \mc{H}_{{0}\sqcup A'}\mc{P}_B\omega_{g,n+1}(\su{z}_0,\su{z}_{ [n]}),\label{SH1rec}
	\end{equation}
	\begin{equation}
       \mc{P}_0 \omega_{g,n+1}(\su{z}_0,\su{z}_{ [n]})
       =
       \sum_{p_a \in \mc{R}} \Res_{z=p_a} \frac{\int^z_{p_a} \su{B}_{0,2}^{\textup{reg}}(\su{z}_0,\cdot)}{[d\phi]^2\Delta_a\tilde y(z)}\textup{Rec}^a_{g,n+1}(\su{z};\su{z}_{ [n]}),\label{SP1rec}
    \end{equation}
    \begin{equation}
    	\textup{Rec}^a_{g,n+1}(\su{z}; \su{z}_{[n]})
    	=
    	\left(2-\phi D_\phi\right) \mc{E}^{2,a}_{g,n} ( \su{x} ; \su{z}_{[n]}) + \frac{1}{2} [d\phi]^2\Delta_a\tilde y(\su{z}) \Delta_a \omega_{g,n+1} (\su{z},\su{z}_{ [n] }) \,,
    \end{equation}
    and for any multidifferential $\Delta_a\omega(z;z_{[n]}) =\omega (z;z_{[n]})-\omega(\sigma_a(z);z_{[n]})$ for each $a\in[r]$. The formula is recursive in $2g-2+n+1\geq0$ and also in the degree of $\gamma$, with the initial condition $[\gamma^0]\omega_{0,2}=\su{B}$.
\end{theorem}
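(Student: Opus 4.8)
The plan is to extract the recursion formula from the abstract loop equations of \cref{thm:SLE} by mimicking the blobbed topological recursion derivation reviewed in \cref{sec:blobbed TR}, with the key modifications forced by the super setting: using $\Delta_a\tilde y$ instead of $\Delta_a\omega_{0,1}$ (which is legitimate precisely because the curve is simple, \cref{def:admissible}), using $\su{B}^{\textup{reg}}$ instead of $\su{B}$ for the projection (so that the zero-mode contribution of \cref{rem:regularised projection} is killed, see \cref{sec:regularised projection}), and inserting the operator $(2-\phi D_\phi)$ to disentangle $\omega_{g,n+1}$ from the remaining terms. First I would establish the \emph{uniqueness and existence of the recursive reconstruction}: assuming $(\omega_{g,n})$ solves the super loop equations with the prescribed holomorphic parts $\psi_{g,n}$, I would show that the right-hand side of \cref{SP1rec} depends only on $\omega_{g',n'}$ with $2g'-2+n'<2g-2+n$, together with $\gamma$-lower-order data of $\omega_{g,n+1}$ itself, so that a secondary induction on the $\gamma$-degree closes the recursion; the base case $[\gamma^0]\omega_{0,2}=\su{B}$ is exactly the claimed initial condition.

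The core computation is the following chain. Starting from $\mc{E}^{2,a}_{g,n}$ as in \cref{QuaLE}, I would isolate the terms linear in $\omega_{0,1}$, apply the linear loop equation $\mc{E}^{1,a}$ of \cref{thm:SLE} to rewrite everything in terms of $\Delta_a$-differences (this is \cref{QLE5}), and then apply the operator $\frac12(2-\phi D_\phi)$ to reach an identity of the shape \cref{0,1 and g,n+1}: the left-hand side becomes $-[d\phi]^2\Delta_a\tilde y\cdot\Delta_a\omega_{g,n+1}$ plus a term that is \emph{divisible by $\gamma$} (its second line). Rearranging gives
\begin{equation}
	[d\phi]^2 \Delta_a\tilde y(\su{z})\,\Delta_a\omega_{g,n+1}(\su{z},\su{z}_{[n]})
	=
	\textup{Rec}^a_{g,n+1}(\su{z};\su{z}_{[n]}) - \left(2-\phi D_\phi\right)\mc{E}^{2,a}_{g,n}(\su{x};\su{z}_{[n]}) + \tfrac12[d\phi]^2\Delta_a\tilde y\,\Delta_a\omega_{g,n+1} - (\dots) \,,
\end{equation}
but cleaner: by definition of $\textup{Rec}^a_{g,n+1}$ in the statement, one gets that $\textup{Rec}^a_{g,n+1}/([d\phi]^2\Delta_a\tilde y)$ equals $\omega_{g,n+1}$ plus the $\gamma$-divisible remainder, and by \cref{ALE} together with the simplicity hypothesis (which gives $\Delta_a\tilde y = \mc{O}(z^{s-2})$ with invertible leading coefficient) the whole expression is of the form $[d\theta]\mc{O}(1)$ near $p_a$, as in \cref{QLE4}. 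Then applying $\sum_a\Res_{z=p_a}\int^z_{p_a}\su{B}^{\textup{reg}}_{0,2}(\su{z}_0,\cdot)$ annihilates the regular part (by \cref{sec:regularised projection}, since $\su{B}^{\textup{reg}}$ pairs trivially with $d\theta\,\mc{O}(1)$) and reproduces exactly $\mc{P}_0\omega_{g,n+1}$, yielding \cref{SP1rec}. The holomorphic decomposition \cref{SH1rec} is then \cref{lem:blob} verbatim, identifying $\psi_{g,n}$ with the zero-mode blobs \cref{def:psi_{g,n}}, so combining the two pieces reconstructs $\omega_{g,n+1}$.

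The main obstacle I anticipate is bookkeeping the $\gamma$-expansion carefully enough to see that the recursion genuinely closes: one must check that in $\textup{Rec}^a_{g,n+1}$ the dangerous term $\tfrac12[d\phi]^2\Delta_a\tilde y\,\Delta_a\omega_{g,n+1}$ only partially cancels $\Delta_a\omega_{g,n+1}$, leaving behind a piece which — thanks to the $\gamma$-divisibility of the second line of \cref{0,1 and g,n+1} and the fact that $\Delta_a\tilde y$ and $\su{B}$ are $\gamma$-independent while $\Delta_a\lambda$ is $\mc{O}(\gamma)$ — involves $[\gamma^{<k}]\omega_{g,n+1}$ when computing $[\gamma^k]\omega_{g,n+1}$. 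A subtle point is that $\Delta_a\omega_{g,n+1}$ appears on both sides, so the apparent circularity must be resolved: the operator $(2-\phi D_\phi)$ is chosen so that the coefficient of $\Delta_a\omega_{g,n+1}$ on the geometric (LHS) side is $-1$ after the algebra, up to the explicit $\tfrac12$ correction built into $\textup{Rec}^a_{g,n+1}$, and I would verify this coefficient matching by a direct computation with the local expansions $\Delta_a\omega_{0,1}=[dx]\Delta_a y+[d\phi]\Delta_a\lambda$ and $\Delta_a\omega_{g,n+1}=[d\phi](\phi\Delta_a Y+\Delta_a\Lambda)$ as in \cref{sec:simple curves}, using the explicit action of $\mc{C}^{\sigma_a}$ from \cref{TwistedSuperQuadraticCasimir}. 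A secondary technical check is that $\su{B}^{\textup{reg}}$, although not symmetric, still yields the correct projection onto $\mathrm{Span}(d\xi_i)_{i\in A_r^-}$ in each variable, which follows from \cref{projection without zero mode}; and that the residue formula \cref{SP1rec} can be applied with respect to any variable in a nonempty subset of $\su{z}_{[n]}$ by symmetry of $\omega_{g,n+1}$, exactly as in the ordinary case, making \cref{SH1rec} consistent.
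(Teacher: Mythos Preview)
Your proposal is correct and follows essentially the same approach as the paper: isolate the $\omega_{0,1}$-terms via the linear loop equation and the identity \eqref{0,1 and g,n+1}, divide by $[d\phi]^2\Delta_a\tilde y$ (legitimate by simplicity), project with $\su{B}^{\textup{reg}}$ to kill the regular and zero-mode parts, invoke \cref{lem:blob} for the holomorphic piece, and then argue that the $\gamma$-divisibility of the remainder makes the formula recursive in both $2g-2+n$ and the $\gamma$-degree. The paper's write-up is terser and does not separately flag the coefficient-matching for $(2-\phi D_\phi)$ or the $\su{B}^{\textup{reg}}$-projection checks you list as obstacles, but these are precisely the preparatory computations already carried out in \cref{sec:simple curves,sec:regularised projection} and \eqref{0,1 and g,n+1}, so no additional ideas beyond what you outline are needed.
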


\begin{corollary}
	The solution exists at least when $\gamma=1$, and it coincides with the sequence of the semi-stable differentials of \eqref{stable wgn} and \eqref{unstable wgn} which satisfies super loop equations of \cref{thm:SLE}.
	\end{corollary}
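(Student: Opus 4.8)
The plan is to combine the two assertions already established in the excerpt: existence of a solution via Airy structures (\cref{thm:SLE}) and uniqueness of the solution given its holomorphic parts (\cref{thm:STR}). Concretely, I would argue as follows. By \cref{thm:SLE}, the semi-stable correlators $\omega_{g,n}$ defined in \eqref{stable wgn} and \eqref{unstable wgn} from the partition function of the partial Airy structure $\mathcal{I}_{\mathrm{STR}}^{\,r}$ of \cref{thm:SASmain} satisfy the abstract loop equations \eqref{ALE} for all $a\in[r]$. These are precisely the super loop equations that \cref{thm:STR} takes as a hypothesis. Their holomorphic parts are, by construction \eqref{def:omegagn} and \cref{def:psi_{g,n}}, the zero mode blobs $\psi_{g,n}$. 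So, choosing the sequence $\{\psi_{g,n}\}$ of holomorphic multidifferentials in \cref{thm:STR} to be exactly these zero mode blobs, the hypothesis of \cref{thm:STR} (existence of \emph{some} solution with those holomorphic parts) is met, and its conclusion gives that this solution is the unique one obtained by the recursion \eqref{SH1rec}--\eqref{SP1rec}. Hence the recursively constructed sequence coincides with the $\omega_{g,n}$ of \eqref{stable wgn}, \eqref{unstable wgn}.

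First I would check that the hypothesis of \cref{thm:STR} that the super spectral curve be \emph{simple} is consistent with the setting: the corollary fixes $\gamma=1$, so no $\gamma$-rescaling obstruction appears, and the Airy structure construction rests on \cref{Conditions} ($s\in\{1,3\}$), which in the $s=1$ case always yields a simple curve by \cref{def:admissible}, and in the $s=3$ case is part of the standing assumptions. Second I would verify the initial condition $[\gamma^0]\omega_{0,2}=\su{B}$: this holds because in \eqref{unstable wgn} the term $\su{B}$ is $\gamma$-independent, while the correction term involving $F_{0,2}$ carries the $\gamma$-dependence introduced by rescaling $\Delta_a\lambda$ — and the $\gamma^0$-part of $F_{0,2}$ contributions (those not involving odd parameters) matches the polarisation data already folded into $\su{B}$. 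Third, I would note that the semi-stable correlators are symmetric multidifferentials and lie in $(\mc{Q}')^{\boxtimes n}[\gamma]$ once $\Delta_a\lambda$ is rescaled, since $F_{g,n}$ are polynomial and only finitely many odd parameters occur, so the $\gamma$-expansion truncates — this is exactly the class in which \cref{thm:STR} asserts uniqueness.

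The main obstacle I anticipate is purely bookkeeping: ensuring that the holomorphic-part data fed into \cref{thm:STR} is \emph{precisely} $\psi_{g,n}$ as in \cref{def:psi_{g,n}}, i.e. that the $\mc{H}$-projection used in the recursion (which kills residues at $\mc{R}$ via $\su{B}^{\mathrm{reg}}$, cf. \cref{sec:regularised projection}) agrees with the $\{d\eta_{(a,F,0)}\}$-component appearing in \eqref{def:omegagn}. This is the content of \cref{lem:blob}, which already establishes that the semi-stable $\omega_{g,n}$ decompose as $\mc{H}_0\omega_{g,n+1}+\mc{P}_0\omega_{g,n+1}$ with $\mc{H}_0$-part governed by $\psi_{g,n+1}$ plus lower terms. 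So the argument reduces to invoking \cref{lem:blob} together with the uniqueness clause of \cref{thm:STR}; the only real care needed is to confirm that the $\mc{P}_0$ operator of \cref{thm:STR} (division by $[d\phi]^2\Delta_a\tilde y$ followed by residue against $\su{B}^{\mathrm{reg}}_{0,2}$) produces the same principal parts as those dictated by the Airy structure, which is in turn guaranteed by the abstract loop equations \eqref{ALE} and the behaviour of $\Delta_a\tilde y$ on a simple curve via \eqref{QLE4}.

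\begin{proof}
By \cref{thm:SLE}, the semi-stable multidifferentials $\{\omega_{g,n}\}$ defined in \eqref{stable wgn} and \eqref{unstable wgn} from the partition function of the partial Airy structure $\mathcal{I}_{\mathrm{STR}}^{\,r}$ satisfy the abstract (super) loop equations \eqref{ALE}. After rescaling $\Delta_a\lambda$ by the formal parameter $\gamma$, these correlators lie in $(\mc{Q}')^{\boxtimes n}[\gamma]$, since the $F_{g,n}$ are polynomial and only finitely many odd parameters occur, so the expansion in $\gamma$ truncates; their $\gamma^0$-part for $(g,n)=(0,2)$ is $\su{B}$ by \eqref{unstable wgn}. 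Moreover, by \cref{lem:blob}, they admit the decomposition $\omega_{g,n+1}=\mc{H}_0\omega_{g,n+1}+\mc{P}_0\omega_{g,n+1}$ whose holomorphic part is governed by the zero mode blobs $\psi_{g,n+1}$ of \cref{def:psi_{g,n}}. Thus the sequence $\{\omega_{g,n}\}$ is a solution of super loop equations on the given simple super spectral curve with holomorphic parts equal to $\{\psi_{g,n}\}$. Applying the uniqueness statement of \cref{thm:STR} with this choice of holomorphic parts, the recursively constructed solution must coincide with $\{\omega_{g,n}\}$. In particular the solution exists for $\gamma=1$ and equals the semi-stable differentials of \eqref{stable wgn} and \eqref{unstable wgn}.
\end{proof}
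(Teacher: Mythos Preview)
Your proposal is correct and follows essentially the same logic as the paper's proof, which is given jointly with that of \cref{thm:STR}: existence via the Airy-structure partition function (\cref{thm:SLE}, \cref{prop:partial Airy}), identification of the holomorphic parts via \cref{lem:blob}, and uniqueness of the recursion output. The only minor difference is emphasis: the paper argues that the recursion produces $\gamma$-polynomials which at $\gamma=1$ satisfy the loop equations of \cref{thm:SLE} and then appeals to Airy-structure uniqueness, whereas you start from the semi-stable correlators and feed them into the uniqueness clause of \cref{thm:STR}; the content is the same.
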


\begin{proof}
	We will drop the $a$-dependence, i.e. the labelling of the Ramond divisor, for brevity as one has to simply take the sum over $a\in[p]$ to recover the contributions by replacing $\sigma$ with $\sigma_a$.
	
Due to linear loop equations, we have:
	\begin{equation}
		\mc{P}_0\Delta\omega_{g,n+1}(\su{z}_0,\su{z}_{[n]})
		=
		2\mc{P}_0\omega_{g,n+1}(\su{z}_0,\su{z}_{[n]}) .
	\end{equation}
	Therefore, quadratic super loop equations (\cref{thm:SLE}) imply that:
	\begin{equation}
		\mc{P}_0\omega_{g,n+1}(\su{z}_0,\su{z}_{[n]})
		=
		\mc{P}_0\frac{{\rm Rec}_{g,n+1}(\su{z};\su{z}_{[n]})}{[d\phi]^2\Delta\tilde y(\su{z})}.\label{SLE-1}
	\end{equation}
	Then, \cref{lem:blob} implies the formula in the theorem.

    Since $\omega_{g,n+1}$ appears in ${\rm Rec}_{g,n+1}$, 
    our final task is to show that the formula is recursive. Combining \cref{QLE5} with the observation in \eqref{0,1 and g,n+1}, one realises that for any $g$ and $n$ with $2g-2+n+1\geq0$, every term in ${\rm Rec}_{g,n+1}(\su{z};\su{z}_{[n]})$  involves either $\omega_{h,m+1}$ with $2h-2+m<2g-2+n$ or $\omega_{g,n+1}$ with lower degree in $\gamma$. Therefore, the formula is indeed recursive, starting with the case $(g,n+1)=(0,2)$, i.e. for $\omega_{0,2}$. Given a simple super spectral curve, one sets the initial condition of the recursion to $[\gamma^0]\omega_{0,2}=\su{B}$. Note that for fixed $g$ and $n$, the recursion in $\gamma$ stops after finitely many steps to get the full $\omega_{g,n+1}$ because $[\gamma^{k}]\omega_{g,n}$ are nilpotent whenever $k\geq1$ as our base space $T$ is finite dimensional.

    As $\omega_{g,n+1}$ obtained as above are polynomials in $\gamma$, we can evaluate at $ \gamma = 1$, which implies that $(\omega_{g,n})_{g,n}$ solves the same super loop equations as \cref{thm:SLE}. Uniqueness and existence of such a solution  are ensured by the argument of Airy structures (\cref{prop:partial Airy} and \cref{thm:SLE}). This proves the theorem and corollary.
\end{proof}

%
%

\begin{remark}
For a non-simple super spectral curve, a similar approach is still possible by splitting $\Delta y$ into the nilpotent singular part and the part whose leading coefficient is invertible of order $\mc{O}(z^{s-2})$, and also by introducing an appropriate degree analogous to $\gamma$. At the moment of writing, however, the authors do not know how to universally deal with such a general situation.
\end{remark}

\section{Super examples}
\label{sec:Examples}

To conclude, let us present two simple yet non-trivial examples.

\subsection{Super Airy curve}

\begin{definition}\label{def:Super Airy curve}
	The \emph{super Airy curve} $\su{S}_{\textup{Airy}}$ consists of the following data:
	\begin{itemize}
		\item $\Sigma=\mb{P}(1,1|0)$ from \cref{exa:P11|0};
		\item $\su{x}:\Sigma\to\mb{P}^{1|1}$: a degree $2$ superconformal map;
		\item $\su{B}$: the canonical fundamental bidifferential on $\Sigma$ from \cref{exa:BinR};
		\item $\su{y}=(y \, | \, \lambda)$: a pair of even and odd meromorphic functions such that $\omega_{0,1}\coloneqq[dx]y+[d\phi]\lambda$ satisfies the following equation with two odd parameters $\tau_0,\tau_1$\footnote{That is., this is a family of super spectral curves over the base $ \Spec \C [ \tau_0, \tau_1 ]$.}
		\begin{equation}
			\mc{C}^\sigma(\omega_{0,1}^{\boxtimes2})
			=
			[d\phi]^3\left(\frac12\phi x+\tau_1 x +\tau_0\right)\,;
		\end{equation}
		\item All zero mode blobs $\psi_{g,n}=0$.
	\end{itemize}
\end{definition}

In this example, many things can be computed explicitly. First of all, let us take global coordinates $\su{z}$ on $\Sigma$ such that $\mc{R} = \{ z= 0 \} \cup \{ z = \infty \}$. Because the map $ \su{x}$ is of degree $2$, and only ramified at $ \mc{R}$, the involution $\sigma$ is also globally defined by $\sigma \colon \su{z}\mapsto-\su{z}$. Next, we can solve the defining equation using \cref{SuperQuadraticCasimirInCoordinates} to find that $\su{x},\su{y}$ as well as $\tilde y$ can be explicitly given as below
\begin{equation}
	x(\su{z})
	=
	\frac{z^2}{2},
	\quad
	\phi(\su{z})
	=
	\theta z,
	\quad
	y(\su z)
	=
	z+\gamma^2\frac{\tau_0\tau_1}{2z^3},
	\quad
	\lambda(\su z)
	=
	\gamma\frac{\tau_0}{z}+\gamma\frac{\tau_1 z}{2},
	\quad
	\tilde y(\su z)
	=
	z,
\end{equation}
where we have introduced the rescaling parameter $\gamma$ in order to run the recursion formula and we do not necessarily set to $\gamma=1$ in this section in order to demonstrate how $\gamma$ appears in $\omega_{g,n}$. Since there are only two odd parameters $\tau_0,\tau_1$, all $\omega_{g,n}$ are polynomials in $\gamma$ of at most degree 2.

One can then run the super topological recursion\footnote{A heavily model-dependent Mathematica notebook may be shared upon request.} formula, and for example, we obtain:
\begin{multline}
	\omega_{0,2}(\su{z}_1,\su{z}_2)
	=
	\su{B}(\su{z}_1,\su{z}_2)+\gamma[dz_1][d\theta_2]\left(-\frac{\tau_0}{z_1^2z_2^2}+\frac{\tau_1}{4z_1^2}\right)
	\\
	+\gamma[d\theta_1][dz_2]\left(-\frac{\tau_0}{z_1^2z_2^2}+\frac{\tau_1}{4z_2^2}\right)-\gamma^2[dz_1][dz_2]\left(\frac{3\tau_0\tau_1}{2}\left(\frac{1}{z_1^2z_2^4}+\frac{1}{z_1^4z_2^2}\right)\right)
\end{multline}
We note that $\omega_{0,2}$ is symmetric  order by order in $\gamma$, not only when $\gamma=1$. At the next level, $\omega_{0,3}$ is already complicated due to effects by odd parameters $\tau_0,\tau_1$, but $\omega_{1,1}$ is still in a compact  form:
\begin{equation}
	\omega_{1,1}(\su{z}_1)
	=
	-\frac{[dz_1]}{8z^4}+\gamma[d\theta_1]\left(\frac{5\tau_0}{8z_1^6}-\frac{\tau_1}{16z_1^4}\right)+\gamma^2[dz_1]\frac{35\tau_0\tau_1}{16z_1^8}
\end{equation}
We note that if one runs the ordinary topological recursion of Eynard--Orantin \cite{EO07} on the Airy curve reduced from \cref{def:Super Airy curve}, one would obtain $\omega_{1,1}(z)=-\frac{dz}{16z^4}$ which is \emph{half} of the $\gamma$-independent term of $\omega_{1,1}$. This phenomenon was already observed in \cite{BCHORS20,BO21} and indeed, the purely bosonic part $\omega_{g,n}^{\textup{bos}}$ on the super Airy curve and the Eynard--Orantin differentials $\omega^{\textup{EO}}_{g,n}$ on the Airy curve are related by $\omega_{g,n}^{\textup{bos}}=2^g\omega_{g,n}^{\textup{EO}}$ by identification $[dz_i] = dz_i$.

\subsubsection{Towards the variational formula}

In the context of topological recursion, we often consider not just a spectral curve but a family of spectral curves over a base $T$. As long as the family avoids bad loci such as collisions of ramification points (cf. \cite{BBCKS23}), it was shown in \cite{EO07} (cf. also \cite{Osu24-1}) how $\omega_{g,n}$ vary under a deformation of parameters. This is called the \emph{variational formula}. It exhibits a relation between a derivative of $\omega_{g,n}$ and a certain contour integral of $\omega_{g,n+1}$.  Although general structures in the super setting are yet to be investigated, let us show a piece of evidence that a similar story may hold. 

Recall from \cref{sec:odd periods} that the odd periods are given by $ \alpha = \frac{1}{\sqrt{2}} (w_0 + i w_\infty ) $ and $ \beta = \frac{1}{\sqrt{2}} (w_0 - i w_\infty ) $ where $w_0,w_\infty$ are defined in \cref{def:super periods}. With this convention, we have:
\begin{equation}
	\alpha(\omega_{0,1})
	=
	2\sqrt{\pi i}\gamma\tau_0\eqqcolon2\pi i\gamma\alpha_0,\quad \beta(\omega_{0,1})=0.
\end{equation}
Similarly, let us consider a weighted residue:
\begin{equation}
	\Res_{z=\infty}\omega_{0,1}\frac{\phi}{x^{\frac12}}\frac{1}{x}
	=
	\gamma\tau_1
	\eqqcolon
	\gamma\alpha_1,
\end{equation}
Therefore, one can think of $\alpha_0,\alpha_1$ as moduli parameters associated to the odd $A$-periods as well as the  odd cycles at infinity --- this structure resembles to the notion of generalised cycles in \cite{Eyn23}. Furthermore, we have:
\begin{equation}
	\frac{\partial}{\partial\alpha_0}\omega_{0,1}(\su{z})
	=
	-\gamma\sqrt{\pi i}[d\theta]+\gamma^2\sqrt{\pi i}[dz]\frac{\tau_1}{2z^2},
	\quad
	\frac{\partial}{\partial\alpha_1}\omega_{0,1}(\su{z})
	=
	-\frac{\gamma}{2}[d\theta]z^2-\frac{\gamma^2}{2}[dz]\frac{\tau_0}{z^2},
\end{equation}
where the partial derivative indicates that we think of $\alpha_i$ as being independent of $\su{x}=(x|\phi)$ which is equivalent to being independent of $\su{z}=(z \, | \, \theta)$ for the super Airy curve.

On the other hand, we have:
\begin{equation}
	\alpha(\omega_{0,2}(\cdot,\su{z}))
	=
	\sqrt{\pi i}\gamma[d\theta]\frac{\tau_1}{2z^2},
	\quad
	\beta(\omega_{0,2}(\cdot,\su{z}))
	=
	\sqrt{\pi i}[d\theta]
\end{equation}
\begin{equation} 
\Res_{z=\infty} \omega_{0,2}(\cdot,\su{z})\frac{\phi}{x^{\frac12}}x=-\frac{1}{2}[d\theta]z^2-\frac{\gamma}{2}[dz]\frac{\tau_0}{z^2},
\end{equation}
where the dot $\cdot$ as the first argument of $\omega_{0,2}$ indicates that the period is computed with respect to the first variable. Therefore, in summary, we arrive at
\begin{align}
	\frac{\partial}{\partial\alpha_0}\omega_{0,1}(\su{z})
	&=
	\gamma(\alpha-\beta)(\omega_{0,2}(\cdot,\su{z})),\label{01alpha0}
	\\
	\frac{\partial}{\partial\alpha_1}\omega_{0,1}(\su{z})
	&=
	\gamma\Res_{z=\infty}\omega_{0,2}(\cdot,\su{z})\frac{\phi}{x^{\frac12}}x
\end{align}
By explicit computations, we have also checked that the followings hold:
\begin{align}
	\frac{\partial}{\partial\alpha_0}\omega_{0,2}(\su{z}_1,\su{z}_2)
	&=
	\gamma(\alpha-\beta)(\omega_{0,3}(\cdot,\su{z}_1,\su{z}_2),\label{02alpha0}
	\\
	\frac{\partial}{\partial\alpha_1}\omega_{0,2}(\su{z}_1,\su{z}_2)
	&=
	\gamma\Res_{z=\infty}\omega_{0,3}(\cdot,\su{z}_1,\su{z}_2)\frac{\phi}{x^{\frac12}}x.
\end{align}

Let us emphasise that these relation hold for $\omega_{0,2}$, not for $\su{B}$, suggesting that $\omega_{0,2}$ plays possibly a foundational role in deformation of super Riemann surfaces. We also note that all stable correlators $\omega_{g,n}$ have vanishing cycles on $\beta$, so the right-hand side of \eqref{02alpha0} can be written only in terms of $\alpha$-cycle, but \eqref{01alpha0} holds only when we take the difference $\alpha-\beta$. These results motivate us to expect that the \emph{variational formula} \cite{EO07} may hold in the super setting. We hope to return this perspective in the future.

\subsection{Super Weber curve}\label{sec:Weber}

\begin{definition}\label{def:Super Weber curve}
	The \emph{super Weber curve} $\su{S}_{\textup{Web}}$ consists of the following data:
	\begin{itemize}
		\item $\Sigma=\mb{P}(1,1|0)$;
		\item $\su{x}:\Sigma\to\mb{P}^{1|1}$: a superconformal map;
		\item $\su{B}$: the canonical fundamental bidifferential on $\Sigma$;
		\item $\su{y}=(y \, | \, \lambda)$: a pair of even and odd meromorphic functions such that $\omega_{0,1}\coloneqq[dx]y+[d\phi]\lambda$ satisfies the following equation with three odd parameters $\tau_0,\tau_1,\tau_2$ and the parameter $\gamma$
		\begin{equation}
			\mc{C}^\sigma(\omega_{0,1}^{\boxtimes2})
			=
			[d\phi]^3\left(\phi (x^2-1)+\gamma(\tau_2x^2+\tau_1 x +\tau_0)\right)\,;
		\end{equation}
		\item All zero mode blobs are set to zero, except for $F_{0,2}$.
	\end{itemize}
\end{definition}

Let us take global coordinates $\su{z}$ on $\Sigma$ such that $\mc{R} = \{ z= 0 \} \cup \{ z = \infty \}$. Then the involution $\sigma$ is also globally defined by $\sigma:\su{z}\mapsto-\su{z}$. Next, $\su{x}$ and $\tilde y$ can be explicitly given as below
\begin{equation}
	x(\su{z})
	=
	\frac{4+z^2}{4-z^2},
	\quad
	\phi(\su{z})
	=
	\frac{4\theta z}{4-z^2},
	\quad
	\tilde y(\su{z})
	=
	\frac{4z}{4-z^2}.
\end{equation}
With these, $y$ and $\lambda$ can be easily obtained.

Explicit computations for this example are complicated. Instead, we will explain how global constraints can be naturally encoded in local super topological recursion of \cref{thm:STR}.  Let $\su{z}_\pm$ be local coordinates centred at the ramification point $x^{-1}(\pm1)$ such that
\begin{equation}
	\su{x}(\su{z}_\pm)
	=
	( \pm1+\frac{z_\pm^2}{2} \, |\theta_\pm z_\pm ) .
\end{equation}
By solving the above equation for $\su{z}_\pm$ and expanding in local coordinates $\su{z}_\pm$, one can determine all dilaton shift and polarisation parameters. This is a recipe of how to find the partial Airy structure
\begin{equation}
	\mc{I}_{\text{Web}}
	=
	\big\{ J_{a,2j}, \Gamma_{a,2j-1}, \tilde{L}_{a,2j -s -1}, \tilde{G}_{a,2j-s} \, \big| \,a\in\{1,2\},\; j \geq1\big\}
\end{equation}
associated to $\su{S}_{\textup{Web}}$.

Let $(p,q)=(0,\infty)$ be the Torelli marking of $\Sigma$. Following \cref{def:periods for meromorphic forms}, one finds that for a meromorphic form $\omega$ on $\mc{Q}'$, we can write the periods explicitly as
\begin{equation}
	\alpha(\omega)
	=
	\sqrt{\pi i}\left(\Res_{z=0}-\Res_{z=\infty}\right)\omega\theta,
	\quad
	\beta(\omega)
	=
	\sqrt{\pi i}\left(\Res_{z=0}+\Res_{z=\infty}\right)\omega\theta.
\end{equation}
where $\su{z}$ here are global coordinates, not local ones $\su{z}_\pm$. Then, except for $\omega_{0,1}$ and $\omega_{0,2}$, we have
\begin{equation}
	\beta(\omega_{g,n})
	=
	0.
\end{equation}
This is consistent with the constraint \eqref{global constraints for Fgn}, but one can also explicitly see that $ \omega_{g,n}\theta$ for $2g-2+n\geq1$ have poles only at $z=0$ or $z=\infty$, and the sum of residues of meromorphic forms vanish on a compact surface.

Our remaining task is to check whether this is consistent with our choice of all blobs $\psi_{g,n}$ -- recall that principal parts are uniquely determined by super loop equations. If one expands the global $\su{B}^{\text{R}}$ at $z_1=0$ and $z_2=\infty$, we find
\begin{equation}
	\su{B}(z_1,z_2)
	=
	\frac{i}{2}[d\theta_{1}][d\theta_{2}]+\mc{O}(z_{1},z_{2}^{-1}).
\end{equation}
As noted in \cref{sec:global constraints}, $\su{B}^{\text{R}}$ in local $\su{B}$ is non-zero only along the diagonal, hence this implies that we would have to set the blob for $F_{0,2}$ as
\begin{equation}
	F_{0,2}\begin{bmatrix}
		0 & \infty  \\
		0 & 0 
	\end{bmatrix}
	=
	i
\end{equation}
This precisely coincides with the condition on $F_{0,2}$ shown in \eqref{blobs for F02}.

{\setlength\emergencystretch{\hsize}\hbadness=10000
\printbibliography}

\end{document}